%% file: main.tex
\documentclass[a4paper,UKenglish,cleveref, autoref, thm-restate]{lipics-v2021}

\hideLIPIcs  

\title{On the Continuity of the \texorpdfstring{\\}{} Probabilistic Bisimilarity Distance} 

\titlerunning{On the Continuity of the Probabilistic Bisimilarity Distance} 

\author{{Syyeda Zainab} Fatmi}{University of Oxford, Oxford, UK }
{}
{https://orcid.org/0000-0001-7899-8665}{Clarendon Fund}

\author{Stefan Kiefer}{University of Oxford, Oxford, UK}
{}
{https://orcid.org/0000-0003-4173-6877}{EPSRC grant EP/Z536179/1}

\author{David Parker}{University of Oxford, Oxford, UK}
{}
{https://orcid.org/0000-0003-4137-8862}{}

\author{Franck {van Breugel}}{York University, Toronto, Canada}
{}
{https://orcid.org/0009-0002-7320-1527}{Natural Sciences and Engineering Research Council of Canada}

\authorrunning{S.~Z.~Fatmi, S.~Kiefer, D.~Parker, F.~van Breugel} 

\Copyright{Syyeda Zainab Fatmi, Stefan Kiefer, David Parker, and Franck van Breugel} 

\ccsdesc[500]{Theory of computation~Logic and verification}
\ccsdesc[300]{Verification by model checking}
\ccsdesc[300]{Probabilistic computation}

\keywords{probabilistic model checking, labelled Markov chain, probabilistic bisimilarity distance} 

\relatedversion{This is the full version of a CONCUR 2026 paper \cite{concur}.} 

\nolinenumbers 

\usepackage[ruled,linesnumbered]{algorithm2e}
\SetKw{Let}{let}

\usepackage{booktabs}
\usepackage[table]{xcolor}
\newcolumntype{R}[1]{>{\raggedleft\let\newline\\\arraybackslash\hspace{0pt}}m{#1}}

\usepackage{pgfplots}
\usepackage{tikz}
\usetikzlibrary{fit, shapes, decorations.pathmorphing, arrows.meta, calc}
\tikzstyle{state}=[draw=black, text=black, circle, minimum size=0.8cm, inner sep=0pt]
\tikzstyle{smallstate}=[draw=black, text=black, circle, minimum size=16pt, inner sep=0pt]
\tikzstyle{bigstate}=[draw=black, text=black, circle, minimum size=1.1cm, inner sep=0pt]
\tikzstyle{diagonalstate}=[draw=black, double, text=black, circle, minimum size=1.1cm, inner sep=0pt]
\tikzstyle{class}=[rectangle, minimum width=1.7cm, minimum height=0.7cm, draw=black]

\newcommand{\vertex}{\node[circle, draw=black, inner sep=0pt, minimum size=12pt]}
\tikzset{>={Classical TikZ Rightarrow[scale=2]}, fill fraction/.style n args={2}{path picture={\fill[#1] (path picture bounding box.south west) rectangle ($(path picture bounding box.north west)!#2!(path picture bounding box.north east)$);}}}
\pgfplotsset{compat=1.18}

\DeclareMathOperator*{\argmin}{arg\,min}
\newcommand{\nat}{\mathbb{N}}

\newcommand{\support}{\mathrm{support}}
\newcommand{\Dreal}{\mathcal{D}}
\newcommand{\Sreal}{\mathcal{S}}
\newcommand{\Creal}{\Omega}
\newcommand{\Coptimal}{\Creal_{\mathrm{opt}}}
\newcommand{\proofcomment}[1]{\hspace{2em} [\mbox{#1}]}
\newcommand{\robust}{\mathord{\simeq}}
\newcommand{\mc}{\langle S, \tau \rangle}
\newcommand{\lmc}{\langle S, L, \tau, \ell \rangle}
\newcommand{\chain}[1]{\langle S \times S, #1 \rangle}
\newcommand{\reachone}[1]{\rho_1^{#1}}
\newcommand{\reachdelta}[1]{\rho_\Delta^{#1}}
\newcommand{\mvar}{V}

\newcommand{\qnext}{\textrm{next}}
\newcommand{\orange}{\textit{orange}}

\newcommand{\sem}[1]{[\![#1]\!]}

\definecolor[named]{ACMBlue}{cmyk}{1,0.1,0,0.1}
\definecolor[named]{ACMYellow}{cmyk}{0,0.16,1,0}
\definecolor[named]{ACMOrange}{cmyk}{0,0.42,1,0.01}
\definecolor[named]{ACMRed}{cmyk}{0,0.90,0.86,0}
\definecolor[named]{ACMLightBlue}{cmyk}{0.49,0.01,0,0}
\definecolor[named]{ACMGreen}{cmyk}{0.20,0,1,0.19}
\definecolor[named]{ACMPurple}{cmyk}{0.55,1,0,0.15}
\definecolor[named]{ACMDarkBlue}{cmyk}{1,0.58,0,0.21}

\def\techReport{}

\EventEditors{John Q. Open and Joan R. Access}
\EventNoEds{2}
\EventLongTitle{42nd Conference on Very Important Topics (CVIT 2016)}
\EventShortTitle{CVIT 2016}
\EventAcronym{CVIT}
\EventYear{2016}
\EventDate{December 24--27, 2016}
\EventLocation{Little Whinging, United Kingdom}
\EventLogo{}
\SeriesVolume{42}
\ArticleNo{23}

\begin{document}

\maketitle

\begin{abstract}
The probabilistic bisimilarity distance provides a quantitative measure of behavioural difference for labelled Markov chains, but it may be discontinuous under perturbations of the transition probabilities.
This lack of continuity undermines its applicability to empirically derived models, where transition probabilities are often approximations.
Recently, we introduced robust probabilistic bisimilarity as a sufficient condition for continuity at distance zero.
In this paper, we show that it is also a necessary condition, that is, two states are robustly probabilistic bisimilar if and only if their probabilistic bisimilarity distance is small for any small enough perturbation of the transition probabilities.
We further extend robustness to non-bisimilar state pairs to establish a complete characterization for continuity of the probabilistic bisimilarity distance.
Based on this characterization, we develop a polynomial time algorithm to decide continuity.
Finally, we complement our theoretical contributions with an experimental evaluation demonstrating the proposed approach in practice. Our results show that the extra step of deciding continuity requires minimal additional cost when compared to computing the probabilistic bisimilarity distance.
\end{abstract}

\input{text/introduction}
\input{text/bisimilarity}
\input{text/distance}
\input{text/robust-bisim}
\input{text/continuity}
\input{text/algorithm-continuity}
\input{text/experiments}

\input{text/conclusion}



\bibliography{main}

\ifthenelse{\isundefined{\techReport}}{
}{
\clearpage
\appendix
\input{text/appendix}
}

\end{document}

%% file: text/introduction.tex
\section{Introduction}

\emph{Probabilistic bisimilarity} (henceforth just \emph{bisimilarity}) is a concept of behavioural equivalence for probabilistic systems, such as labelled Markov chains.
Kemeny and Snell \cite{KS60} introduced the notion of lumpability for Markov chains and it was adapted to the setting of labelled Markov chains by Larsen and Skou \cite{LS89}.
Intuitively, two states of a labelled Markov chain are considered bisimilar if they have the same labels and they transition into the same equivalence classes with the same probabilities.
Bisimilarity can be used to minimize the state space of a system by identifying and merging states that are behaviourally indistinguishable, which often reduces the time required to verify properties of the system \cite{KKZJ07}.

\ifthenelse{\isundefined{\techReport}}{}{\clearpage}

However, since behavioural equivalences are sensitive to the exact values of the transition probabilities, Giacalone et al.~\cite{GJS90} proposed using behavioural pseudometrics, a quantitative generalisation of behavioural equivalences, to capture the behavioural similarity of states in a probabilistic system.
Rather than classifying states as either equivalent or inequivalent, the pseudometric maps each pair of states to a real value in the unit interval, thus also quantifying the behavioural difference between non-equivalent states.

In probabilistic verification, the most widely studied example of such a behavioural pseudometric is the \emph{probabilistic bisimilarity distance} (henceforth just \emph{bisimilarity distance}), introduced by Desharnais et al.~\cite{DGJP99}, which quantitatively generalises bisimilarity.
The lower the distance between two states, the less difference in their behaviour, and two states have distance zero if and only if they are bisimilar.

\begin{figure}[ht!]
\begin{subcaptionblock}{\textwidth}
\begin{center}
    \begin{tikzpicture}[yscale=0.9,xscale=1.2]
    \node[state] (s) at (2,3) {$s$};
    \node[state] (t) at (4,3) {$t$};
    \node[state] (t2) at (7,3) {$u$};
    \node[state, fill=ACMOrange!40] (u) at (1,1) {};
    \node[state] (v) at (3,1) {};
    \node[state] (v2) at (8,1) {};
    \node[state, fill=ACMOrange!40] (u2) at (6,1) {};
    
    \draw[->] (s) to node[above left] {$\frac{0.5}{10}\textcolor{ACMRed}{+\varepsilon}$} (u);
    \draw[->] (s) to node[above right] {$\frac{9.5}{10}\textcolor{ACMRed}{-\varepsilon}$} (v);
    \draw[->] (t) to node[below right] {\footnotesize$1$} (v);
    \draw[->] (t2) to node[above left] {$\frac{1}{10}$} (u2);
    \draw[->] (t2) to node[above right] {$\frac{9}{10}$} (v2);
    \draw[->] (u) edge[bend left=30] node[below] {\footnotesize$1$} (v);
    \draw[->] (v) edge[bend left=30] node[below] {\footnotesize$\textcolor{ACMRed}{\varepsilon}$} (u);
    \draw[->] (u2) to node[below] {\footnotesize$1$} (v2);
    \draw[->, loop below] (v2) to node[below] {\footnotesize$1$} (v2);
    \draw[->, loop below] (v) to node[below] {\footnotesize$1\textcolor{ACMRed}{-\varepsilon}$} (v);
    \end{tikzpicture}
\caption{A family of labelled Markov chains for $\varepsilon \in [0, \frac{9.5}{10}]$.  The state labels are represented by colours.}
\label{fig:intro-lmc}
\end{center}
\end{subcaptionblock}
\vspace{0.01cm}

\begin{subcaptionblock}{\textwidth}
\begin{center}
    \begin{tikzpicture}[scale=0.9, every node/.style={scale=0.9}]
    \begin{axis}
    [xlabel={$\varepsilon$}, ylabel={$\delta_{\tau_\varepsilon} (s, t)$}, width=5cm, height=4cm, at={(0.28\linewidth,0)}, ymin=-0.1,ymax=1.1]
    \addplot[domain=0:0.95,line width=1.2pt] {0.05 + x};
    \end{axis}
    \begin{axis}
    [xlabel={$\varepsilon$}, ylabel={$\delta_{\tau_\varepsilon} (s, u)$}, width=5cm, height=4cm, at={(0.72\linewidth,0)}, ymin=-0.1,ymax=1.1]
    \addplot[mark=*, mark options={scale=1}] coordinates {(0,0.05)};
    \addplot[mark=o, mark options={scale=1.2}] coordinates {(0,1)};
    \addplot[domain=0.02:0.95,line width=1.2pt] {1};
    \end{axis}
    \end{tikzpicture}
\caption{An illustration of the distances between the state pairs $(s, t)$ and $(s, u)$, respectively. When $\varepsilon = 0$, then the distance between these pairs is 0.05. As $\varepsilon$ increases, the distance between $s$ and $t$ increases proportionally, whereas the distance between $s$ and $u$ jumps up to $1$.}
\label{fig:intro-distances}
\end{center}
\end{subcaptionblock}

\caption{A motivating example.}
\label{fig:intro-example}
\end{figure}

\begin{example}
\label{example:intro}
The bisimilarity distance is based on a variation of Larsen and Skou's probabilistic modal logic~\cite{AM01,K83,LS89,MM05}.
A formula $\varphi$ of this logic assigns to each state~$s$ of a labelled Markov chain a numerical value $\sem{\varphi}(s) \in [0,1]$.
For example, consider the labelled Markov chain in Figure~\ref{fig:intro-lmc} when $\varepsilon = 0$.  The states $s$ and $t$ have a bisimilarity distance of 0.05.
Then it follows \cite[Equation~2.3]{CAMR10} that $|\sem{\varphi}(s) - \sem{\varphi}(t)| \leq 0.05$, where $\varphi$ ranges over all formulas.
Thus, the bisimilarity distance bounds the maximal difference in logical observations between states.
This is relevant for model checking.
Suppose we are interested in computing the probability of eventually reaching a state labelled with $\orange$.  This property can be expressed as the quantitative mu-calculus formula $\varphi = \mu\mvar.\,\orange \vee \qnext(\mvar)$.
For efficiency one might want to approximate these values $\sem{\varphi}$, up to some acceptable error threshold, say~$0.1$.
In this case, for states $s$ and $t$, it suffices to compute $\sem{\varphi}(s)$ only.
The same holds for \emph{every} formula $\varphi$ of the logic.
Similarly, since states $s$ and $u$ also have a bisimilarity distance of 0.05, it follows that $|\sem{\varphi}(s) - \sem{\varphi}(u)| \leq 0.05 < 0.1$, for all $\varphi$.
Hence, for states $s$ and $u$, it suffices to compute $\sem{\varphi}(s)$ only as well, speeding up model checking.\lipicsEnd
\end{example}

However, as recognised by Jaeger et al.~\cite{JMLM14}, the bisimilarity distance is sometimes not continuous.
Small changes in the transition probabilities can lead to unexpected and abrupt changes in behaviour between two states.
This is especially concerning when relying on bisimilarity as a measure of system equivalence for empirically derived models, where transition probabilities are often approximated from experimental data \cite{EH20,MLAK17,OCHTC17,SLT21,TB17}.
It may be dangerous to merge bisimilar states if the transition probabilities are not known precisely, as the distance may jump up.
In such settings, continuity analysis provides a way to assess the sensitivity of the bisimilarity distance to perturbations.
If the distance is continuous, then sufficiently small errors in the transition probabilities result in only small changes in the distance; if it is discontinuous, arbitrarily small perturbations may cause a non-negligible increase in the distance, indicating that the apparent similarity of two states may not be robust to uncertainty in the model.

More recently, in our prior work~\cite{FKPB25c}, we introduced a new notion of probabilistic equivalence, namely \emph{robust probabilistic bisimilarity} (henceforth just \emph{robust bisimilarity}).
Our main result showed that this definition ensures the continuity of the bisimilarity distance function under perturbations of the transition probabilities.
However, several fundamental questions remain unanswered.
First, although we conjectured in~\cite{FKPB25c} that robust bisimilarity is not only a sufficient but also a necessary condition for continuity, this was left open.
Second, our analysis in~\cite{FKPB25c} is restricted to bisimilar pairs of states and does not address how one could characterize and decide the continuity of the bisimilarity distance for non-bisimilar pairs of states.

In this paper, we close these gaps. We prove that robust bisimilarity exactly characterizes continuity for bisimilar state pairs, thereby settling the conjecture of~\cite{FKPB25c}.
As a consequence, we obtain a polynomial time algorithm to decide continuity for bisimilar pairs of states.
Moreover, we go beyond the setting of bisimilar states and provide a complete characterization of continuity for arbitrary pairs of states.
We also develop a polynomial-time algorithm to decide continuity for all pairs of states.
Finally, we present experimental results on an implementation of this algorithm, which demonstrate that deciding continuity incurs only a modest overhead compared to computing the bisimilarity distance, and is often significantly faster in practice.

\begin{example}
Recall the model checking scenario in \cref{example:intro} involving the states $s$, $t$, and $u$.
The argument to compute $\sem{\varphi}(s)$ only is still valid when the transition probabilities are not known precisely, \emph{provided} that the distance is continuous with respect to the transition probabilities.
As shown in \cref{fig:intro-distances}, the distance between $s$ and $t$ is continuous.  Thus, as long as the perturbation is small enough (in particular $\varepsilon < 0.05$), the distance between $s$ and $t$ is less than~$0.1$ and therefore $|\sem{\varphi}(s) - \sem{\varphi}(t)| < 0.1$.
In contrast, the distance between $s$ and $u$ is discontinuous.  Indeed, when $\varepsilon > 0$, for the formula $\varphi = \mu\mvar.\,\orange \vee \qnext(\mvar)$, we have $|\sem{\varphi}(s) - \sem{\varphi}(u)| > 0.1$.
Therefore, in this case, we cannot compute $\sem{\varphi}(s)$ only.\lipicsEnd
\end{example}

The rest of the paper is structured as follows. 
Section~\ref{section:premliminaries} provides necessary preliminaries, such as the definitions of labelled Markov chains and traditional bisimulation.
Section~\ref{section:distance} recalls the formal definition and several properties of the bisimilarity distance.
In Section~\ref{section:robust-bisimiliarity} we build on the key result from~\cite{FKPB25c} by proving the conjecture posed therein, establishing that robust bisimilarity is not only sufficient but also necessary for the continuity of the bisimilarity distance for bisimilar state pairs.
Our central technical contribution appears in Section~\ref{section:continuity}, where we complete the theoretical characterisation of the continuity of the bisimilarity distance.
In Section~\ref{section:algorithm-continuity} we provide a polynomial time algorithm to decide continuity and in Section~\ref{section:experiments} we report experimental results on the algorithm’s implementation.
Finally, Section~\ref{section:conclusion} summarises our results and discusses directions for future research.
\ifthenelse{\isundefined{\techReport}}{%
The full version of this paper, found in \cite{arxiv}, includes omitted proofs and further details.
}{%
Omitted proofs can be found in the appendices.
This paper is an extended version of \cite{concur}.
}

%% file: text/bisimilarity.tex
\section{Preliminaries}
\label{section:premliminaries}

In this section, we present some fundamental concepts that underpin this paper.
We begin with basic notions on probability distributions and Markov chains.

Let $X$ be a nonempty finite set.  A function $\mu : X \to [0, 1]$ is a \emph{subprobability distribution} on $X$ if $\sum_{x \in X} \mu(x) \leq 1$.  We denote the set of subprobability distributions on $X$ by $\Sreal(X)$.  For $\mu \in \Sreal(X)$ and $A \subseteq X$, we often write $\mu(A)$ instead of $\sum_{x \in A} \mu(x)$.  For a distribution $\mu \in \Sreal(X)$ we define the \emph{support} of $\mu$ by $\support(\mu) = \{\, x \in X \mid \mu(x) > 0 \,\}$.  A subprobability distribution $\mu$ on $X$ is a \emph{probability distribution} if $\mu(X) = 1$.  We denote the set of probability distributions on $X$ by $\Dreal(X)$.

A \emph{Markov chain} is a pair $\mc$ consisting of a finite set $S$ of states and a transition probability function $\tau : S \to \Dreal(S)$.  A \emph{labelled Markov chain} is a tuple $\lmc$ where $\mc$ is a Markov chain, $L$ is a finite set of labels and $\ell: S \to L$ is a labelling function.  A \emph{path} in a Markov chain $\mc$ is a sequence of states $s_0$, $s_1$, $s_2 \ldots$ such that $s_i \in S$ and $\tau(s_{i})(s_{i+1}) > 0$ for all $i \geq 0$.

We say that states of a Markov chain \emph{communicate} with each other if each is reachable from the other via a (possibly empty) path. This is an equivalence relation which yields a set of \emph{communication classes}.  These communication classes coincide with the strongly connected components (SCCs) of the directed graph underlying the labelled Markov chain.  A communication class is \emph{closed} if the probability of leaving the class is zero, that is, no transition with positive probability leads from a state in the class to a state outside the class.  Equivalently, a closed communication class is a bottom SCC.

For the remainder, we fix a labelled Markov chain $\lmc$, and we will study perturbations of the transition probability function $\tau$.

We now introduce couplings, which play a central role throughout the rest of this paper. Couplings provide a way to relate two probability distributions by describing how their probability mass can be ``matched.'' Intuitively, a coupling specifies a joint distribution whose marginals coincide with the original distributions.
Formally, for all $\mu$, $\nu \in \Dreal(X)$, the set $\Creal(\mu, \nu)$ of \emph{couplings} of $\mu$ and $\nu$ is defined by
\[
\Creal(\mu, \nu) = \{\, \omega \in \Dreal(X \times X) \mid \forall x \in X : \omega(x, X) = \mu(x) \mbox{ and } \omega(X, x) = \nu(x) \,\}.
\]
We write $\omega(x, X)$ for $\sum_{y \in X} \omega(x, y)$.

\begin{definition}
\label{definition:probabilistic-bisimilary}
An equivalence relation $R \subseteq S \times S$ is a \emph{bisimulation} if for all $(s, t) \in R$, $\ell(s) = \ell(t)$ and there exists $\omega \in \Creal(\tau(s), \tau(t))$ such that $\support(\omega) \subseteq R$.  States $s$ and $t$ are \emph{bisimilar}, denoted $s \sim t$, if $(s, t) \in R$ for some bisimulation~$R$.
\end{definition}

Definition~\ref{definition:probabilistic-bisimilary} \cite[Definition~4.3]{JL91} differs from the standard definition \cite[Definition~6.3]{LS89} which defines a bisimulation as an equivalence relation $R \subseteq S \times S$ such that for all $(s, t) \in R$, $\ell(s) = \ell(t)$ and for all $R$-equivalence classes $C$, $\tau(s)(C) = \tau(t)(C)$.  Nevertheless, an equivalence relation $R$ is a bisimulation by Definition~\ref{definition:probabilistic-bisimilary} if and only if it is a bisimulation as per the standard definition (see \cite[Theorem~4.6]{JL91}).
We use the coupling-based formulation of bisimulation in Definition~\ref{definition:probabilistic-bisimilary} as it aligns naturally with the definition of the bisimilarity distance introduced in the next section.

\begin{example}
Consider the states $b_1$ and $b_2$ in the labelled Markov chain of Figure~\ref{fig:lmc} with $\varepsilon = 0$. These states are bisimilar, as shown in Figure~\ref{fig:equivalence-classes}.
By the standard definition of bisimilarity, it suffices to observe that $b_1$ and $b_2$ have the same label and assign the same probability mass to each $\sim$-equivalence class: both states transition with probability $\frac{1}{2}$ to the class $\{\,c_1,c_2,c_3,c_4\,\}$ and with probability $\frac{1}{2}$ to the class $\{\,d_1,d_2\,\}$.

Equivalently, using Definition~\ref{definition:probabilistic-bisimilary}, there exists a coupling $\omega \in \Creal(\tau(b_1), \tau(b_2))$ whose support is contained in the bisimulation relation.
Concretely, one such coupling $\omega$ matches the successors of $b_1$ and $b_2$ by assigning probability $\frac{1}{2}$ each to the pairs $(c_2, c_4)$ and $(d_1, d_2)$. Since $c_2 \sim c_4$ and $d_1 \sim d_2$, it follows that $\support(\omega) \subseteq \mathord{\sim}$.\lipicsEnd
\end{example}

We write $\ell(X)$ for $\{\,\ell(x) \mid x \in X \,\}$.  If $|\ell(S)| = 1$ then $\mathord{\sim} = S \times S$.  In the remainder, we assume that the labelled Markov chain contains states with different labels, that is, $|\ell(S)| \geq 2$.  Hence, we also have that $|S| \geq 2$.

\begin{figure}[ht!]
\begin{subcaptionblock}{\textwidth}
\begin{center}
    \begin{tikzpicture}[yscale=0.9]
    \node[state] (a1) at (4,4) {$a_1$};
    \node[state,] (b1) at (9,4) {$b_1$};
    \node[state] (c1) at (2,2.5) {$c_1$};
    \node[state] (c2) at (8,2.5) {$c_2$};
    \node[state] (b2) at (6,2.5) {$b_2$};
    \node[state] (d1) at (10,2.5) {$d_1$};
    \node[state, fill=ACMBlue!30] (e1) at (3,1) {$e_1$};
    \node[state] (d2) at (5,1) {$d_2$};
    \node[state] (c4) at (7,1) {$c_4$};
    \node[state] (c3) at (1,1) {$c_3$};
    \node[state, fill=ACMBlue!30] (e2) at (9,1) {$e_2$};
    
    \draw[->] (a1) -- (c1);
    \draw[->] (b1) -- (c2);
    \draw[->] (a1) -- (b2);
    \draw[->] (b1) -- (d1);
    \draw[->] (c1) -- (c3);
    \draw[->] (c2) -- (e2);
    \draw[->] (c1) -- (e1);
    \draw[->] (c2) -- (c4);
    \draw[->] (b2) -- (d2);
    \draw[->] (b2) -- (c4);
    \draw[->] (c3) edge[bend right=25] node[below] {$\frac{1 \textcolor{ACMRed}{- \varepsilon}}{2}$} (e1);
    \draw[->] (e1) edge[bend right=25] (c3);
    \draw[->] (c4) edge[bend right=25] (e2);
    \draw[->] (e2) edge[bend right=25] (c4);
    \draw[->, ACMRed] (d1) to node[pos=0.4,below right] {$\textcolor{ACMRed}{\varepsilon}$} (e2);
    \draw[->, loop below] (c3) to node {$\frac{1 \textcolor{ACMRed}{+ \varepsilon}}{2}$} (c3);
    \draw[->, loop below] (e2) to (e2);
    \draw[->, loop below] (e1) to (e1);
    \draw[->, loop below] (d2) to node {$1$} (d2);
    \draw[->, loop below] (c4) to (c4);
    \draw[->, loop right] (d1) to node {$1 \textcolor{ACMRed}{- \varepsilon}$} (d1);
    \end{tikzpicture}
\caption{A family of labelled Markov chains for $\varepsilon \in [0, 1]$.  We denote the transition probability function for $\varepsilon = \frac{1}{n}$ by $\tau_n$ for each $n \in \nat$ and for $\varepsilon = 0$ by $\tau$.  Note that $(\tau_n)_n$ converges to $\tau$.}
\label{fig:lmc}
\end{center}
\end{subcaptionblock}
\vspace{0.01cm}

\begin{subcaptionblock}{\textwidth}
\begin{center}
    \begin{tikzpicture}[yscale=0.9]
    \node[class, minimum width=1cm] (a) at (3.35,0) {$a_1$};
    \node[class] (b) at (1.4,0) {$b_1$, $b_2$};
    \node[class, minimum width=3cm] (c) at (6,0) {$c_1$, $c_2$, $c_3$, $c_4$};
    \node[class] (d) at (-0.9,0) {$d_1$, $d_2$};
    \node[class, fill=ACMBlue!30] (e) at (9,0) {$e_1$, $e_2$};

    \draw[->] (a) -- (b);
    \draw[->] (a) -- (c);
    \draw[-{Classical TikZ Rightarrow[scale=2]}] (b.south east) to[bend right=25] (c.south west);
    \draw[->] (b) -- (d);
    \draw[-{Classical TikZ Rightarrow[scale=2]}] ($(e.north west)!0.3!(e.south west)$) to ($(c.north east)!0.3!(c.south east)$);
    \draw[-{Classical TikZ Rightarrow[scale=2]}] ($(c.north east)!0.7!(c.south east)$) to ($(e.north west)!0.7!(e.south west)$);
    \draw[->, loop below] (d) to node[left, xshift=-3] {$1$} (d);
    \draw[->, loop below] (c) to (c);
    \draw[->, loop below] (e) to (e);
    \end{tikzpicture}
\caption{The equivalence classes under traditional bisimilarity for $\tau$.  For each equivalence class, the outgoing transitions shown are those induced by the transitions of its constituent states.}
\label{fig:equivalence-classes}
\end{center}
\end{subcaptionblock}
\vspace{0.01cm}

\begin{subcaptionblock}{\textwidth}
\begin{center}
    \begin{tikzpicture}[yscale=0.8] 
    \node[bigstate] (1) at (5,9.5) {$a_1 \; b_2$};
    \node[bigstate] (2) at (3,9.5) {$c_1 \; c_4$};
    \node[bigstate] (3) at (7,9.5) {$b_2 \; d_2$};
    \node[bigstate] (4) at (2,7.25) {$c_3 \; c_4$};
    \node[bigstate, fill=ACMBlue!30] (5) at (4,7.25) {$e_1 \; e_2$};
    \node[diagonalstate] (6) at (6,7.25) {$d_2 \; d_2$};
    \node[bigstate] (7) at (8,7.25) {$c_4 \; d_2$};
    \node[bigstate, fill fraction={ACMBlue!30}{0.5}] (8) at (10,7.25) {$e_2 \; d_2$};
    \node[bigstate, fill=ACMBlue!30, fill fraction={white}{0.5}] (0) at (0,7.25) {$c_3 \; e_2$};
    \draw (0.south) -- (0.north) ;
    \draw (1.south) -- (1.north) ;
    \draw (2.south) -- (2.north) ;
    \draw (3.south) -- (3.north) ;
    \draw (4.south) -- (4.north) ;
    \draw (5.south) -- (5.north) ;
    \draw (6.south) -- (6.north) ;
    \draw (7.south) -- (7.north) ;
    \draw (8.south) -- (8.north) ;
    \draw[->] (1) edge (2);
    \draw[->] (1) edge (3);
    \draw[->] (2) edge (4);
    \draw[->] (2) edge (5);
    \draw[->] (3) edge (6);
    \draw[->] (3) edge (7);
    \draw[->, ACMRed] (4) edge node[below] {$\frac{\textcolor{ACMRed}{\varepsilon}}{2}$} (0);
    \draw[->] (4) edge[bend right=25] node[below] {$\frac{1 \textcolor{ACMRed}{- \varepsilon}}{2}$} (5);
    \draw[->] (5) edge[bend right=25] (4);
    \draw[->, loop below] (4) edge (4);
    \draw[->, loop below] (5) edge (5);
    \draw[->, loop below] (6) edge node[below] {$1$} (6);
    \draw[->, loop below] (7) edge (7);
    \draw[->] (7) edge (8);
    \draw[->, loop below] (8) edge node[below] {$1$} (8);
    \draw[->, loop below] (0) edge node[below] {$1$} (0);
    
    \node[bigstate] (2) at (8.5,4.5) {$b_1 \; b_2$};
    \node[bigstate] (4) at (7,2.75) {$c_2 \; c_4$};
    \node[bigstate] (5) at (10,2.75) {$d_1 \; d_2$};
    \node[diagonalstate] (6) at (6,0.5) {$c_4 \; c_4$};
    \node[diagonalstate, fill=ACMBlue!30] (7) at (8,0.5) {$e_2 \; e_2$};
    \node[bigstate, fill fraction={ACMBlue!30}{0.5}] (8) at (10,0.5) {$e_1 \; d_2$};
    \draw (2.south) -- (2.north) ;
    \draw (4.south) -- (4.north) ;
    \draw (5.south) -- (5.north) ;
    \draw (6.south) -- (6.north) ;
    \draw (7.south) -- (7.north) ;
    \draw (8.south) -- (8.north) ;
    \draw[->] (2) edge (4);
    \draw[->] (2) edge (5);
    \draw[->] (4) edge (6);
    \draw[->] (4) edge (7);
    \draw[->, ACMRed] (5) edge node[right] {$\textcolor{ACMRed}{\varepsilon}$} (8);
    \draw[->] (6) edge[bend right=25] (7);
    \draw[->] (7) edge[bend right=25] (6);
    \draw[->, loop above] (5) edge node[above] {$1 \textcolor{ACMRed}{- \varepsilon}$} (5);
    \draw[->, loop below] (6) edge (6);
    \draw[->, loop below] (7) edge (7);
    \draw[->, loop below] (8) edge node[below] {$1$} (8);

    \node[bigstate] (1) at (3.25,4.5) {$a_1 \; c_1$};
    \node[bigstate, fill=ACMBlue!30, fill fraction={white}{0.5}] (3) at (4.5,2.75) {$b_2 \; e_1$};
    \node[bigstate, fill fraction={ACMBlue!30}{0.5}] (9) at (0,2.75) {$e_1 \; c_3$};
    \node[bigstate] (10) at (2,2.75) {$c_1 \; c_3$};
    \node[diagonalstate] (11) at (1,0.5) {$c_3 \; c_3$};
    \node[diagonalstate, fill=ACMBlue!30] (12) at (3,0.5) {$e_1 \; e_1$};
    \draw (1.south) -- (1.north) ;
    \draw (3.south) -- (3.north) ;
    \draw (9.south) -- (9.north) ;
    \draw (10.south) -- (10.north) ;
    \draw (11.south) -- (11.north) ;
    \draw (12.south) -- (12.north) ;
    \draw[->, loop below] (3) edge node[below] {$1$} (3);
    \draw[->] (1) edge (3);
    \draw[->] (1) edge (10);
    \draw[->] (10) edge (11);
    \draw[->] (10) edge node[pos=0.6, above right] {$\frac{1 \textcolor{ACMRed}{- \varepsilon}}{2}$} (12);
    \draw[->, ACMRed] (10) edge node[above] {$\textcolor{ACMRed}{\frac{\varepsilon}{2}}$} (9);
    \draw[->] (11) edge[bend right=25] node[below] {$\frac{1 \textcolor{ACMRed}{- \varepsilon}}{2}$} (12);
    \draw[->] (12) edge[bend right=25] (11);
    \draw[->, loop below] (11) edge node[below] {$\frac{1 \textcolor{ACMRed}{+ \varepsilon}}{2}$} (11);
    \draw[->, loop below] (12) edge (12);
    \draw[->, loop above] (9) edge node[above] {$1$} (9);
    \end{tikzpicture}
\caption{Part of the Markov chain $\chain{P_\varepsilon}$ induced by a policy $P_\varepsilon$ that is optimal for all $\varepsilon \in [0, 1]$.}
\label{fig:policy}
\end{center}
\end{subcaptionblock}

\caption{The running example. All omitted transition probabilities are $\frac{1}{2}$.}
\label{fig:running-example}
\end{figure}

%% file: text/distance.tex
\section{Probabilistic Bisimilarity Distance}
\label{section:distance}

The bisimilarity distance \cite{DGJP99} measures the behavioural difference between two states. A smaller distance indicates more similar behaviour, with a distance of zero corresponding precisely to behavioural equivalence.

Before presenting the formal definition of the bisimilarity distance, which is phrased as a fixed point, we first outline the underlying idea.
If a pair of states, say $s$ and $t$, have different labels, the states are maximally different and, therefore, have distance $1$.
Otherwise, one compares their transition distributions. This is done by selecting a coupling that matches successors of $s$ with successors of $t$ in an optimal way, and then measuring the behavioural difference of these successor pairs.
As we will see in Proposition~\ref{proposition:reach-s1} below, the distance between two states corresponds to the minimal probability that, when exploring the two states jointly, one eventually encounters a pair of states with different labels.

\begin{definition}
\label{definition:delta}
The \emph{bisimilarity distance}, $\delta_\tau : S \times S \to [0, 1]$, is the least fixed point of the function $\Delta_\tau : (S \times S \to [0, 1]) \to (S \times S \to [0, 1])$ defined by
\[
\Delta_\tau(d)(s, t) = \left \{
\begin{array}{ll}
1 & \hspace{0.5cm} \mbox{if $\ell(s) \neq \ell(t)$}\\
\displaystyle \inf_{\omega \in \Creal(\tau(s), \tau(t))} \sum_{u, v \in S} \omega(u, v) \; d(u, v) & \hspace{0.5cm} \mbox{otherwise.}
\end{array}
\right .
\]
\end{definition}

\begin{theorem}[{\cite[Theorem 2.1.30]{T18}}]
\label{theorem:pseudometric}
$\delta_\tau$ is a pseudometric.
\end{theorem}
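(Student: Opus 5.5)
We need to show that $\delta_\tau$ is a pseudometric, i.e.\ $\delta_\tau(s,s)=0$, $\delta_\tau(s,t)=\delta_\tau(t,s)$, and $\delta_\tau(s,u)\le\delta_\tau(s,t)+\delta_\tau(t,u)$. The plan is to prove these properties by fixed-point induction. Since $\Delta_\tau$ is monotone on the complete lattice $S\times S\to[0,1]$ (ordered pointwise), its least fixed point is the supremum of the transfinite iteration $d_0=\mathbf{0}$, $d_{\alpha+1}=\Delta_\tau(d_\alpha)$, $d_\lambda=\sup_{\alpha<\lambda}d_\alpha$. (In fact $\Delta_\tau$ is continuous here, since the infimum over the compact coupling polytope is attained, so $\omega$ iterations suffice, but the transfinite version avoids having to verify this.)

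It therefore suffices to show that the set of pseudometrics is preserved by $\Delta_\tau$ and closed under pointwise suprema of chains. Closure under suprema of chains is routine: reflexivity and symmetry pass to the limit directly. For the triangle inequality, take $\sup$ on the right first, using that on a chain $d_\alpha(s,t)+d_\beta(t,u)\le 2\cdot$ the value at $\max(\alpha,\beta)$ summed appropriately. The base $d_0=\mathbf{0}$ is trivially a pseudometric.

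\textbf{Preservation under $\Delta_\tau$.} Assume $d$ is a pseudometric.

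\emph{Reflexivity.} Since $\ell(s)=\ell(s)$, we may choose the diagonal coupling $\omega(x,x)=\tau(s)(x)$. This gives $\Delta_\tau(d)(s,s)\le\sum_x\tau(s)(x)\,d(x,x)=0$.

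\emph{Symmetry.} The label condition is symmetric, and transposing a coupling, $\omega^T(u,v)=\omega(v,u)$, is a bijection $\Creal(\tau(s),\tau(t))\to\Creal(\tau(t),\tau(s))$ that preserves the objective because $d$ is symmetric.

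\emph{Triangle inequality.} This is the main step. If $\ell(s)\ne\ell(u)$, then either $\ell(s)\ne\ell(t)$ or $\ell(t)\ne\ell(u)$, so the right-hand side is at least $1$ (the values are at most $1$ since $\Delta_\tau(d)\le 1$). So assume all three labels are equal. Take couplings $\omega_1\in\Creal(\tau(s),\tau(t))$ and $\omega_2\in\Creal(\tau(t),\tau(u))$; the infima are attained by compactness, or else we work with $\epsilon$-optimal couplings. Glue them via
\[
\omega(x,z)=\sum_{y:\,\tau(t)(y)>0}\frac{\omega_1(x,y)\,\omega_2(y,z)}{\tau(t)(y)} .
\]
This lies in $\Creal(\tau(s),\tau(u))$, because $\omega_1(x,y)=0$ whenever $\tau(t)(y)=0$ (and similarly for $\omega_2$). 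Applying the triangle inequality for $d$ inside the sum, i.e.\ $d(x,z)\le d(x,y)+d(y,z)$, splits the cost into $\sum\omega_1 d+\sum\omega_2 d$. Taking infima then yields the inequality.

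The gluing step is the only real obstacle: the care needed is in handling the zero-mass $y$ and verifying that both marginals of $\omega$ are correct, which follows by summing over $z$, respectively over $x$.
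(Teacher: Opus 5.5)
Your proof is correct. The paper gives no argument for this statement and only cites \cite[Theorem~2.1.30]{T18}, so there is no in-paper route to compare against. Your route is a standard self-contained one: induct along the iterates of $\Delta_\tau$ from $\mathbf{0}$, and carry the triangle inequality through by gluing two couplings along the middle marginal $\tau(t)$. The key computations all check out: the two marginals of the glued $\omega$, the vanishing of $\omega_1(x,y)$ and $\omega_2(y,z)$ whenever $\tau(t)(y)=0$, and the splitting of the cost via the triangle inequality for $d$.

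Two small points need tidying. First, your sentence on closure under suprema of chains is garbled and more complicated than necessary. For any family of pseudometrics $d_\alpha$ with pointwise supremum $d$, each $\alpha$ gives $d_\alpha(s,u)\le d_\alpha(s,t)+d_\alpha(t,u)\le d(s,t)+d(t,u)$. Taking the supremum over $\alpha$ on the left finishes the argument, and no chain property is needed. Second, the trivial case of the triangle inequality should be split on whether $\ell(s)\neq\ell(t)$ or $\ell(t)\neq\ell(u)$; then the right-hand side is at least $1\ge\Delta_\tau(d)(s,u)$. Branching on $\ell(s)\neq\ell(u)$ and then assuming all three labels are equal skips the case $\ell(s)=\ell(u)\neq\ell(t)$. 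That case is harmless, since the right-hand side is then $2$, but it should be stated.

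Your parenthetical claim that $\omega$ iterations suffice is true, but attainment of the infimum is not by itself the reason. The clean justification is that $\Delta_\tau$ is nonexpansive in the sup norm, so it preserves limits of increasing sequences. Since you do not rely on this, it does not affect the proof.
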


\begin{theorem}[{\cite[Theorem~4.10]{DGJP04}}]
\label{theorem:distance-zero}
For all $s$, $t \in S$, $s \sim t$ if and only if $\delta_\tau(s, t) = 0$.
\end{theorem}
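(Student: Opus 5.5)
We prove the two directions of Theorem~\ref{theorem:distance-zero} separately. In both, the key ingredients are Definition~\ref{definition:delta}, the coupling-based Definition~\ref{definition:probabilistic-bisimilary}, and the fact that $\delta_\tau$ is the least fixed point of $\Delta_\tau$.

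\emph{Bisimilar implies distance zero.} Let $R$ be a bisimulation. Define $d_R : S \times S \to [0,1]$ by $d_R(s,t) = 0$ if $(s,t) \in R$ and $d_R(s,t) = 1$ otherwise. We check that $\Delta_\tau(d_R) \le d_R$ pointwise. Off $R$ this holds trivially, since $d_R$ equals $1$ there and $\Delta_\tau$ takes values in $[0,1]$. On $R$, take $(s,t) \in R$. Then $\ell(s) = \ell(t)$, and there is a coupling $\omega$ with $\support(\omega) \subseteq R$. For this $\omega$ we get $\sum \omega(u,v)\, d_R(u,v) = 0$, so $\Delta_\tau(d_R)(s,t) = 0$. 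Hence $d_R$ is a pre-fixed point. By Knaster--Tarski (the lattice $[0,1]^{S\times S}$ is complete and $\Delta_\tau$ is monotone), the least fixed point satisfies $\delta_\tau \le d_R$. Therefore $\delta_\tau(s,t) = 0$ for every $(s,t) \in R$.

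\emph{Distance zero implies bisimilar.} Let $R = \{(s,t) \mid \delta_\tau(s,t) = 0\}$. By Theorem~\ref{theorem:pseudometric}, $R$ is an equivalence relation. We show that $R$ is a bisimulation.

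Take $(s,t) \in R$. Labels agree, because otherwise $\delta_\tau(s,t) = \Delta_\tau(\delta_\tau)(s,t) = 1$. It remains to find a coupling with support in $R$. The set $\Creal(\tau(s),\tau(t))$ is a compact polytope, and $\omega \mapsto \sum \omega(u,v)\,\delta_\tau(u,v)$ is linear in $\omega$, so the infimum in Definition~\ref{definition:delta} is attained. Since $\delta_\tau$ is a fixed point, this gives some $\omega$ with $\sum \omega(u,v)\,\delta_\tau(u,v) = \delta_\tau(s,t) = 0$. All terms are nonnegative, so every pair in $\support(\omega)$ has distance $0$, i.e.\ $\support(\omega) \subseteq R$. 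Thus $R$ is a bisimulation, and $s \sim t$.

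The only delicate points are attainment of the infimum, which follows from compactness and linearity, and the use of the least-fixed-point characterization in the first direction. Neither is a serious obstacle. Note that the second direction uses only that $\delta_\tau$ is some fixed point, while the first genuinely needs leastness: the constant function $1$ on pairs with differing labels is not necessarily a fixed point, but other, larger fixed points can exist.
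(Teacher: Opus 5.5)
Your proof is correct, but it takes a different route from the paper, which gives no argument of its own and imports the result from \cite{DGJP04}. In that source the distance is defined through real-valued functional expressions, a quantitative variant of Larsen and Skou's modal logic. Zero distance is then related to bisimilarity via the logical characterization of probabilistic bisimulation. Applying that result to Definition~\ref{definition:delta} tacitly also relies on the known agreement between that logical definition and the least-fixed-point, coupling-based formulation used here.

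You instead argue directly from this paper's definitions. For the forward direction, Park induction with the function $d_R$ (equal to $0$ on a bisimulation $R$ and $1$ elsewhere) gives $\delta_\tau \le d_R$. For the converse, you use three facts: $\delta_\tau$ is a fixed point; a linear function attains its infimum over the nonempty compact polytope $\Creal(\tau(s),\tau(t))$; and \cref{theorem:pseudometric} makes the zero-set an equivalence relation. Together these show the zero-set is a bisimulation in the sense of Definition~\ref{definition:probabilistic-bisimilary}.

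Your route gives a short, self-contained argument tailored to the finite, undiscounted, coupling-based setting of this paper. The cited route gives generality (continuous state spaces, discount factors) and the link to logical characterizations, at the cost of measure-theoretic machinery and a translation between definitions.

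Your closing remark is muddled, though it does not affect the proof. The clean statement is that the constant function $1$ on all of $S \times S$ is always a fixed point of $\Delta_\tau$ (the greatest one), and it is nonzero even on $S^2_\Delta$; this is exactly why the first direction needs leastness. As written, your second direction also uses more than ``some fixed point'': it invokes \cref{theorem:pseudometric}, which fails for the constant-$1$ fixed point. For an arbitrary fixed point you would have to pass to the equivalence closure of the zero-set and glue couplings.
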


The bisimilarity distance function $\delta_{\_}(s, t)$ is \emph{lower semi-continuous} at $\tau$ if for any sequence $(\tau_n)_n$ converging to $\tau$ we have $\liminf_n \delta_{\tau_n}(s, t) \geq \delta_\tau(s, t)$ and \emph{upper semi-continuous} at $\tau$ if we have $\limsup_n \delta_{\tau_n}(s, t) \leq \delta_\tau(s, t)$.  Lastly, $\delta_{\_}(s, t)$ is \emph{continuous} at $\tau$ if it is both lower semi-continuous and upper semi-continuous at $\tau$. The function $\delta_{\_}(s, t)$ is lower semi-continuous (upper semi-continuous, continuous, resp.) if it is lower semi-continuous (upper semi-continuous, continuous, resp.) at $\tau$ for all transition probability functions $\tau$.

\begin{proposition}[{\cite[Proposition~1]{FKPB25c}}]
\label{proposition:lower-semi-continuous}
For all $s$, $t \in S$, the function $\delta_{\_}(s, t) : (S \to \Dreal(S)) \to [0, 1]$ is lower semi-continuous.
\end{proposition}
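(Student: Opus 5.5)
We have to prove that $\delta_{\_}(s,t)$ is lower semi-continuous. Fix $\tau$ and a sequence $(\tau_n)_n$ converging to $\tau$; we must show $\liminf_n \delta_{\tau_n}(s,t) \geq \delta_\tau(s,t)$.

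The plan is to pass to a subsequence along which the whole distance function converges, and then show that the limit is a fixed point of $\Delta_\tau$. Choose a subsequence realising the $\liminf$ at $(s,t)$. Since $[0,1]^{S\times S}$ is compact, we may further assume $\delta_{\tau_n} \to d$ pointwise for some $d : S \times S \to [0,1]$. It then suffices to show $d \geq \delta_\tau$. Because $\delta_\tau$ is the least fixed point, Knaster--Tarski gives $\delta_\tau \leq d$ as soon as $\Delta_\tau(d) \leq d$, that is, as soon as $d$ is a pre-fixed point.

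First consider pairs with $\ell(u) \neq \ell(v)$. There $\delta_{\tau_n}(u,v) = 1$ for every $n$, so $d(u,v) = 1 = \Delta_\tau(d)(u,v)$.

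Now consider pairs with equal labels. For each $n$ the fixed point equation, together with compactness of the coupling polytope, yields an optimal coupling $\omega_n \in \Creal(\tau_n(u),\tau_n(v))$ with
\[
\delta_{\tau_n}(u,v) = \sum_{x,y} \omega_n(x,y)\, \delta_{\tau_n}(x,y).
\]
Pass to a further subsequence along which $\omega_n \to \omega$. The marginal constraints are linear and closed, and $\tau_n(u) \to \tau(u)$ and $\tau_n(v) \to \tau(v)$, so $\omega \in \Creal(\tau(u),\tau(v))$. Taking limits of the finite sum of products gives
\[
d(u,v) = \sum_{x,y} \omega(x,y)\, d(x,y) \geq \Delta_\tau(d)(u,v).
\]
Since $S$ is finite, we can extract the subsequences for all pairs simultaneously. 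Hence $\Delta_\tau(d) \leq d$, so $\delta_\tau \leq d$, and in particular $\liminf_n \delta_{\tau_n}(s,t) = d(s,t) \geq \delta_\tau(s,t)$.

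The main obstacle is the limit step for couplings: the constraint sets $\Creal(\tau_n(u),\tau_n(v))$ depend on $n$. This is handled by compactness of $\Dreal(S\times S)$ together with closedness of the marginal conditions under joint convergence, as described above. The remaining steps are routine.
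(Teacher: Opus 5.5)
Your proof is correct. Each step is sound: passing to a subsequence that realises the $\liminf$, extracting convergent subsequences of $\delta_{\tau_n}$ and of the optimal couplings, checking that the limit couplings lie in $\Creal(\tau(u),\tau(v))$, and concluding that the limit $d$ is a pre-fixed point of the monotone map $\Delta_\tau$, so that $\delta_\tau \sqsubseteq d$ by Knaster--Tarski.

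The paper gives no proof of this proposition; it imports it from \cite{FKPB25c}. The closest argument inside the paper is the first part of the proof of \cref{lemma:discontinuous}, which works at the level of policies. There one has optimal policies $P_n \in \mathcal{P}_{\tau_n}$ converging to a policy $P \in \mathcal{P}_\tau$; for an arbitrary sequence $\tau_n \to \tau$ such a $P$ can be obtained by compactness. One then takes a subsequence along which $\reachone{P_{f(n)}} = \delta_{\tau_{f(n)}}$ converges. Joint continuity of $(P,d) \mapsto \Gamma_P(d)$ (\cref{corollary:Gamma-continuous}) shows the limit is a fixed point of $\Gamma_P$. Then \cref{proposition:minimal-coupling} gives $\delta_\tau \sqsubseteq \reachone{P} \sqsubseteq \lim_n \delta_{\tau_{f(n)}}$, which also yields lower semi-continuity.

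Your version is the same compactness-plus-least-fixed-point idea carried out one level lower: per pair, with couplings, and directly against $\Delta_\tau$. As a result it needs neither the reachability characterisation of $\delta_\tau$ nor the Lipschitz estimates for $\Gamma$, and only a pre-fixed-point inequality rather than a fixed-point equation. The policy-level version, by contrast, also produces a limit policy $P \in \mathcal{P}_\tau$ with $\reachone{P}$ below the limiting distances. That policy is exactly what the paper exploits in the discontinuity direction of \cref{theorem:main-continuity}.
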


The following subsets of $S \times S$ play a key role in the subsequent discussion.
\begin{definition}
\label{definition:sets}
The sets $S^2_\Delta$, $S^2_{0,\tau}$, $S^2_1$, $S^2_{?,\tau}$, and $S^2_{0?}$ are defined by
\begin{align*}
S^2_\Delta = &~ \{\, (s, s) \mid s \in S \,\}\\
S^2_{0,\tau} = &~ \{\, (s, t) \in S \times S \mid s \not= t \wedge s \sim t \,\}\\
S^2_1 = &~ \{\, (s, t) \in S \times S \mid \ell(s) \not= \ell(t) \,\}\\
S^2_{?,\tau} = &~ (S \times S) \setminus (S^2_\Delta \cup S^2_{0,\tau} \cup S^2_1)\\
S^2_{0?} = &~ S^2_{0,\tau} \cup S^2_{?,\tau}
\end{align*}
\end{definition}
The first four sets form a partition of $S \times S$.  Observe that the sets $S^2_{0,\tau}$ and $S^2_{?,\tau}$ depend on $\tau$ and may, therefore, change when we perturb $\tau$, whereas the sets $S^2_\Delta$ and $S^2_1$ stay the same.  Note that $S^2_{0?} = (S \times S) \setminus (S^2_\Delta \cup S^2_1)$.  Hence, this set also stays the same if we perturb $\tau$.  Furthermore, note that $\mathord{\sim} = S^2_\Delta \cup S^2_{0,\tau}$ and for all $(s, t) \in S^2_1$, we have $\delta_\tau(s,t) = 1$.

\begin{definition}
Let $\tau : S \to \Dreal(S)$.  The set $\mathcal{P}_\tau$ of \emph{policies} for $\tau$ is defined by
\[
\mathcal{P}_\tau
=
\left \{\, P : (S \times S) \to \Dreal(S \times S) ~ \middle \vert
\begin{array}{l}
\forall (s, t) \in S^2_\Delta \cup S^2_{0?} : P(s, t) \in \Creal(\tau(s), \tau(t))\\
\forall (s, t) \in S^2_1 : \support(P(s, t)) = \{ (s, t) \}
\end{array}
\right \}.
\]
\end{definition}
Note that a policy $P \in \mathcal{P}_\tau$ induces a Markov chain $\chain{P}$.  The subscript $\tau$ is omitted when clear from the context.

Intuitively, a policy chooses, for each pair of states $(s, t) \in S^2_\Delta \cup S^2_{0?}$, a joint transition distribution, which allows us to reason about reachability probabilities in the induced Markov chain on the product state space $S \times S$.

The following proposition characterizes $\delta_\tau$ in terms of policies.
\begin{proposition}
\label{proposition:reach-s1}
For all $s$, $t \in S$, $\displaystyle \delta_\tau(s, t) = \min_{P \in \mathcal{P}} \reachone{P}(s, t)$, where $\reachone{P}(s, t)$ is the probability with which $(s, t)$ reaches $S_1^2$ in $\chain{P}$.
\end{proposition}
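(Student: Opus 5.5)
We will prove the two inequalities $\delta_\tau \le \min_P \reachone{P}$ and $\delta_\tau \ge \reachone{P^*}$ for a suitably constructed policy $P^*$. Both rely on the standard characterization of reachability probabilities in the finite Markov chain $\chain{P}$: the vector $\reachone{P}$ is the least fixed point of the operator
\[
\Phi_P(d)(s,t) = \begin{cases} 1 & \text{if } (s,t)\in S^2_1,\\ \sum_{u,v} P(s,t)(u,v)\, d(u,v) & \text{otherwise.}\end{cases}
\]
Two preliminary facts will be used throughout. First, the set of policies is a product of compact polytopes, so it is compact. Second, for fixed $d$ the infimum in $\Delta_\tau(d)(s,t)$ is attained, since it is a linear program over the compact polytope $\Creal(\tau(s),\tau(t))$. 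In particular, for every $d$ and every policy $P$ we have $\Delta_\tau(d) \le \Phi_P(d)$.

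For the inequality $\delta_\tau \le \reachone{P}$, fix an arbitrary policy $P$. By the fixed-point equation, $\reachone{P} = \Phi_P(\reachone{P})$. Combining this with $\Delta_\tau(\reachone{P}) \le \Phi_P(\reachone{P})$ gives $\Delta_\tau(\reachone{P}) \le \reachone{P}$, so $\reachone{P}$ is a prefixed point of the monotone operator $\Delta_\tau$. By Knaster--Tarski, $\delta_\tau \le \reachone{P}$. On $S^2_1$ both sides equal $1$, so there is nothing further to check there.

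For the reverse inequality, let $P^*$ select, for each $(s,t)\notin S^2_1$, a coupling that attains the infimum in $\Delta_\tau(\delta_\tau)(s,t)$. Then $\delta_\tau = \Phi_{P^*}(\delta_\tau)$, so $\delta_\tau$ is some fixed point of $\Phi_{P^*}$, but it need not be the least one. This is the main obstacle: in a bottom SCC of $\chain{P^*}$ that avoids $S^2_1$, any constant is a fixed point, so a naive choice of minimizing couplings can produce $\reachone{P^*} > \delta_\tau$.

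To overcome this, I will first note that bisimilar pairs cause no trouble. On $\sim$, the couplings can be chosen with support inside $\sim$ by Definition~\ref{definition:probabilistic-bisimilary}. Hence $\sim$ is closed in $\chain{P^*}$, and $\reachone{P^*}=0=\delta_\tau$ there by Theorem~\ref{theorem:distance-zero}.

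For the pairs in $S^2_{?,\tau}$, I will argue that among all optimal couplings one can choose them so that from every pair with $\delta_\tau>0$ the set $S^2_1 \cup \mathord{\sim}$ is reached with probability $1$. The intended argument goes by contradiction. Suppose some closed set $E$ of $\chain{P^*}$ avoids $S^2_1$ and every optimal choice of couplings keeps a nonempty such set $E$. Let $m=\max_E \delta_\tau>0$ and $M\subseteq E$ the pairs attaining $m$. Since the couplings on $M$ are optimal and $E$ is closed, each pair in $M$ transitions only into $M$. I would then show that $M\cup\mathord{\sim}$ (after equivalence closure) is a bisimulation, which contradicts $\delta_\tau>0$ on $M$.

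Once this choice of $P^*$ is in place, the difference $\reachone{P^*}-\delta_\tau$ vanishes on $S^2_1\cup\mathord{\sim}$ and is $P^*$-harmonic elsewhere. The bottom SCCs are all contained in $S^2_1\cup\mathord{\sim}$ and are reached almost surely, so the difference is $0$. Consequently the minimum in the statement is attained, and it equals $\delta_\tau$.

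If the bisimulation step proves delicate, the alternative I would try is to define $\delta_\tau$ via Kleene iteration from $0$ and compare it with the step-bounded reachability $\reachone{P^*}$ up to horizon $n$ for optimal finite-horizon policies. One would then pass to a limit policy using compactness of the policy space, together with the fact that reachability is lower semicontinuous in the policy as a supremum of continuous finite-horizon probabilities.
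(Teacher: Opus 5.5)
Your proof is correct, but the second half is a long detour around an obstacle that does not exist. Your first inequality (every $\reachone{P}$ is a prefixed point of $\Delta_\tau$, so $\delta_\tau \le \reachone{P}$ by Knaster--Tarski) is the standard half of the result the paper cites from \cite{CBW12}, with \cite{BK08} supplying the least-fixed-point characterization of $\reachone{P}$.

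For the other half, let $P^*$ pick a minimizing coupling at every $(s,t)\notin S^2_1$. Then $\Phi_{P^*}(\delta_\tau)=\delta_\tau$. Because $\reachone{P^*}$ is the \emph{least} fixed point of $\Phi_{P^*}$, this already gives $\reachone{P^*}\le\delta_\tau$. Combined with your first inequality, $\reachone{P^*}=\delta_\tau$, and the minimum is attained.

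Your remark that a naive choice ``can produce $\reachone{P^*}>\delta_\tau$'' is false. Since $\reachone{P^*}$ is the least fixed point, any mismatch can only go the other way, $\reachone{P^*}<\delta_\tau$. An example would be $\reachone{P^*}=0$ on a bottom SCC disjoint from $S^2_1$. Your first inequality rules out exactly that. Greedy policies getting trapped in end components is the pitfall for \emph{maximizing} reachability, not for minimizing it. The paper's appendix uses this same short argument: Proposition~\ref{proposition:optimal-vertex} proves that a policy of minimizing couplings is optimal just by checking $\Gamma_P(\delta_\tau)\sqsubseteq\delta_\tau$ via Proposition~\ref{proposition:optimal-characterization}.

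Your detour does go through: closing off $\mathord{\sim}$, excluding bottom SCCs inside $S^2_{?,\tau}$ via the maximizing set $M$, then a harmonic-function argument. For this statement, though, it buys nothing extra. It costs you Theorem~\ref{theorem:distance-zero}, plus a gluing lemma for couplings to show that the equivalence closure of $M\cup\mathord{\sim}$ is a bisimulation.

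Even the contradiction step is redundant. On any closed set $E$ of $\chain{P^*}$ disjoint from $S^2_1$ you have $\reachone{P^*}=0$, hence $\delta_\tau=0$ on $E$ by your first inequality. This is how the paper derives Proposition~\ref{proposition:ccc} from the statement. Your Kleene-iteration fallback is unnecessary for the same reason.
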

The proof of Proposition~\ref{proposition:reach-s1} follows from {\cite[Theorem~8]{CBW12}} and {\cite[Theorem~10.15]{BK08}}.
This characterization reduces the distance to a reachability problem in a suitably constructed Markov chain.

We say that a policy $P \in \mathcal{P}$ is \emph{optimal} if for all $s$, $t \in S$, we have $\reachone{P}(s, t) = \delta_\tau(s, t)$. For $s$, $t \in S$, we say that $\omega \in \Creal(\tau(s), \tau(t))$ is \emph{optimal} if $\delta_\tau(s, t) = \omega \cdot \delta_\tau \overset{\text{def}}{=} \sum_{u, v \in S} \omega(u, v) \; \delta_\tau(u, v)$.  For $s$, $t \in S$, we denote the set of optimal couplings of $\tau(s)$ and $\tau(t)$ by $\Coptimal(\tau(s), \tau(t))$.  Note that $\Coptimal(\tau(s), \tau(t))$ is nonempty (see 
\ifthenelse{\isundefined{\techReport}}{\cite[Remark~42]{arxiv}}{\cref{remark:optimal-couplings}}).

%% file: text/robust-bisim.tex
\section{Robust Probabilistic Bisimilarity}
\label{section:robust-bisimiliarity}

Robust bisimilarity \cite{FKPB25c} is an equivalence relation that is a subset of bisimilarity, as it also accounts for uncertainty in the transition probabilities.
Two states that are robustly bisimilar remain almost behaviourally equivalent under small perturbations of the transition probabilities, that is, their bisimilarity distance remains small.

\begin{definition}[{\cite[Definition~5]{FKPB25c}}]
\label{definition:robust-probabilistic-bisimilarity}
\emph{Robust bisimilarity}, denoted $\robust$, is defined for $s$, $t \in S$ as $s \simeq t$ if there exists a policy $P \in \mathcal{P}$ such that
$\reachdelta{P}(s, t) = 1$, where $\reachdelta{P}(s, t)$ is the probability of reaching $S^2_\Delta$ from $(s, t)$ in $\chain{P}$.
\end{definition}

The following theorem states that robust bisimilarity is a sufficient condition for the continuity of the bisimilarity distance function for bisimilar pairs of states.

\begin{theorem}[{\cite[Theorem~2]{FKPB25c}}]
\label{theorem:cav-theorem}
For all $s$, $t \in S$, if $s \simeq t$ then $s \sim t$, and $\delta_{\_}(s, t) : (S \to \Dreal(S)) \to [0, 1]$ is continuous at $\tau$.
\end{theorem}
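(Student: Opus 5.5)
The plan is to derive bisimilarity from the distance characterization, and then to obtain continuity from lower semi-continuity together with a perturbation argument for upper semi-continuity at distance zero.

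\textbf{Step 1: $s \simeq t$ implies $s \sim t$.} Let $P \in \mathcal{P}$ witness $\reachdelta{P}(s, t) = 1$. I modify $P$ on $S^2_\Delta$ to $P'$, with $P'(u, u)$ the diagonal coupling of $\tau(u)$ with itself. The probability of first reaching $S^2_\Delta$ depends only on $P$ outside $S^2_\Delta$, so $\reachdelta{P'}(s, t) = 1$. Moreover, $S^2_\Delta$ is closed under $P'$ and disjoint from $S^2_1$. Hence $\reachone{P'}(s, t) = 0$, so $\delta_\tau(s, t) = 0$ by Proposition~\ref{proposition:reach-s1}, and $s \sim t$ by Theorem~\ref{theorem:distance-zero}.

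\textbf{Step 2: reduce continuity to a limsup bound.} By Proposition~\ref{proposition:lower-semi-continuous}, $\delta_{\_}(s, t)$ is lower semi-continuous at $\tau$. Since $\delta_\tau(s, t) = 0$, it remains to show that $\limsup_n \delta_{\tau_n}(s, t) = 0$ whenever $\tau_n \to \tau$.

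\textbf{Step 3 (main obstacle): coupling perturbation lemma.} I need the following. If $\omega \in \Creal(\mu, \nu)$, then for any $\mu', \nu'$ there is $\omega' \in \Creal(\mu', \nu')$ with
\[
\|\omega' - \omega\|_1 \le 2(\|\mu' - \mu\|_1 + \|\nu' - \nu\|_1).
\]
I would prove this by gluing.
\begin{itemize}
\item Take a maximal coupling $\alpha$ of $\mu'$ and $\mu$, and a maximal coupling $\beta$ of $\nu$ and $\nu'$.
\item Set $\omega'(x', y') = \sum_{x, y} \alpha(x' \mid x)\,\omega(x, y)\,\beta(y' \mid y)$, where $\alpha(\cdot \mid x)$ and $\beta(\cdot \mid y)$ are the conditionals (arbitrary where $\mu(x) = 0$ or $\nu(y) = 0$).
\item The marginals of $\omega'$ are $\mu'$ and $\nu'$.
\item The mass moved off the identity is at most the total variation distances, which gives the bound.
\end{itemize}

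\textbf{Step 4: build perturbed policies.} Let $R$ be the set of pairs reachable from $(s, t)$ in $\chain{P'}$ before hitting $S^2_\Delta$. Every pair in $R$ reaches $S^2_\Delta$ with probability $1$, so $R \cap S^2_1 = \emptyset$. Define $P_n \in \mathcal{P}_{\tau_n}$ as follows.
\begin{itemize}
\item On $S^2_\Delta$, use diagonal couplings of $\tau_n$.
\item On $R$, use the couplings given by Step 3, applied to $P'(u, v)$ with marginals $\tau_n(u)$ and $\tau_n(v)$.
\item Elsewhere, choose arbitrarily.
\end{itemize}
Then $\eta_n := \max_{(u, v) \in R} \|P_n(u, v) - P'(u, v)\|_1 \to 0$.

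\textbf{Step 5: bound the leakage.} Since the chain is finite, there are $N$ and $q > 0$ such that from every state of $R$, $P'$ hits $S^2_\Delta$ within $N$ steps with probability at least $q$. For any $N$-step block under $P_n$ started in $R$:
\begin{itemize}
\item the probability of leaving $R \cup S^2_\Delta$ is at most $N\eta_n$;
\item the probability of hitting $S^2_\Delta$ is at least $q - N\eta_n$.
\end{itemize}
A geometric-series argument over successive blocks then gives
\[
\reachone{P_n}(s, t) \le \frac{N\eta_n}{q - N\eta_n} \to 0,
\]
using that $S^2_\Delta$ is absorbing and disjoint from $S^2_1$ under $P_n$. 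By Proposition~\ref{proposition:reach-s1}, $\delta_{\tau_n}(s, t) \le \reachone{P_n}(s, t)$, which proves $\limsup_n \delta_{\tau_n}(s, t) = 0$ and hence continuity.

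The delicate points are Step 3, namely constructing couplings that move continuously with their marginals, and Step 5, namely turning reach-$\Delta$-with-probability-$1$ into a uniform quantitative escape bound.
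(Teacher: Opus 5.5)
Your proof is correct. Step~1 is essentially the paper's argument for Theorem~\ref{theorem:main-continuity}(a): Proposition~\ref{proposition:phi-impossible} also passes to an $S^2_\Delta$-closed policy, citing Proposition~\ref{proposition:delta-closed-policy} for the step you do explicitly with diagonal couplings.

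For continuity you take a genuinely different route. In the paper, this theorem is the special case $\delta_\tau(s,t)=0$ of Lemma~\ref{lemma:continuous}, which works as follows:
\begin{itemize}
\item It fixes an $S^2_\Delta$-closed optimal $P$ and cites Proposition~\ref{proposition:policy-Lipschitz} to get $S^2_\Delta$-closed $P_n \in \mathcal{P}_{\tau_n}$ converging to $P$.
\item It proves $\lim_n \reachone{P_n}(u,v) = \reachone{P}(u,v)$ for every pair reachable from $(s,t)$, by induction over the acyclic graph of communication classes of $\chain{P}$.
\item For each non-closed class $C$ it inverts $I - P^C$, which is irreducible and strictly substochastic. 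For the base case it uses Proposition~\ref{proposition:closed-communication-class}: the reachable closed classes lie in $S^2_\Delta$ or are singletons in $S^2_1$.
\end{itemize}

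You instead prove the coupling perturbation lemma yourself by gluing maximal couplings, and you perturb only on the region $R$. You also replace the linear algebra by a direct leakage estimate: a uniform $N$-step hitting bound $q$ for $P'$, plus a per-step deviation of at most $\eta_n$ while in $R$, gives $\reachone{P_n}(s,t) \le N\eta_n/(q - N\eta_n)$. The block bounds in Step~5 implicitly couple the $P_n$- and $P'$-trajectories step by step. Only deviations while in $R$ matter, because on $S^2_\Delta$ the two policies differ but both are supported on $S^2_\Delta$.

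What your route gives:
\begin{itemize}
\item It is self-contained.
\item It yields an explicit rate, $\delta_{\tau_n}(s,t) = O(d_F(\tau_n,\tau))$.
\item Lower semi-continuity is trivial here because $\delta_\tau(s,t) = 0$, so you do not actually need Proposition~\ref{proposition:lower-semi-continuous}.
\end{itemize}

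What the paper's route gives is generality: the same argument covers the hypothesis $\reachdelta{P}(s,t) = 1 - \delta_\tau(s,t)$ with positive distance, which Theorem~\ref{theorem:main-continuity}(b) needs. Your estimate uses that the target value is $0$. It should adapt, though, by letting $R$ avoid $S^2_\Delta \cup S^2_1$ and comparing absorption probabilities into $S^2_1$.
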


The following theorem provides a converse, closing a gap identified in~\cite{FKPB25c}.

\begin{theorem}
\label{theorem:cav-conjecture}
For all $s$, $t \in S$, if $s \sim t$ then $s \simeq t$ holds if and only if $\delta_{\_}(s, t) : (S \to \Dreal(S)) \to [0, 1]$ is continuous at $\tau$.
\end{theorem}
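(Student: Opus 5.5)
The direction ``$s \simeq t$ implies continuity'' is exactly Theorem~\ref{theorem:cav-theorem}. So I would only prove the converse in contrapositive form. Assume $s \sim t$ and $s \not\simeq t$. The goal is to build a sequence $(\tau_n)_n$ converging to $\tau$ with $\liminf_n \delta_{\tau_n}(s,t) > 0$. Since $\delta_\tau(s,t) = 0$, this breaks upper semi-continuity; lower semi-continuity (Proposition~\ref{proposition:lower-semi-continuous}) is irrelevant here.

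\textbf{Step 1: locate a trap.} Consider the finite MDP on $\mathord{\sim}$ whose actions at a pair $(u,v)$ are the couplings in $\Creal(\tau(u),\tau(v))$ with support in $\mathord{\sim}$. Only finitely many vertex couplings of this polytope are needed. By standard MDP theory, the set $W$ of pairs from which $S^2_\Delta$ can be reached almost surely is computable as a greatest fixed point. The assumption $s \not\simeq t$ says $(s,t) \notin W$; note that a policy leaving $\mathord{\sim}$ cannot help, since it would give positive distance. From the fixed-point characterization I would extract a nonempty set $C \subseteq S^2_{0,\tau} \setminus W$ with the following property: for every $(u,v) \in C$, every coupling with support in $\mathord{\sim}$ either puts positive mass outside $W$ or stays inside the non-$W$ region, and no pair of $C$ can reach $S^2_\Delta$ without first leaving through pairs outside $W$. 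Concretely, $C$ is an end component avoiding $S^2_\Delta$, and $(s,t)$ reaches it with positive probability under every such policy. This mirrors the pair $(d_1,d_2)$ in Figure~\ref{fig:policy}.

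\textbf{Step 2: perturb.} For each state $u$ occurring as a first component of a pair in $C$, I would move mass $\frac1n$ of $\tau(u)$ onto a state carrying a label that $\tau(v)$'s successors cannot match. This is possible since $|\ell(S)| \geq 2$, as in the $\varepsilon$-edge $d_1 \to e_2$. The perturbation should be chosen so that, for every $(u,v) \in C$, every coupling of $\tau_n(u)$ and $\tau_n(v)$ either
\begin{itemize}
\item places at least $c/n$ mass on $S^2_1$, or
\item places at least $c/n$ mass on pairs that are not $\tau$-bisimilar,
\end{itemize}
for some constant $c > 0$.

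\textbf{Step 3: compactness and the leakage-versus-escape comparison.} Suppose, for a contradiction, that $\delta_{\tau_n}(s,t) \to 0$ along a subsequence. Take optimal policies $P_n$ for $\tau_n$ (Proposition~\ref{proposition:reach-s1}) and pass to a further subsequence with $P_n \to P$. The limit $P$ is a policy for $\tau$, and since distances vanish along the path, its relevant supports lie in $\mathord{\sim}$. Because $(s,t) \notin W$, under $P$ the pair $(s,t)$ enters $C$ with probability bounded below. Inside $C$, each step under $P_n$ leaks at least order $c/n$ mass towards $S^2_1$, either directly or via pairs whose $\tau_n$-distance stays bounded below. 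Escape from $C$ towards $S^2_\Delta$ must use couplings that differ from $P$ only by $o(1)$.

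\textbf{Main obstacle.} The hard part is this rate comparison: the escape to $S^2_\Delta$ under $P_n$ may also be of order $\frac1n$, so the two effects compete. I would first try to choose the perturbation direction so that any coupling shifting mass $\eta$ towards the diagonal is forced to shift mass at least a fixed multiple of $\eta$ into $S^2_1$. Then the ratio of absorption into $S^2_1$ versus $S^2_\Delta$ from $C$ stays bounded below uniformly in $n$. This would give $\liminf_n \delta_{\tau_n}(s,t) > 0$, contradicting the assumption and establishing discontinuity.
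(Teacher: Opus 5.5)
Your outline has the right overall shape (perturb, take optimal policies, pass to a limit, compare leakage into $S^2_1$ with escape from a trap). However, the two steps that carry the proof are missing, and the concrete choices you commit to fail.

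In Step~2, shifting a uniform mass $\frac1n$ does not give the property you require. Take $S=\{p,q,z\}$ with $\tau(p)(q)=\tau(q)(p)=1$, $\ell(p)=\ell(q)\neq\ell(z)$. Then $p\sim q$, $p\not\simeq q$ and $C=\{(p,q),(q,p)\}$. Your recipe moves $\frac1n$ from both $p$ and $q$ onto $z$. The coupling with mass $1-\frac1n$ on $(q,p)$ and $\frac1n$ on $(z,z)$ then shows that $p$ and $q$ remain bisimilar under $\tau_n$. So no mass goes to $S^2_1$ or to non-bisimilar pairs, and $\delta_{\tau_n}(p,q)=0$ for all $n$.

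The missing idea is to break this symmetry with distinct integer weights. The paper perturbs \emph{every} state $u$, moving $\mathrm{index}(u)\,\varepsilon$ from a successor whose label $\mathrm{label}_\tau(u)$ is shared by all states bisimilar to $u$ onto a state with another label. Then for distinct $u\sim v$ one gets $|\tau_\varepsilon(u)(S_a)-\tau_\varepsilon(v)(S_a)|\geq\varepsilon$, so every coupling puts mass at least $\varepsilon$ on $S^2_1$ (Corollary~\ref{corollary:bad-pairs}).

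The perturbation must also be global because your Step~1 claim, that $(s,t)$ enters one fixed trap $C$ under every policy, is false. Suppose $s$ and $t$ move with probability $\frac12$ each to self-looping states $u_1,u_2$ resp.\ $v_1,v_2$ of a common label. Then the two vertex couplings of $\tau(s)$ and $\tau(t)$ lead into disjoint sets of traps. The paper therefore fixes $C$ only afterwards, as a closed communication class $C\subseteq S^2_{0,\tau}$ of $\chain{P}$ for the limit policy $P$.

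Your ``main obstacle'' is the heart of the proof, and it remains unresolved. Mere convergence $P_n\to P$ only tells you that the escape from $C$ is $o(1)$, which cannot be weighed against a leak of order $\varepsilon_n$. Your proposed remedy is only an intention: you do not exhibit a perturbation direction with that property, and none is needed.

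The paper proceeds as follows.
\begin{itemize}
\item It takes optimal \emph{vertex} policies $P_n$ for $\tau_{\varepsilon_n}$ (Proposition~\ref{proposition:optimal-vertex}) and passes to a subsequence on which every $P_n(u,v)$ has the same support graph, which is a forest.
\item The perturbed marginals are affine in $\varepsilon$ with integer slopes. Peeling leaves of the forest therefore shows $P_n=P+\varepsilon_n D$ with $D$ integer valued and $P$ a policy for $\tau$ (Propositions~\ref{proposition:affine} and~\ref{proposition:limit-vertex-policy}).
\item If $P$ is not optimal for $(s,t)$, then $\delta_\tau(s,t)<\reachone{P}(s,t)\leq\lim_n\reachone{P_n}(s,t)$ already.
\item Otherwise $(s,t)$ reaches such a $C$. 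For $(u,v)\in C$ the escape under $P_n$ is exactly $\varepsilon_n D(u,v)(\overline{C})\leq M\varepsilon_n$, while the leak into $S^2_1$ is at least $\varepsilon_n$. This gives $\reachone{P_n}\geq\frac1M$ on $C$ for every $n$.
\item The limit of $\reachone{P_n}$ along a convergent subsequence is a fixed point of $\Gamma_P$ lying above $\reachone{P}$. Hence the strict inequality $\reachone{P}<\lim_n\reachone{P_n}$ propagates from $C$ back along a path to $(s,t)$.
\end{itemize}
These two ingredients (integer-weighted global perturbation and the affine vertex-policy structure) are what Lemma~\ref{lemma:discontinuous} supplies. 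The statement is then the case $\delta_\tau(s,t)=0$ of Theorem~\ref{theorem:main-continuity}b. Without them, your Steps~2 and~3 do not go through.
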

We prove Theorem~\ref{theorem:cav-conjecture} in Section~\ref{section:continuity}.

\begin{example}
\label{example:robust-bisimilarity}
Consider the family of labelled Markov chains in Figure~\ref{fig:lmc}.  Let $\varepsilon = 0$.  The equivalence classes under traditional bisimilarity for $\tau$ are displayed in Figure~\ref{fig:equivalence-classes}.
Among these, the only non-trivial robustly bisimilar sets of states are $\{ c_1, c_3 \}$ and $\{ c_2, c_4 \}$.  Observe that each of these pairs reach $S^2_\Delta$ with probability $1$ in $\chain{P_0}$, where the relevant part of $P_0 \in \mathcal{P}_\tau$ is illustrated in Figure~\ref{fig:policy}.

We analyse the continuity of the bisimilarity distance function at $\tau$ for three pairs of bisimilar states;
see \cref{fig:robust-bisimilarity-example} for an illustration.
For the robustly bisimilar pair $(c_1, c_3)$, we have $\delta_{\tau_n}(c_1, c_3) = \frac{\varepsilon}{2} = \frac{1}{2n}$ for all $n \in \nat$. Note that if $n$ is large then the distance is small and $\lim_n \delta_{\tau_n}(c_1, c_3) = 0 = \delta_\tau(c_1, c_3)$.  Thus, $\delta_{\_}(c_1, c_3) : (S \to \Dreal(S)) \to [0, 1]$ is continuous at $\tau$.

On the other hand, the pairs $(b_1, b_2)$ and $(c_1, c_4)$ are bisimilar but not robustly so. 
Indeed, the pair $(b_1,b_2)$ reaches $S^2_\Delta$ with probability only $\frac{1}{2}$ in $\chain{P_0}$, and no policy can improve this.
Due to the $\varepsilon$ transition from state $d_1$ to $e_2$, we have $\delta_{\tau_n}(d_1, d_2) = 1$ for all $n \in \nat$.
It follows that $\delta_{\tau_n}(b_1, b_2) = \frac{1}{2}$ for all $n \in \nat$.
Consequently, the sequence $\delta_{\tau_n}(b_1,b_2)$ does not converge to $\delta_{\tau}(b_1,b_2) = 0$.

Similarly, the pair $(c_1, c_4)$ cannot reach $S^2_\Delta$ under any policy.  Due to the perturbation of the outgoing transitions from $c_3$, we have $\delta_{\tau_n}(c_3, c_4) = 1$ and $\delta_{\tau_n}(e_1, e_2) = 1$ for all $n \in \nat$.
Hence, $\delta_{\tau_n}(c_1, c_4) = 1$ for all $n \in \nat$.
Therefore, $\delta_{\_}(b_1, b_2) : (S \to \Dreal(S)) \to [0, 1]$ and $\delta_{\_}(c_1, c_4) : (S \to \Dreal(S)) \to [0, 1]$ are discontinuous at $\tau$.\lipicsEnd
\end{example}

\begin{figure}[ht!]
\begin{center}
    \begin{tikzpicture}[scale=0.9, every node/.style={scale=0.9}]
    \begin{axis}
    [xlabel={$\varepsilon$}, ylabel={$\delta_{\tau_\varepsilon} (c_1, c_3)$}, width=5cm, height=4cm, at={(0.13\linewidth,0)}, ymin=-0.1,ymax=1.1]
    \addplot[domain=0:1,line width=1.2pt] {x/2};
    \end{axis}
    \begin{axis}
    [xlabel={$\varepsilon$}, ylabel={$\delta_{\tau_\varepsilon} (b_1, b_2)$}, width=5cm, height=4cm, at={(0.5\linewidth,0)}, ymin=-0.1,ymax=1.1]
    \addplot[mark=*, mark options={scale=1}] coordinates {(0,0)};
    \addplot[mark=o, mark options={scale=1.2}] coordinates {(0,0.5)};
    \addplot[domain=0.02:1,line width=1.2pt] {0.5};
    \end{axis}
    \begin{axis}
    [xlabel={$\varepsilon$}, ylabel={$\delta_{\tau_\varepsilon} (c_1, c_4)$}, width=5cm, height=4cm, at={(0.87\linewidth,0)}, ymin=-0.1,ymax=1.1]
    \addplot[mark=*, mark options={scale=1}] coordinates {(0,0)};
    \addplot[mark=o, mark options={scale=1.2}] coordinates {(0,1)};
    \addplot[domain=0.02:1,line width=1.2pt] {1};
    \end{axis}
    \end{tikzpicture}
\end{center}
\caption{An illustration of the distances between the state pairs $(c_1, c_3)$, $(b_1, b_2)$, and $(c_1, c_4)$ from the labelled Markov chain in \cref{fig:running-example}.  All three state pairs are bisimilar, although only $(c_1, c_3)$ is robustly bisimilar.  It follows that the distance is continuous at $\tau$ for $(c_1, c_3)$ only.}
\label{fig:robust-bisimilarity-example}
\end{figure}

\begin{theorem}
\label{theorem:bisim-continuity-poly-time}
Continuity for bisimilar state pairs can be decided in polynomial time.
\end{theorem}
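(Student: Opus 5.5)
By Theorem~\ref{theorem:cav-conjecture}, for a bisimilar pair $(s,t)$ the distance function $\delta_{\_}(s,t)$ is continuous at $\tau$ exactly when $s \simeq t$. The plan is therefore to reduce the question to deciding robust bisimilarity. We first decide $s \sim t$ in polynomial time, by partition refinement or by computing $\delta_\tau$ with the known polynomial-time algorithms and using Theorem~\ref{theorem:distance-zero}. Then we decide whether some policy $P \in \mathcal{P}$ achieves $\reachdelta{P}(s,t) = 1$.

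For the second step, the key observation is that reaching $S^2_\Delta$ with probability one depends only on supports. So we view the product $S \times S$ as a Markov decision process: at each pair $(u,v) \in S^2_{0?}$ the controller chooses a coupling $\omega \in \Creal(\tau(u),\tau(v))$. Pairs in $S^2_\Delta$ are targets. Pairs in $S^2_1$ are absorbing and never reach the target, since their policy is forced to self-loop; pairs in $S^2_\Delta$ are absorbing in the relevant sense. We then compute the set of pairs that can reach $S^2_\Delta$ almost surely. We use the standard almost-sure reachability fixpoint: maintain a candidate set $W$, initially $S^2_\Delta \cup S^2_{0?}$, and repeatedly remove pairs that cannot reach $S^2_\Delta$ inside $W$ while keeping all mass in $W$.

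The step I expect to be the main obstacle is that the action space is infinite, namely the couplings polytope, so standard finite-MDP algorithms do not apply directly. I would handle it with two polynomial feasibility questions per pair $(u,v)$ and current set $W$. First, does there exist a coupling with $\support(\omega) \subseteq W$? This is a transportation or flow feasibility problem, equivalently an LP. Second, the graph part: let $\mathrm{Post}_W(u,v)$ be the set of pairs $(u',v') \in W$ that carry positive mass in some coupling supported in $W$. Each membership test is one LP maximising $\omega(u',v')$. Since the feasible set is convex, a single coupling can be taken positive on all of $\mathrm{Post}_W(u,v)$ simultaneously, by averaging the individual maximisers. Hence for almost-sure reachability, choosing this maximal-support coupling at every pair is without loss of generality.

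The almost-sure condition then becomes a purely graph-theoretic one on the finite graph with edges $\mathrm{Post}_W$. Iterate as follows: restrict to $W$, remove pairs with no $W$-supported coupling, and remove pairs that cannot reach $S^2_\Delta$ in the graph induced by $\mathrm{Post}_W$. Repeat until stable. At the fixpoint, the maximal-support policy stays in $W$ and reaches $S^2_\Delta$ with positive probability from every state of a finite chain, so it reaches $S^2_\Delta$ almost surely. Conversely, a pair removed at some stage can be reached into a losing region with positive probability by every policy, so it is not robustly bisimilar. There are at most $|S|^2$ iterations, each solving polynomially many LPs of polynomial size, which gives polynomial time overall. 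Correctness of the removal argument follows the usual proof for almost-sure reachability in MDPs, adapted to convex action sets.
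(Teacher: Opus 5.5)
Your proof is correct, but it takes a more self-contained route than the paper. The paper's proof is two lines. It cites \cite{FKPB25c} for the fact that $\simeq$ is computable in polynomial time and then applies Theorem~\ref{theorem:cav-conjecture}, exactly as in your first sentence.

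You instead rebuild the cited ingredient yourself. You treat $S \times S$ as an MDP whose action sets are the coupling polytopes, and you compute almost-sure reachability of $S^2_\Delta$ by greatest-fixpoint elimination. LPs give you $\mathrm{Post}_W$, and averaging the maximisers gives a single coupling with maximal $W$-support. This argument is sound:
\begin{itemize}
\item Completeness: for a policy $P$ witnessing $\reachdelta{P}(s,t)=1$, the set of pairs $(u,v)$ with $\reachdelta{P}(u,v)=1$ is closed under the support of $P$. By induction, it is never removed.
\item Soundness: at the fixpoint, the maximal-support policy keeps the chain inside $W$ and has a positive-probability path to $S^2_\Delta$ from every pair. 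Hence it reaches $S^2_\Delta$ almost surely.
\end{itemize}

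What your route buys is generality. It works for arbitrary candidate sets $W$, which need not be equivalence relations once pairs start being removed. The price is solving LPs (or flow problems) to decide which pairs can carry mass in a $W$-supported coupling.

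The algorithm of \cite{FKPB25c}, as sketched in Section~\ref{section:experiments}, instead refines a partition, so the candidate relation $R$ stays a bisimulation. For such $R$, Proposition~\ref{proposition:maximal-support-coupling} gives a coupling whose support is exactly $(\support(\tau(u)) \times \support(\tau(v))) \cap R$. The relevant supports can therefore be read off combinatorially, with no optimisation step.

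As a minor point, you could start your iteration from $\sim$ rather than from all same-label pairs, since $\simeq \subseteq \sim$ by Theorem~\ref{theorem:cav-theorem}. This is not necessary for correctness.
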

\begin{proof}
Robust bisimilarity can be computed in polynomial time \cite{FKPB25c}. Then the result follows from \cref{theorem:cav-conjecture}.
\end{proof}
 

%% file: text/continuity.tex
\section{Continuity}
\label{section:continuity}

In this section, we extend the characterization of the continuity of the bisimilarity distance function to non-bisimilar state pairs.

\begin{theorem}
\label{theorem:main-continuity}
Let $s$, $t \in S$.
\begin{enumerate}[a.]
    \item For all $P \in \mathcal{P}$, we have $\reachdelta{P}(s, t) \leq 1 - \delta_\tau (s, t)$.
    \item The function $\delta_{\_}(s, t) : (S \to \Dreal(S)) \to [0, 1]$ is continuous at $\tau$ if and only if there exists a policy $P \in \mathcal{P}$ such that $\reachdelta{P}(s, t) = 1 - \delta_{\tau}(s, t)$.
\end{enumerate}
\end{theorem}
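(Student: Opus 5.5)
Part (a) is a direct argument. Fix $P \in \mathcal{P}$. In $\chain{P}$ the sets $S^2_\Delta$ and $S^2_1$ are both absorbing. The set $S^2_1$ is absorbing by definition of policies. For $S^2_\Delta$, note that for $(u,u)$ the coupling $P(u,u)\in\Creal(\tau(u),\tau(u))$ may leave the diagonal; so instead of claiming absorption I will argue with bisimilar pairs. From a diagonal pair, every pair reachable under any coupling whose support lies in $\sim$ stays in $\sim$. In general, however, $P$ may move from the diagonal to $S^2_1$.

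So I would compare two things. Let $\rho^{P}_{\Delta,1}(s,t)$ be the probability of reaching $S^2_\Delta$ and afterwards $S^2_1$. By Proposition~\ref{proposition:reach-s1}, $\reachone{P}(s,t)\ge \delta_\tau(s,t)$. I would then build a modified policy $P'$ that agrees with $P$ off $S^2_\Delta$ and uses the identity (diagonal) coupling on $S^2_\Delta$. That coupling is in $\Creal(\tau(u),\tau(u))$ and keeps $S^2_\Delta$ absorbing. Under $P'$, the paths before first hitting $S^2_\Delta$ are unchanged, so $\reachdelta{P'}(s,t)=\reachdelta{P}(s,t)$, and afterwards $S^2_1$ is never reached. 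The events "reach $S^2_\Delta$" and "reach $S^2_1$" are therefore disjoint under $P'$. This gives $\reachdelta{P}(s,t)+\delta_\tau(s,t)\le \reachdelta{P'}(s,t)+\reachone{P'}(s,t)\le 1$.

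For (b), "if" direction: suppose $\reachdelta{P}(s,t)=1-\delta_\tau(s,t)$, and again assume $P$ is diagonal on $S^2_\Delta$. Then, as above, $\reachone{P}(s,t)\le 1-\reachdelta{P}(s,t)=\delta_\tau(s,t)$, so $P$ is also optimal at $(s,t)$. The mass almost surely splits between $S^2_\Delta$ and $S^2_1$, with no mass left wandering forever in $S^2_{0?}$. Lower semicontinuity is given by Proposition~\ref{proposition:lower-semi-continuous}, so only upper semicontinuity remains.

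Given $\tau_n\to\tau$, I would perturb $P$ into a policy $P_n\in\mathcal{P}_{\tau_n}$ with $P_n\to P$. This uses the standard fact that couplings can be adjusted continuously: $\Creal(\mu,\nu)$ depends continuously on the marginals in the Hausdorff sense, which is what the proof of Theorem~\ref{theorem:cav-theorem} uses. On the diagonal I keep the identity coupling of $\tau_n(u)$. Now in $\chain{P}$, from $(s,t)$, the chain is absorbed in $S^2_\Delta\cup S^2_1$ with probability $1$. So the probability of staying $k$ steps in $S^2_{0?}$ tends to $0$, and a finite-horizon truncation argument applies. For large $k$ and large $n$, the probability under $P_n$ of reaching $S^2_1$ within $k$ steps is close to that under $P$. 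The remaining mass, still in $S^2_{0?}$ after $k$ steps, is small. Hence $\limsup_n\delta_{\tau_n}(s,t)\le\limsup_n\reachone{P_n}(s,t)\le \delta_\tau(s,t)$. The one delicate point is that absorption of the transient part is only near-complete within finitely many steps, but the truncation handles it.

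"Only if" direction: this is the main obstacle, and it generalizes Theorem~\ref{theorem:cav-conjecture}. Choose $P$ maximizing $\reachdelta{P}(s,t)$ among optimal policies; this is an MDP optimization with finitely many vertex couplings. Suppose $\reachdelta{P}(s,t)<1-\delta_\tau(s,t)$ for all $P$. I would construct an explicit perturbation $\tau_n=(1-\frac1n)\tau+\frac1n\tau'$ that makes every pair in $S^2_{0?}$ from which $S^2_\Delta$ cannot be reached almost surely drift to distance bounded away from zero. The idea mirrors the examples: add small transitions from states to fresh-label-distinguishing successors so that any coupling not ending on the diagonal eventually sees different labels. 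Concretely, let $E$ be the end components within $S^2_{0,\tau}\cup S^2_{?,\tau}$ that the optimal MDP can trap mass in. Perturb one state of each pair in $E$ so that the two coordinates' distributions differ. Then a game-theoretic argument shows that the only way to avoid $S^2_1$ in $\tau_n$ is to reach the diagonal. That gives $\liminf_n\delta_{\tau_n}(s,t)\ge 1-\max_P\reachdelta{P}(s,t)>\delta_\tau(s,t)$. The hard part is building a single perturbation that works against all policies simultaneously. I would first try perturbing every state uniformly toward a distribution with full support, in the style of the $\varepsilon$-edges in Figure~\ref{fig:lmc}, and then show by an end-component argument that non-diagonal cycles leak to $S^2_1$ in the limit.
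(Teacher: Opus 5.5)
Part (a) and the ``if'' direction of (b) are correct. In (a), replacing $P(u,u)$ by the diagonal coupling of $\tau(u)$ directly constructs the $S^2_\Delta$-closed policy that the paper obtains from Proposition~\ref{proposition:delta-closed-policy}. For the ``if'' direction you take a different route from the paper. The paper inducts over the communication classes of $\chain{P}$ using $(I-P^C)^{-1}$. You instead use that $S^2_\Delta\cup S^2_1$ is absorbing under the $S^2_\Delta$-closed $P$ and $P_n$. You then bound $\reachone{P_n}(s,t)$ by the $k$-step probability of being in $S^2_1$ or still in $S^2_{0?}$, which is a polynomial in the entries of $P_n$, and let $n\to\infty$ and then $k\to\infty$. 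This finite-horizon truncation is valid and more elementary than the paper's argument.

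The ``only if'' direction has a genuine gap, and its concrete ingredients fail.

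\emph{Your perturbation does not break bisimilarity.} Suppose every state is mixed with the same full-support $\pi$, or more generally with any target that is constant on $\sim$-classes. For $u\sim v$ with witnessing coupling $\omega$, the coupling $(1-\frac1n)\omega+\frac1n\,\omega_\pi$, where $\omega_\pi(x,x)=\pi(x)$, has support in $\mathord{\sim}$. So $\mathord{\sim}$ remains a bisimulation for $\tau_n$, and no jump is detected, e.g.\ for $(b_1,b_2)$. Making a pair's distributions merely ``differ'' is also not enough. You need one perturbation under which, for every pair of distinct bisimilar states, every coupling puts mass at least $\varepsilon$ on $S^2_1$. This forces state-dependent magnitudes: the paper moves $\mathrm{index}(s)\,\varepsilon$ from a $\mathrm{label}_\tau(s)$-successor to a differently labelled state (Corollary~\ref{corollary:bad-pairs}).

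\emph{Your target inequality is unjustified.} The bound $\liminf_n\delta_{\tau_n}(s,t)\geq 1-\max_P\reachdelta{P}(s,t)$ is stronger than needed and fails for natural perturbations. Perturbing toward other-labelled states creates new diagonal pairs, and maximal reachability of $S^2_\Delta$ is only lower semicontinuous. In Figure~\ref{fig:perturbation-policy}, the bisimilar pair $(u,v)$ has $\max_P\reachdelta{P}(u,v)=0$ under $\tau$. After perturbation it reaches $(w,w)$ with probability $\frac34$, and its distance is $\frac14$, not $1$.

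\emph{The missing idea is control of the optimal policies of the perturbed chains.} This is exactly the ``against all policies'' difficulty you flag and then defer to an unspecified game-theoretic argument. Reasoning forward from a $\tau$-policy only gives upper bounds on $\delta_{\tau_n}$. The paper reverses the direction:
\begin{enumerate}
\item Take optimal vertex policies $P_n$ for $\tau_{\varepsilon_n}$.
\item Pass to a subsequence with a common support graph, so that $P_n=P+\varepsilon_n D$ with $D$ integer-valued (Proposition~\ref{proposition:affine}).
\item Note that $\lim_n\reachone{P_n}$ is a fixed point of $\Gamma_P$, hence $\reachone{P}\sqsubseteq\lim_n\delta_{\tau_{\varepsilon_n}}$.
\item Apply the hypothesis to this limit policy $P$. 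If $P$ is not optimal for $(s,t)$, you are done.
\item Otherwise $(s,t)$ reaches a closed class $C\subseteq S^2_{0,\tau}$ of $\chain{P}$. There each $P_n$ leaves $C$ with probability at most $M\varepsilon_n$ per step but enters $S^2_1$ with probability at least $\varepsilon_n$. This gives $\reachone{P_n}\geq\frac1M$ on $C$, and the strict inequality propagates back to $(s,t)$.
\end{enumerate}
Without this rate comparison, your claim that ``non-diagonal cycles leak to $S^2_1$ in the limit'' does not follow. Perturbed optimal couplings could leave $C$ toward low-distance pairs faster than they leak into $S^2_1$.
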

\begin{proof}[Proof Sketch]
Let $s$, $t \in S$.
\begin{enumerate}[a.]
\item Towards a contradiction, assume that there exists a policy $P \in \mathcal{P}$ such that $\reachdelta{P}(s, t) > 1 - \delta_\tau (s, t)$.  One could then modify such a policy so that pairs of states in $S^2_\Delta$ never leave $S^2_\Delta$. This would produce a policy $P' \in \mathcal{P}$ where $\reachone{P'}(s, t) \leq 1 - \reachdelta{P}(s, t) < \delta_\tau (s, t)$, contradicting Proposition~\ref{proposition:reach-s1}.
\item
``$\Leftarrow$'': The proof of this direction is similar to the one of Theorem~\ref{theorem:cav-theorem}.  Let us give a sketch here.

Let $(\tau_n)_n$ be a sequence converging to $\tau$.  Assume that $P \in \mathcal{P}_\tau$ is a policy such that $\reachdelta{P}(s, t) = 1 - \delta_\tau(s, t)$.  Then $P$ is optimal for $(s, t)$, that is, $\reachone{P} (s, t) = \delta_\tau(s, t)$.  We construct a graph consisting of the communication classes of $\chain{P}$ that are reachable from $(s, t)$.  Let $P_n \in \mathcal{P}_{\tau_n}$.  We then show that for all communication classes $C$ reachable from $(s, t)$ and for all pairs $(u, v) \in C$, it holds that $\lim_n \reachone{P_n}(u, v) = \reachone{P}(u, v)$, by induction on the length of a longest path from $C$ in the above-mentioned graph.

By Proposition~\ref{proposition:reach-s1}, $\limsup_n \delta_{\tau_n}(s, t) \leq \limsup_n \reachone{P_n}(s, t)$.  Using the above results, we conclude that $\limsup_n \delta_{\tau_n}(s, t) \leq \reachone{P} (s, t) = \delta_\tau(s, t)$.

``$\Rightarrow$'': Assume that for all policies $P \in \mathcal{P}$, we have $\reachdelta{P}(s, t) < 1 - \delta_\tau (s, t)$.

We construct a sequence $(\tau_n)_n$ converging to $\tau$ as follows.
Assign each state a unique positive integer \emph{index}.
Let $n \in \nat$.
For each state, move \emph{index}$\cdot \varepsilon_n$ probability from an existing successor to a state with a different label, where $\varepsilon_n > 0$ is a small number.
This forces each pair of bisimilar states to transition to $S^2_1$ with positive probability, thereby destroying $\sim$.
See Figure~\ref{fig:perturbation} for an example of such a perturbation.

Since the state space is finite, there exists a subsequence $(P_{f(n)})_n$ of optimal policies with $P_{f(n)} \in \mathcal{P}_{\tau_{f(n)}}$ that all share the same support graph.
Then the integer indices ensure that there exist a limit policy $P \in \mathcal{P}_\tau$ and a function $D : (S \times S) \to (S \times S) \to \mathbb{Z}$ such that $P_{f(n)} = P + \varepsilon_n \, D$.
A visualisation of these policies appears in Figure~\ref{fig:perturbation-policy}.
Since $S \times S \to [0, 1]$ is compact, the sequence $\left( \reachone{P_{f(n)}} \right)_n$ has a converging subsequence $\left( \reachone{P_{g(n)}} \right)_n$.

By \cref{proposition:reach-s1}, it suffices to consider the case where $P$ is optimal.
By the assumption above (next to ``$\Rightarrow$''), we have $\reachdelta{P}(s, t) < 1 - \delta_\tau (s, t)$, which then implies that $(s, t)$ can reach a closed communication class $C$ of $\chain{P}$ with $C \subseteq S^2_{0,\tau}$.  Due to the nature of the perturbation, for all $(u, v) \in C$ we have $\reachone{P}(u, v) = 0 < \liminf_n \reachone{P_{g(n)}}(u, v)$.  Consider a path from $(s, t)$ to $C$.  For each $(x, y)$ on this path, we prove that $\reachone{P}(x, y) < \lim_n \reachone{P_{g(n)}}(x, y)$, by induction on the length of the path from $(x, y)$ to $C$.

Since every $P_{g(n)}$ is optimal, from the above we can conclude that $\delta_\tau(s, t) \leq \reachone{P}(s, t) < \lim_n \reachone{P_{g(n)}}(s, t) = \lim_n \delta_{\tau_{\varepsilon_{g(n)}}}(s, t)$.
Hence, $\delta_{\_}(s, t) : (S \to \Dreal(S)) \to [0, 1]$ is discontinuous at $\tau$.

As an example, consider the bisimilar pair $(u, v)$ in Figure~\ref{fig:perturbation-policy}, which forms a closed communication class of $\chain{P}$. We have $\reachone{P}(u, v) = 0$ and $\liminf_n \reachone{P_{g(n)}}(u, v) = \frac{1}{4}$.  It follows that $\reachone{P}(s, t) = \frac{1}{4} < \lim_n \reachone{P_{g(n)}}(s, t) = \frac{3}{8}$. \qedhere
\end{enumerate}
\end{proof}

\begin{figure}[ht]
\begin{subcaptionblock}{\textwidth}
\begin{center}
    \begin{tikzpicture}[xscale=1.2]
    \node[state] (s) at (2,2.75) {$s$};
    \node[state] (t) at (4,2.75) {$t$};
    \node[state, fill=ACMPurple!25] (w) at (3,1) {$w$};
    \node[state] (v) at (5,1) {$v$};
    \node[state] (u) at (1,1) {$u$};

    \draw[->] (s) to node[above left, xshift=3] {$\frac{1}{2} \textcolor{ACMRed}{+ \varepsilon}$} (u);
    \draw[->] (t) to node[above right, xshift=-3] {$\frac{3}{4} \textcolor{ACMRed}{+ 2\varepsilon}$} (v);
    \draw[->] (s) to node[pos=0.3, below left, xshift=3] {$\frac{1}{2} \textcolor{ACMRed}{- \varepsilon}$} (w);
    \draw[->] (t) to node[above left, xshift=3] {$\frac{1}{4} \textcolor{ACMRed}{- 2\varepsilon}$} (w);
    \draw[->, ACMRed] (u) edge[bend right=25] node[below] {$3\varepsilon$} (w);
    \draw[->, ACMRed] (v) edge[bend left=25] node[below] {$4\varepsilon$} (w);
    \draw[->, ACMRed] (w) edge[bend left=25] node[below] {$5\varepsilon$} (v);
    \draw[->, loop below] (u) to node {$1 \textcolor{ACMRed}{- 3\varepsilon}$} (u);
    \draw[->, loop below] (v) to node {$1 \textcolor{ACMRed}{- 4\varepsilon}$} (v);
    \draw[->, loop below] (w) to node {$1 \textcolor{ACMRed}{- 5\varepsilon}$} (w);
    \end{tikzpicture}
\caption{A family of labelled Markov chains for $\varepsilon \in [0, \frac{1}{8}]$.  We denote the transition probability function for $\varepsilon = 0$ by $\tau$.}
\label{fig:perturbation}
\end{center}
\end{subcaptionblock}
\vspace{0.01cm}

\begin{subcaptionblock}{\textwidth}
\begin{center}
    \begin{tikzpicture}[yscale=0.8,xscale=1.1]
    \node[bigstate] (2) at (3,4.5) {$s \quad t$};
    \node[bigstate, fill fraction={ACMPurple!25}{0.5}] (1) at (0.75,4.5) {$w \;\;\; v$};
    \node[bigstate, fill=ACMPurple!25, fill fraction={white}{0.5}] (3) at (0,2) {$u \;\;\; w$};
    \node[bigstate] (4) at (2,2) {$u \quad v$};
    \node[diagonalstate, fill=ACMPurple!25] (5) at (4,2) {$w \;\; w$};
    \node[diagonalstate] (6) at (6,2) {$v \quad v$};
    \draw (1.south) -- (1.north) ;
    \draw (2.south) -- (2.north) ;
    \draw (3.south) -- (3.north) ;
    \draw (4.south) -- (4.north) ;
    \draw (5.south) -- (5.north) ;
    \draw (6.south) -- (6.north) ;
    \draw[->] (2) edge node[above] {$\frac{1}{4} \textcolor{ACMRed}{+ \varepsilon}$} (1);
    \draw[->] (2) edge node[pos=0.6, above left] {$\frac{1}{2} \textcolor{ACMRed}{+ \varepsilon}$} (4);
    \draw[->] (2) edge node[pos=0.6, above right] {$\frac{1}{4} \textcolor{ACMRed}{- 2\varepsilon}$} (5);
    \draw[->, ACMRed] (4) edge node[below] {$3\varepsilon$} (5);
    \draw[->, ACMRed] (5) edge[bend left=25] node[above] {$5\varepsilon$} (6);
    \draw[->, ACMRed] (6) edge[bend left=25] node[below] {$4\varepsilon$} (5);
    \draw[->, loop below] (4) edge node[below] {$1 \textcolor{ACMRed}{- 4\varepsilon}$} (4);
    \draw[->, loop below] (5) edge node[below] {$1 \textcolor{ACMRed}{- 5\varepsilon}$} (5);
    \draw[->, loop below] (6) edge node[below] {$1 \textcolor{ACMRed}{- 4\varepsilon}$} (6);
    \draw[->, loop below] (3) edge node[below] {$1$} (3);
    \draw[->, ACMRed] (4) edge node[below] {$\varepsilon$} (3);
    \draw[->, loop below] (1) edge node[below] {$1$} (1);
    \end{tikzpicture}
\caption{The portion of the Markov chain $\chain{P_\varepsilon}$ reachable from $(s, t)$, induced by a policy $P_\varepsilon$ that is optimal for all $\varepsilon \in [0, \frac{1}{8}]$.  We denote the limit policy $P_0 \in \mathcal{P}_\tau$ by $P$.}
\label{fig:perturbation-policy}
\end{center}
\end{subcaptionblock}
\caption{An illustration for the proof of \cref{theorem:main-continuity}.}
\label{fig:illustration}
\end{figure}

\begin{proof}[Proof of \cref{theorem:cav-conjecture}]
By \cref{theorem:cav-theorem}, it is sufficient to show that for all $s$, $t \in S$, if $s \sim t$ and $s \not\simeq t$ then $\delta_{\_}(s, t) : (S \to \Dreal(S)) \to [0, 1]$ is discontinuous at $\tau$.

Let $s$, $t \in S$.  Assume that $s \sim t$ and $s \not\simeq t$.  As $s \sim t$, by Theorem~\ref{theorem:distance-zero}, we have $\delta_\tau (s, t) = 0$.  Since $s \not\simeq t$, for all policies $P \in \mathcal{P}$, we have $\reachdelta{P}(s, t) < 1 = 1 - \delta_\tau (s, t)$.  Hence, by \cref{theorem:main-continuity}b, $\delta_{\_}(s, t) : (S \to \Dreal(S)) \to [0, 1]$ is discontinuous at~$\tau$.
\end{proof}

\begin{example}
\label{example:distance-jump}
We revisit our running example, the family of labelled Markov chains in Figure~\ref{fig:lmc}, to investigate how the bisimilarity distance changes when the transition function varies; see \cref{fig:distance-jump-example} for an illustration.

Consider the pair of states $(a_1, b_2)$.  The policy $P_0 \in \mathcal{P}_\tau$, partially illustrated in Figure~\ref{fig:policy}, is optimal for $(a_1, b_2)$.  Hence, we have $\delta_\tau(a_1, b_2) = \frac{1}{4}$.  Observe that $\reachdelta{P_0}(a_1, b_2) = \frac{1}{4}$ and no policy can improve the probability of reaching $S^2_\Delta$ from $(a_1, b_2)$.  Thus, $\delta_{\_}(a_1, b_2) : (S \to \Dreal(S)) \to [0, 1]$ is discontinuous at $\tau$.  Indeed, we have $\delta_{\tau_n}(a_1, b_2) = \frac{3}{4}$ for all $n \in \nat$, so small changes in the transition probabilities cause the bisimilarity distance to jump up by $\frac{1}{2}$.

In contrast, consider the pair of states $(a_1, c_1)$.  The policy $P_0 \in \mathcal{P}_\tau$ is also optimal for $(a_1, c_1)$, hence  we have $\delta_\tau(a_1, c_1) = \frac{1}{2}$.  Since $\reachdelta{P_0}(a_1, c_1) = \frac{1}{2} = 1 - \delta_\tau(a_1, c_1)$, we know that $\delta_{\_}(a_1, c_1) : (S \to \Dreal(S)) \to [0, 1]$ is continuous at $\tau$.  Note that $\lim_n \delta_{\tau_n}(a_1, c_1) = \lim_n  \left( \frac{1}{2} + \frac{1}{4n} \right) = \frac{1}{2}$.  Moreover, for all sequences $(\sigma_n)_n$ converging to $\tau$, we have that $\lim_n \delta_{\sigma_n}(a_1, c_1) = \frac{1}{2}$.\lipicsEnd
\end{example}

\begin{figure}[ht!]
\begin{center}
    \begin{tikzpicture}[scale=0.9, every node/.style={scale=0.9}]
    \begin{axis}
    [xlabel={$\varepsilon$}, ylabel={$\delta_{\tau_\varepsilon} (a_1, c_1)$}, width=5cm, height=4cm, at={(0.72\linewidth,0)}, ymin=-0.1,ymax=1.1]
    \addplot[domain=0:1,line width=1.2pt] {0.5 + x/4};
    \end{axis}
    \begin{axis}
    [xlabel={$\varepsilon$}, ylabel={$\delta_{\tau_\varepsilon} (a_1, b_2)$}, width=5cm, height=4cm, at={(0.28\linewidth,0)}, ymin=-0.1,ymax=1.1]
    \addplot[mark=*, mark options={scale=1}] coordinates {(0,0.25)};
    \addplot[mark=o, mark options={scale=1.2}] coordinates {(0,0.75)};
    \addplot[domain=0.02:1,line width=1.2pt] {0.75};
    \end{axis}
    \end{tikzpicture}
\end{center}
\caption{An illustration of the distances between the state pairs $(a_1, b_2)$ and $(a_1, c_1)$ from the labelled Markov chain in \cref{fig:running-example}.  The pairs of states have distance $\frac{1}{4}$ and $\frac{1}{2}$, respectively, when $\varepsilon = 0$.  For all policies $P \in \mathcal{P}_\tau$, $\reachdelta{P}(a_1, b_2) \leq \frac{1}{4} < \frac{3}{4} = 1 - \delta_{\tau}(a_1, b_2)$.  In contrast, there exists a policy $P \in \mathcal{P}_\tau$ such that $\reachdelta{P}(a_1, c_1) = \frac{1}{2} = 1 - \delta_{\tau}(a_1, c_1)$.  Hence, the distance is continuous at $\tau$ for $(a_1, c_1)$ only.}
\label{fig:distance-jump-example}
\end{figure}

Bisimilarity and robust bisimilarity coincide if and only if the bisimilarity distance is continuous for all pairs of states.

\begin{corollary}
\label{corollary:continuity-trivial}
$\mathord{\sim} = \robust \Longleftrightarrow$ for all $s$, $t \in S$, $\delta_{\_}(s, t) : (S \to \Dreal(S)) \to [0, 1]$ is continuous at $\tau$.
\end{corollary}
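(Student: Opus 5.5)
We prove the two implications separately, using \cref{theorem:cav-conjecture} for bisimilar pairs and \cref{theorem:main-continuity}b for non-bisimilar pairs.

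\emph{Direction ``$\Rightarrow$''.} Assume $\mathord{\sim} = \robust$ and fix $s$, $t \in S$. We split into three cases according to the partition of Definition~\ref{definition:sets}.

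First, suppose $(s,t) \in S^2_1$. Then $\delta_{\sigma}(s,t) = 1$ for every transition function $\sigma$, since the labels do not depend on the transition function. So $\delta_{\_}(s,t)$ is constant and hence continuous.

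Second, suppose $s \sim t$. Then $s \simeq t$ by assumption, and \cref{theorem:cav-theorem} gives continuity at $\tau$.

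Third, suppose $(s,t) \in S^2_{?,\tau}$. By \cref{theorem:main-continuity}b it suffices to find a policy $P$ with $\reachdelta{P}(s,t) = 1 - \delta_\tau(s,t)$. The plan is to combine two policies.
\begin{enumerate}
\item Take an optimal policy $P^*$, which exists by \cref{proposition:reach-s1}.
\item For each bisimilar pair $(u,v) \in S^2_{0,\tau}$ we have $u \simeq v$. Robust bisimilarity is witnessed by a single policy $Q$ with $\reachdelta{Q}(u,v) = 1$ for all robustly bisimilar pairs simultaneously. This follows because $\reachdelta{}$ is a maximal reachability objective in an MDP, so a uniformly optimal memoryless policy exists.
\item Define $P$ to agree with $Q$ on $\mathord{\sim}$, i.e.\ on $S^2_\Delta \cup S^2_{0,\tau}$, and with $P^*$ on $S^2_{?,\tau} \cup S^2_1$.
\end{enumerate}

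We now check that this $P$ works. We may choose $P^*$ so that on $\mathord{\sim}$ it stays inside $\mathord{\sim}$, by taking couplings supported on $\mathord{\sim}$. Then $\reachone{P^*}$ vanishes on $\mathord{\sim}$. Under $P^*$, almost every run from $(s,t)$ either reaches $S^2_1$ or stays in $S^2_{?,\tau}$ forever or enters $\mathord{\sim}$. Runs that stay in $S^2_{?,\tau}$ forever have probability zero. This is where care is needed: an optimal policy could in principle cycle in $S^2_{?,\tau}$. We would instead pick $P^*$ so that the closed classes of $\chain{P^*}$ inside $S^2_{0?}$ consist of pairs of distance $0$, hence of bisimilar pairs. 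This holds because any closed class avoiding $S^2_1$ has $\reachone{P^*} = 0$ and optimality forces $\delta_\tau = 0$ there.

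Since $P$ agrees with $P^*$ outside $\mathord{\sim}$, we get $\reachone{P}(s,t) = \delta_\tau(s,t)$. The remaining mass $1 - \delta_\tau(s,t)$ enters $\mathord{\sim}$, and from there $Q$ reaches $S^2_\Delta$ with probability $1$. Hence $\reachdelta{P}(s,t) = 1 - \delta_\tau(s,t)$, as required.

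\emph{Direction ``$\Leftarrow$''.} Suppose every $\delta_{\_}(s,t)$ is continuous at $\tau$. Robust bisimilarity implies bisimilarity by \cref{theorem:cav-theorem}, so $\robust \subseteq \mathord{\sim}$. For the converse inclusion, take $s \sim t$. Then \cref{theorem:cav-conjecture} turns continuity at $\tau$ into $s \simeq t$. Hence $\mathord{\sim} = \robust$.

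The main obstacle is the third case of ``$\Rightarrow$''. There we need the uniform robust-bisimilarity witness $Q$, and we need an optimal $P^*$ whose behaviour outside $\mathord{\sim}$ loses no probability mass by cycling forever in $S^2_{?,\tau}$.
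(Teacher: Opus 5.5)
Your argument is correct and is in substance the paper's. You patch a policy that is optimal for $(s,t)$ with one policy $Q$ on $\mathord{\sim}$ that reaches $S^2_\Delta$ almost surely from every robustly bisimilar pair, and then apply \cref{theorem:main-continuity}b. Your converse direction is exactly the paper's.

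The real difference is where $Q$ comes from. The paper takes an $S^2_\Delta$-closed maximal $\mathord{\sim}$-support policy. With that choice $\mathord{\sim}$ is closed under $Q$ by construction, and $\mathord{\sim} = \robust$ only has to exclude closed classes inside $S^2_{0,\tau}$ (via Proposition~\ref{proposition:closed-communication-class}). You instead take a memoryless policy that is uniformly optimal for reaching $S^2_\Delta$. This is legitimate once each $\Creal(\tau(x),\tau(y))$ is replaced by its finitely many vertices. It needs only generic MDP facts, but it gives you no closure for free.

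Two things must therefore be added. First, your $Q$ is unconstrained on $S^2_\Delta$ and may move from there into $S^2_1$. So the claim $\reachone{P}(s,t) = \delta_\tau(s,t)$ can fail. Either make $Q$ $S^2_\Delta$-closed, or use only that, from $(s,t)$, $S^2_1$ is hit before $\mathord{\sim}$ with probability $\delta_\tau(s,t)$. Second, since $P$ agrees with $Q$ only on $\mathord{\sim}$, you need that $Q$-runs from $S^2_{0,\tau}$ stay in $\mathord{\sim}$ until they hit $S^2_\Delta$. This holds because every pair visited on the way has $\reachdelta{Q}$-value $1$, hence distance $0$ by \cref{theorem:main-continuity}a and Theorem~\ref{theorem:distance-zero}.

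Finally, your special choice of $P^*$ to avoid cycling in $S^2_{?,\tau}$ is unnecessary. Proposition~\ref{proposition:ccc} shows that, for every policy $P$, each closed class avoiding $S^2_1$ lies in $\mathord{\sim}$, since $\delta_\tau \sqsubseteq \reachone{P}$.
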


%% file: text/algorithm-continuity.tex
\section{Algorithm for Deciding Continuity}
\label{section:algorithm-continuity}

Theorem~\ref{theorem:bisim-continuity-poly-time} establishes polynomial time decidability of continuity for bisimilar pairs of states.
In this section, we extend this result.
We show that one can decide in polynomial time whether $\delta_\tau(s, t)$ is continuous regardless of whether $s$ and $t$ are bisimilar.


To this end, we first introduce two auxiliary functions, which allow us to express the algorithm as a fixed point computation.

Let $R \subseteq S \times S$. We use the notation $\overline{R} = (S \times S) \setminus R$.
We define a function $A : 2^{S \times S} \to 2^{S \times S}$ that, intuitively, collects those pairs of states that are incompatible with the relation $R$.
More precisely, for states $s$ and $t$, the pair $(s, t)$ is included in $A(R)$ either if it is already known to violate robustness, or if every optimal way of matching the transitions of $s$ and $t$ necessarily results in a matched pair that lies outside $R$.
Formally,
\[
A(R) = (\mathord{\sim} \setminus \robust) \cup \{\, (s, t) \in S^2_{?,\tau} \mid \forall \omega \in \Coptimal(\tau(s), \tau(t)) : \support(\omega) \cap R \neq \varnothing \,\}.
\]
Dually, we define a function $B : 2^{S \times S} \to 2^{S \times S}$ that, intuitively, identifies those pairs of states that are compatible with $R$. That is, $(s, t)$ belongs to $B(R)$ either if it is already known to be robust, or if there exists an optimal way of matching the transitions of $s$ and $t$ such that all matched pairs are contained in $R$.
Formally, $B(R) = \overline{A(\overline{R})}$
or, equivalently,
\[
B(R) = \robust \cup S^2_1 \cup \{\, (s, t) \in S^2_{?,\tau} \mid \exists \omega \in \Coptimal(\tau(s), \tau(t)) : \support(\omega) \subseteq R \,\}.
\]

\begin{proposition}
\label{proposition:monotone}
$A$ and $B$ are monotone with respect to $\subseteq$.
\end{proposition}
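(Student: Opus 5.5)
We argue directly from the definitions of $A$ and $B$, fixing $R \subseteq R' \subseteq S \times S$ throughout. For $A$, the first component $\mathord{\sim} \setminus \robust$ does not depend on its argument, so it suffices to show that the second set is monotone. Take $(s, t) \in S^2_{?,\tau}$ with $(s,t) \in A(R)$, that is, every $\omega \in \Coptimal(\tau(s), \tau(t))$ satisfies $\support(\omega) \cap R \neq \varnothing$. Since $R \subseteq R'$, we have $\support(\omega) \cap R \subseteq \support(\omega) \cap R'$. Hence $\support(\omega) \cap R' \neq \varnothing$ for every optimal $\omega$, and $(s,t) \in A(R')$. Note that the quantification ranges over $\Coptimal(\tau(s),\tau(t))$ and the index set $S^2_{?,\tau}$, and both depend only on $\tau$ and not on $R$. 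This is the only point to check: no part of the definition varies with the argument except through the condition on supports.

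For $B$, we use $B(R) = \overline{A(\overline{R})}$. If $R \subseteq R'$, then $\overline{R'} \subseteq \overline{R}$. Monotonicity of $A$ gives $A(\overline{R'}) \subseteq A(\overline{R})$, and complementing yields $B(R) \subseteq B(R')$. Alternatively, this follows directly from the explicit form of $B$: the parts $\robust$ and $S^2_1$ are constant, and if an optimal $\omega$ satisfies $\support(\omega) \subseteq R$, then $\support(\omega) \subseteq R'$.

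If one also wants to confirm that the two stated forms of $B$ agree, this is a routine complement computation. We have $\overline{\mathord{\sim} \setminus \robust} = \robust \cup \overline{\mathord{\sim}}$, where $\overline{\mathord{\sim}} = S^2_1 \cup S^2_{?,\tau}$ and $\robust \subseteq \mathord{\sim}$ by \cref{theorem:cav-theorem}. Removing the $S^2_{?,\tau}$-pairs that lie in $A(\overline R)$ leaves exactly those $(s,t) \in S^2_{?,\tau}$ for which some optimal $\omega$ has $\support(\omega) \cap \overline R = \varnothing$, that is, $\support(\omega) \subseteq R$. I expect no real obstacle here, since monotonicity follows entirely from the inclusion of supports.
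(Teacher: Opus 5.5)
Your proof is correct and takes essentially the paper's approach. For $A$, the paper uses the same two cases: the constant part $\mathord{\sim} \setminus \robust$, and the inclusion $\support(\omega) \cap R \subseteq \support(\omega) \cap R'$. For $B$, the paper uses exactly your ``alternative'' argument on the explicit form, namely that $\robust \cup S^2_1$ is constant and $\support(\omega) \subseteq R \subseteq R'$. Your duality route via $B(R) = \overline{A(\overline{R})}$ is an equally valid shortcut, and your complement computation showing that the two forms of $B$ coincide is also correct.
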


Therefore, $A$ and $B$ are monotone functions from the complete lattice $2^{S \times S}$ to itself.  According to the Knaster-Tarski fixed point theorem, $A$ has a least fixed point, which we denote by $\alpha$ and $B$ has a greatest fixed point, which we denote by $\beta$.

The following two propositions state that $\beta$ is the set of pairs of states for which the distance function is continuous at $\tau$, while $\alpha$ is its complement.

\begin{proposition}
\label{proposition:AB-opposites}
$\beta = \overline{\alpha}$.
\end{proposition}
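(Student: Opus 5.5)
The plan is to use only the fact that $B$ is, by definition, the dual of $A$ under complementation, $B(R) = \overline{A(\overline{R})}$, together with the Knaster--Tarski fixed points given by Proposition~\ref{proposition:monotone}. I will not need the explicit form of $A$ for the main argument. For the equivalent explicit form of $B$ one checks three facts: $S^2_\Delta$, $S^2_{0,\tau}$, $S^2_1$, $S^2_{?,\tau}$ partition $S \times S$; $S^2_\Delta \subseteq \robust \subseteq \mathord{\sim} = S^2_\Delta \cup S^2_{0,\tau}$ by Theorem~\ref{theorem:cav-theorem}; and negating ``$\forall \omega : \support(\omega) \cap \overline{R} \neq \varnothing$'' gives ``$\exists \omega : \support(\omega) \subseteq R$''. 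That part is routine bookkeeping.

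\textbf{Key step.} Let $c(R) = \overline{R}$. This map is an involution on $2^{S \times S}$ and reverses inclusion, and by definition $B = c \circ A \circ c$.

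First I show that $c$ maps fixed points of $A$ bijectively onto fixed points of $B$. If $A(X) = X$, then
\[
B(\overline{X}) = \overline{A(\overline{\overline{X}})} = \overline{A(X)} = \overline{X},
\]
so $\overline{X}$ is a fixed point of $B$. Symmetrically, since $A = c \circ B \circ c$, if $B(Y) = Y$ then $A(\overline{Y}) = \overline{Y}$.

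Next I apply this to $\alpha$ and $\beta$. Since $\alpha$ is a fixed point of $A$, the set $\overline{\alpha}$ is a fixed point of $B$, so $\overline{\alpha} \subseteq \beta$ by maximality of $\beta$. Since $\beta$ is a fixed point of $B$, the set $\overline{\beta}$ is a fixed point of $A$, so $\alpha \subseteq \overline{\beta}$ by minimality of $\alpha$. Taking complements of the second inclusion gives $\beta \subseteq \overline{\alpha}$. Combining the two inclusions yields $\beta = \overline{\alpha}$.

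\textbf{Main obstacle.} I expect no real difficulty; the argument is purely lattice-theoretic. The only point needing care is the consistency of the two descriptions of $B$. In particular, pairs in $S^2_1$ and in $\robust$ never lie in $A(\cdot)$, so they lie in every $B(R)$, while pairs in $\mathord{\sim} \setminus \robust$ lie in every $A(\cdot)$ and hence in no $B(R)$. This uses $\robust \subseteq \mathord{\sim}$ from Theorem~\ref{theorem:cav-theorem} and the fact that $S^2_\Delta \subseteq \robust$, which holds because a policy can keep a diagonal pair on the diagonal via the identity coupling.
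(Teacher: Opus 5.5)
Your proof is correct and matches the paper's argument: both use $B(R) = \overline{A(\overline{R})}$ to show that complementation sends fixed points of $A$ to fixed points of $B$ and back, and then obtain $\overline{\alpha} \subseteq \beta$ and $\alpha \subseteq \overline{\beta}$ from the extremality of the Knaster--Tarski fixed points. Your extra check that the two descriptions of $B$ agree is not needed for this proposition, but it is harmless; note also that $S^2_\Delta \subseteq \robust$ holds trivially, since a diagonal pair already lies in $S^2_\Delta$.
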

\begin{proof}
First we show that for all $R \subseteq S \times S$, $R$ is a fixed point of $A$ if and only if $\overline{R}$ is a fixed point of $B$. Let $R \subseteq S \times S$.
\[
A(R) = R 
~\Longleftrightarrow~ \overline{A(R)} = \overline{R}
~\Longleftrightarrow~ \overline{A(\overline{\overline{R}})} = \overline{R}
~\Longleftrightarrow~ B(\overline{R}) = \overline{R}
\]

We now prove the two inclusions.  Since $\beta$ is a fixed point of $B$, we can conclude that $\overline{\beta}$ is a fixed point of $A$.  Thus, $\alpha \subseteq \overline{\beta}$, as $\alpha$ is the least fixed point of $A$.  It follows that $\beta \subseteq \overline{\alpha}$.

Similarly, since $\alpha$ is a fixed point of $A$, we can conclude that $\overline{\alpha}$ is a fixed point of $B$.  Thus, $\overline{\alpha} \subseteq \beta$, as $\beta$ is the greatest fixed point of $B$.  Hence, $\beta = \overline{\alpha}$.
\end{proof}

\begin{proposition}
\label{proposition:beta-continuous}
$\beta = \{\, (s, t) \in S \times S \mid \delta_{\_}(s, t) \mbox{ is continuous at } \tau \,\}$.
\end{proposition}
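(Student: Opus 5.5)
The plan is to identify the set $\mathit{Cont}$ of pairs at which $\delta_{\_}(s,t)$ is continuous at $\tau$ with $\beta$, using \cref{theorem:main-continuity}b: $(s,t) \in \mathit{Cont}$ iff some $P \in \mathcal{P}$ has $\reachdelta{P}(s,t) = 1 - \delta_\tau(s,t)$. The inclusion $\mathit{Cont} \subseteq \beta$ will follow from Knaster--Tarski once $\mathit{Cont} \subseteq B(\mathit{Cont})$ is shown. For $\beta \subseteq \mathit{Cont}$ I will build, from the fixed point $\beta = B(\beta)$, a single policy witnessing \cref{theorem:main-continuity}b at every pair of $\beta$.

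\textbf{Step 1: a witnessing policy is optimal along its reachable part.} Fix $P$ with $\reachdelta{P}(s,t) = 1-\delta_\tau(s,t)$. Since $S^2_\Delta$ and $S^2_1$ are disjoint, $\reachdelta{P} + \reachone{P} \le 1$. Combined with $\reachone{P} \ge \delta_\tau$ (\cref{proposition:reach-s1}), this forces $\reachone{P}(s,t) = \delta_\tau(s,t)$. Next, for $(s,t) \notin S^2_1 \cup S^2_\Delta$, unfold one step:
\[
\reachdelta{P}(s,t) = \sum_{u,v} P(s,t)(u,v)\, \reachdelta{P}(u,v) \le \sum_{u,v} P(s,t)(u,v)\,(1-\delta_\tau(u,v)) \le 1 - \delta_\tau(s,t),
\]
using \cref{theorem:main-continuity}a and $\delta_\tau(s,t) \le P(s,t)\cdot\delta_\tau$. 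Equality at $(s,t)$ forces equality termwise. Hence every successor $(u,v)$ in the support again satisfies $\reachdelta{P}(u,v) = 1-\delta_\tau(u,v)$, and $P(s,t)$ is an optimal coupling. By induction, every pair reachable from $(s,t)$ is in $\mathit{Cont}$.

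\textbf{Step 2: $\mathit{Cont} \subseteq B(\mathit{Cont})$.} Split by the partition of $S \times S$.
\begin{itemize}
\item Pairs in $S^2_1$ are in $B(\cdot)$ by definition.
\item Diagonal pairs satisfy $s \simeq s$.
\item Pairs in $S^2_{0,\tau} \cap \mathit{Cont}$ are robust by \cref{theorem:cav-conjecture}.
\item For $(s,t) \in S^2_{?,\tau} \cap \mathit{Cont}$, Step 1 says the coupling $P(s,t)$ is optimal and its support lies in $\mathit{Cont}$.
\end{itemize}
So $\mathit{Cont} \subseteq \beta$.

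\textbf{Step 3: $\beta \subseteq \mathit{Cont}$.} Pairs of $\beta$ outside $S^2_{?,\tau}$ are diagonal, in $S^2_1$, or in ${\robust}$, and all of these are continuous: $S^2_1$ trivially, and ${\robust}$ by \cref{theorem:cav-theorem}. So fix a policy $Q$ with $\reachdelta{Q} = 1$ on all of ${\robust}$ simultaneously; this is a standard maximal-reachability policy in the finite-action MDP given by the vertices of the coupling polytopes. Then restrict attention to policies that
\begin{itemize}
\item agree with $Q$ on ${\robust}$, and
\item on $\beta \cap S^2_{?,\tau}$ use only optimal couplings with support in $\beta$, which exist because $\beta = B(\beta)$.
\end{itemize}
For any such $P$, let $T = {\robust} \cup S^2_1$, the set of $\beta$-pairs outside $S^2_{?,\tau}$. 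The pair $(s,t)$ eventually either reaches $T$ or stays forever in $\beta \cap S^2_{?,\tau}$. Since ${\robust}$ reaches $S^2_\Delta$ almost surely, $\reachdelta{P}(s,t) = 1 - \reachone{P}(s,t) - \pi_P(s,t)$, where $\pi_P$ is the probability of staying forever in $\beta \cap S^2_{?,\tau}$. It therefore suffices to find such a $P$ with $\reachone{P} = \delta_\tau$ and $\pi_P = 0$ on $\beta$.

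For the first condition: minimal reachability of $S^2_1$ in the restricted MDP is the least fixed point of its Bellman operator. The function $\delta_\tau$ is a fixed point of that operator, because all allowed couplings are optimal and $\delta_\tau = 0$ on ${\robust}$. Hence the minimum is at most $\delta_\tau$. By \cref{proposition:reach-s1} it is also at least $\delta_\tau$, since restricted policies are policies in $\mathcal{P}$. So the minimum equals $\delta_\tau$, and there is a restricted policy attaining it.

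\textbf{Main obstacle: ruling out $\pi_P > 0$.} The difficulty is an end component of the restricted MDP lying inside $\beta \cap S^2_{?,\tau}$, where $\delta_\tau > 0$ is harmonic but $S^2_1$ is never reached. I would first try to show that a policy attaining $\reachone{P} = \delta_\tau$ from a positive-$\delta_\tau$ pair cannot put positive mass on such a trapping component: on a closed class $C$ inside $S^2_{?,\tau}$ we would have $\reachone{P} = 0 < \delta_\tau$, contradicting $\reachone{P} = \delta_\tau$. Mass entering $C$ with positive probability would then force $\reachone{P}$ at the entry pair strictly below $\delta_\tau$, again a contradiction. If this closed-class argument works, then $\pi_P = 0$ and $\reachdelta{P} = 1-\delta_\tau$ on $\beta$, which proves $\beta \subseteq \mathit{Cont}$. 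The step I am least sure of is whether end components that are not closed classes of a fixed $P$ could still retain mass forever; if so, I would instead choose $P$ lexicographically, first minimising $\reachone{P}$ and then maximising the reachability of $T$.
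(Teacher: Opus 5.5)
Your proof is correct after three small repairs listed below. It proves $\mathit{Cont}\subseteq\beta$ by a genuinely different route from the paper's.

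\textbf{Comparison with the paper.} The paper argues by contradiction through $\alpha=\overline{\beta}$. It passes to an $S^2_\Delta$-closed optimal witness (\cref{corollary:optimal-closed-policy}). It then invokes \cref{proposition:alpha-discontinuous}, which is proved via the loop invariant of Algorithm~\ref{algorithm:continuity}, to get a path to $\mathord{\sim}\setminus\robust$. From this it concludes that $S^2_\Delta$-reachability falls below $1-\delta_\tau(s,t)$.

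You instead show coinductively that $\mathit{Cont}\subseteq B(\mathit{Cont})$, using the one-step argument ``equality forces termwise equality''. This is self-contained: it needs neither $\alpha$, \cref{proposition:AB-opposites}, nor the algorithm. It also makes explicit the deficit propagation that the paper's last sentence leaves implicit. The paper's route, in exchange, yields the extra fact that every optimal policy from a pair in $\alpha$ reaches $\mathord{\sim}\setminus\robust$.

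For $\beta\subseteq\mathit{Cont}$ you follow essentially the paper's construction: optimal couplings with support in $\beta$ on $\beta\cap S^2_{?,\tau}$, and a robustness witness on $\robust$. The differences are that you use an almost-sure reachability policy where the paper uses maximal $\robust$-support couplings, and a min-reachability MDP argument where the paper simply observes that $P$ is optimal.

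\textbf{Repairs.}

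First, in Step~1 the inequality $\reachdelta{P}+\reachone{P}\le 1$ is false for a general policy. $S^2_1$ is absorbing but $S^2_\Delta$ is not: a coupling $P(u,u)$ may put mass on $S^2_1$, so a run can hit both sets. The inequality holds only for $S^2_\Delta$-closed policies, which is why the paper goes through \cref{corollary:optimal-closed-policy}. You never use $\reachone{P}(s,t)=\delta_\tau(s,t)$ later, so simply drop that sentence. The one-step unfolding at $(s,t)\in S^2_{?,\tau}$ already gives everything Step~2 needs.

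Second, the same issue arises in Step~3, so $Q$ must be $S^2_\Delta$-closed. Otherwise $Q(u,u)\cdot\delta_\tau$ can be positive, and then:
\begin{itemize}
\item $\delta_\tau$ is not a fixed point of your restricted Bellman operator at diagonal pairs;
\item runs can leave $\beta$;
\item $\reachdelta{P}=1-\reachone{P}-\pi_P$ fails.
\end{itemize}
The fix is to replace $Q(u,u)$ by the diagonal coupling, which does not change $\reachdelta{Q}$.

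Third, your ``main obstacle'' is not actually an obstacle. For a fixed memoryless $P$, the finite chain $\chain{P}$ almost surely ends in a closed communication class. Hence $\pi_P(s,t)>0$ requires a closed class inside $\beta\cap S^2_{?,\tau}$; end components matter only when the policy may vary. Your closed-class argument rules such a class out for every policy, not just optimal ones. On such a class $\reachone{P}=0$, whereas $\reachone{P}\ge\delta_\tau>0$ on $S^2_{?,\tau}$; this is exactly \cref{proposition:ccc}.

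This also makes the min-reachability step unnecessary. For any restricted $P$, absorption in $S^2_1\cup S^2_\Delta$ is almost sure. Moreover, $\delta_\tau$ is $P$-harmonic on $\beta$ with boundary values $1$ and $0$. Therefore $\reachone{P}=\delta_\tau$ on $\beta$ automatically, and no lexicographic choice is needed.
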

\begin{proof}[Proof Sketch]
Let $s$, $t \in S$.  We prove the two implications.

Suppose that $(s, t) \in \beta$.  Using the definition of $B$, we construct a policy $P$ such that $\reachdelta{P}(s, t) = 1 - \delta_\tau (s, t)$.  Thus, by \cref{theorem:main-continuity}, $\delta_{\_}(s, t)$ is continuous at $\tau$.

Suppose that $\delta_{\_}(s, t)$ is continuous at $\tau$.  Then, by \cref{theorem:main-continuity}, there exists a policy $P$ such that $\reachdelta{P}(s, t) = 1 - \delta_\tau (s, t)$.
Towards a contradiction, assume that $(s, t) \not\in \beta$. It follows from Proposition~\ref{proposition:AB-opposites} that $(s, t) \in \alpha$.  Then, using the definition of $A$, we show that $(s, t)$ reaches $\mathord{\sim} \setminus \robust$ in $\chain{P}$.  It follows that $\reachdelta{P}(s, t) < 1 - \delta_\tau (s, t)$, a contradiction.
\end{proof}

We now introduce the algorithm to decide continuity in polynomial time.
Let $T \subseteq S \times S$.  We use the following notation below:
$\mathrm{Pre}(T) = \{\, (s, t) \in S \times S \mid (\support(\tau(s)) \times \support(\tau(t))) \cap T \neq \varnothing \,\}$.

\begin{algorithm}[ht]
\caption{Continuity}
\label{algorithm:continuity}
\DontPrintSemicolon
\SetAlgoNoLine
\KwIn{A labelled Markov chain $\langle S, L, \tau, \ell \rangle$, the bisimilarity distance function $\delta_\tau : S \times S \to [0, 1]$, the set of bisimilar states $\sim$, and the set of robustly bisimilar states $\simeq$}
\KwOut{$\{\, (s, t) \in S \times S \mid \delta_{\_}(s, t)$ is continuous at $\tau \,\}$}
$D_{\mathrm{old}} \gets \varnothing$\;
$D \gets \mathord{\sim} \setminus \robust$\;
\Repeat{$D = D_{\mathrm{old}}$}{
  $D_{\mathrm{old}} \gets D$\;
  \ForEach{$(s, t) \in (S^2_{?,\tau} \cap \mathrm{Pre}(D_{\mathrm{old}})) \setminus D_{\mathrm{old}}$}{
    \If{$\forall \omega \in \Coptimal(\tau(s), \tau(t))$ we have $\support(\omega) \cap D_{\mathrm{old}} \neq \varnothing$}{
      $D \gets D \cup \{(s, t)\}$\;
    }
  }
}
\Return $(S \times S) \setminus D$
\end{algorithm}

\begin{proposition}
\label{proposition:algorithm-alpha}
Algorithm~\ref{algorithm:continuity} computes the set $D = \alpha$.
\end{proposition}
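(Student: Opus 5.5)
The plan is to show that Algorithm~\ref{algorithm:continuity} is a (slightly optimized) Kleene iteration for the least fixed point $\alpha$ of $A$. Write $D_0 = \varnothing$ and let $D_{k+1}$ denote the value of $D$ at the end of the $(k+1)$-st iteration of the repeat loop; we compare these with $A^k(\varnothing)$. Since $S \times S$ is finite and $A$ is monotone (Proposition~\ref{proposition:monotone}), the chain $\varnothing \subseteq A(\varnothing) \subseteq A^2(\varnothing) \subseteq \cdots$ stabilizes at $\alpha$. It therefore suffices to prove the invariant $D_{k+1} = A(D_k)$ together with $D_k \subseteq D_{k+1}$. Termination then follows because $D$ only grows inside a finite set, and the final $D$ satisfies $D = A(D)$. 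We will then argue that this fixed point is exactly $\alpha$.

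\textbf{Invariant.} Fix an iteration with $D_{\mathrm{old}} = D_k$. First suppose $D_k \subseteq A(D_k)$; this holds for $k=0$ and is preserved inductively by monotonicity. The new value is $D_k$ together with those $(s,t) \in (S^2_{?,\tau} \cap \mathrm{Pre}(D_k)) \setminus D_k$ for which every optimal coupling meets $D_k$. We show this equals $A(D_k)$ in two steps.

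For $\subseteq$: $D_k \subseteq A(D_k)$, and every added pair lies in $A(D_k)$ by the definition of $A$.

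For $\supseteq$: Consider $(s,t) \in A(D_k)$. If $(s,t) \in \mathord{\sim}\setminus\robust$, it is in $D_1 \subseteq D_k \cup \ldots$. More carefully, $\mathord{\sim}\setminus\robust \subseteq D$ from initialization onward, because $D$ never shrinks. Otherwise $(s,t) \in S^2_{?,\tau}$ and every $\omega \in \Coptimal(\tau(s),\tau(t))$ has support meeting $D_k$. Since $\Coptimal$ is nonempty, some $\omega$ exists, and its support lies in $\support(\tau(s))\times\support(\tau(t))$. Hence $(s,t) \in \mathrm{Pre}(D_k)$. Thus $(s,t)$ is either already in $D_k$ or is examined by the foreach loop and added.

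One subtlety is that inside the loop, $D$ is updated while the test uses $D_{\mathrm{old}}$. The test is always against $D_{\mathrm{old}}$, so the order of processing does not matter.

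There is also an off-by-one: the initialization sets $D$ to $\mathord{\sim}\setminus\robust$, not $\varnothing$. This is handled by the identity $A(\varnothing) = \mathord{\sim}\setminus\robust$: the second set in $A(\varnothing)$ is empty because $\support(\omega)\cap\varnothing = \varnothing$, again using nonemptiness of $\Coptimal$. So after initialization $D = A(\varnothing)$, and each loop iteration applies $A$ once more. This gives $D = A^{k+1}(\varnothing)$ after $k$ iterations.

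\textbf{Least fixed point.} The loop exits when $D = D_{\mathrm{old}}$, that is, when $A^{k+1}(\varnothing) = A^k(\varnothing)$, so $D$ is a fixed point of $A$. The standard argument shows it is the least one. For any fixed point $F$, induction with monotonicity gives $A^k(\varnothing) \subseteq A^k(F) = F$. Hence $D \subseteq \alpha$, and since $D$ is itself a fixed point, $D = \alpha$.

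The main obstacle is the faithfulness check in the invariant: confirming that the $\mathrm{Pre}$ filter and the exclusion of $D_{\mathrm{old}}$ discard no pair of $A(D_k)$. This rests on the nonemptiness of optimal couplings (\cref{remark:optimal-couplings}) and on the support of a coupling lying inside the product of the marginals' supports. Everything else is routine Kleene iteration on a finite lattice.
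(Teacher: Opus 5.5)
Your proof is correct and follows essentially the same route as the paper. Both rest on the loop invariant $D = A(D_{\mathrm{old}})$ together with $D_{\mathrm{old}} \subseteq D$: $A(\varnothing) = \mathord{\sim}\setminus\robust$ handles initialization, nonemptiness of $\Coptimal(\tau(s),\tau(t))$ ensures the $\mathrm{Pre}$ filter discards nothing (a point the paper uses only implicitly), and the terminal fixed point, lying below every fixed point of $A$, must equal $\alpha$. The only differences are cosmetic: the paper carries $D \subseteq \alpha$ in the invariant instead of identifying $D$ with $A^{k}(\varnothing)$, and your initial indexing with $D_0 = \varnothing$ is off by one, which you correct yourself.
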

\begin{proof}[Proof Sketch]
We show the following loop invariant of Algorithm~\ref{algorithm:continuity}: $D = A(D_{\mathrm{old}}) \subseteq \alpha$.

The loop terminates when a fixed point is reached, therefore, by the loop invariant we know that $D = A(D) \subseteq \alpha$. Thus, $D$ is a fixed point of $A$ less than or equal to $\alpha$. Since $\alpha$ is the least fixed point of $A$, we can conclude that Algorithm~\ref{algorithm:continuity} computes $\alpha$.
\end{proof}

\begin{theorem}
Algorithm~\ref{algorithm:continuity} returns the set $\{\, (s, t) \in S \times S \mid \delta_{\_}(s, t)$ is continuous at $\tau \,\}$.
\end{theorem}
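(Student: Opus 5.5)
This theorem is essentially a composition of the three propositions immediately preceding it. Algorithm~\ref{algorithm:continuity} returns $(S \times S) \setminus D$, and by Proposition~\ref{proposition:algorithm-alpha} the final value of $D$ is $\alpha$. So the returned set is $\overline{\alpha}$. By Proposition~\ref{proposition:AB-opposites}, $\overline{\alpha} = \beta$. By Proposition~\ref{proposition:beta-continuous}, $\beta$ is exactly the set of pairs $(s, t)$ for which $\delta_{\_}(s, t)$ is continuous at $\tau$. Chaining these three equalities gives the claim.

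Since the proof of Proposition~\ref{proposition:algorithm-alpha} is only sketched, I would make explicit the points that this theorem depends on.

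\begin{enumerate}[a.]
\item \emph{Termination.} $D$ only grows and lies in the finite set $S \times S$, so the repeat loop stops after at most $|S|^2$ iterations. When it stops, $D = D_{\mathrm{old}}$.

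\item \emph{The loop invariant $D = A(D_{\mathrm{old}})$.}
\begin{itemize}
\item Initially $D_{\mathrm{old}} = \varnothing$ and $D = \mathord{\sim} \setminus \robust$. The second part of $A(\varnothing)$ is empty, because every coupling has nonempty support and so cannot meet $\varnothing$. Hence $D = A(\varnothing)$.
\item The loop only examines pairs in $S^2_{?,\tau} \cap \mathrm{Pre}(D_{\mathrm{old}})$. This restriction loses nothing. If $(s, t) \notin \mathrm{Pre}(D_{\mathrm{old}})$, every coupling of $\tau(s)$ and $\tau(t)$ has support inside $\support(\tau(s)) \times \support(\tau(t))$, which is disjoint from $D_{\mathrm{old}}$. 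Since $\Coptimal(\tau(s), \tau(t))$ is nonempty, such a pair can never satisfy the defining condition of $A$.
\item Pairs already in $D_{\mathrm{old}}$ are skipped. This is sound because $D_{\mathrm{old}} = A(D_{\mathrm{old}}')$ for the previous $D_{\mathrm{old}}' \subseteq D_{\mathrm{old}}$, and $A$ is monotone (Proposition~\ref{proposition:monotone}), so $D_{\mathrm{old}} \subseteq A(D_{\mathrm{old}})$.
\end{itemize}

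\item \emph{The inclusion $D \subseteq \alpha$.} This follows by induction using monotonicity: from $D_{\mathrm{old}} \subseteq \alpha$ we get $A(D_{\mathrm{old}}) \subseteq A(\alpha) = \alpha$.

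\item \emph{Conclusion.} At termination $D = A(D)$ and $D \subseteq \alpha$. Since $\alpha$ is the least fixed point of $A$, this forces $D = \alpha$.
\end{enumerate}

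The hard part is Proposition~\ref{proposition:beta-continuous}, which carries the semantic content via Theorem~\ref{theorem:main-continuity}. For the present statement I take it as given. The only subtle bookkeeping is the $\mathrm{Pre}$ restriction, and it relies on optimal couplings existing (\cref{remark:optimal-couplings}).
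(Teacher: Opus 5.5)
Your proof is correct and follows the paper's own argument: the theorem is obtained by chaining Proposition~\ref{proposition:algorithm-alpha}, Proposition~\ref{proposition:AB-opposites} and Proposition~\ref{proposition:beta-continuous}. Your expanded loop-invariant argument matches the appendix proof of Proposition~\ref{proposition:algorithm-alpha}, and it is slightly tighter on the $\mathrm{Pre}$ restriction and on $D \subseteq \alpha$. One small correction: in your base case $D = A(\varnothing)$, the second component of $A(\varnothing)$ is empty because $\Coptimal(\tau(s), \tau(t))$ is nonempty, so the universal condition is not vacuously true; nonempty supports are irrelevant, since no set meets $\varnothing$ anyway.
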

\begin{proof}
This is immediate from \cref{proposition:AB-opposites,proposition:beta-continuous,proposition:algorithm-alpha}.
\end{proof}

In the remainder of this section, we show that Algorithm~\ref{algorithm:continuity} runs in polynomial time.  Note that $\sim$, $\simeq$, and $\delta_\tau$ can be computed in polynomial time \cite{CBW12, FKPB25c}.

\begin{proposition}
\label{proposition:min-cost-max-flow}
For all $s$, $t \in S$ and $X \subseteq S \times S$, we can determine in polynomial time whether there exists $\omega \in \Coptimal(\tau(s), \tau(t))$ such that $\support(\omega) \subseteq X$.
\end{proposition}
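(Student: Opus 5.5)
We reduce the question to linear programming, or more concretely to a min-cost flow problem. Optimal couplings are exactly the optimal solutions of the transportation problem
\[
\min_{\omega \in \Creal(\tau(s), \tau(t))} \sum_{u, v \in S} \omega(u, v)\, \delta_\tau(u, v).
\]
Its optimal value is known, because it equals $\delta_\tau(s,t)$ whenever $\ell(s) = \ell(t)$ (Definition~\ref{definition:delta} together with the fixed-point property of $\delta_\tau$). Since the coupling polytope is compact, the infimum is attained, so the set $\Coptimal(\tau(s), \tau(t))$ is itself a polytope. It is described by the marginal constraints $\omega(u, S) = \tau(s)(u)$ and $\omega(S, v) = \tau(t)(v)$, the nonnegativity constraints, and the single additional equality $\sum_{u,v} \omega(u,v)\,\delta_\tau(u,v) = \delta_\tau(s,t)$. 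Requiring $\support(\omega) \subseteq X$ amounts to adding the linear equalities $\omega(u,v) = 0$ for all $(u,v) \notin X$.

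The question is therefore whether a polytope, given explicitly by polynomially many linear constraints, is nonempty. The coefficients are the rationals $\tau(\cdot)(\cdot)$ and $\delta_\tau(\cdot,\cdot)$. The latter are rational of polynomial bit size, because they are computed in polynomial time \cite{CBW12} (indeed they are solutions of a polynomial-size linear program). Feasibility of such an LP is decidable in polynomial time by the ellipsoid or interior-point method.

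Instead of the feasibility formulation, one can proceed in two stages, which avoids relying on an exact equality with $\delta_\tau(s,t)$. First solve $\min \{\, \omega \cdot \delta_\tau \mid \omega \in \Creal(\tau(s),\tau(t)),\ \omega(u,v)=0 \text{ for } (u,v)\notin X \,\}$. This is a transportation (min-cost flow) problem on the bipartite network whose left nodes are $\support(\tau(s))$ and whose right nodes are $\support(\tau(t))$. Only the edges in $X$ are present, each with cost $\delta_\tau(u,v)$, and supplies and demands are given by the two marginals. If this restricted problem is infeasible (which a max-flow computation detects), answer no. Otherwise answer yes exactly when its optimum equals $\delta_\tau(s,t)$. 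This is correct because the restricted optimum is always at least the unrestricted optimum $\delta_\tau(s,t)$, with equality precisely when some optimal coupling has support inside $X$. Min-cost flow with rational data runs in (strongly) polynomial time, which gives the bound.

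The main obstacle is the bookkeeping of exactness. We must make sure that $\delta_\tau(u,v)$ and $\delta_\tau(s,t)$ are available as exact rationals of polynomial size, so that the equality test is exact and the LP size is polynomial. This follows from the polynomial-time computability of $\delta_\tau$ via linear programming, whose basic solutions have polynomially bounded encodings. A minor additional point is the case $(s,t) \in S^2_1$, where couplings are not used. This case is either excluded by the context in which Proposition~\ref{proposition:min-cost-max-flow} is invoked (only pairs in $S^2_{?,\tau}$ are tested) or handled by the same LP with objective $\omega\cdot\delta_\tau$ in place of $\delta_\tau(s,t)$.
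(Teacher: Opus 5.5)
Your proposal is correct and follows essentially the paper's route: the paper builds the same bipartite network (edges only for pairs in $X$, costs $\delta_\tau(x,y)$, capacities given by the two marginals), computes a maximum flow with Dinic's algorithm, minimises its cost by cycle cancelling, and answers yes iff the flow value is $1$ and the cost equals $\delta_\tau(s,t)$; it also mentions linear programming as an alternative. Your explicit attention to the exact rational encoding of $\delta_\tau$ adds care the paper leaves implicit. So does your remark that the ``minimise and compare'' test is only sound when $\ell(s)=\ell(t)$, which is the only case in which the proposition is invoked.
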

\begin{proof}[Proof Sketch]
We reduce the problem to a minimum cost maximum flow instance \cite{FF62}.

We construct the network, with source $s$ and sink $t$, as follows.  For each $x \in S$, we add two vertices $(x, 0)$ and $(x, 1)$, an edge from $s$ to $(x, 0)$ with capacity $\tau(s)(x)$, and an edge from $(x, 1)$ to $t$ with capacity $\tau(t)(x)$.  These edges have cost $0$.  For each $(x, y) \in X$, we add an edge from $(x, 0)$ to $(y, 1)$ with capacity $1$ and cost $\delta_\tau(x, y)$.

Let $f : E \to \mathbb{R}$ denote the flow on the edges.  The value of the flow, $|f|$ is defined as the net flow entering the sink, or equivalently, the net flow exiting the source.  The cost of the flow is defined as the sum of the cost per unit flow over all edges.  We can use standard network flow algorithms to solve the problem. For example, Dinic's maximum flow algorithm maximizes the value of the flow in polynomial time \cite{D06} and the cycle-cancelling algorithm then minimizes the cost of the flow, without changing the value of the flow \cite{K67}, in polynomial time \cite{GT89}.  Note that the problem can also be solved in polynomial time using linear programming.

We show that there exists $\omega \in \Coptimal(\tau(s), \tau(t))$ such that $\support(\omega) \subseteq X$ if an only if the resulting value of the flow is $1$ and the cost of the flow is $\delta_\tau(s, t)$.
Intuitively, a flow of value $1$ corresponds exactly to a coupling $\omega \in \Creal(\tau(s), \tau(t))$, with $\omega(x, y) = f((x,0), (y,1))$, and the construction of the network ensures that $\support(\omega) \subseteq X$.  Conversely, any such coupling defines a feasible flow of value $1$.  It follows that the cost of the flow equals $\sum_{x \in S} \sum_{y \in S} \delta_\tau(x, y)\, \omega(x, y)$.  Hence, the cost of the flow is $\delta_\tau(s, t)$ if and only if $\omega$ is an optimal coupling.
\end{proof}

\begin{proposition}
Algorithm~\ref{algorithm:continuity} runs in polynomial time.
\end{proposition}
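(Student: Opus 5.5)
We bound the number of iterations of the repeat loop and the cost of a single iteration, and we show both are polynomial in $|S|$ (and in the bit-size of $\tau$ and $\delta_\tau$).

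\emph{Number of iterations.} Every iteration that does not terminate the loop adds at least one pair to $D$. Indeed, if no pair is added, then $D = D_{\mathrm{old}}$ and the loop stops. The set $D$ only grows and is always a subset of $S \times S$. So the loop runs at most $|S|^2 + 1$ times.

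\emph{Cost per iteration.} Computing $\mathrm{Pre}(D_{\mathrm{old}})$ needs, for each pair $(s, t)$, a check whether $\support(\tau(s)) \times \support(\tau(t))$ meets $D_{\mathrm{old}}$. This takes $O(|S|^4)$ time in total, so the foreach loop ranges over at most $|S|^2$ pairs. For each pair we must decide the condition ``for all $\omega \in \Coptimal(\tau(s), \tau(t))$, $\support(\omega) \cap D_{\mathrm{old}} \neq \varnothing$''. This condition is the negation of ``there exists $\omega \in \Coptimal(\tau(s), \tau(t))$ with $\support(\omega) \subseteq \overline{D_{\mathrm{old}}}$''. By Proposition~\ref{proposition:min-cost-max-flow}, applied with $X = \overline{D_{\mathrm{old}}}$, the latter can be decided in polynomial time. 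The network used there has $O(|S|)$ vertices and $O(|S|^2)$ edges. Its capacities come from $\tau$ and its costs from $\delta_\tau$, so the instance size is polynomial in the input.

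\emph{Inputs.} The inputs $\sim$, $\robust$ and $\delta_\tau$ are either given or computable in polynomial time \cite{CBW12, FKPB25c}. Since $\delta_\tau$ is the optimal value of a linear program with rational data, it is rational with polynomially bounded bit-size. This ensures the flow instances have polynomial encoding size and that comparing the cost of the flow against $\delta_\tau(s, t)$ can be done exactly.

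Multiplying the two bounds gives at most $O(|S|^2)$ iterations, each doing at most $|S|^2$ polynomial-time flow computations plus polynomial bookkeeping. Hence the algorithm runs in polynomial time.

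The main point needing care is the bit-size argument in the \emph{Inputs} step. Polynomial-time flow algorithms (or linear programming) require rational data of polynomial size. This is guaranteed because $\delta_\tau$ is computed in polynomial time, so its output has polynomial size.
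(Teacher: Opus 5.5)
Your proposal is correct and follows essentially the same argument as the paper. It bounds the number of iterations by about $|S|^2$, since every non-final iteration enlarges $D$, and checks at most $|S|^2$ pairs per iteration. Each line-6 test, by negation, becomes an instance of Proposition~\ref{proposition:min-cost-max-flow} with $X = \overline{D_{\mathrm{old}}}$. Your extra remarks on the cost of computing $\mathrm{Pre}(D_{\mathrm{old}})$ and on the polynomial bit-size of $\delta_\tau$ are sound refinements that the paper leaves implicit.
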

\begin{proof}
Let $|S| = m$. Since at least one pair of states is added to $D$ in each iteration, the algorithm requires at most $m^2$ iterations.  Furthermore, during each iteration, at most $m^2$ pairs of states are checked in the inner loop.

The check on line~6 is equivalent to determining whether there exists $\omega \in \Coptimal(\tau(s), \tau(t))$ such that $\support(\omega) \subseteq \overline{D_{\mathrm{old}}}$.  By Proposition~\ref{proposition:min-cost-max-flow}, this can be done in polynomial time.
\end{proof}

%% file: text/experiments.tex
\section{Experiments}
\label{section:experiments}

To evaluate the efficiency and practical usefulness of our algorithm to decide continuity,
we implemented it in the widely used probabilistic model checker PRISM~\cite{KNP11},
an open-source tool for quantitative verification and analysis of probabilistic models, including labelled Markov chains.

\subsection{Implementation}

Our algorithm requires as input the set of bisimilar states $\sim$, the set of robustly bisimilar states $\simeq$, and the bisimilarity distance function $\delta_\tau : S \times S \to [0, 1]$.  We first briefly mention how they are computed.

PRISM's implementation of the traditional (i.e., non-robust) bisimilarity algorithm is a standard partition-refinement approach
which uses the signature-based method of Derisavi \cite{D07}.
The initial partition is based on the labelling of the states.  Let $\Pi$ be the current partition and $E_\Pi$ be the set of equivalence classes in $\Pi$.  Then the new partition is computed as $\{\, (s, t) \in \Pi \mid \forall B \in E_\Pi : \tau(s)(B) = \tau(t)(B) \,\}$.
Each state and equivalence class (referred to as a block) is represented by an integer ID.  The current partition of the state space is tracked by an array that is indexed by state IDs and contains the corresponding block IDs.

PRISM's implementation of the robust bisimilarity algorithm was described in \cite{FKPB25c}.  Given a policy $P \in \mathcal{P}$, a set $R \subseteq S \times S$ \emph{supports} a path $(u_1, v_1) \hdots (u_n, v_n)$ in $\chain{P}$ if for all $1 \leq i \leq n$ we have $(u_i, v_i) \in R$ and $\support(P(u_i, v_i)) \subseteq R$.
Let $\Pi$ be the current partition.  Then the new partition is the largest bisimulation such that for every $(s, t) \in \Pi$ there exists a policy $P$ such that $\Pi$ supports a path from $(s, t)$ to $S^2_\Delta$ in $\chain{P}$.

An algorithm to compute the bisimilarity distance, based on Condon's simple policy iteration algorithm \cite{C90}, was proposed in \cite{BBLM13, TB16, TB18}.
Distance zero (i.e. bisimilarity) and one are computed first.
Then, starting from an arbitrary vertex policy, the algorithm repeatedly improves the policy with a locally optimal vertex coupling, until no further local improvement is possible.
For the bisimilarity distance, Tang and van Breugel \cite{TB17} showed that, in practice, policy iteration approaches are more efficient than the theoretically superior linear programming algorithm.
Hence, we adapt an implementation of the aforementioned algorithm, provided in \cite{T18}.
We rely on Google's OR-Tools \cite{ortools} to compute optimal couplings.
The OR-Tools suite is an open source software for combinatorial optimization and includes solvers for linear programming.

We implemented Algorithm~\ref{algorithm:continuity} in Java as part of PRISM's explicit-state model checking engine.\footnote{\href{https://github.com/zainabfatmi/prism/blob/concur/prism/src/explicit/bisim/distances/Continuity.java}{github.com/zainabfatmi/prism/blob/concur/prism/src/explicit/bisim/distances/Continuity.java}}  The transition probabilities of the labelled Markov chain are stored as a two-dimensional array and the set $D$ of discontinuous pairs is represented as a flat boolean array where a state pair $(s, t)$ is indexed by $s \cdot |S| + t$.  Throughout the algorithm, we exploit the symmetry of $D$ by setting both $(s, t)$ and $(t, s)$ simultaneously.  It is initialised to ${\sim} \setminus {\simeq}$, that is, those pairs of states that are bisimilar but not robustly so.

The main loop iterates over all pairs $(s, t) \notin D$ with non-zero distance and matching labels, restricting attention to those that are predecessors of $D$.  For each such candidate pair, we attempt to find an optimal coupling of $\tau(s)$ and $\tau(t)$ whose support lies entirely outside the known discontinuous pairs $D$, as described in the proof of \cref{proposition:min-cost-max-flow}, using Google's OR-Tools.  If no such coupling exists, the pair is added to $D$.  This process continues until no new pairs are added to $D$, at which point a fixed point has been reached.

\subsection{Experimental Setup}
All experiments were run on a MacBook with an M1 chip and 16GB memory.

We evaluated our algorithm by applying it to all (discrete-time) labelled Markov chains from the Quantitative Verification Benchmark Set (QVBS) \cite{HKPQR19} for which the bisimilarity distance algorithm could complete within the available memory.
QVBS is a comprehensive collection of probabilistic models which is designed as a benchmark suite for quantitative verification and analysis tools and is the foundation of the Quantitative Verification Competition (QComp), which compares the performance, versatility, and usability of such tools.

For an additional source of models, we also use jpf-probabilistic~\cite{FCDWTB21}.
Java PathFinder (JPF) \cite{VHBPL03} is a popular model checker for Java code, and the JPF extension jpf-probabilistic provides Java implementations of sixty randomized algorithms \cite{FCDWTB21}. As shown in \cite{FCDWTB21}, JPF, extended by jpf-probabilistic and jpf-label \cite{F20}, can be used in tandem with PRISM to check properties of these algorithms and supplement JPF’s qualitative results with quantitative information.  A description of the subset of these algorithms utilized in our study is provided in \ifthenelse{\isundefined{\techReport}}{\cite[Appendix~H]{arxiv}}{Appendix~\ref{appendix:experiments}}.

\subsection{Results}
Table~\ref{table:results} reports the results for benchmarks where the minimized models obtained via traditional bisimilarity and robust bisimilarity differ.
These are of particular interest because, according to \cref{corollary:continuity-trivial}, they are instances where our algorithm identifies pairs of states for which the distance is discontinuous.
Note that the property used for each benchmark determines the labelling of the model.
In the table, \emph{Time} denotes the runtime (in seconds) and \emph{\%} denotes the percentage of state pairs for which the distance is continuous. The percentage of continuous pairs with distance less than one is given in parentheses.

\begin{table}[ht!]
\caption{The results of the benchmarks for which continuity is non-trivial.}
\rowcolors{1}{gray!10}{white}
\centering
\begin{tabular}{ m{0.2\textwidth} m{0.14\textwidth} R{0.07\textwidth} R{0.14\textwidth} R{0.13\textwidth} R{0.14\textwidth} }
  \toprule
  \rowcolor{white}
  \multicolumn{3}{c}{Benchmark} & Distance & \multicolumn{2}{c}{\qquad Continuity} \\
  \midrule
  Name (Property) & Parameters & States & Time (s) & Time (s) & \% \\ 
  \midrule
  brp (p1) & N=2, M=2 & 89 & 17.276 & 16.164 & 34.6\phantom{0} (6.5) \\
  & N=2, M=3 & 116 & 58.826 & 78.724 & 27.5\phantom{0} (5.0) \\
  & N=2, M=4 & 143 & 156.344 & 252.931 & 22.9\phantom{0} (4.2) \\
  brp (p4) & N=2, M=2 & 89 & 2.396 & 2.042 & 15.8\phantom{0} (3.3) \\
  & N=2, M=3 & 116 & 7.987 & 7.834 & 12.2\phantom{0} (2.4) \\
  & N=2, M=4 & 143 & 20.332 & 19.645 & 9.9\phantom{0} (1.9) \\
  & N=2, M=5 & 170 & 40.691 & 38.109 & 8.4\phantom{0} (1.6) \\
  & N=4, M=2 & 173 & 21.533 & 18.225 & 8.3\phantom{0} (1.7) \\
  crowds (positive) & R=2, S=2 & 77 & 126.911 & 8.450 & 16.7\phantom{0} (4.9) \\
  oscillators (power) & T=6, N=3 & 57 & 5.059 & 0.002 & 94.6 (82.2) \\
  oscillators (time) & T=6, N=3 & 57 & 4.687 & 0.002 & 94.6 (82.2) \\
  \bottomrule
\end{tabular}
\label{table:results}
\end{table}

In terms of the size of the state space, we are limited by what the bisimilarity distance algorithm can handle, constrained by 16GB of memory.  This is similar to the model sizes considered in \cite{TB17}.
Nevertheless, the results are promising: the additional cost of deciding continuity is, on all but one benchmark, less than that for computing the bisimilarity distance and in some cases substantially so, demonstrating that our approach is feasible in practice.
Moreover, our results suggest that when the bisimilarity distance is continuous for most state pairs, the continuity check is particularly efficient.

\begin{table}[htp]
\caption{The results of some benchmarks for which continuity is trivial.}
\rowcolors{1}{gray!10}{white}
\centering
\begin{tabular}{ m{0.31\textwidth} m{0.14\textwidth} R{0.07\textwidth} R{0.11\textwidth} m{0.07\textwidth} R{0.12\textwidth} }
  \toprule
  \rowcolor{white}
  \multicolumn{3}{c}{Benchmark} & \multicolumn{2}{c}{\qquad Distance} & Continuity \\
  \midrule
  Name (Property) & Parameters & States & Time (s) & Trivial & Time (s) \\ 
  \midrule
  fair biased coin (heads) & p=0.9 & 13 & 0.161 & no & 0.000 \\
  herman (steps) & N=5 & 32 & 7.015 & no & 0.001 \\
  leader-sync (elected) & N=3, K=4 & 147 & 0.988 & yes & 0.004\\
  majority element (incorrect) & s=50, t=5 & 101 & 26.608 & no & 0.011 \\ 
  haddad-monmege (target) & N=20, p=0.7 & 41 & 312.192 & no & 0.001 \\
  pollards factorization (input) & i=4004 & 5 & 0.000 & yes & 0.000 \\
  queens (success) & n=4 & 16 & 0.001 & yes & 0.000 \\
  \bottomrule
\end{tabular}
\label{table:trivial-results}
\end{table}

Table~\ref{table:trivial-results} presents the largest feasible model per benchmark where the minimized models obtained by traditional bisimilarity and robust bisimilarity coincide.
In these cases, by \cref{corollary:continuity-trivial}, the distance is continuous for all pairs of states and our algorithm terminates almost immediately. The column \emph{Trivial} denotes whether the distance is an integer, that is $0$ or $1$, for all state pairs.  Evidently, this does not affect the time to decide continuity.

Across all benchmarks, the experiments demonstrate that the additional computation for deciding continuity is comparable to or faster than calculating the bisimilarity distance.
Hence the extra step of deciding continuity is reasonable in practice, while providing additional robustness guarantees.
As discussed in the introduction, in applications where transition
probabilities are subject to approximation or uncertainty, continuity guarantees that sufficiently small errors in these values lead only to small changes in the bisimilarity distance. Conversely, discontinuity identifies situations in which arbitrarily small perturbations may cause a non-negligible change in the distance, highlighting that conclusions drawn from the computed distance should be interpreted with caution.

%% file: text/conclusion.tex
\section{Conclusions and Future Work}
\label{section:conclusion}

The concept of robust bisimilarity was proposed in \cite{FKPB25c} to address the lack of robustness of bisimilarity, where it was shown to be a sufficient condition for continuity.
In this paper, we have strengthened that result by proving that robust bisimilarity is also a necessary condition for continuity.
We have completed the theoretical characterization of robustness in \cref{theorem:main-continuity}.
Moreover, we have shown that continuity is decidable in polynomial time for all pairs of states and we have extended the experimental study done in \cite{FKPB25c}, with particular emphasis on the quantitative setting.

As future work, we aim to provide a bound on how much the distance can increase to, under small perturbations of the transition probabilities.
Additionally, we plan to investigate the continuity of the distance function when we restrict to perturbations that preserve the underlying graph structure, as this is often known, even when the transition probabilities are subject to approximation.

%% file: text/appendix.tex
\section{Metric Topology}
\label{appendix:metric-topology}

\begin{definition}
The function $d_{E} : [0, 1] \times [0, 1] \to [0, 1]$ is defined by
\[
d_E(r, s) = | r - s |.
\]
\end{definition}

In the remainder, let $(Y, d)$ be a 1-bounded metric space.  According to, for example, \cite[Theorem~4.1.17]{E89}, a set $A \subseteq Y$ is \emph{compact} if and only if it is \emph{sequentially compact}, that is, each sequence in $A$ has a converging subsequence.  Given a sequence $(y_n)_n$, we denote a subsequence by $(y_{f(n)})_n$, where the function $f : \nat \to \nat$ is strictly increasing.

\begin{proposition}[{\cite[Example~3.1.25]{E89}}]
\label{proposition:unit-interval-compact}
$([0, 1], d_E)$ is compact.
\end{proposition}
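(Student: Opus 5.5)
The claim is that $([0,1], d_E)$ is compact. The plan is to use the characterization recalled just above: a subset of a metric space is compact if and only if it is sequentially compact. So it suffices to show that every sequence $(r_n)_n$ in $[0,1]$ has a subsequence converging, with respect to $d_E$, to a point of $[0,1]$.

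First I would construct a monotone subsequence using the standard ``peak'' argument. Call an index $n$ a peak if $r_n \geq r_m$ for all $m > n$.
\begin{itemize}
\item If there are infinitely many peaks $n_1 < n_2 < \cdots$, then $(r_{n_k})_k$ is non-increasing.
\item Otherwise, take $n_1$ beyond the last peak. Since $n_1$ is not a peak, there is some $n_2 > n_1$ with $r_{n_2} > r_{n_1}$. Since $n_2$ is not a peak either, the step repeats, and inductively this yields a strictly increasing subsequence.
\end{itemize}
In both cases we obtain a strictly increasing $f:\nat\to\nat$ such that $(r_{f(n)})_n$ is monotone.

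Next, the subsequence is bounded, because it lies in $[0,1]$. By the completeness (least upper bound) property of $\mathbb{R}$, a bounded monotone sequence converges to its supremum (if non-decreasing) or its infimum (if non-increasing). Call this limit $r$. Since all terms lie in $[0,1]$ and the interval is closed, $r \in [0,1]$. In particular $d_E(r_{f(n)}, r) = |r_{f(n)} - r| \to 0$.

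Hence $[0,1]$ is sequentially compact, and therefore compact. The only step requiring any care is the peak construction of the monotone subsequence; everything else is the monotone convergence theorem, which I would take as a basic fact. Alternatively, the whole argument can be replaced by citing Bolzano--Weierstrass, or Heine--Borel for closed bounded subsets of $\mathbb{R}$.
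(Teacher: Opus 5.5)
Your argument is correct. The peak construction does produce a monotone subsequence, and monotone convergence supplies its limit. You also rightly check that the limit lies in $[0,1]$, which is the part of sequential compactness left implicit in the paper's informal phrasing (``each sequence in $A$ has a converging subsequence'').

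The paper gives no proof of its own and only cites Engelking for this textbook fact, so there is no competing argument to compare in detail. Compared with the classical open-cover (Heine--Borel) proof, the difference is that you go through the equivalence of compactness and sequential compactness for metric spaces. The paper also takes that equivalence from Engelking, so relying on it here is harmless. Your route has the added merit of producing exactly the property the paper uses later: extracting a convergent subsequence of the reachability vectors in the proof of Lemma~\ref{lemma:discontinuous}, via Proposition~\ref{proposition:finite-product-compact}.
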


\begin{definition}
Given a metric $d$ on $Y$, the function $d_F : (X \to Y) \times (X \to Y) \to [0, 1]$ is defined by
\[
d_F(f, g) = \max_{x \in X} d(f(x), g(x)).
\]
\end{definition}

\begin{proposition}[{\cite[Theorem~3.2.4]{E89}}]
\label{proposition:finite-product-compact}
If $(Y, d)$ is compact then $(X \to Y, d_F)$ is compact.
\end{proposition}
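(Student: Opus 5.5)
The statement to be proved is \cref{proposition:finite-product-compact}: if $(Y, d)$ is compact then $(X \to Y, d_F)$ is compact, where $X$ is finite (as throughout the paper). I plan to use sequential compactness, which for metric spaces is equivalent to compactness by the result cited at the start of this appendix. First I will check that $d_F$ is a 1-bounded metric on $X \to Y$. Symmetry and non-negativity are inherited pointwise from $d$. The triangle inequality follows by taking the maximum over $x$ of $d(f(x),h(x)) \leq d(f(x),g(x)) + d(g(x),h(x))$. If $d_F(f,g)=0$ then $f(x)=g(x)$ for every $x$. Finiteness of $X$ ensures the maximum exists.

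Next I will enumerate $X = \{x_1,\dots,x_k\}$ and take an arbitrary sequence $(f_n)_n$ in $X \to Y$. I then extract subsequences iteratively. Since $Y$ is sequentially compact, $(f_n(x_1))_n$ has a convergent subsequence indexed by some strictly increasing $g_1$. From $(f_{g_1(n)}(x_2))_n$ we extract a further convergent subsequence indexed by $g_1 \circ g_2$, and so on up to $x_k$. The composite $h = g_1 \circ \cdots \circ g_k$ is strictly increasing. For every $i$, $(f_{h(n)}(x_i))_n$ is a subsequence of a convergent sequence, so it converges to some $y_i \in Y$.

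Finally, I define $f(x_i) = y_i$ and show $d_F(f_{h(n)}, f) \to 0$. Given $\epsilon > 0$, choose $N_i$ so that $d(f_{h(n)}(x_i), y_i) < \epsilon$ for all $n \geq N_i$, and set $N = \max_i N_i$. This maximum is finite precisely because $X$ is finite. Then $d_F(f_{h(n)}, f) < \epsilon$ for all $n \geq N$.

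The only delicate point is this uniformity step, the passage from pointwise to $d_F$ convergence. It relies on the finiteness of $X$, and it is the place where the argument would fail for infinite products equipped with the sup metric.
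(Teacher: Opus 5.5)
Your proof is correct, but it takes a different route from the paper. The paper gives no proof of its own; it cites \cite[Theorem~3.2.4]{E89}, which is Tychonoff's theorem. So the intended argument is: $X \to Y$ is the product $Y^{X}$, a product of compact spaces is compact, and, implicitly, for finite $X$ the max metric $d_F$ induces the product topology.

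You instead give a self-contained sequential argument. You check that $d_F$ is a metric, extract nested convergent subsequences one coordinate at a time, and use finiteness of $X$ to upgrade pointwise convergence to $d_F$-convergence. You then close with the equivalence of compactness and sequential compactness for metric spaces \cite[Theorem~4.1.17]{E89}, which the paper already relies on.

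The citation buys brevity and generality, since Tychonoff covers arbitrary index sets. However, it leaves unstated the one place where finiteness of $X$ actually matters, namely that $d_F$ metrizes the product topology. Your uniformity step makes that dependence explicit and avoids Tychonoff entirely. It also gives the result in exactly the form the paper later uses it: extracting a convergent subsequence of $(\reachone{P_n})_n$ in the proof of Lemma~\ref{lemma:discontinuous}.
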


Let $(Z, d')$ be a 1-bounded metric space.  According to, for example, \cite[page~250]{E89}, a function $f : Y \to Z$ is \emph{continuous at $y \in Y$} if for any sequence $(y_n)_n$ in $Y$ converging to $y$, the sequence $(f(y_n))_n$ converges to $f(y)$, that is, $\lim_n f(y_n) = f(y)$.  Note that $y = \lim_n y_n$ is defined in term of $d$, whereas $\lim_n f(y_n)$ is defined in terms of $d'$.  A function $f : Y \to Z$ is \emph{continuous} if it is continuous at each $y \in Y$.  For $c > 0$, a function $f : Y \to Z$ is \emph{$c$-Lipschitz} if for all $y$, $z \in Y$, $d'(f(y), f(z)) \leq c \, d(y, z)$.  A function $f : Y \to Z$ is \emph{nonexpansive} if it is 1-Lipschitz.

\begin{proposition}
\label{proposition:Lipschitz-continuous}
Each Lipschitz function is continuous.
\end{proposition}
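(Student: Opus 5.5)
Since $f$ is Lipschitz, there is a constant $c > 0$ such that $d'(f(y), f(z)) \leq c \, d(y, z)$ for all $y$, $z \in Y$. We prove continuity at an arbitrary point $y \in Y$ using the sequential characterization recalled just before the statement.

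Fix a sequence $(y_n)_n$ in $Y$ converging to $y$, so $\lim_n d(y_n, y) = 0$ with respect to $d$. By the Lipschitz inequality, for every $n$,
\[
0 \leq d'(f(y_n), f(y)) \leq c \, d(y_n, y).
\]
The right-hand side tends to $0$ because $c$ is a fixed constant. By the squeeze principle for real sequences, $d'(f(y_n), f(y)) \to 0$. This says exactly that $\lim_n f(y_n) = f(y)$ in $(Z, d')$.

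So $f$ is continuous at $y$, and since $y$ was arbitrary, $f$ is continuous. The only point needing care is that convergence in the domain and in the codomain is measured with different metrics, $d$ and $d'$; the Lipschitz inequality is precisely what links them. No step poses a real obstacle. In particular, nonexpansive functions, being $1$-Lipschitz, are continuous.
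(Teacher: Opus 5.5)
Your proof is correct, but it takes a slightly different route from the paper's. You work directly from the sequential definition of continuity that the paper itself states in this appendix. The inequality $d'(f(y_n), f(y)) \leq c \, d(y_n, y)$ plus squeezing then finishes the argument, with no outside result needed.

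The paper instead appeals to a metric-space characterization from the literature and verifies
\[
\forall \alpha > 0 : \exists \beta > 0 : \forall y, z \in Y : d(y, z) < \beta \Rightarrow d'(f(y), f(z)) < \alpha
\]
by choosing $\beta = \frac{\alpha}{c}$.

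The trade-off is this. Because that $\beta$ does not depend on the point, the paper's argument actually establishes uniform continuity, which is stronger. It pays for this by relying on the cited equivalence between this condition and the sequential notion of continuity it adopted. Your argument is self-contained relative to the paper's own definition and yields exactly pointwise continuity. That is all the statement claims, and all that is used later, for instance in deducing that $\Gamma_\cdot(\cdot)$ is continuous.
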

\begin{proof}
Let $c > 0$ and $f : Y \to Z$ be $c$-Lipschitz.  According to, for example, \cite[Proposition~4.1.8]{E89}, to conclude that $f$ is continuous it suffices to show that 
\[
\forall \alpha > 0 : \exists \beta > 0 : \forall y, z \in Y : d(y, z) < \beta \Rightarrow d'(f(y), f(z)) < \alpha.
\]
Take $\beta = \frac{\alpha}{c}$.
\end{proof}

\begin{definition}
The function $d_{TV} : \Dreal(X) \times \Dreal(X) \to [0, 1]$ is defined by
\[
d_{TV}(\mu, \nu) = \max_{x \in X} | \mu(x) - \nu(x) |.
\]
\end{definition}

\begin{proposition}
The function $f : Y \to [0, 1]$ is continuous at $y \in Y$ if and only if $f$ is lower semi-continuous at $y$ and upper semi-continuous at $y$.
\end{proposition}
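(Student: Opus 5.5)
This is the standard equivalence of continuity with the pair of semi-continuities for real-valued maps into $[0,1]$, with continuity taken in the sequential sense fixed above. Throughout, let $y = \lim_n y_n$ in $Y$. I would unfold all three notions along an arbitrary sequence: continuity at $y$ means $\lim_n f(y_n) = f(y)$; lower semi-continuity means $\liminf_n f(y_n) \geq f(y)$; upper semi-continuity means $\limsup_n f(y_n) \leq f(y)$. Each quantifies over every sequence $(y_n)_n$ converging to $y$, so it suffices to prove the equivalence for a single fixed sequence.

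\emph{Forward direction.} Assume $f$ is continuous at $y$ and fix $(y_n)_n$ converging to $y$. Then $(f(y_n))_n$ converges in $([0,1], d_E)$ to $f(y)$. For a convergent real sequence, $\liminf_n f(y_n) = \limsup_n f(y_n) = \lim_n f(y_n) = f(y)$. Both semi-continuity inequalities therefore hold, with equality.

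\emph{Reverse direction.} Assume $f$ is both lower and upper semi-continuous at $y$, and fix $(y_n)_n$ converging to $y$. The values lie in $[0,1]$, so $\liminf$ and $\limsup$ exist as real numbers, and in general $\liminf_n f(y_n) \leq \limsup_n f(y_n)$. Combining this with the two hypotheses gives
\[
f(y) \leq \liminf_n f(y_n) \leq \limsup_n f(y_n) \leq f(y).
\]
All four quantities are therefore equal. A bounded real sequence whose $\liminf$ and $\limsup$ coincide converges to that common value, so $\lim_n f(y_n) = f(y)$. This is continuity at $y$.

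The step needing the most care is the final one: the convergence of a bounded sequence whose $\liminf$ equals its $\limsup$. I would justify it directly with $\varepsilon$'s. For $\varepsilon > 0$, by definition of $\limsup$ we have $f(y_n) < f(y) + \varepsilon$ for all sufficiently large $n$. By definition of $\liminf$ we have $f(y_n) > f(y) - \varepsilon$ for all sufficiently large $n$. Hence $d_E(f(y_n), f(y)) < \varepsilon$ eventually. Nothing beyond the definitions and the compactness of $[0,1]$ from Proposition~\ref{proposition:unit-interval-compact} is required, the latter only to ensure the $\liminf$ and $\limsup$ are finite.
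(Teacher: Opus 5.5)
Your proof is correct and follows the paper's argument essentially verbatim: the chain $f(y) \leq \liminf_n f(y_n) \leq \limsup_n f(y_n) \leq f(y)$ gives the reverse direction, and $\liminf_n f(y_n) = \limsup_n f(y_n) = \lim_n f(y_n) = f(y)$ gives the forward direction. Your explicit $\varepsilon$-justification of the final limit step is a detail the paper leaves implicit; note that boundedness of the values, rather than compactness of $[0,1]$, is all that is needed for the $\liminf$ and $\limsup$ to be finite.
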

\begin{proof}
Let $f : Y \to [0, 1]$ and $y \in Y$.  We prove two implications.  Assume that $f$ is lower and upper semi-continuous at $y$.  Let $(y_n)_n$ be a sequence in $Y$ converging to $y$.  Since
\begin{align*}
f(y) 
& \leq \lim \inf_n f(y_n)
\proofcomment{$f$ is lower semi-continuous at $y$}\\
& \leq \lim \sup_n f(y_n)\\
& \leq f(y)
\proofcomment{$f$ is upper semi-continuous at $y$}
\end{align*}
we can conclude that $\lim \inf_n f(y_n) = \lim \sup_n f(y_n) = f(y)$.  Hence, $\lim_n f(y_n) = f(y)$.

Assume that $f$ is continuous at $y$.  Let $(y_n)_n$ be a sequence in $Y$ converging to $y$.  Since the sequence $(f(y_n))_n$ converges, we have that $\lim \inf_n f(y_n) = \lim \sup_n f(y_n) = \lim_n f(y_n) = f(y)$.  Hence, $f$ is lower and upper semi-continuous at $y$.
\end{proof}

\section{Value Function}

\begin{definition}
The function $\Gamma : (S \times S \to \Dreal(S \times S)) \to (S \times S \to [0, 1]) \to (S \times S \to [0, 1])$ is defined by
\[
\Gamma_P(d)(s, t) = \left \{
\begin{array}{ll}
1
& \hspace{0.5cm} \mbox{if $(s, t) \in S^2_1$}\\
P(s, t) \cdot d
& \hspace{0.5cm} \mbox{otherwise,}	
\end{array}
\right .
\]
where $P(s, t) \cdot d = \sum_{u, v \in S} P(s, t)(u, v) \; d(u, v)$.
\end{definition}

\begin{proposition}
\label{proposition:gamma-Lipschitz}
For all $P$, $Q : S \times S \to \Dreal(S \times S)$, $d_F(\Gamma_P, \Gamma_Q) \leq |S|^2\, d_F(P, Q)$.
\end{proposition}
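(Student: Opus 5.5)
The plan is to unfold both metrics and bound the difference pointwise. Here $d_F(\Gamma_P,\Gamma_Q)$ is the supremum, over all $d : S \times S \to [0,1]$ and all $(s,t) \in S \times S$, of $|\Gamma_P(d)(s,t) - \Gamma_Q(d)(s,t)|$; the maximum in the definition of $d_F$ is read as a supremum, since the domain $S \times S \to [0,1]$ is infinite. On the other side, $d_F(P,Q) = \max_{(s,t)} d_{TV}(P(s,t),Q(s,t))$, which is the maximum of $|P(s,t)(u,v) - Q(s,t)(u,v)|$ over all $(s,t)$ and $(u,v)$ in $S \times S$. It therefore suffices to fix an arbitrary $d$ and an arbitrary pair $(s,t)$ and show
\[
|\Gamma_P(d)(s,t) - \Gamma_Q(d)(s,t)| \leq |S|^2 \, d_F(P,Q),
\]
and then take the supremum over $d$ and $(s,t)$.

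I would split into two cases according to the definition of $\Gamma$.

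\emph{Case $(s,t) \in S^2_1$.} Both values equal $1$, so the difference is $0$ and the bound holds trivially.

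\emph{Case $(s,t) \notin S^2_1$.} By linearity and the triangle inequality, together with $0 \leq d(u,v) \leq 1$,
\[
|P(s,t)\cdot d - Q(s,t)\cdot d| \leq \sum_{u,v \in S} |P(s,t)(u,v) - Q(s,t)(u,v)|\, d(u,v) \leq \sum_{u,v\in S} |P(s,t)(u,v) - Q(s,t)(u,v)|.
\]
Each of the $|S|^2$ summands is at most $d_{TV}(P(s,t),Q(s,t))$, which in turn is at most $d_F(P,Q)$. Hence the difference is bounded by $|S|^2\, d_F(P,Q)$, as required.

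There is no real obstacle in this argument. The only point needing care is to fix the meaning of $d_F$ on the function space $(S\times S\to[0,1]) \to (S\times S\to[0,1])$, namely as a supremum over $d$ of the sup-distance. The bound does not depend on $d$, so passing to the supremum is immediate. Note also that the factor $|S|^2$ comes from the fact that $d_{TV}$ is defined here as a maximum over points rather than a sum.
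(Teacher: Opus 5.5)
Your proposal is correct and follows essentially the same argument as the paper. You reduce to a pointwise bound for fixed $d$ and $(s,t)$, dispose of the case $(s,t) \in S^2_1$ trivially, and otherwise bound $|(P(s,t)-Q(s,t))\cdot d|$ by the sum of the $|S|^2$ coordinate differences, each at most $d_F(P,Q)$. Your remark about reading $d_F$ as a supremum over the infinite domain is a harmless clarification; the maximum is in fact attained, by continuity in $d$ and compactness of $S \times S \to [0,1]$.
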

\begin{proof}
Let $P$, $Q : S \times S \to \Dreal(S \times S)$.  It suffices to show that for all $d : S \times S \to [0, 1]$ and $s$, $t \in S$, we have that $|\Gamma_P(d)(s, t) - \Gamma_Q(d)(s, t)| \leq |S|^2\, d_F(P, Q)$.  Let $d : S \times S \to [0, 1]$ and $s$, $t \in S$.  We distinguish the following cases.
\begin{itemize}
\item
If $(s, t) \in S^2_1$ then
\[
|\Gamma_P(d)(s, t) - \Gamma_Q(d)(s, t)| 
= |1 - 1|
= 0
\leq |S|^2\, d_F(P, Q).
\]
\item 
Otherwise,
\begin{align*}
|\Gamma_P(d)(s, t) - \Gamma_Q(d)(s, t)|
& = | P(s, t) \cdot d - Q(s, t) \cdot d|\\
& = | (P(s, t) - Q(s, t)) \cdot d|\\
& \leq \sum_{u, v \in S} |P(s, t)(u, v) - Q(s, t)(u, v)| \, d(u, v)\\
& \leq \sum_{u, v \in S} |P(s, t)(u, v) - Q(s, t)(u, v)| \\
& \leq |S|^2 \, |P(s, t) - Q(s, t)|\\
& \leq  |S|^2 \, d_F(P, Q). \qedhere
\end{align*}
\end{itemize}
\end{proof}

\begin{proposition}
\label{proposition:gamma-nonexpansive}
For all $P : S \times S \to \Dreal(S \times S)$ and $d$, $e : S \times S \to [0, 1]$, $d_F(\Gamma_P(d), \Gamma_P(e)) \leq d_F(d, e)$.
\end{proposition}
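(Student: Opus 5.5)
The statement to prove is Proposition~\ref{proposition:gamma-nonexpansive}: for fixed $P$, the map $\Gamma_P$ is nonexpansive with respect to $d_F$. The argument is a direct, pointwise comparison. By the definition of $d_F$ as a maximum over $S \times S$, it suffices to fix arbitrary $s$, $t \in S$ and show
\[
|\Gamma_P(d)(s, t) - \Gamma_P(e)(s, t)| \leq d_F(d, e),
\]
and then take the maximum over all pairs. I would split into the same two cases as in the proof of Proposition~\ref{proposition:gamma-Lipschitz}.

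\textbf{Case $(s, t) \in S^2_1$.} Both values equal $1$, so the difference is $0 \leq d_F(d, e)$.

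\textbf{Otherwise.} By linearity of $\cdot$, the difference equals $|P(s, t) \cdot (d - e)|$. Applying the triangle inequality, this is at most
\[
\sum_{u, v \in S} P(s, t)(u, v)\, |d(u, v) - e(u, v)|.
\]
Each term $|d(u, v) - e(u, v)|$ is bounded by $d_F(d, e)$. Since $P(s, t) \in \Dreal(S \times S)$ is a probability distribution, its weights are nonnegative and sum to $1$, so the whole expression is at most $d_F(d, e)$.

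There is no real obstacle. The only point to note is that the bound relies on $P(s, t)$ being a probability distribution: the weights sum to $1$, giving the constant $1$ rather than $|S|^2$ as in the Lipschitz bound of Proposition~\ref{proposition:gamma-Lipschitz}.
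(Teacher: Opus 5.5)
Your proposal is correct and follows the paper's proof essentially step for step: reduce to a pointwise bound at each $(s,t)$, treat $S^2_1$ separately, and in the other case use linearity, the triangle inequality, and the fact that $P(s,t)$ is a probability distribution. Your remark on why the constant is $1$ rather than $|S|^2$ is accurate.
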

\begin{proof}
Let $P : S \times S \to \Dreal(S \times S)$ and $d$, $e : S \times S \to [0, 1]$.  It suffices to show that for all $s$, $t \in S$, we have that $|\Gamma_P(d)(s, t) - \Gamma_P(e)(s, t)| \leq d_F(d, e)$.  Let $s$, $t \in S$.  We distinguish the following cases.
\begin{itemize}
\item
If $(s, t) \in S^2_1$ then
\[
|\Gamma_P(d)(s, t) - \Gamma_P(e)(s, t)| 
= |1 - 1|
= 0
\leq d_F(d, e).
\]
\item 
Otherwise,
\begin{align*}
|\Gamma_P(d)(s, t) - \Gamma_P(e)(s, t)|
& = |P(s, t) \cdot d - P(s, t) \cdot e|\\
& = |P(s, t) \cdot (d - e)|\\
& \leq \sum_{u, v \in S} P(s, t)(u, v) \, |d(u, v) - e(u, v)|\\
& \leq \sum_{u, v \in S} P(s, t)(u, v) \, d_F(d, e)\\
& = d_F(d, e). \qedhere
\end{align*}
\end{itemize}  
\end{proof}

\begin{corollary}
\label{corollary:Gamma-continuous}
The function $\Gamma_\cdot(\cdot) : ((S \times S \to \Dreal(S \times S)) \times (S \times S \to [0, 1])) \to (S \times S \to [0, 1])$ is continuous.
\end{corollary}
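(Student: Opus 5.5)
We will show that $\Gamma_\cdot(\cdot)$ is jointly continuous by bounding how far its value moves when both the policy and the distance argument are perturbed. The spaces carry the sup-metrics $d_F$, and on the product we use the maximum of the two component distances. In fact we will show the map is Lipschitz, so \cref{proposition:Lipschitz-continuous} applies directly.

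Fix $P, Q : S \times S \to \Dreal(S \times S)$ and $d, e : S \times S \to [0,1]$. We insert the intermediate term $\Gamma_Q(d)$ and apply the triangle inequality for $d_F$:
\[
d_F(\Gamma_P(d), \Gamma_Q(e)) \leq d_F(\Gamma_P(d), \Gamma_Q(d)) + d_F(\Gamma_Q(d), \Gamma_Q(e)).
\]
The two terms are bounded as follows.

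\begin{itemize}
\item \textbf{First term.} By \cref{proposition:gamma-Lipschitz}, read pointwise in $d$ (its proof bounds $|\Gamma_P(d)(s,t) - \Gamma_Q(d)(s,t)|$ uniformly over all $d$), it is at most $|S|^2\, d_F(P,Q)$.
\item \textbf{Second term.} By \cref{proposition:gamma-nonexpansive}, it is at most $d_F(d,e)$.
\end{itemize}

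Adding these gives
\[
d_F(\Gamma_P(d), \Gamma_Q(e)) \leq (|S|^2 + 1)\max\bigl(d_F(P,Q), d_F(d,e)\bigr).
\]
Hence $\Gamma$ is $(|S|^2+1)$-Lipschitz with respect to the max-metric on the product, and \cref{proposition:Lipschitz-continuous} yields continuity.

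Alternatively, one can argue with sequences. Take $(P_n, d_n) \to (P, d)$. The bound above gives $d_F(\Gamma_{P_n}(d_n), \Gamma_P(d)) \leq |S|^2 d_F(P_n,P) + d_F(d_n,d) \to 0$.

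The only point that needs care is that \cref{proposition:gamma-Lipschitz} is stated in terms of $d_F(\Gamma_P, \Gamma_Q)$, with $\Gamma_P$ regarded as a function of $d$. We must therefore check that this metric dominates the pointwise difference at a fixed $d$. This holds because the sup over all $d$ and all pairs $(s,t)$ bounds each individual value. Everything else is routine.
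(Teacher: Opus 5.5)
Your proof is correct and follows the paper's argument. Both split the difference through the intermediate term $\Gamma_Q(d)$, bound the two pieces with \cref{proposition:gamma-Lipschitz} and \cref{proposition:gamma-nonexpansive}, and conclude continuity from the resulting Lipschitz bound via \cref{proposition:Lipschitz-continuous}. Your constant $|S|^2+1$ is slightly sharper than the paper's $2|S|^2$, but this makes no difference to the result.
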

\begin{proof}
Let $P$, $Q : S \times S \to \Dreal(S \times S)$ and $d$, $e : S \times S \to [0, 1]$.  Then
\begin{align*}
d_F(\Gamma_P(d), \Gamma_Q(e))
& \leq d_F(\Gamma_P(d), \Gamma_Q(d)) + d_F(\Gamma_Q(d), \Gamma_Q(e)) \proofcomment{triangle inequality}\\
& \leq |S|^2\, d_F(P, Q) + d_F(d, e)
\proofcomment{\cref{proposition:gamma-Lipschitz,proposition:gamma-nonexpansive}}\\
& \leq 2 |S|^2\, \max \{ d_F(P, Q), d_F(d, e) \}.
\end{align*}
Hence, the function $\Gamma_\cdot(\cdot)$ is $2 |S|^2$-Lipschitz and, by Proposition~\ref{proposition:Lipschitz-continuous}, continuous.
\end{proof}

For each $P : S \times S \to \Dreal(S \times S)$, $\Gamma_P$ is a monotone function from the complete lattice $S \times S \to [0, 1]$ to itself (see, for example, \cite[Proposition~6.1.3]{T18}).  According to the Knaster-Tarski fixed point theorem, $\Gamma_P$ has a least fixed point, which we denote by $\reachone{P}$.  Note that $\chain{P}$ is a Markov chain.

Recall that $\mathcal{P}_\tau$ is the set of policies for $\tau$ and that the subscript $\tau$ is omitted when clear from the context.

\begin{theorem}[{\cite[Theorem~10.15]{BK08}}]
\label{theorem:reach-s1}
For all $P \in \mathcal{P}$ and $s$, $t \in S$, $\reachone{P}(s, t)$ is the probability of reaching $S^2_1$ from $(s, t)$ in $\chain{P}$.
\end{theorem}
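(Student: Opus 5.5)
This is Theorem~\ref{theorem:reach-s1}, which the paper cites from \cite[Theorem~10.15]{BK08}. The plan is the standard argument that reachability probabilities are the least fixed point of the one-step equations. Write $x(s,t)$ for the probability of reaching $S^2_1$ from $(s,t)$ in $\chain{P}$. I will show two things: that $x$ is a fixed point of $\Gamma_P$, and that $x \leq d$ for every fixed point $d$ of $\Gamma_P$. Together these give $x = \reachone{P}$.

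\textbf{Step 1: $x$ is a fixed point.} If $(s,t) \in S^2_1$, the path starts in the target, so $x(s,t) = 1 = \Gamma_P(x)(s,t)$. Otherwise, condition on the first transition. The Markov property gives $x(s,t) = \sum_{u,v} P(s,t)(u,v)\, x(u,v) = \Gamma_P(x)(s,t)$.

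\textbf{Step 2: $x$ is below every fixed point.} Let $x_n(s,t)$ be the probability of reaching $S^2_1$ within $n$ steps. Then $x_0$ is the indicator of $S^2_1$, and $x_{n+1} = \Gamma_P(x_n)$, again by conditioning on the first step. The sequence $x_n$ increases to $x$ by continuity of probability measures from below, since the events ``reach within $n$ steps'' increase to ``reach eventually''.

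Now let $d$ be any fixed point. On $S^2_1$ we have $d = 1$, so $x_0 \leq d$. Monotonicity of $\Gamma_P$ gives $x_n = \Gamma_P^n(x_0) \leq \Gamma_P^n(d) = d$ for all $n$. Passing to the limit yields $x \leq d$.

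The only delicate point is the use of monotone convergence of the events in Step 2. In particular, one needs $x$ to be the supremum of the $x_n$, not merely the limit along some Kleene iteration from $0$. Starting the iteration from the indicator of $S^2_1$ rather than from $0$ avoids any issue, since a fixed point is automatically $1$ on $S^2_1$.

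As an alternative I could also note that the Kleene iteration from the bottom element $0$ satisfies $\Gamma_P(0) = x_0$ and reaches $\reachone{P}$. This works because $\Gamma_P$ is continuous (Corollary~\ref{corollary:Gamma-continuous}) and hence $\omega$-continuous on this finite-dimensional lattice. The two characterisations then coincide.
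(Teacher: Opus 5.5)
Your proof is correct. It is essentially the standard argument behind \cite[Theorem~10.15]{BK08}, which the paper only cites and does not reprove: the iterates $\Gamma_P^{n+1}(0)$ are the bounded reachability probabilities, they increase to the reachability probability, that limit is a fixed point, and every fixed point dominates all the iterates. Your observation that every fixed point of $\Gamma_P$ equals $1$ on $S^2_1$ is what links the paper's operator, defined on all of $S \times S$, to Baier and Katoen's operator, which is restricted to the non-target states.
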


\begin{theorem}[{\cite[Theorem~8]{CBW12}}]
\label{proposition:minimal-coupling}
$\displaystyle
\delta_\tau = \min_{P \in \mathcal{P}} \reachone{P}.
$
\end{theorem}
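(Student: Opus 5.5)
The statement to prove is Theorem~\ref{proposition:minimal-coupling}: $\delta_\tau = \min_{P \in \mathcal{P}} \reachone{P}$. The plan has three parts: show that every $\reachone{P}$ is an upper bound for $\delta_\tau$, build one policy $P^*$ that attains $\delta_\tau$, and conclude that the minimum exists and equals $\delta_\tau$.

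\textbf{Upper bound $\delta_\tau \leq \reachone{P}$ for every $P \in \mathcal{P}$.} Since $\delta_\tau$ is the least fixed point of $\Delta_\tau$, it suffices to show that $\reachone{P}$ is a prefixed point, i.e.\ $\Delta_\tau(\reachone{P}) \leq \reachone{P}$, and then apply Knaster--Tarski.
\begin{itemize}
\item On $S^2_1$, both sides equal $1$.
\item Otherwise $\ell(s)=\ell(t)$ and $P(s,t) \in \Creal(\tau(s),\tau(t))$. Hence
\[
\Delta_\tau(\reachone{P})(s,t) \leq P(s,t)\cdot \reachone{P} = \Gamma_P(\reachone{P})(s,t) = \reachone{P}(s,t).
\]
\end{itemize}
This already gives $\delta_\tau \leq \inf_P \reachone{P}$.

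\textbf{Construction of $P^*$.} Choose $P^*(s,t) \in \Coptimal(\tau(s),\tau(t))$ for every $(s,t) \in S^2_\Delta \cup S^2_{0?}$, which is nonempty by the remark on optimal couplings (compactness of the coupling polytope and linearity of the objective). On $S^2_1$, set $P^*(s,t)$ to be the Dirac distribution on $(s,t)$. Then $\Gamma_{P^*}(\delta_\tau) = \delta_\tau$, so $\delta_\tau$ is a fixed point of $\Gamma_{P^*}$. Since $\reachone{P^*}$ is the least fixed point, $\reachone{P^*} \leq \delta_\tau$ only in the sense that it lies below this fixed point; this is the wrong direction for attaining the bound.

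\textbf{Main obstacle: $\Gamma_{P^*}$ may have several fixed points.} Bottom SCCs of $\chain{P^*}$ avoiding $S^2_1$ could carry positive $\delta_\tau$ while $\reachone{P^*}$ is $0$ there. The plan is to first show these SCCs have $\delta_\tau = 0$, and then derive uniqueness of fixed points.

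First, pick the optimal couplings carefully. Let $m$ be the maximum of $\delta_\tau$ on a bottom SCC $C$ of $\chain{P^*}$ not meeting $S^2_1$. Within $C$, the value $\delta_\tau$ is harmonic for $P^*$, so it is constant, equal to $m$, on $C$. Define $d' = \delta_\tau - m\cdot\mathbf{1}_C$. Then $d'$ is a prefixed point of $\Delta_\tau$: on $C$, using the coupling $P^*(s,t)$, which stays inside $C$, gives value $m - m = 0$; off $C$, decreasing $d$ only decreases the infimum. By leastness of $\delta_\tau$, $\delta_\tau \leq d'$, which forces $m = 0$. So every bottom SCC avoiding $S^2_1$ has $\delta_\tau = 0$.

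Second, the difference $\delta_\tau - \reachone{P^*}$ is nonnegative and harmonic for $P^*$. It vanishes on $S^2_1$ and, by the previous step, on the other bottom SCCs. In a finite Markov chain, such a harmonic function is determined by its values on the bottom SCCs (the maximum principle, or equivalently absorption with probability one). Hence it is identically zero, and $\reachone{P^*} = \delta_\tau$.

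\textbf{Conclusion.} By the first part, $\delta_\tau \leq \reachone{P}$ for every $P$, and $P^*$ attains equality. So the minimum exists and equals $\delta_\tau$. The link to reachability probabilities is Theorem~\ref{theorem:reach-s1}.
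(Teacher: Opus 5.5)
Your argument is correct, and its two-part structure (every $\reachone{P}$ is a prefixed point of $\Delta_\tau$, and a policy built from optimal couplings attains $\delta_\tau$) is exactly how the paper summarizes the cited proof. However, the ``main obstacle'' is not real.

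The inequality $\reachone{P^*} \sqsubseteq \delta_\tau$ that you dismiss as ``the wrong direction'' is precisely the half you still needed. Your first part, applied to $P = P^*$, gives $\delta_\tau \sqsubseteq \reachone{P^*}$. Hence $\delta_\tau \sqsubseteq \reachone{P^*} \sqsubseteq \delta_\tau$, and the proof ends there.

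The scenario you worry about is a bottom SCC avoiding $S^2_1$ on which $\delta_\tau > 0$ but $\reachone{P^*} = 0$. That would mean $\reachone{P^*}(s,t) < \delta_\tau(s,t)$, which your first part already rules out. Your SCC argument is valid: constancy of $\delta_\tau$ on a closed class, the prefixed point $\delta_\tau - m\cdot\mathbf{1}_C$, then the maximum principle for $\delta_\tau - \reachone{P^*}$. But it re-derives, in a special case, the same leastness-of-$\delta_\tau$ argument you already made. Your second step even uses $\reachone{P^*} \sqsubseteq \delta_\tau$ to get nonnegativity of the difference, and at that point the first part makes the difference zero with no harmonic-function reasoning. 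The sandwich argument is strictly shorter. The detour buys only the fact that bottom SCCs of $\chain{P^*}$ outside $S^2_1$ have distance $0$, which is obvious a posteriori.

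One small point: justify that $\Coptimal(\tau(s),\tau(t))$ is nonempty by your parenthetical compactness argument. The infimum in $\Delta_\tau(\delta_\tau)(s,t)$ is of a continuous function over the compact polytope $\Creal(\tau(s),\tau(t))$, so it is attained. Do not cite the paper's remark on optimal couplings for this, since that remark obtains optimal couplings from an optimal policy, i.e.\ from this very theorem.
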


The above theorem is proved by showing that $\delta_\tau \sqsubseteq \reachone{P}$ for all $P \in \mathcal{P}$ and that there exists $P \in \mathcal{P}$ such that $\delta_\tau = \reachone{P}$.

\begin{definition}
\label{definition:optimal-policy}
Let $s$, $t \in S$.  
\begin{itemize}
\item 
A policy $P \in \mathcal{P}$ is \emph{optimal for} $(s, t)$ if $\reachone{P}(s, t) = \delta_\tau(s, t)$.
\item
A policy $P \in \mathcal{P}$ is \emph{optimal} if for all $s$, $t \in S$, $P$ is optimal for $(s, t)$.
\end{itemize}
\end{definition}

Note that from Theorem~\ref{proposition:minimal-coupling} we can conclude that optimal policies exist.

\begin{proposition}[{\cite[Proposition~13]{FKPB25}}]
\label{proposition:optimal-characterization}
For all $P \in \mathcal{P}$, the following are equivalent.
\begin{enumerate}
\item 
$P$ is optimal
\item 
$\Gamma_P(\delta_\tau) = \delta_\tau$
\item 
$\Gamma_P(\delta_\tau) \sqsubseteq \delta_\tau$
\end{enumerate}
\end{proposition}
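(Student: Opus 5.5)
The plan is to prove the cycle of implications $1 \Rightarrow 2 \Rightarrow 3 \Rightarrow 1$. Everything rests on two facts already available. First, $\reachone{P}$ is by definition the least fixed point of the monotone map $\Gamma_P$ on the complete lattice $S \times S \to [0,1]$. Second, by Theorem~\ref{proposition:minimal-coupling}, $\delta_\tau \sqsubseteq \reachone{P}$ for every $P \in \mathcal{P}$.

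For $1 \Rightarrow 2$, assume $P$ is optimal, so $\reachone{P} = \delta_\tau$ pointwise by Definition~\ref{definition:optimal-policy}. Since $\reachone{P}$ is a fixed point of $\Gamma_P$, we get $\Gamma_P(\delta_\tau) = \Gamma_P(\reachone{P}) = \reachone{P} = \delta_\tau$. The implication $2 \Rightarrow 3$ is immediate, because equality implies $\sqsubseteq$.

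For $3 \Rightarrow 1$, assume $\Gamma_P(\delta_\tau) \sqsubseteq \delta_\tau$, i.e.\ $\delta_\tau$ is a pre-fixed point of $\Gamma_P$. I will use the Knaster--Tarski characterization of the least fixed point as the meet of all pre-fixed points:
\[
\reachone{P} = \bigsqcap \{\, d \mid \Gamma_P(d) \sqsubseteq d \,\}.
\]
This gives $\reachone{P} \sqsubseteq \delta_\tau$. Combined with $\delta_\tau \sqsubseteq \reachone{P}$ from Theorem~\ref{proposition:minimal-coupling}, we obtain $\reachone{P} = \delta_\tau$, so $P$ is optimal for every pair and hence optimal.

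I expect no real obstacle. The only point needing care is that the fixed-point theory is applied in the right lattice: $\Gamma_P$ must be monotone on $S \times S \to [0,1]$ (the cited \cite[Proposition~6.1.3]{T18}), and $\delta_\tau$ must be a genuine element of that lattice. The latter holds since $\delta_\tau$ takes values in $[0,1]$ and equals $1$ on $S^2_1$, consistent with $\Gamma_P$ there. If one wanted to avoid citing the meet characterization, an alternative for $3 \Rightarrow 1$ is Kleene iteration from the bottom element $\mathbf{0}$. By monotonicity and induction, $\Gamma_P^n(\mathbf{0}) \sqsubseteq \delta_\tau$ for all $n$. These iterates converge to $\reachone{P}$ by the continuity of $\Gamma_P$ (Corollary~\ref{corollary:Gamma-continuous}) together with the reachability interpretation of Theorem~\ref{theorem:reach-s1}, which again yields $\reachone{P} \sqsubseteq \delta_\tau$.
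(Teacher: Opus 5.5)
Your proof is correct: $1\Rightarrow 2$ holds because $\reachone{P}$ is a fixed point of $\Gamma_P$. For $3\Rightarrow 1$, Knaster--Tarski places the least fixed point below every $d$ with $\Gamma_P(d)\sqsubseteq d$, and combined with $\delta_\tau\sqsubseteq\reachone{P}$ from Theorem~\ref{proposition:minimal-coupling} this forces equality.

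The paper gives no proof of its own and imports the result from \cite{FKPB25}; your argument is the standard fixed-point reasoning one would expect behind such a citation. Your Kleene-iteration alternative also works, and it needs only the continuity of $\Gamma_P$, not the reachability interpretation of Theorem~\ref{theorem:reach-s1}.
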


\begin{remark}
\label{remark:optimal-couplings}
Let $P \in \mathcal{P}$ be an optimal policy.  By Proposition~\ref{proposition:optimal-characterization}, $\Gamma_P(\delta_\tau) = \delta_\tau$.  Hence $\delta_\tau(s, t) = P(s, t) \cdot \delta_\tau$.  Let $\omega = P(s, t) \in \Creal(\tau(s), \tau(t))$.  Then, $\delta_\tau(s, t) = \omega \cdot \delta_\tau$.  Thus, we can conclude that optimal couplings exist.
\end{remark}

\begin{proposition}
\label{proposition:reachable-optimal}
For all $s$, $t$, $u$, $v \in S$, if $P \in \mathcal{P}$ is optimal for $(s, t)$ and $(s, t)$ can reach $(u, v)$ in $\chain{P}$ then $P$ is optimal for $(u, v)$.
\end{proposition}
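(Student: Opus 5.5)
We need to prove: if $P$ is optimal for $(s,t)$ and $(u,v)$ is reachable from $(s,t)$ in $\chain{P}$, then $\reachone{P}(u,v)=\delta_\tau(u,v)$.

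The plan is to argue by contradiction along a path, propagating a strict inequality backwards. By \cref{proposition:minimal-coupling} we always have $\delta_\tau \sqsubseteq \reachone{P}$. Since reachability is the reflexive-transitive closure of the one-step relation, it suffices by induction on path length to prove the one-step case. So assume $P(s,t)(u,v) > 0$, $P$ is optimal for $(s,t)$, and, towards a contradiction, $\reachone{P}(u,v) > \delta_\tau(u,v)$.

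First dispose of $(s,t) \in S^2_1$. There $\support(P(s,t)) = \{(s,t)\}$, so $(u,v) = (s,t)$ and nothing needs proving. Otherwise, $\reachone{P}$ is a fixed point of $\Gamma_P$, so
\[
\reachone{P}(s,t) = P(s,t)\cdot \reachone{P} = \sum_{x,y} P(s,t)(x,y)\,\reachone{P}(x,y).
\]
Since $\reachone{P} \sqsupseteq \delta_\tau$ pointwise and the $(u,v)$ term is strictly larger with positive weight, we get
\[
\reachone{P}(s,t) > P(s,t)\cdot\delta_\tau .
\]
Because $P(s,t) \in \Creal(\tau(s),\tau(t))$ (as $(s,t)\in S^2_\Delta\cup S^2_{0?}$), the second clause of Definition~\ref{definition:delta} together with $\delta_\tau$ being a fixed point of $\Delta_\tau$ gives $\delta_\tau(s,t) \le P(s,t)\cdot\delta_\tau$. (The labels of $s$ and $t$ agree in this case.) Chaining the two inequalities yields $\reachone{P}(s,t) > \delta_\tau(s,t)$, which contradicts optimality of $P$ for $(s,t)$.

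The induction step then simply applies this one-step result to each successive edge of the path. Each intermediate pair inherits optimality, so the conclusion reaches $(u,v)$.

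The only mildly delicate point is the justification $\delta_\tau(s,t) \le \omega\cdot\delta_\tau$ for an arbitrary coupling $\omega$. It follows directly from the infimum in $\Delta_\tau$ and $\Delta_\tau(\delta_\tau) = \delta_\tau$, so I expect no real obstacle.
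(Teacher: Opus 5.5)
Your proof is correct and follows essentially the same route as the paper. Both prove a one-step contradiction, using $\reachone{P}(s,t) = P(s,t)\cdot\reachone{P}$, $\delta_\tau \sqsubseteq \reachone{P}$, and $\delta_\tau(s,t) = \Delta_\tau(\delta_\tau)(s,t) \le P(s,t)\cdot\delta_\tau$, and then propagate it along the path by induction. The only difference is that the paper inducts on the length of a shortest path, which implicitly excludes $(s,t)\in S^2_1$, whereas you treat that case explicitly.
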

\begin{proof}
Let $s$, $t$, $u$, $v \in S$.  Assume that $P \in \mathcal{P}$ is optimal for $(s, t)$ and $(s, t)$ can reach $(u, v)$ in $\chain{P}$.  We prove this proposition by induction on the length of a shortest path from $(s, t)$ to $(u, v)$.  In the base case, $(s, t) = (u, v)$ and the proposition is vacuously true.  In the inductive case, let $(s, t) (w, x) \ldots (u, v)$ be a shortest path from $(s, t)$ to $(u, v)$.  Then $(s, t) \in S^2_\Delta \cup S^2_{0?}$ and $P(s, t)(w, x) > 0$.  Towards a contradiction, assume that $P$ is not optimal for $(w, x)$.  Then $\delta_\tau(w, x) < \reachone{P}(w, x)$ by Theorem~\ref{proposition:minimal-coupling}.  We also have that $\delta_\tau \sqsubseteq \reachone{P}$, again by Theorem~\ref{proposition:minimal-coupling}.  Hence,
\begin{align*}
\delta_\tau(s, t)
& = \reachone{P}(s, t)
\proofcomment{$P$ is optimal for $(s,t)$}\\
& = \Gamma_P(\reachone{P})(s, t)\\
& = P(s, t) \cdot \reachone{P}
\proofcomment{$(s, t) \in S^2_\Delta \cup S^2_{0?}$}\\
& > P(s, t) \cdot \delta_\tau \proofcomment{$P(s, t)(w, x) > 0$, $\delta_\tau(w, x) < \reachone{P}(w, x)$ and $\delta_\tau \sqsubseteq \reachone{P}$}\\
& \geq \inf_{\omega \in \Creal(\tau(s), \tau(t))} \omega \cdot \delta_\tau
\proofcomment{$P(s, t) \in \Creal(\tau(s), \tau(t))$}\\
& = \Delta_\tau(\delta_\tau)(s, t)\\
& = \delta_\tau(s, t),
\end{align*}
a contradiction.  Therefore, $P$ is optimal for $(w, x)$.  By the induction hypothesis, $P$ is optimal for $(u, v)$.
\end{proof}

\section{Linear Algebra}

We denote the infinity norm by $\| \cdot \|$.  Recall that for an $n$-vector $x$, we have that $\| x \| = \max_{0 \leq i < n} |x_i|$ and for an $m \times n$-matrix $A$, we have that $\| A \| = \max_{0 \leq i < m} \sum_{0 \leq j < n} | A_{ij} |$.  Given $n$-vectors $x$ and $y$, we write $x \lneqq y$ if $x_i \leq y_i$ for all $0 \leq i < n$ and $x_j  < y_j$ for some $0 \leq j < n$.  We denote constant vectors and matrices simply by their value.  A matrix $A$ is \emph{strictly substochastic} if $A 1 \lneqq 1$.  

An $n \times n$ matrix $A$ is \emph{cogredient} to a matrix $E$ if for some permutation matrix $P$, $PAP^t = E$. $A$ is \emph{reducible} if it is cogredient to $E = \big[\begin{smallmatrix}
B & 0 \\
C & D
\end{smallmatrix}\big]$, where $B$ and $D$ are square matrices, or if $n = 1$ and $A = 0$. Otherwise, A is \emph{irreducible}.
In the following, we will only rely on the characterization of irreducibility given in the following theorem.

\begin{theorem}[{\cite[Theorem~2.2.1]{BP94}}]
\label{theorem:irreducible}
A nonnegative $n \times n$-matrix $A$ is irreducible if and only if for every  $0 \leq i, j < n$ there exists $m > 0$ such that $A_{ij}^m > 0$.
\end{theorem}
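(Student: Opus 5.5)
This is the classical characterization of irreducibility (Berman--Plemmons, Theorem~2.2.1), so I would prove it by passing to the directed graph $G(A)$ of $A$. Its vertices are $0,\dots,n-1$, and it has an edge $i \to j$ exactly when $A_{ij} > 0$. Since $A$ is nonnegative, the entry $A^m_{ij}$ is a sum of products $A_{i k_1} A_{k_1 k_2}\cdots A_{k_{m-1} j}$ with all terms nonnegative. Hence $A^m_{ij} > 0$ if and only if $G(A)$ contains a walk of length exactly $m$ from $i$ to $j$. This is a routine induction on $m$, and it turns the right-hand side of the statement into the claim that $G(A)$ is strongly connected, with every vertex lying on a cycle of positive length.

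For the direction ``reducible $\Rightarrow$ condition fails'' I would argue as follows. First take the case $n=1$ and $A=0$: then $A^m_{00}=0$ for all $m$, so the condition fails. Otherwise $PAP^t = \big[\begin{smallmatrix} B & 0 \\ C & D \end{smallmatrix}\big]$ with $B$ of size $k \times k$, $0<k<n$. Let $I$ be the set of indices sent to the first $k$ positions, and $J$ its complement. The zero block says there is no edge from $I$ to $J$. So no walk starts in $I$ and ends in $J$, and $A^m_{ij}=0$ for all $i\in I$, $j\in J$, $m>0$. Equivalently, $(PAP^t)^m = PA^mP^t$ keeps the zero upper-right block, because block lower-triangular matrices are closed under multiplication.

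For the converse, I would assume the condition fails for some pair $(i,j)$, meaning no walk of positive length goes from $i$ to $j$, and produce a reduction. If $n=1$, the failure means $A_{00}=0$, so $A=0$ and $A$ is reducible by definition. If $n\ge 2$, let $I$ be the set of vertices reachable from $i$ by walks of positive length, together with $i$ itself. The set $I$ is closed under successors. Two cases arise. If $j \ne i$, then $j \notin I$. If $j=i$, then $i$ has no positive-length walk back to itself; if moreover $I=\{0,\dots,n-1\}$, I would replace $i$ by the set $I'$ of vertices reachable from $i$ by walks of positive length. This set $I'$ is successor-closed and does not contain $i$. It is also nonempty unless row $i$ is zero; in that case I would instead take $I=$ the complement of $\{i\}$ with respect to successors, which means choosing the closed set $\{0,\dots,n-1\}\setminus\{i\}$. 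That set is closed because no edge enters $i$ at all.

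So in every case there is a nonempty proper successor-closed set $I$, and I would order its indices first via a permutation $P$. Closedness means every entry in rows $I$ and columns outside $I$ is zero. This gives exactly the block form $\big[\begin{smallmatrix} B & 0 \\ C & D\end{smallmatrix}\big]$ with $B$ and $D$ square and nonempty.

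The main obstacle is the bookkeeping in the $i=j$ case. The definition of reducibility treats $n=1$ separately, and producing a nonempty proper closed set needs the small case split above. I would double-check that split carefully. The remaining steps are standard.
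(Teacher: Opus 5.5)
The paper gives no proof of this statement. It is quoted from Berman--Plemmons, and only the ``if'' direction (the condition implies irreducibility) is ever used, in the proof of Lemma~\ref{lemma:continuous}, to conclude that $P^C$ is irreducible for a non-closed communication class $C$.

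Your digraph argument is the standard textbook proof and is correct. It rests on three observations:
\begin{itemize}
\item $A^m_{ij}>0$ exactly when $G(A)$ has a walk of length $m$ from $i$ to $j$.
\item A zero upper-right block after a permutation is the same thing as a nonempty proper successor-closed index set.
\item The set consisting of $i$ together with everything reachable from $i$ by a positive-length walk is successor-closed.
\end{itemize}
Your first direction (reducible implies the condition fails) is the contrapositive of the direction the paper needs, and that part is clean.

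The only thing to tidy is the $i=j$ branch of the other direction. If $I=\{0,\dots,n-1\}$ and $n\ge 2$, then $I'=I\setminus\{i\}$ has $n-1\ge 1$ elements, so it is automatically nonempty. The sub-case ``row $i$ is zero'' cannot arise there, because a zero row forces $I=\{i\}$, which is already proper.

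Moreover, your justification for that sub-case reverses the edge direction. A zero row means no edge \emph{leaves} $i$, which makes $\{i\}$ closed. By contrast, $\{0,\dots,n-1\}\setminus\{i\}$ is closed only if no edge \emph{enters} $i$, and that does not follow: take $A=\big[\begin{smallmatrix}0&0\\1&0\end{smallmatrix}\big]$ and $i=0$.

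Because the branch is vacuous, the proof still stands, but you can drop the whole split. For $n\ge 2$, suppose there is no positive-length walk from $i$ back to $i$, and pick any $j\ne i$. Positive-length walks $i\to j$ and $j\to i$ cannot both exist, so the condition already fails at an off-diagonal pair and your $j\ne i$ case applies.
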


\begin{proposition}[{\cite[Proposition~16]{FKPB25}}]
\label{proposition:irreducible-invertible}
Let $A$ be an irreducible and strictly substochastic $n \times n$-matrix.  Then $I - A$ is invertible.
\end{proposition}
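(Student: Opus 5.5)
We need to prove Proposition~\ref{proposition:irreducible-invertible}: if $A$ is irreducible and strictly substochastic, then $I - A$ is invertible. The plan is to show that $I-A$ has trivial kernel, arguing by contradiction with a maximum-modulus argument on a putative null vector.

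Suppose $Ax = x$ for some nonzero $x$. Let $M = \|x\| > 0$ and let $J = \{\, i \mid |x_i| = M \,\}$, which is nonempty. For $i \in J$, nonnegativity of $A$ and $A1 \leq 1$ give
\[
M = |x_i| = \Big|\sum_j A_{ij} x_j\Big| \leq \sum_j A_{ij} |x_j| \leq \sum_j A_{ij} M \leq M .
\]
So all these inequalities are equalities. This yields two facts for every $i \in J$: first, $\sum_j A_{ij} = 1$; second, $A_{ij} > 0$ implies $|x_j| = M$, i.e.\ $j \in J$. Hence $J$ is closed under the edges of the support graph of $A$, and every row indexed by $J$ sums to exactly $1$.

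Next we bring in irreducibility. By Theorem~\ref{theorem:irreducible}, for any $i \in J$ and any index $k$ there is $m > 0$ with $A^m_{ik} > 0$. This means there is a path $i = i_0, i_1, \ldots, i_m = k$ with $A_{i_r i_{r+1}} > 0$ for each $r$. Since $J$ is closed under such edges, we get $k \in J$, so $J$ is the whole index set. But then every row sums to $1$, i.e.\ $A1 = 1$. This contradicts strict substochasticity $A1 \lneqq 1$, so no such $x$ exists and $I - A$ is injective, hence invertible as a square matrix.

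One degenerate case needs separate handling: $n = 1$ and $A = 0$, which the paper counts as reducible. It is excluded by hypothesis anyway, and in any case $I - 0$ is invertible. The only delicate point is the equality-case bookkeeping in the displayed chain. The first inequality (the triangle inequality) need not be tight entrywise in sign, but we only use the second and third inequalities being tight, and these give exactly the two facts above. I expect no real obstacle beyond stating this carefully.
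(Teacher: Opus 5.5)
Your proof is correct. For $i \in J$ the chain $M = |x_i| \le \sum_j A_{ij}|x_j| \le M \sum_j A_{ij} \le M$ consists of equalities, which gives exactly the two facts you use: rows indexed by $J$ sum to $1$, and $J$ is closed under positive entries of $A$. \cref{theorem:irreducible} then forces $J$ to be the whole index set, so $A1 = 1$, contradicting $A1 \lneqq 1$.

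The paper gives no proof of this proposition; it is quoted from \cite{FKPB25}. So there is no in-paper argument to compare against. There are two usual alternatives. One is Perron--Frobenius: for irreducible nonnegative $A$, $\rho(A)$ equals the maximal row sum only if all row sums coincide, so $\rho(A) < 1$. The other is a path argument showing $A^n 1 < 1$ componentwise, hence $\|A^n\| < 1$ and $(I - A)^{-1} = \sum_{k \ge 0} A^k$. Your maximum-modulus argument is more elementary than either. It needs no spectral theory and no convergence argument, only the path characterization the paper already states.

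What your argument does not give is the extra information from the Neumann series, namely that $(I-A)^{-1}$ is nonnegative and has an explicit norm bound. The proof of \cref{lemma:continuous} only needs the inverse to exist, since $\|(I - P^C)^{-1}\|$ enters there as a fixed constant.

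You should state explicitly that you use $A \ge 0$. This is part of the standard meaning of ``substochastic'' and of the hypothesis of \cref{theorem:irreducible}, although the paper's definition only writes $A1 \lneqq 1$. It is genuinely needed: $A = \big(\begin{smallmatrix} 3 & -3 \\ 2/3 & 0 \end{smallmatrix}\big)$ has row sums $0$ and $\frac{2}{3}$ and an irreducible zero pattern, yet $Ax = x$ for $x = (3, 2)^t$.
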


\section{Continuity}

We say that $P \in \mathcal{P}$ is an \emph{$S^2_\Delta$-closed policy} if 
\[
\forall s \in S : \support(P(s,s)) \subseteq S^2_\Delta.
\]

\begin{proposition}[{\cite[Proposition~11]{FKPB25}}]
\label{proposition:policy-Lipschitz}
For all $\sigma$, $\tau : S \to \Dreal(S)$, and $S^2_\Delta$-closed policies $P \in \mathcal{P}_\sigma$, there exists an $S^2_\Delta$-closed policy $Q \in \mathcal{P}_\tau$ such that $d_F(P, Q) \leq 2\, d_F(\sigma, \tau)$.
\end{proposition}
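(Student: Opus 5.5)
The plan is to build $Q$ one state pair at a time, keeping $Q(s,t)$ entrywise within $2\varepsilon$ of $P(s,t)$, where $\varepsilon = d_F(\sigma,\tau) = \max_{s} \max_x |\sigma(s)(x) - \tau(s)(x)|$. Since $d_F(P,Q)$ is a maximum over pairs of the entrywise distance $d_{TV}$, it suffices to treat each pair separately. There are three cases, matching the clauses in the definition of $\mathcal{P}_\tau$ and of $S^2_\Delta$-closedness.

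\emph{Pairs in $S^2_1$.} Set $Q(s,t) = P(s,t)$, the point mass on $(s,t)$. The difference is $0$.

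\emph{Diagonal pairs $(s,s)$.} Let $Q(s,s)$ be the diagonal coupling of $\tau(s)$ with itself, that is, $Q(s,s)(u,u) = \tau(s)(u)$ and $0$ off the diagonal. This keeps $Q$ $S^2_\Delta$-closed. Because $P$ is $S^2_\Delta$-closed, $P(s,s)$ is supported on the diagonal and has marginal $\sigma(s)$. Hence $P(s,s)(u,u) = \sigma(s)(u)$, and every entry differs by at most $\varepsilon$.

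\emph{Pairs $(s,t) \in S^2_{0?}$.} This is the main step. We are given $\omega = P(s,t) \in \Creal(\sigma(s),\sigma(t))$ and need some $\omega'' \in \Creal(\tau(s),\tau(t))$ with $|\omega''(x,y) - \omega(x,y)| \le 2\varepsilon$ for all $x,y$. The plan is to repair the first marginal with a change of at most $\varepsilon$ per entry, then repair the second marginal symmetrically with another change of at most $\varepsilon$.

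For the first marginal, write $a_x = \sigma(s)(x)$, $b_x = \tau(s)(x)$ and $e = \sum_x (b_x - a_x)^+ = \sum_x (a_x - b_x)^+$. If $e = 0$ nothing needs to change, so assume $e > 0$.
\begin{itemize}
\item For each row $x$ with $b_x < a_x$, scale row $x$ of $\omega$ by $b_x / a_x$. Each entry of that row drops by at most the total row loss $a_x - b_x \le \varepsilon$.
\item Let $d_y \ge 0$ be the mass column $y$ lost in this scaling, so that $\sum_y d_y = e$.
\item For each row $x$ with $b_x > a_x$, add $(b_x - a_x)\, d_y / e$ at entry $(x,y)$. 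Each entry in such a row rises by at most $b_x - a_x \le \varepsilon$.
\end{itemize}
Every row is either scaled down or increased, never both, so each entry moves by at most $\varepsilon$. The entries stay nonnegative. The row sums are now exactly $b_x$. The column sums are restored to $\sigma(t)$, because column $y$ regains $\sum_x (b_x-a_x)^+ d_y/e = d_y$.

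Applying the same construction to the columns, with $\sigma(t)$ and $\tau(t)$, yields a coupling $\omega'' \in \Creal(\tau(s),\tau(t))$. This second step changes each entry by at most $\varepsilon$ and leaves the row sums $b_x$ untouched. Set $Q(s,t) = \omega''$. By the triangle inequality each entry moves by at most $2\varepsilon$ overall.

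The only delicate point is the entrywise bookkeeping in the mass-redistribution step: checking nonnegativity, that each entry receives only one kind of change per phase, and that the column repair does not disturb the corrected row marginals. Together with the $S^2_1$ and diagonal cases this gives an $S^2_\Delta$-closed $Q \in \mathcal{P}_\tau$ with $d_F(P,Q) \le 2\, d_F(\sigma,\tau)$.
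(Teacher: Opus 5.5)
Your proof is correct. The paper itself gives no proof of this proposition; it imports it verbatim from \cite[Proposition~11]{FKPB25}. So there is no in-paper argument to compare against, and your construction supplies a complete, elementary justification.

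The essential content is a Lipschitz property of the coupling map $(\mu,\nu)\mapsto\Creal(\mu,\nu)$, which you establish in two steps.
\begin{itemize}
\item \emph{First marginal.} You scale down the rows with $\tau(s)(x)<\sigma(s)(x)$. You then redistribute the freed column mass $d_y$, with $\sum_y d_y=e$, into the rows with $\tau(s)(x)>\sigma(s)(x)$ in proportion to $d_y/e$. This restores every column sum exactly and moves each entry of row $x$ by at most $|\sigma(s)(x)-\tau(s)(x)|\le\varepsilon$.
\item \emph{Second marginal.} The symmetric column step then fixes the second marginal without disturbing the corrected row sums.
\end{itemize}
This makes transparent where the factor $2$ comes from: one $\varepsilon$ per marginal.

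The case split by pair type is legitimate because $S^2_\Delta$, $S^2_1$ and $S^2_{0?}$ do not depend on the transition function. You are also right to treat the diagonal separately. The general repair would push added mass off the diagonal and break $S^2_\Delta$-closedness. In contrast, $S^2_\Delta$-closedness of $P$ forces $P(s,s)$ to be the diagonal coupling of $\sigma(s)$, so choosing the diagonal coupling of $\tau(s)$ gives distance at most $\varepsilon$ there.

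A small bonus of your construction is that it does not depend on the paper's reading of $d_{TV}$ as the entrywise maximum. Each phase moves total mass exactly $e=\sum_x(\sigma(s)(x)-\tau(s)(x))^+$. Hence the same $Q$ also satisfies $d_F(P,Q)\le 2\,d_F(\sigma,\tau)$ when $d_{TV}$ is interpreted as the usual total variation distance.
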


\begin{proposition}[{\cite[Proposition~14]{FKPB25}}]
\label{proposition:closed-communication-class}
Let $P \in \mathcal{P}$ be an $S^2_\Delta$-closed policy.  If $C$ is a closed communication class of $\chain{P}$ then
\begin{enumerate}
\item 
$C = \{ (s, t) \}$ for some $(s, t) \in S^2_1$, or
\item 
$C \subseteq S^2_\Delta$, or
\item 
$C \subseteq S^2_{0,\tau}$.
\end{enumerate}
\end{proposition}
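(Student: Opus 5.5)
This result is quoted from \cite[Proposition~14]{FKPB25}; the plan below reconstructs a self-contained argument. Let $P$ be $S^2_\Delta$-closed and let $C$ be a closed communication class of $\chain{P}$. The proof is a case distinction on where $C$ meets the partition $S^2_\Delta$, $S^2_{0,\tau}$, $S^2_1$, $S^2_{?,\tau}$ of $S \times S$. The goal is to show that $C$ cannot mix these sets, except in the trivial singleton case.

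\emph{Cases for $S^2_1$ and $S^2_\Delta$.}
\begin{enumerate}
\item If $C$ contains some $(s,t) \in S^2_1$, then $\support(P(s,t)) = \{(s,t)\}$, so $(s,t)$ is absorbing. Every state of $C$ must reach $(s,t)$ and be reached from it, so $C = \{(s,t)\}$. This is case 1.
\item If $C$ contains some $(s,s) \in S^2_\Delta$, then $S^2_\Delta$-closedness gives that everything reachable from $(s,s)$ lies in $S^2_\Delta$. Since every state of $C$ is reachable from $(s,s)$, we get $C \subseteq S^2_\Delta$. This is case 2.
\end{enumerate}

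\emph{Remaining case.} Here $C \subseteq S^2_{0?}$, and it remains to show $C \subseteq S^2_{0,\tau}$, i.e.\ all pairs in $C$ are bisimilar. I would define the relation
\[
R = \mathord{\sim} \cup \{(u,v),(v,u) \mid (u,v)\in C\}
\]
and take its equivalence closure $R^*$. The key step is to show that $R^*$ is a bisimulation; then $R^* \subseteq \mathord{\sim}$, so $C \subseteq \mathord{\sim}$, and $C \cap S^2_\Delta = \varnothing$ gives $C \subseteq S^2_{0,\tau}$.

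\emph{Showing $R^*$ is a bisimulation.}
\begin{enumerate}
\item \emph{Labels.} Since $C \cap S^2_1 = \varnothing$, every pair in $C$ has equal labels, so labels agree across $R^*$.
\item \emph{Couplings for pairs in $C$.} For $(u,v) \in C$, closedness of $C$ gives $\support(P(u,v)) \subseteq C \subseteq R^*$, and $P(u,v) \in \Creal(\tau(u),\tau(v))$. So the coupling condition of Definition~\ref{definition:probabilistic-bisimilary} holds.
\item \emph{Symmetric pairs.} For $(v,u)$ take the transposed coupling.
\item \emph{Bisimilar pairs.} These come with couplings supported in $\sim$.
\item \emph{Transitive closure.} Couplings must compose along chains in $R^*$.
\end{enumerate}

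The transitive-closure step is the main obstacle, since couplings do not compose directly. To avoid it, I would switch to the standard Larsen--Skou characterisation (equivalent by \cite[Theorem~4.6]{JL91}): for all $R^*$-classes $E$, $\tau(u)(E) = \tau(v)(E)$. This holds for each generating pair, because a coupling supported in $R^*$ transfers mass class-to-class. Equality of class masses is transitive, so it extends to all of $R^*$.

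Alternatively, one could argue via the distance. Every pair in $C$ reaches $S^2_1$ with probability $0$ under $P$, because $C$ is closed and disjoint from $S^2_1$. Proposition~\ref{proposition:reach-s1} then gives $\delta_\tau = 0$ on $C$, and Theorem~\ref{theorem:distance-zero} gives $C \subseteq \mathord{\sim}$ directly. This route is cleaner, and I would use it as the primary argument.
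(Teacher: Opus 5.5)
Your proof is correct. The two easy cases are right: $S^2_1$ pairs are absorbing, and $S^2_\Delta$-closedness keeps any class that meets the diagonal inside $S^2_\Delta$. Your primary argument for the remaining case is the one the paper gives for the companion Proposition~\ref{proposition:ccc}: a closed class disjoint from $S^2_1$ reaches $S^2_1$ with probability $0$, so $\delta_\tau = 0$ on it by Proposition~\ref{proposition:reach-s1}, and hence it lies in $\mathord{\sim}$ by Theorem~\ref{theorem:distance-zero}. The paper only cites this statement from \cite{FKPB25}, and your diagonal case is exactly what refines Proposition~\ref{proposition:ccc} into the three-way split. The bisimulation-closure alternative is also sound via the class-mass formulation, but it is not needed.
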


\begin{proposition}[{\cite[Proposition~15]{FKPB25}}]
\label{proposition:delta-closed-policy}
Let $s$, $t \in S$.  If there exists a policy $P \in \mathcal{P}$ such that $\reachdelta{P}(s, t) = p$, then there exists an $S^2_\Delta$-closed policy $Q \in \mathcal{P}$ such that $\reachdelta{Q}(s, t) = p$.
\end{proposition}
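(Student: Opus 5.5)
The plan is to modify $P$ only on the diagonal. We keep its behaviour everywhere else, and argue that the probability of reaching $S^2_\Delta$ cannot notice the change, because it depends only on the transitions taken \emph{before} $S^2_\Delta$ is first entered.

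Concretely, define $Q : S \times S \to \Dreal(S \times S)$ by $Q(u, v) = P(u, v)$ for all $(u, v) \notin S^2_\Delta$. For each $u \in S$, let $Q(u, u)$ be the diagonal coupling $\omega_u$ given by $\omega_u(x, x) = \tau(u)(x)$ and $\omega_u(x, y) = 0$ for $x \neq y$.

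\begin{itemize}
\item \emph{$Q$ is a policy.} Both marginals of $\omega_u$ equal $\tau(u)$, so $\omega_u \in \Creal(\tau(u), \tau(u))$. Since $S^2_1 \cap S^2_\Delta = \varnothing$, $Q$ agrees with $P$ on $S^2_1 \cup S^2_{0?}$. Hence $Q \in \mathcal{P}$.
\item \emph{$Q$ is $S^2_\Delta$-closed.} By construction, $\support(Q(u, u)) \subseteq S^2_\Delta$ for every $u \in S$.
\end{itemize}

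It remains to show $\reachdelta{Q}(s, t) = \reachdelta{P}(s, t)$. I would use the standard fixed-point characterization of reachability in the finite Markov chain $\chain{R}$ \cite[Theorem~10.15]{BK08}, as was done for $\reachone{R}$ via $\Gamma_R$. Namely, $\reachdelta{R}$ is the least fixed point of the operator $\Theta_R$ on $S \times S \to [0, 1]$ defined by
\[
\Theta_R(d)(u, v) = \left \{
\begin{array}{ll}
1 & \hspace{0.5cm} \mbox{if $(u, v) \in S^2_\Delta$}\\
R(u, v) \cdot d & \hspace{0.5cm} \mbox{otherwise.}
\end{array}
\right .
\]
The operator $\Theta_R$ consults $R$ only on pairs outside $S^2_\Delta$. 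Since $P$ and $Q$ agree there, $\Theta_P = \Theta_Q$, and so their least fixed points coincide. Therefore $\reachdelta{Q}(s, t) = \reachdelta{P}(s, t) = p$.

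Alternatively, one can argue with paths: the event of reaching $S^2_\Delta$ is a disjoint union of cylinder sets of finite paths whose prefixes avoid $S^2_\Delta$ except at the last state. These cylinders have identical probabilities in $\chain{P}$ and $\chain{Q}$, because every transition along such a prefix is taken from a non-diagonal pair.

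There is no real obstacle. The only point needing care is to state precisely that reachability probabilities are insensitive to the outgoing transitions of the target set, which the least fixed point formulation (or the path decomposition) makes rigorous.
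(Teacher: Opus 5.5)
Your proof is correct: replacing each $P(u,u)$ by the diagonal coupling of $\tau(u)$ with itself gives an $S^2_\Delta$-closed policy that still lies in $\mathcal{P}$, and $\reachdelta{R}$ is the least fixed point of an operator that only consults $R$ outside $S^2_\Delta$, so $\reachdelta{Q} = \reachdelta{P}$. This is essentially the intended argument; the paper itself only cites \cite[Proposition~15]{FKPB25}, but its proof sketch of \cref{theorem:main-continuity}a uses the same modification, letting pairs in $S^2_\Delta$ never leave $S^2_\Delta$.
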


\begin{corollary}
\label{corollary:optimal-closed-policy}
Let $s$, $t \in S$.  If there exists a policy $P \in \mathcal{P}$ such that $\reachdelta{P}(s, t) = 1 - \delta_\tau (s, t)$, then there exists an $S^2_\Delta$-closed policy $Q \in \mathcal{P}$ that is optimal for $(s, t)$ such that $\reachdelta{Q}(s, t) = 1 - \delta_\tau (s, t)$.
\end{corollary}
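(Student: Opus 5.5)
The statement to prove is Corollary~\ref{corollary:optimal-closed-policy}. I would combine Proposition~\ref{proposition:delta-closed-policy} with the bound of Theorem~\ref{theorem:main-continuity}a and a short reachability argument.

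\textbf{Step 1: pass to an $S^2_\Delta$-closed policy.} Assume $P \in \mathcal{P}$ satisfies $\reachdelta{P}(s, t) = 1 - \delta_\tau(s, t)$. Applying Proposition~\ref{proposition:delta-closed-policy} with $p = 1 - \delta_\tau(s, t)$ gives an $S^2_\Delta$-closed policy $Q \in \mathcal{P}$ with $\reachdelta{Q}(s, t) = 1 - \delta_\tau(s, t)$. It remains to show that $Q$ is optimal for $(s, t)$, that is, $\reachone{Q}(s, t) = \delta_\tau(s, t)$.

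\textbf{Step 2: $S^2_\Delta$ and $S^2_1$ are disjoint absorbing sets in $\chain{Q}$.} Pairs in $S^2_1$ are absorbing by the definition of a policy. Since $Q$ is $S^2_\Delta$-closed, $S^2_\Delta$ is closed in $\chain{Q}$, so a run that reaches $S^2_\Delta$ never reaches $S^2_1$. The events ``reach $S^2_\Delta$'' and ``reach $S^2_1$'' are therefore disjoint, which gives
\[
\reachone{Q}(s, t) \leq 1 - \reachdelta{Q}(s, t) = \delta_\tau(s, t).
\]

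\textbf{Step 3: the reverse inequality.} By Theorem~\ref{proposition:minimal-coupling}, $\delta_\tau \sqsubseteq \reachone{Q}$, so $\reachone{Q}(s, t) \geq \delta_\tau(s, t)$. Combined with Step~2, $\reachone{Q}(s, t) = \delta_\tau(s, t)$, so $Q$ is optimal for $(s, t)$ in the sense of Definition~\ref{definition:optimal-policy}, as required.

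\textbf{Main obstacle.} The only delicate point is the disjointness in Step~2. It relies on $\support(Q(u, u)) \subseteq S^2_\Delta$ for every $u \in S$, which holds precisely because $Q$ is $S^2_\Delta$-closed. Without this closedness, a run could visit $S^2_\Delta$ and later reach $S^2_1$, and the inequality $\reachone{Q} \leq 1 - \reachdelta{Q}$ would fail. This is why the corollary goes through Proposition~\ref{proposition:delta-closed-policy} rather than working with $P$ directly.
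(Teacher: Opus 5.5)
Your proof is correct and follows essentially the same route as the paper. You pass to an $S^2_\Delta$-closed $Q$ via Proposition~\ref{proposition:delta-closed-policy}, deduce $\reachone{Q}(s,t) \leq 1 - \reachdelta{Q}(s,t) = \delta_\tau(s,t)$, and close with Theorem~\ref{proposition:minimal-coupling}; the only differences are that you justify the disjointness of reaching $S^2_\Delta$ and $S^2_1$ directly from closedness and absorption, where the paper cites Proposition~\ref{proposition:closed-communication-class}, and that the Theorem~\ref{theorem:main-continuity}a bound you announce at the start is never actually used.
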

\begin{proof}
Let $s$, $t \in S$.  Assume that $P \in \mathcal{P}_\tau$ is a policy such that $\reachdelta{P}(s, t) = 1 - \delta_\tau (s, t)$.  By Proposition~\ref{proposition:delta-closed-policy} there exists an $S^2_\Delta$-closed policy $Q \in \mathcal{P}_\tau$ such that $\reachdelta{Q}(s, t) = 1 - \delta_\tau (s, t)$.  It follows from Theorem~\ref{theorem:reach-s1} and Proposition~\ref{proposition:closed-communication-class} that $\reachdelta{Q}(s, t) + \reachone{Q}(s, t) \leq 1$.  Hence, we have $\reachone{Q}(s, t) \leq \delta_\tau(s, t)$.  According to Theorem~\ref{proposition:minimal-coupling}, $\reachone{Q}(s, t) \geq \delta_\tau(s, t)$.  Therefore, $\reachone{Q}(s, t) = \delta_\tau(s, t)$ and, thus, $Q$ is optimal for $(s, t)$ due to Definition~\ref{definition:optimal-policy}.
\end{proof}

\begin{proposition}
\label{proposition:phi-impossible}
For all $s$, $t \in S$, for all $P \in \mathcal{P}$, we have $\reachdelta{P}(s, t) \leq 1 - \delta_\tau (s, t)$.
\end{proposition}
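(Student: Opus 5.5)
The plan follows the sketch of \cref{theorem:main-continuity}a, using the $S^2_\Delta$-closure machinery from the appendix. Fix $s$, $t \in S$ and a policy $P \in \mathcal{P}$, and set $p = \reachdelta{P}(s, t)$.

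First, I apply Proposition~\ref{proposition:delta-closed-policy}. This yields an $S^2_\Delta$-closed policy $Q \in \mathcal{P}$ with $\reachdelta{Q}(s, t) = p$. The point of passing to $Q$ is that, once $S^2_\Delta$ is entered, it is never left, so $S^2_\Delta$ and $S^2_1$ become two disjoint absorbing targets.

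Second, I bound $\reachone{Q}(s,t) \leq 1 - p$. This is exactly the step carried out in the proof of Corollary~\ref{corollary:optimal-closed-policy}, and I reuse that argument.
\begin{itemize}
\item By Proposition~\ref{proposition:closed-communication-class}, every closed communication class of $\chain{Q}$ is of one of three kinds: a singleton in $S^2_1$, a class contained in $S^2_\Delta$, or a class contained in $S^2_{0,\tau}$.
\item Since $Q$ is $S^2_\Delta$-closed, every path that reaches $S^2_\Delta$ stays in $S^2_\Delta$ forever. In particular it never reaches $S^2_1$, since $S^2_\Delta \cap S^2_1 = \varnothing$.
\item Pairs in $S^2_1$ are absorbing, so no path can visit $S^2_1$ and afterwards reach $S^2_\Delta$.
\end{itemize}
Hence the events ``reach $S^2_\Delta$'' and ``reach $S^2_1$'' are disjoint. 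By Theorem~\ref{theorem:reach-s1}, this gives $\reachdelta{Q}(s,t) + \reachone{Q}(s,t) \leq 1$, that is, $\reachone{Q}(s,t) \leq 1 - p$.

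Third, Theorem~\ref{proposition:minimal-coupling} gives $\delta_\tau(s,t) \leq \reachone{Q}(s,t) \leq 1 - p$. Rearranging yields $\reachdelta{P}(s,t) = p \leq 1 - \delta_\tau(s,t)$, which is the claim.

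The only substantive ingredient is the existence of the $S^2_\Delta$-closed modification in the first step. That is already available as Proposition~\ref{proposition:delta-closed-policy}, so I expect no real obstacle. The one point that needs care is the disjointness argument in the second step, which uses absorption in both directions: $S^2_1$ is absorbing by the definition of policies, and $S^2_\Delta$ is absorbing by closedness.
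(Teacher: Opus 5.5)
Your proposal is correct and follows essentially the paper's argument. Both pass to an $S^2_\Delta$-closed policy via Proposition~\ref{proposition:delta-closed-policy}, use the disjointness of reaching $S^2_\Delta$ and reaching $S^2_1$ to get $\reachone{Q}(s,t) \leq 1 - \reachdelta{P}(s,t)$, and then compare with Theorem~\ref{proposition:minimal-coupling}; the only difference is that the paper phrases this as a contradiction while you argue directly, which is immaterial.
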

\begin{proof}
Let $s$, $t \in S$.  Towards a contradiction, assume that there exists a policy $P \in \mathcal{P}$ such that $\reachdelta{P}(s, t) > 1 - \delta_\tau (s, t)$.  Then, by Proposition~\ref{proposition:delta-closed-policy}, there exists an $S^2_\Delta$-closed policy $Q \in \mathcal{P}$ such that $\reachdelta{Q}(s, t) > 1 - \delta_\tau (s, t)$.  It follows from Theorem~\ref{theorem:reach-s1} and Proposition~\ref{proposition:closed-communication-class} that $\reachone{Q} (s, t) < \delta_\tau (s, t)$.  This contradicts Theorem~\ref{proposition:minimal-coupling}.
\end{proof}

\begin{lemma}
\label{lemma:continuous}
Let $s$, $t \in S$.  If there exists a policy $P \in \mathcal{P}_\tau$ such that $\reachdelta{P}(s, t) = 1 - \delta_\tau (s, t)$, then the function $\delta_{\_}(s, t) : (S \to \Dreal(S)) \to [0, 1]$ is upper semi-continuous at $\tau$.
\end{lemma}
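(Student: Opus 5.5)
By Corollary~\ref{corollary:optimal-closed-policy}, I may assume that $P \in \mathcal{P}_\tau$ is $S^2_\Delta$-closed, optimal for $(s, t)$, and satisfies $\reachdelta{P}(s, t) = 1 - \delta_\tau(s, t)$. Let $(\tau_n)_n$ converge to $\tau$. By Proposition~\ref{proposition:policy-Lipschitz} there are $S^2_\Delta$-closed policies $P_n \in \mathcal{P}_{\tau_n}$ with $d_F(P_n, P) \leq 2\, d_F(\tau_n, \tau)$, so $P_n \to P$. Theorem~\ref{proposition:minimal-coupling} gives $\delta_{\tau_n}(s, t) \leq \reachone{P_n}(s, t)$. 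It therefore suffices to show $\limsup_n \reachone{P_n}(s, t) \leq \reachone{P}(s, t) = \delta_\tau(s, t)$. I will in fact aim for $\lim_n \reachone{P_n}(u, v) = \reachone{P}(u, v)$ for every $(u, v)$ reachable from $(s, t)$ in $\chain{P}$.

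The first step is structural. Consider the communication classes of $\chain{P}$ reachable from $(s, t)$. Since $\reachdelta{P}(s, t) + \reachone{P}(s, t) = 1$, almost every path from $(s, t)$ ends in $S^2_\Delta$ or in $S^2_1$. By Proposition~\ref{proposition:closed-communication-class}, every reachable closed class is then either a singleton in $S^2_1$ or contained in $S^2_\Delta$; no reachable closed class lies in $S^2_{0,\tau}$. Hence every reachable non-closed class $C$ is transient, so the restriction $A$ of $P$ to $C \times C$ is strictly substochastic. It is also irreducible by Theorem~\ref{theorem:irreducible}, or it is a $1 \times 1$ zero matrix.

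Next I induct on the length of the longest path from $C$ in the DAG of reachable classes. For closed classes, the values are constant: $\reachone{P_n} = 1$ on $S^2_1$, and $\reachone{P_n} = 0$ on $S^2_\Delta$ because $P_n$ is $S^2_\Delta$-closed. For a transient class $C$, the vector $x = \reachone{P}|_C$ satisfies $x = A x + b$, where $b$ collects the one-step mass to successor classes weighted by their $\reachone{P}$ values. Analogously, $x_n = \reachone{P_n}|_C$ satisfies $x_n = A_n x_n + b_n + e_n$. Here $A_n \to A$, and $b_n \to b$ by the induction hypothesis together with $P_n \to P$. The term $e_n$ accounts for transitions of $P_n$ that leave $C$ to pairs not reachable under $P$; it is bounded by $d_F(P_n, P)\,|S|^2$, since those targets receive zero mass under $P$ and their values lie in $[0, 1]$, so $e_n \to 0$. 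By Proposition~\ref{proposition:irreducible-invertible}, $I - A$ is invertible, so $I - A_n$ is invertible for large $n$ and $(I - A_n)^{-1} \to (I - A)^{-1}$ by continuity of inversion. Therefore $x_n = (I - A_n)^{-1}(b_n + e_n) \to x$. Applying this at $(s, t)$ gives the claim.

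The main obstacle is the $e_n$ term. Under $P_n$, new edges can appear, both inside $C$ (which is absorbed into $A_n$) and to pairs outside the reachable part of $\chain{P}$, whose $\reachone{P_n}$ values are not controlled by the induction hypothesis. I handle this by bounding those contributions crudely by their total probability mass, which vanishes. This bound only gives $\limsup \leq$, but that is all upper semi-continuity requires. The real crux is the earlier structural step: the equality $\reachdelta{P} = 1 - \delta_\tau$ is exactly what rules out reachable bottom classes inside $S^2_{0,\tau}$, whose value could jump from $0$ under perturbation.
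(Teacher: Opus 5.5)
Your proposal is correct and follows essentially the same route as the paper. Both arguments reduce to an $S^2_\Delta$-closed policy optimal for $(s,t)$ via Corollary~\ref{corollary:optimal-closed-policy} and approximate it by $S^2_\Delta$-closed $P_n \in \mathcal{P}_{\tau_n}$ via Proposition~\ref{proposition:policy-Lipschitz}; both use $\reachdelta{P}(s,t) = 1-\delta_\tau(s,t)$ to exclude reachable closed classes inside $S^2_{0,\tau}$, and both induct over the DAG of reachable communication classes with the same split into $C$, the classes below it, and the rest. The one technical difference is that you invert $I - A_n$ and use continuity of matrix inversion, whereas the paper keeps $(I-P^C)^{-1}$ fixed and treats $(P_n^C - P^C)\,\reachone{P_n, C}$ as an extra vanishing error term (using $\|\reachone{P_n,C}\| \leq 1$); both work.
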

\begin{proof}
Let $s$, $t \in S$.   Assume that $(\tau_n)_n$ is a sequence in $S \to \Dreal(S)$ that converges to $\tau$.  It suffices to show that $\lim \sup_n \delta_{\tau_n}(s, t) \leq \delta_{\tau}(s, t)$.

Assume that $Q \in \mathcal{P}_\tau$ is a policy such that $\reachdelta{Q}(s, t) = 1 - \delta_\tau (s, t)$.  By Corollary~\ref{corollary:optimal-closed-policy} there exists an $S^2_\Delta$-closed policy $P \in \mathcal{P}_\tau$ that is optimal for $(s, t)$ such that $\reachdelta{P}(s, t) = 1 - \delta_\tau (s, t)$.  According to Proposition~\ref{proposition:policy-Lipschitz}, for each $n \in \nat$, there exists an $S^2_\Delta$-closed policy $P_n \in \mathcal{P}_{\tau_n}$ such that $d_F(P_n, P) \leq 2\, d_F(\tau_n, \tau)$.  Hence, $(P_n)_n$ converges to $P$.

Consider the directed graph consisting of the communication classes of $\chain{P}$ reachable from $(s, t)$ as vertices.  There is an edge from communication class $C$ to communication class $D$ if there exist $(u, v) \in C$ and $(w, x) \in D$ such that $P(u, v)(w, x) > 0$.  This graph is acyclic.  We first prove that for all communication classes $C$ of $\chain{P}$ that are reachable from $(s, t)$ and for all $(u, v) \in C$,
\begin{equation}
\label{equation:lim-gamma-is-gamma}
\lim_n \reachone{P_n}(u, v) = \reachone{P}(u, v)
\end{equation} 
by induction on the length of a longest path from $C$ in the communication classes graph.

In the base case, the length of a longest path in the communication classes graph is one.  Here we consider the \emph{closed} communication classes $C$.  Since $P$ is optimal for $(s, t)$, $(s, t)$ reaches $S^2_1$ with probability $\delta_\tau(s, t)$ in $\chain{P}$.  Then, as $P$ is an $S^2_\Delta$-closed policy and $\reachdelta{P}(s, t) = 1 - \delta_\tau (s, t)$, by Proposition~\ref{proposition:closed-communication-class}, we have either $C \subseteq S^2_\Delta$ or $C = \{ (u, v)\}$ for some $(u, v) \in S^2_1$.

\begin{itemize}
\item 
Suppose $C \subseteq S^2_\Delta$.  Let $(u, v) \in C$.  According to Theorem~\ref{theorem:reach-s1}, for all $n \in \nat$, $\reachone{P_n}(u, v) = 0$ and $\reachone{P}(u, v) = 0$, as they are $S^2_\Delta$-closed policies.  Therefore, (\ref{equation:lim-gamma-is-gamma}).
\item 
Suppose $C = \{ (u, v)\}$ and $(u, v) \in S^2_1$.  Then for all $n \in \nat$, 
\[
\reachone{P_n}(u, v) 
= \Gamma_{P_n}(\reachone{P_n})(u, v)
= 1. 
\]
Similarly, we can conclude that $\reachone{P}(u, v) = 1$.  Hence, (\ref{equation:lim-gamma-is-gamma}).
\end{itemize}

Next, we consider the inductive case.  Let $C$ be a communication class of $\chain{P}$ reachable from $(s, t)$.  Let $B$ be the set of state pairs of all communication classes that can be reached from $C$ in the communication classes graph via a path of length greater than $1$.  By induction, for all $(u, v) \in B$, (\ref{equation:lim-gamma-is-gamma}) holds.  Let $A$ be the set of all other state pairs, that is, $A = (S \times S) \setminus (B \cup C)$.
\begin{center}
\begin{tikzpicture}[xscale=1.9]
\node[state] (st) at (1, 3.5) {$s, t$};

\node[ellipse, draw, fit={(0.5, 1.5) (1, 2) (1, 1) (1.5,1.5)}, fill=ACMBlue!30] {};

\node[state] (uv) at (1, 1.8) {$u, v$};

\node at (2, 1.5) {$C$};

\draw[->,decorate,decoration={snake, pre length=1pt, post length=2pt, amplitude=1.3pt}] (st) to (uv);

\node[ellipse, draw, fit={(-0.5, -1) (0, -1.5) (0.5, -1) (0,-0.5)}, fill=ACMOrange!40] {};
\node[ellipse, draw, fit={(1.5, -1) (2, -1.5) (2.5, -1) (2,-0.5)}, fill=ACMOrange!40] {};
\node[ellipse, draw, fit={(0.5, -2.5) (1, -2) (1, -3) (1.5,-2.5)}, fill=ACMOrange!40] {};

\node[draw, fit={(-1, -3.5) (-1, 0) (3, 0) (3,-3.5)}] {};

\node at (3.25, -1.75) {$B$};

\draw[->]  (0.5, 1.5) to (0, -0.5);
\draw[->]  (1.5, 1.5) to (2, -0.5);
\draw[->]  (0, -1.5) to (0.5, -2.5);
\draw[->]  (2, -1.5) to (1.5, -2.5);
\end{tikzpicture}
\end{center}

For $X \subseteq S \times S$ and $n \in \nat$, consider the vectors
\begin{align*}
\reachone{P_n, X} & = (\reachone{P_n}(u, v))_{(u, v) \in X}\\
\reachone{P, X} & = (\reachone{P}(u, v))_{(u, v) \in X}
\end{align*}
and the matrices
\begin{align*}
P_n^X & = (P_n(u, v)(w, x))_{(u, v) \in C, (w, x) \in X}\\
P^X & = (P(u, v)(w, x))_{(u, v) \in C, (w, x) \in X}
\end{align*}
For all $n \in \nat$, $\reachone{P_n} = \Gamma_{P_n}(\reachone{P_n})$ and, hence,
\[
\reachone{P_n, C} = P_n^C \reachone{P_n, C} + P_n^B \reachone{P_n, B} + P_n^A \reachone{P_n, A}.
\]
Since $\reachone{P} = \Gamma_P(\reachone{P}) $, we also have
\[
\reachone{P, C} = P^C \reachone{P, C} + P^B \reachone{P, B} + P^A \reachone{P, A}.
\]
From the communication classes graph we can infer that for all $(u, v) \in C$, $\support(P(u, v)) \subseteq B \cup C$.  Hence, $P^A = 0$.  Since $\lim_n P_n = P$, we have that
\begin{align}
\lim_n P_n^A & = P^A = 0 \nonumber\\
\lim_n P_n^B & = P^B
\label{equation:limit-matrix}\\
\lim_n P_n^C & = P^C \nonumber
\end{align}

Next, we prove that the inverse $(I - P^C)^{-1}$ exists.  We distinguish the following cases.
\begin{itemize}
\item 
If $P^C = 0$ then $I - P^C = I$, which has an inverse.
\item
Otherwise, $P^C \gneqq 0$.  For every $(u, v)$, $(w, x) \in C$ there exists $m$ such that $(P^C)^m_{(u, v), (w, x)} > 0$, because $C$ is a communication class.  By Theorem~\ref{theorem:irreducible}, $P^C$ is irreducible.  Since the communication class $C$ is not closed, $P^C$ is strictly substochastic.  Hence, by Proposition~\ref{proposition:irreducible-invertible}, the inverse $(I - P^C)^{-1}$ exists.
\end{itemize}

Therefore,
\begin{align*}
& \reachone{P_n, C} - \reachone{P, C}\\
& = (P_n^C \reachone{P_n, C} + P_n^B \reachone{P_n, B} + P_n^A \reachone{P_n, A}) - (P^C \reachone{P, C} + P^B \reachone{P, B} + P^A \reachone{P, A})\\
& = (P_n^C \reachone{P_n, C} + P_n^B \reachone{P_n, B} + P_n^A \reachone{P_n, A}) - (P^C \reachone{P, C} + P^B \reachone{P, B})
\proofcomment{$P^A = 0$}\\
& = P^C (\reachone{P_n, C} - \reachone{P, C}) + (P_n^C - P^C) \reachone{P_n, C} + P^B (\reachone{P_n, B} - \reachone{P, B}) + (P_n^B - P^B) \reachone{P_n, B}\\
& \qquad + P_n^A \reachone{P_n, A}.
\end{align*}
Hence,
\begin{align*}
&(I - P^C) \, (\reachone{P_n, C} - \reachone{P, C})\\
&= (P_n^C - P^C) \reachone{P_n, C} + P^B (\reachone{P_n, B} - \reachone{P, B}) + (P_n^B - P^B) \reachone{P_n, B} + P_n^A \reachone{P_n, A}.
\end{align*}
As a consequence,
\begin{align*}
&\reachone{P_n, C} - \reachone{P, C}\\
&= (I - P^C)^{-1} \, ( (P_n^C - P^C) \reachone{P_n, C} + P^B (\reachone{P_n, B} - \reachone{P, B}) + (P_n^B - P^B) \reachone{P_n, B} + P_n^A \reachone{P_n, A}).
\end{align*}
Hence,
\begin{align*}
& \| \reachone{P_n, C} - \reachone{P, C} \|\\
& = \| (I - P^C)^{-1} \, ( (P_n^C - P^C) \reachone{P_n, C} + P^B (\reachone{P_n, B} - \reachone{P, B}) + (P_n^B - P^B) \reachone{P_n, B}\\
& \qquad + P_n^A \reachone{P_n, A}) \|\\
& \leq \| (I - P^C)^{-1} \| \, (\| P_n^C - P^C \| \, \|\reachone{P_n, C}\| + \| P^B \| \, \| \reachone{P_n, B} - \reachone{P, B} \| + \| P_n^B - P^B \| \, \| \reachone{P_n, B} \|\\
& \qquad + \| P_n^A \| \, \| \reachone{P_n, A} \|)\\
& \leq \| (I - P^C)^{-1} \| \, (\| P_n^C - P^C \| + \| P^B \| \, \| \reachone{P_n, B} - \reachone{P, B} \| + \| P_n^B - P^B \| + \| P_n^A \|)\\
& \quad \proofcomment{$\| \reachone{P_n, X} \| \leq 1$}\\
& \leq \| (I - P^C)^{-1} \| \, (\| P_n^C - P^C \| + |S|^2 \, \| \reachone{P_n, B} - \reachone{P, B} \| + \| P_n^B - P^B \| + \| P_n^A \|)\\
& \quad \proofcomment{$\| P^B \| \leq |S|^2$}
\end{align*}

We need to prove that $\lim_n \reachone{P_n, C} = \reachone{P, C}$ and that this limit exists.  It is sufficient to show that $\limsup_n \|  \reachone{P_n, C} - \reachone{P, C} \| = 0$.  From the above we can conclude that this holds, as
\begin{align*}
& \limsup_n \| \reachone{P_n, C} - \reachone{P, C} \|\\
& \leq \limsup_n \| (I - P^C)^{-1} \| \, (\| P_n^C - P^C \| + |S|^2 \, \| \reachone{P_n, B} - \reachone{P, B} \| + \| P_n^B - P^B \| + \| P_n^A \|)\\
& = \| (I - P^C)^{-1} \| \, (\limsup_n \| P_n^C - P^C \| + |S|^2 \, \limsup_n \| \reachone{P_n, B} - \reachone{P, B} \|\\
& \qquad + \limsup_n \| P_n^B - P^B \| + \limsup_n \| P_n^A \|)\\
& = \| (I - P^C)^{-1} \| \, (|S|^2 \, \limsup_n \| \reachone{P_n, B} - \reachone{P, B} \|)
\proofcomment{(\ref{equation:limit-matrix})}\\
& = 0
\proofcomment{$\limsup_n \| \reachone{P_n, B} - \reachone{P, B} \| = 0$ by induction}
\end{align*}
This proves (\ref{equation:lim-gamma-is-gamma}).

Assume that $(s, t)$ belongs to communication class $C$.  Then
\begin{align*}
\limsup_n \delta_{\tau_n}(s, t)
& \leq \limsup_n \reachone{P_n}(s, t)
\proofcomment{$\delta_{\tau_n} \sqsubseteq \reachone{P_n}$ by Theorem~\ref{proposition:minimal-coupling}}\\
& = \limsup_n \reachone{P_n, C}(s, t)
\proofcomment{$(s, t) \in C$}\\
& = \reachone{P, C}(s, t)
\proofcomment{(\ref{equation:lim-gamma-is-gamma})}\\
& = \reachone{P}(s, t)
\proofcomment{$(s, t) \in C$}\\
& = \delta_\tau(s, t)
\proofcomment{$P \in \mathcal{P}_\tau$ is optimal for $(s, t)$}
\end{align*}
Hence, the function $\delta_{\_}(s, t)$ is upper semi-continuous at $\tau$.
\end{proof}

\begin{proposition}
\label{proposition:continuous}
For all $s$, $t \in S$, if there exists a policy $P \in \mathcal{P}$ such that $(s, t)$ reaches $S^2_\Delta$ with probability $1 - \delta_\tau (s, t)$ in $\chain{P}$ then the function $\delta_{\_}(s, t) : (S \to \Dreal(S)) \to [0, 1]$ is continuous at $\tau$.
\end{proposition}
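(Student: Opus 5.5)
Continuity at $\tau$ is equivalent to being both lower and upper semi-continuous at $\tau$, by the characterization proved in Appendix~\ref{appendix:metric-topology}. The proof therefore reduces to two results already available.

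First, lower semi-continuity holds unconditionally. By Proposition~\ref{proposition:lower-semi-continuous}, for every sequence $(\tau_n)_n$ converging to $\tau$ we have $\liminf_n \delta_{\tau_n}(s,t) \geq \delta_\tau(s,t)$. No hypothesis on policies is needed for this part.

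Second, the assumption of the proposition is exactly the hypothesis of Lemma~\ref{lemma:continuous}. That assumption is the existence of a policy $P \in \mathcal{P}_\tau$ with $\reachdelta{P}(s,t) = 1 - \delta_\tau(s,t)$, where by Definition~\ref{definition:robust-probabilistic-bisimilarity} "$(s,t)$ reaches $S^2_\Delta$ with probability $1-\delta_\tau(s,t)$ in $\chain{P}$" is what $\reachdelta{P}(s,t)$ denotes. The lemma then gives $\limsup_n \delta_{\tau_n}(s,t) \leq \delta_\tau(s,t)$ for every sequence $(\tau_n)_n$ converging to $\tau$.

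Combining the two, for any sequence $(\tau_n)_n$ converging to $\tau$ we get
\[
\delta_\tau(s,t) \leq \liminf_n \delta_{\tau_n}(s,t) \leq \limsup_n \delta_{\tau_n}(s,t) \leq \delta_\tau(s,t),
\]
so $\lim_n \delta_{\tau_n}(s,t) = \delta_\tau(s,t)$, i.e.\ $\delta_{\_}(s,t)$ is continuous at $\tau$.

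There is no real obstacle here. The substantive work, namely building $S^2_\Delta$-closed perturbed policies $P_n$ and inducting over the communication-class DAG, is already done in Lemma~\ref{lemma:continuous}. The only point to check is that the notion "continuous at $\tau$" used in the statement is the sequential one, which coincides with the semi-continuity definitions used in Proposition~\ref{proposition:lower-semi-continuous} and Lemma~\ref{lemma:continuous}.
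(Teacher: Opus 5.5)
Your proof is correct and takes the same route as the paper, which also obtains the result directly by combining the unconditional lower semi-continuity of Proposition~\ref{proposition:lower-semi-continuous} with the upper semi-continuity supplied by Lemma~\ref{lemma:continuous}. The lemma's hypothesis is exactly the assumption of the proposition.
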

\begin{proof}
Follows directly from Proposition~\ref{proposition:lower-semi-continuous} and Lemma~\ref{lemma:continuous}.
\end{proof}

\section{Couplings}

\begin{definition}
Let $\omega \in \Sreal(X \times X)$.  The \emph{support graph} of $\omega$ is $(V, E)$ where
\begin{itemize}
\item 
$V = X \times \{ 0, 1 \}$ and
\item 
$E = \{\, \{ (x, 0), (y, 1) \} \mid \omega(x, y) > 0 \,\}$.
\end{itemize}
\end{definition}

We denote the vertices of the convex polytope $\Creal(\mu, \nu)$ by $V(\Creal(\mu, \nu))$.

\begin{proposition}
\label{propositions:couplings-nonempty}
For all $\mu$, $\nu \in \Dreal(X)$, $\Creal(\mu, \nu)$ is nonempty.
\end{proposition}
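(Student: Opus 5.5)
The plan is to exhibit an explicit element of $\Creal(\mu, \nu)$, namely the product coupling $\omega(x, y) = \mu(x)\,\nu(y)$ for all $x$, $y \in X$. This is the simplest choice, because its marginals factor immediately.

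The verification has three steps. First, $\omega(x, y) \geq 0$ for all $x$, $y$, since $\mu$ and $\nu$ take values in $[0, 1]$. Second, for each $x \in X$ the first marginal is
\[
\omega(x, X) = \sum_{y \in X} \mu(x)\,\nu(y) = \mu(x)\,\nu(X) = \mu(x).
\]
Here I use $\nu \in \Dreal(X)$, so $\nu(X) = 1$. Symmetrically, $\omega(X, y) = \mu(X)\,\nu(y) = \nu(y)$.

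Third, I check that $\omega$ is a probability distribution on $X \times X$. Its total mass is $\mu(X)\,\nu(X) = 1$, and this also shows that each value $\omega(x, y)$ lies in $[0, 1]$. Hence $\omega \in \Dreal(X \times X)$, and $\omega$ satisfies both marginal conditions in the definition of $\Creal(\mu, \nu)$.

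There is no real obstacle in this argument. The only point requiring care is that normalisation of $\mu$ and $\nu$, that is, the hypothesis that they are probability rather than subprobability distributions, is exactly what makes the marginals come out right.
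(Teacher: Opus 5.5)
Your proof is correct: the product coupling $\omega(x,y) = \mu(x)\,\nu(y)$ is nonnegative, has total mass $\mu(X)\,\nu(X) = 1$, and has the right marginals because $\mu(X) = \nu(X) = 1$. The paper takes a different route. It simply cites the North-West corner method of Hitchcock, which fills the transportation table greedily.

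The two constructions buy different things. Yours is a self-contained, one-line verification with no algorithm or external reference. The North-West corner rule instead yields a basic feasible solution of the transportation problem. Its support graph is a forest, so it is a vertex of $\Creal(\mu, \nu)$ by the Klee--Witzgall characterization in \cref{proposition:forest}. That fits the paper's later emphasis on vertex couplings and vertex policies.

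Your product coupling is generally not a vertex. Its support is all of $\support(\mu) \times \support(\nu)$, whose support graph is complete bipartite and contains cycles once both supports have at least two points. For the stated claim this costs nothing. Only nonemptiness is asserted, and a nonempty bounded polytope has vertices anyway.

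Your construction does have a nice feature: it has the largest possible support. It is the special case $R = X \times X$ of the maximal $R$-support couplings the paper uses later (\cref{proposition:maximal-support-coupling}), which are built by taking normalized products within each $R$-class.
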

\begin{proof}
Let $\mu$, $\nu \in \Dreal(X)$. The North-West corner method \cite{H41} constructs a coupling of $\mu$ and $\nu$.
\end{proof}

\begin{proposition}[{\cite[Theorem~4]{KW67}}]
\label{proposition:forest}
For all $\mu$, $\nu \in \Dreal(X)$ and $\omega \in \Creal(\mu, \nu)$, $\omega$ is a vertex if and only if the support graph of $\omega$ is a forest.
\end{proposition}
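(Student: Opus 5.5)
We would prove both directions using the standard characterization of vertices of a convex polytope: $\omega$ is a vertex of $\Creal(\mu, \nu)$ if and only if there is no nonzero $d : X \times X \to \mathbb{R}$ with $\omega + \varepsilon d \in \Creal(\mu, \nu)$ and $\omega - \varepsilon d \in \Creal(\mu, \nu)$ for some $\varepsilon > 0$. Since $\Creal(\mu, \nu)$ is cut out by the marginal equalities $\omega(x, X) = \mu(x)$, $\omega(X, y) = \nu(y)$ and the nonnegativity constraints, such a $d$ exists if and only if there is a nonzero $d$ with two properties. First, $\support(d) \subseteq \support(\omega)$, which makes nonnegativity hold for small $\varepsilon$. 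Second, all row sums $d(x, X)$ and column sums $d(X, y)$ vanish. (Total mass $1$ then follows automatically.) We call such a $d$ a \emph{balanced perturbation} of $\omega$. The proposition thus reduces to a statement about the support graph: $\omega$ admits a nonzero balanced perturbation if and only if its support graph contains a cycle.

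\emph{Not a forest $\Rightarrow$ not a vertex.} Suppose the support graph contains a cycle. Since the graph is bipartite between $X \times \{0\}$ and $X \times \{1\}$, the cycle has even length and alternates between the two sides. It can be written as $(x_1,0),(y_1,1),(x_2,0),(y_2,1),\ldots,(x_k,0),(y_k,1),(x_1,0)$ with all these vertices distinct. Define $d(x_i, y_i) = 1$, $d(x_{i+1}, y_i) = -1$ (indices modulo $k$), and $d = 0$ elsewhere. Each cycle vertex meets exactly two cycle edges, one with sign $+1$ and one with sign $-1$, so every row and column sum of $d$ is zero. Moreover $\support(d) \subseteq \support(\omega)$. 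Taking $\varepsilon = \min_{(x,y) \in \support(\omega)} \omega(x,y) > 0$, both $\omega \pm \varepsilon d$ are couplings and $\omega$ is their midpoint. Hence $\omega$ is not a vertex.

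\emph{Forest $\Rightarrow$ vertex.} Suppose the support graph is a forest and $d$ is a balanced perturbation; we show $d = 0$ by induction on the number of edges in the support of $\omega$. If there are no edges, $d = 0$ trivially. Otherwise some tree component has a leaf, that is, a vertex of degree one, say $(x,0)$, whose unique edge is $\{(x,0),(y,1)\}$. The row sum condition at $x$ reads $0 = d(x, X) = d(x, y)$, because $d$ vanishes off $\support(\omega)$. (The case of a leaf $(y,1)$ is symmetric, using column sums.) Now remove this edge and consider the restricted function on the remaining edges. It is still supported on a forest and still has zero row and column sums, since $d(x,y) = 0$. So the induction hypothesis applies, giving $d = 0$ on the remaining edges as well.

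To make the induction clean, we would phrase it for arbitrary nonnegative functions $d$ supported on a forest edge set with zero marginals, rather than for couplings. We expect no real obstacle; the only point needing care is this bookkeeping, namely that removing the leaf edge preserves the zero-marginal property.
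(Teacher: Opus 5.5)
Your proof is correct. The paper does not prove this statement; it imports it from \cite[Theorem~4]{KW67}. Your argument is therefore a self-contained replacement for that citation rather than a variant of anything in the paper.

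It is the standard argument. An alternating $\pm 1$ perturbation along a cycle exhibits $\omega$ as the midpoint of two distinct couplings. In a forest, repeatedly peeling a leaf edge forces every balanced perturbation to vanish. The leaf-peeling step is the same device the paper uses in its proof of Proposition~\ref{proposition:affine}.

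Beyond the citation, your version has one real advantage: it never uses $\mu(X) = \nu(X) = 1$, so it works verbatim for sub-probability marginals of equal mass. That matters here. Inside the proof of Proposition~\ref{proposition:affine}, the paper applies this proposition in both directions to couplings such as $\omega_\varepsilon' \in \Creal(\mu_\varepsilon', \nu_\varepsilon')$ with $\mu_\varepsilon', \nu_\varepsilon' \in \Sreal(X)$. That case is outside the stated hypothesis $\mu, \nu \in \Dreal(X)$, and your proof covers it.

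One slip needs fixing. In your last paragraph you propose phrasing the induction for ``nonnegative functions $d$''. That is the wrong generalisation. A nonzero balanced perturbation has entries summing to zero, so it always takes both signs. A lemma about nonnegative $d$ with vanishing marginals is trivially true and does not apply to the $d$ you need. State it instead for arbitrary real-valued $d$ supported on a forest edge set with vanishing row and column sums. Your leaf-peeling induction never uses a sign condition, so it goes through unchanged.
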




\input{text/discontinuity}

\section{Algorithm}

\begin{proof}[Proof of Proposition~\ref{proposition:monotone}]
Let $X$, $Y \subseteq S \times S$, with $X \subseteq Y$.  Let $s$, $t \in S$.

Assume that $(s,t) \in A(X)$.  If $(s,t) \in \mathord{\sim} \setminus \robust$, then $(s,t) \in A(Y)$.  Otherwise, $(s,t) \in S^2_{?,\tau}$.  Then for all $\omega \in \Coptimal(\tau(s), \tau(t))$, we have $\support(\omega) \cap X \neq \varnothing$.  Since $X \subseteq Y$, for all $\omega \in \Coptimal(\tau(s), \tau(t))$, we have $\support(\omega) \cap Y \neq \varnothing$.  Thus, $(s,t) \in A(Y)$.

Assume that $(s,t) \in B(X)$.  If $(s,t) \in \robust \cup S^2_1$, then $(s,t) \in B(Y)$.  Otherwise, $(s,t) \in S^2_{?,\tau}$.  Then there exists $\omega \in \Coptimal(\tau(s), \tau(t))$ such that $\support(\omega) \subseteq X \subseteq Y$.  Thus, $(s,t) \in B(Y)$.
\end{proof}

\begin{proposition}
\label{proposition:alpha-discontinuous}
For all $s$, $t \in S$, if $(s, t) \in \alpha$, then for all policies $P \in \mathcal{P}$ that are optimal for $(s, t)$, $(s, t)$ reaches $\mathord{\sim} \setminus \robust$ in $\chain{P}$.
\end{proposition}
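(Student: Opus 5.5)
We argue by induction along the Kleene approximation of the least fixed point $\alpha$. Set $\alpha_0 = \varnothing$ and $\alpha_{k+1} = A(\alpha_k)$. Since $2^{S \times S}$ is finite and $A$ is monotone (Proposition~\ref{proposition:monotone}), the chain $(\alpha_k)_k$ is increasing and stabilises at $\alpha$ after finitely many steps. We prove by induction on $k$ that for every $(s, t) \in \alpha_k$ and every policy $P \in \mathcal{P}$ optimal for $(s, t)$, the pair $(s, t)$ reaches $\mathord{\sim} \setminus \robust$ in $\chain{P}$. The base case $k = 0$ is vacuous.

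For the inductive step, let $(s, t) \in \alpha_{k+1} = A(\alpha_k)$ and let $P$ be optimal for $(s, t)$. By definition of $A$ there are two cases.
\begin{enumerate}
\item If $(s, t) \in \mathord{\sim} \setminus \robust$, then $(s, t)$ reaches this set trivially, via the empty path.
\item Otherwise $(s, t) \in S^2_{?,\tau}$. Because $(s, t) \notin S^2_1$, the distribution $\omega = P(s, t)$ lies in $\Creal(\tau(s), \tau(t))$.
\end{enumerate}

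In the second case the key claim is that $\omega \in \Coptimal(\tau(s), \tau(t))$. To see this, first note that $P$ is optimal for $(s, t)$ and $(s, t) \notin S^2_1$, so
\[
\delta_\tau(s, t) = \reachone{P}(s, t) = \Gamma_P(\reachone{P})(s, t) = \omega \cdot \reachone{P}.
\]
By Proposition~\ref{proposition:reachable-optimal}, $P$ is also optimal for every $(u, v) \in \support(\omega)$, since each such pair is reachable from $(s, t)$. Hence $\reachone{P}(u, v) = \delta_\tau(u, v)$ on the support, which gives $\omega \cdot \reachone{P} = \omega \cdot \delta_\tau$. We conclude $\delta_\tau(s, t) = \omega \cdot \delta_\tau$, so $\omega$ is optimal.

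Now the definition of $A(\alpha_k)$ provides some $(u, v) \in \support(\omega) \cap \alpha_k$. By Proposition~\ref{proposition:reachable-optimal} again, $P$ is optimal for $(u, v)$. The induction hypothesis then says that $(u, v)$ reaches $\mathord{\sim} \setminus \robust$ in $\chain{P}$. Since $P(s, t)(u, v) > 0$, prepending the edge from $(s, t)$ to $(u, v)$ gives a path from $(s, t)$ to $\mathord{\sim} \setminus \robust$.

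The only delicate step is the claim that $P(s, t)$ is an optimal coupling. That claim is exactly what links the reachability semantics of $\chain{P}$ to the coupling-based set $\Coptimal$ used in $A$, and it rests on Proposition~\ref{proposition:reachable-optimal}. The remaining steps are bookkeeping. Note that the induction hypothesis must be stated for all policies optimal for the given pair, not for one fixed $P$, so that it can be applied at $(u, v)$ with the same $P$.
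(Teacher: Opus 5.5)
Your proof is correct and follows essentially the same argument as the paper. The paper proves the statement as a loop invariant of Algorithm~\ref{algorithm:continuity}, by induction on the iteration count with $D = A(D_{\mathrm{old}})$, and tacitly relies on Proposition~\ref{proposition:algorithm-alpha} to identify the final $D$ with $\alpha$; this is the same induction you run over the Kleene approximants $\alpha_k = A^k(\varnothing)$, and you also justify that $P(s,t)$ is an optimal coupling via Proposition~\ref{proposition:reachable-optimal}, a step the paper only asserts.
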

\begin{proof}
Let $s$, $t \in S$.
We show the following loop invariant of Algorithm~\ref{algorithm:continuity}: $\forall (s, t) \in D, \forall P \in \mathcal{P}$ that are optimal for $(s, t) : \exists$ path of length $\leq n$ from $(s, t)$ to $\mathord{\sim} \setminus \robust$ in $\chain{P}$, where $n$ is initialised to $0$ and incremented at the end of each loop.

Initially, $D = \mathord{\sim} \setminus \robust$.
For all $(s, t) \in \mathord{\sim} \setminus \robust$, for all $P \in \mathcal{P}$, $(s, t)$ can reach $\mathord{\sim} \setminus \robust$ in $0$ steps in $\chain{P}$.  Hence, the loop invariant holds before the loop.

Assume that the loop invariant holds before an iteration of the loop.  Then for all $(s, t) \in D_{\mathrm{old}}$, for all $P \in \mathcal{P}$ that are optimal for $(s, t)$, there exists a path of length $\leq n$ from $(s, t)$ to $\mathord{\sim} \setminus \robust$ in $\chain{P}$.

Assume that $(s, t)$ is added to $D$ on line~7.  Then $(s, t) \in (S^2_{?,\tau} \cap \mathrm{Pre}(D_{\mathrm{old}})) \setminus D_{\mathrm{old}}$ and for all $\omega \in \Coptimal(\tau(s), \tau(t))$, $\support(\omega) \cap D_{\mathrm{old}} \neq \varnothing$.  Let $P \in \mathcal{P}$ be an optimal policy for $(s, t)$.  Then $P(s, t) \in \Creal(\tau(s), \tau(t))$ is an optimal coupling.  Thus, there exists $(u, v) \in \support(P(s, t)) \cap D_{\mathrm{old}}$.  By Proposition~\ref{proposition:reachable-optimal}, $P$ is an optimal policy for $(u, v)$.  By the induction hypothesis, there exists a path of length $\leq n$ from $(u, v)$ to $\mathord{\sim} \setminus \robust$ in $\chain{P}$.  Therefore, there exists a path of length $\leq n + 1$ from $(s, t)$ to $\mathord{\sim} \setminus \robust$ in $\chain{P}$.

Hence, for all $(s, t) \in D$, for all $P \in \mathcal{P}$ that are optimal for $(s, t)$, there exists a path of length $\leq n + 1$ from $(s, t)$ to $\mathord{\sim} \setminus \robust$ in $\chain{P}$.  Thus, the loop invariant is maintained in each iteration of the loop.
\end{proof}



\begin{proof}[Proof of Proposition~\ref{proposition:beta-continuous}]
Let $s$, $t$, $u$, $v \in S$.  We prove the two implications.

Assume that $(s, t) \in \beta$.  For all $(u, v) \in \beta \cap S^2_{?,\tau}$, there exists $\omega_{uv,\beta} \in \Coptimal(\tau(u), \tau(v))$ such that $\support(\omega_{uv,\beta}) \subseteq \beta$.  By Proposition~\ref{proposition:maximal-support-coupling}, for all $(u, v) \in \robust$, there exists a maximal $\robust$-support coupling $\omega_{uv,\robust} \in \Creal(\tau(u), \tau(v))$.  Let $Q \in \mathcal{P}$ be an $S^2_\Delta$-closed policy. We define $P \in \mathcal{P}$ as follows.
\[
P(s, t) = \left \{
\begin{array}{ll}
\omega_{st,\beta}
& \hspace{0.5cm} \mbox{if $(s, t) \in \beta \cap S^2_{?,\tau}$}\\
\omega_{st,\robust}
& \hspace{0.5cm} \mbox{if $(s, t) \in \robust \setminus S^2_\Delta$}\\
Q(s, t)
& \hspace{0.5cm} \mbox{otherwise}.
\end{array}
\right .
\]
Observe that $P$ is optimal for $(s, t)$.  Thus, by  Theorem~\ref{theorem:reach-s1}, $(s, t)$ reaches $S^2_1$ with probability $\delta_\tau (s, t)$ in $\chain{P}$.  Moreover, for all $(u, v) \in \robust$, $(u, v)$ reaches $S^2_\Delta$ and $(s, t)$ cannot reach $S^2_{0,\tau} \setminus \robust$.  It follows from Proposition~\ref{proposition:ccc} that $(s, t)$ reaches $S^2_\Delta$ with probability $1 - \delta_\tau (s, t)$ in $\chain{P}$.  Thus, by \cref{theorem:main-continuity}, $\delta_{\_}(s, t)$ is continuous at $\tau$.

Assume that $\delta_{\_}(s, t)$ is continuous at $\tau$.  Then, by \cref{theorem:main-continuity}, there exists a policy $P \in \mathcal{P}$ such that $(s, t)$ reaches $S^2_\Delta$ with probability $1 - \delta_\tau(s, t)$ in $\chain{P}$.  By Corollary~\ref{corollary:optimal-closed-policy} there exists an $S^2_\Delta$-closed policy $Q$ that is optimal for $(s, t)$ such that $(s, t)$ reaches $S^2_\Delta$ with probability $1 - \delta_\tau(s, t)$ in $\chain{Q}$.
Towards a contradiction, assume that $(s, t) \not\in \beta$. It follows from Proposition~\ref{proposition:AB-opposites} that $Q(s, t) \cap \alpha \neq \varnothing$.  Then, by Proposition~\ref{proposition:alpha-discontinuous}, $(s, t)$ reaches some $(u, v) \in \mathord{\sim} \setminus \robust$ in $\chain{Q}$.  Note that $(u, v)$ reaches $S^2_\Delta$ with probability $< 1$ in $\chain{Q}$.  Hence, $(s, t)$ reaches $S^2_\Delta$ with probability $< 1 - \delta_\tau(s, t)$ in $\chain{Q}$, a contradiction.
\end{proof}

\begin{proof}[Proof of Proposition~\ref{proposition:algorithm-alpha}]
We show the following loop invariants of Algorithm~\ref{algorithm:continuity}: $D = A(D_{\mathrm{old}}) \subseteq \alpha$ and $D_{\mathrm{old}} \subseteq D$.

Initially, $D_{\mathrm{old}} = \varnothing$ and $D = \mathord{\sim} \setminus \robust$.  Thus, $D = A(D_{\mathrm{old}}) \subseteq \alpha$ and $D_{\mathrm{old}} \subseteq D$.  Hence, the loop invariant holds before the loop.

Assume that the loop invariant holds before an iteration of the loop, then $D = A(D_{\mathrm{old}}) \subseteq \alpha$ and $D_{\mathrm{old}} \subseteq D$.  During the loop, $D_{\mathrm{old}}' = D$ and $D' = D \cup \{\, (s, t) \in S^2_{?,\tau} \cap \mathrm{Pre}(D_{\mathrm{old}}) \mid \forall \omega \in \Coptimal(\tau(s), \tau(t)) : \support(\omega) \cap D_{\mathrm{old}}' \neq \varnothing \,\}$.  Clearly $D_{\mathrm{old}}' \subseteq D'$.  We need to show that $D' = A(D_{\mathrm{old}}')$.  We prove the two implications.
Let $(s, t) \in D'$.  We distinguish the following two cases.
\begin{itemize}
    \item Suppose $(s, t) \in D$.  Then $(s, t) \in A(D_{\mathrm{old}})$. Since $D_{\mathrm{old}} \subseteq D = D_{\mathrm{old}}'$ and, by Proposition~\ref{proposition:monotone}, $A$ is monotone, we have $(s, t) \in A(D_{\mathrm{old}}')$.
    \item Suppose $(s, t) \not\in D$.  Then $(s, t) \in S^2_{?,\tau} \cap \mathrm{Pre}(D_{\mathrm{old}})$ and for all $\omega \in \Coptimal(\tau(s), \tau(t))$, we have $\support(\omega) \cap D_{\mathrm{old}}' \neq \varnothing$.  Thus, $(s, t) \in A(D_{\mathrm{old}}')$.
\end{itemize}
To prove the other implication, let $(s, t) \in A(D_{\mathrm{old}}')$.  We distinguish three cases.
\begin{itemize}
    \item Suppose $(s, t) \in \mathord{\sim} \setminus \robust$.  Since $\mathord{\sim} \setminus \robust \subseteq D \subseteq D'$, we have $(s, t) \in D'$.
    \item Suppose $(s, t) \in S^2_{?,\tau} \cap D$.  Then $(s, t) \in D'$.
    \item Suppose $(s, t) \in S^2_{?,\tau} \setminus D$.  Then, for all $\omega \in \Coptimal(\tau(s), \tau(t))$, we have $\support(\omega) \cap D_{\mathrm{old}}' \neq \varnothing$.  It follows that $(s, t) \in \mathrm{Pre}(D_{\mathrm{old}})$.  Hence, $(s, t) \in D'$.
\end{itemize}
Thus, the loop invariant is maintained in each iteration of the loop.

The loop terminates when a fixed point is reached, therefore, by the loop invariant we know that $D = A(D) \subseteq \alpha$. Thus, $D$ is a fixed point of $A$ less than or equal to $\alpha$. Since $\alpha$ is the least fixed point of $A$, we can conclude that Algorithm~\ref{algorithm:continuity} computes $\alpha$.
\end{proof}

\begin{proof}[Proof of \cref{proposition:min-cost-max-flow}]
Recall the definition of the maximum flow problem \cite{FF62}.  Let $\mathbb{R}_{\geq 0} = \{ x \in \mathbb{R} \mid x \geq 0 \}$.  Given a capacitated directed graph $G = (V, E, c)$, where $c : E \to \mathbb{R}_{\geq 0}$ is a non-negative capacity function on edges, with distinguished source and sink vertices $s$ and $t$, respectively.  A flow $f : E \to \mathbb{R}$ is defined as a function on the edges satisfying the following constraints: 
\begin{enumerate}[(a)]
    \item $\forall e \in E: 0 \leq f(e) \leq c(e)$, the flow of an edge cannot exceed its capacity and
    \item $\forall v \in V \setminus \{ s, t \}$, we have \[\sum_{u \in V : (v, u) \in E} f(v, u) = \sum_{u \in V : (u, v) \in E} f(u, v),\] flow is conserved at all vertices except the source and sink, that is, the sum of flows entering a vertex must equal the sum of flows exiting that vertex.
\end{enumerate}
The value of the flow is defined as the net flow entering the sink, or equivalently, the net flow exiting the source: 
\[
|f| = \sum_{u \in V : (s, u) \in E} f(s, u) = \sum_{v \in V : (v, t) \in E} f(v, t).
\]
The problem is to maximize the value of the flow.

Let $s$, $t \in S$ and $X \subseteq S \times S$.  We construct the network as follows, with $s$ as the source and $t$ as the sink.  For each $x \in S$, we add two vertices $(x, 0)$ and $(x, 1)$, an edge from $s$ to $(x, 0)$ with capacity $\tau(s)(x)$, and an edge from $(x, 1)$ to $t$ with capacity $\tau(t)(x)$.  Let $x$, $y \in S$.  If $(x, y) \in X$, we add an edge from $(x, 0)$ to $(y, 1)$ with capacity $1$.  Note that $\forall f : E \to \mathbb{R} : |f| \leq 1$.  We then use Dinic's maximum flow algorithm, which maximizes the value of the flow and runs in $\mathcal{O}(|S|^2 |X|)$ time \cite{D06}.

Additionally, we introduce a non-negative cost function on edges $k : E \to \mathbb{R}_{\geq 0}$ and assign costs as follows:
\[
k(e) = \left \{
\begin{array}{ll}
\delta_\tau (x, y)
& \hspace{0.5cm} \mbox{if $e = ((x, 0), (y, 1))$}\\
0
& \hspace{0.5cm} \mbox{otherwise.}
\end{array}
\right .
\]
The cost of the flow is defined as the sum of the cost per unit flow over all edges:
\[
|k_f| = \sum_{e \in E} k(e)f(e) = \sum_{x \in X} \sum_{y \in X} k((x, 0), (y, 1)) f((x, 0), (y, 1)).
\]
Note that $\forall f : E \to \mathbb{R} : |k_f| \leq |f|$.
We then run the cycle-cancelling algorithm, which minimizes the cost of the flow, without changing the value of the flow \cite{K67} and runs in polynomial time \cite{GT89}.

It suffices to show that $\exists f : E \to \mathbb{R} : |f| = 1$ and $|k_f| = \delta_\tau(s, t) \Longleftrightarrow \exists \omega \in \Coptimal(\tau(s), \tau(t)) : \support(\omega) \subseteq X$. We prove the two implications.

Assume that there exists a flow $f : E \to \mathbb{R}$ such that $|f| = 1$ and $|k_f| = \delta_\tau(s, t)$.  Since the sum of the capacities of the edges entering the sink $t$ is equal to $\tau(t)(S) = 1$, by constraint (a), $\forall x \in S : f((x, 1), t) = \tau(t)(x)$.  Then, constraint (b) specifies that $\forall x \in S : \sum_{y \in S} f((y, 0), (x, 1)) = f((x, 1), t) = \tau(t)(x)$.  Similarly, $\forall x \in S : \sum_{y \in S} f((x, 0), (y, 1)) = f(s, (x, 0)) = \tau(s)(x)$.  Therefore, the resulting flow is a coupling $\omega \in \Creal(\tau(s), \tau(t))$.  By construction, $((x, 0), (y, 1)) \in E$ if and only if $(x, y) \in X$, thus, $\support(\omega) \subseteq X$.  Since $|k_f| = \delta_\tau(s, t)$, by the definition of the cost of the flow, we know that $\omega$ is an optimal coupling.

To prove the other implication, assume that there exists $\omega \in \Coptimal(\tau(s), \tau(t))$ such that $\support(\omega) \subseteq X$.  Then, $\forall x \in S : \omega(x, S) = \tau(s)(x) \wedge \omega(S, x) = \tau(t)(x)$.  For each $x$, $y \in S$, if $\omega (x, y) > 0$ then $(x, y) \in X$ and we can set $f((x, 0), (y, 1)) = \omega (x, y)$ and $k((x, 0), (y, 1)) = \delta_\tau (x, y)$.  This satisfies constraint (a), because $c((x, 0), (y, 1)) = 1$ and $\omega (x, y) \leq \tau(s)(x) \leq 1$.  In order to comply with constraint (b) as well, for each $x \in S$, we set $f(s, (x, 0)) = \tau(s)(x)$ and $f((x, 1), t) = \tau(t)(x)$.  The flow on all other edges is zero.  It follows that $|f| = 1$.  Moreover, since $\omega$ is an optimal coupling, we know that $\omega \cdot \delta_\tau = \delta_\tau (s, t)$.  As a result, $|k_f| = \sum_{e \in E} k(e)f(e) = \delta_\tau (s, t)$.
\end{proof}

\section{Experiments}
\label{appendix:experiments}

Below is a description of jpf-probabilistic's randomized algorithms utilized in our experiments.
\begin{itemize}
    \item Fair Biased Coin: makes a fair coin from a biased coin, where $p$ denotes the probability by which the biased coin tosses heads.  We check the probability that the coin toss results in heads. \cite{vN51}
    \item Majority Element: a Monte Carlo algorithm that determines whether an integer array has a majority element (appears more than half of the time in the array).  The parameter $s$ denotes the size of the given array, $t$ denotes the number of trials, and $m$ denotes the amount of times that the majority element occurs in the array.  We check the probability that the algorithm erroneously reports that the array does not have a majority element. \cite{MR95}
    \item Pollards Integer Factorization: finds a factor of an integer $i$. We check the probability that the algorithm returns $i$, when $i$ is not prime. \cite{P75}
    \item Queens: attempts to place a queen on each row of an $n \times n$ chess board such that no queen can attack another.  We check the probability of success. \cite{B10}
\end{itemize}

%% file: text/discontinuity.tex
\section{Discontinuity}
\label{appendix:discontinuity}

\begin{definition}
For each $a \in L$, the set $S_a$ is defined by
\[
S_a = \{\, s \in S \mid \ell(s) = a \,\}.
\]
\end{definition}

\begin{proposition}
\label{proposition:bisimilar-same-label}
For all $s$, $t \in S$, if $s \sim t$ then $\ell(\support(\tau(s))) = \ell(\support(\tau(t)))$ and for all $a \in L$, we have $\tau(s)(S_a) = \tau(t)(S_a)$.
\end{proposition}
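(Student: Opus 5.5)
The plan is to use the standard characterisation of bisimilarity in terms of equivalence classes, rather than the coupling-based Definition~\ref{definition:probabilistic-bisimilary}. By \cite[Theorem~4.6]{JL91}, the relation $\sim$ is itself a bisimulation in the sense of Larsen and Skou. Hence, for all $s \sim t$ and all $\sim$-equivalence classes $C$, we have $\tau(s)(C) = \tau(t)(C)$. The key observation is that $\sim$ refines the labelling: every $\sim$-class consists of states with a single label, because related states have equal labels. Therefore each $S_a$ is a (possibly empty) finite disjoint union of $\sim$-equivalence classes.

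The second claim then follows by summing. Fix $a \in L$ and write $S_a = C_1 \cup \dots \cup C_k$ with the $C_i$ distinct $\sim$-classes. Then
\[
\tau(s)(S_a) = \sum_{i=1}^{k} \tau(s)(C_i) = \sum_{i=1}^{k} \tau(t)(C_i) = \tau(t)(S_a).
\]
If one prefers to stay within the coupling formulation, take $\omega \in \Creal(\tau(s), \tau(t))$ with $\support(\omega) \subseteq \mathord{\sim}$. Since $(u,v) \in \support(\omega)$ implies $\ell(u) = \ell(v)$, the marginal conditions give
\[
\tau(s)(S_a) = \omega(S_a, S) = \omega(S_a, S_a) = \omega(S, S_a) = \tau(t)(S_a).
\]
This version avoids citing \cite{JL91} at all, so I would use it as the main argument.

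For the first claim, observe that $a \in \ell(\support(\tau(s)))$ if and only if $\tau(s)(S_a) > 0$. Indeed, $\tau(s)(S_a) > 0$ exactly when some $x$ with $\ell(x) = a$ has $\tau(s)(x) > 0$. By the second claim, $\tau(s)(S_a) > 0$ if and only if $\tau(t)(S_a) > 0$, which gives $\ell(\support(\tau(s))) = \ell(\support(\tau(t)))$.

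There is no real obstacle here. The only point needing care is the identity $\omega(S_a, S) = \omega(S_a, S_a)$: the mass $\omega(S_a, S \setminus S_a)$ vanishes because every pair in the support of $\omega$ lies in $\sim$ and therefore has equal labels.
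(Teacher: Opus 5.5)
Your proposal is correct and follows essentially the same route as the paper. The paper also takes a coupling $\omega$ supported in a bisimulation $R$ witnessing $s \sim t$, and derives $\tau(s)(S_a) = \omega(S_a, S) = \omega(S_a, S_a) = \omega(S, S_a) = \tau(t)(S_a)$ from the equal-label property. It then gets the label-set equality from positivity of $\tau(s)(S_a)$.
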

\begin{proof}
Let $s$, $t \in S$.  Assume that $s \sim t$.  Then there exists a bisimulation $R$ with $(s, t) \in R$.  Therefore, there exists $\omega \in \Creal(\tau(s), \tau(t))$ such that $\support(\omega) \subseteq R$.  Since $R$ is a bisimulation, for all $(u, v) \in R$, $\ell(u) = \ell(v)$.  Hence, 
\begin{equation}
\label{equation:bisimulation-same-label}
\forall (u, v) \in \support(\omega) : \ell(u) = \ell(v).
\end{equation}

For all $a \in L$,
\begin{align*}
\tau(s)(S_a) 
& = \omega(S_a, S)
\proofcomment{$\omega \in \Creal(\tau(s), \tau(t))$}\\
& = \omega(S_a, S_a)
\proofcomment{(\ref{equation:bisimulation-same-label})}\\
& = \omega(S, S_a)
\proofcomment{(\ref{equation:bisimulation-same-label})}\\
& = \tau(t)(S_a)
\proofcomment{$\omega \in \Creal(\tau(s), \tau(t))$}
\end{align*}

Assume that $a \in \ell(\support(\tau(s)))$.   Then $\tau(t)(S_a) = \tau(s)(S_a) > 0$.  As a result, $a \in \ell(\support(\tau(t)))$.  Hence, $\ell(\support(\tau(s))) \subseteq \ell(\support(\tau(t)))$.  The other inclusion can be proved by a symmetric argument.
\end{proof}

From Proposition~\ref{proposition:bisimilar-same-label} we can deduce that there exists a function $\mathrm{label}_\tau : S \to L$ such that for all $s$, $t \in S$,
\begin{itemize}
\item 
$\mathrm{label}_\tau(s) \in \ell(\support(\tau(s)))$ and
\item 
if $s \sim t$ then $\mathrm{label}_\tau(s) = \mathrm{label}_\tau(t)$.
\end{itemize}
Since, by assumption, $|\ell(S)| \geq 2$, there exists a function $\overline{\mathrm{label}}_\tau : S \to L$ such that for all $s \in S$,
\begin{itemize}
\item 
$\overline{\mathrm{label}}_\tau(s) \in \ell(S)$,
\item 
$\overline{\mathrm{label}}_\tau(s) \not= \mathrm{label}_\tau(s)$, and
\end{itemize}

There exist functions $\mathrm{minus}_\tau : S \to S$ and $\mathrm{plus}_\tau : S \to S$ such that for all $s \in S$,
\begin{itemize}
\item 
$\mathrm{minus}_\tau(s) \in \support(\tau(s))$,
\item 
$\ell(\mathrm{minus}_\tau(s)) =  \mathrm{label}_\tau(s)$, and
\item 
$\ell(\mathrm{plus}_\tau(s)) =  \overline{\mathrm{label}}_\tau(s)$.
\end{itemize}

Let the function $\mathrm{index} : S \to \{ 1, \ldots, |S| \}$ be onto.  Then $\mathrm{index}$ must also be injective and, hence, bijective.  Note that for all $s \in S$, $\tau(s)(\mathrm{minus}_\tau(s)) > 0$.  We define 
\[
E_\tau = \left (0, \min_{s \in S} \frac{\tau(s)(\mathrm{minus}_\tau(s))}{\mathrm{index}(s)} \right ].
\]

\begin{definition}
For all $\varepsilon \in E_\tau$, the function $\tau_\varepsilon : S \to \Dreal(S)$ is defined by
\[
\tau_\varepsilon(s)(t) = \left \{
\begin{array}{ll}
\tau(s)(t) - \mathrm{index}(s) \, \varepsilon & \hspace{0.5cm} \mbox{if $t = \mathrm{minus}_\tau(s)$}\\
\tau(s)(t) + \mathrm{index}(s) \, \varepsilon & \hspace{0.5cm} \mbox{if $t = \mathrm{plus}_\tau(s)$}\\
\tau(s)(t) & \hspace{0.5cm} \mbox{otherwise.}
\end{array}
\right .
\]
\end{definition}

\begin{proposition}
\label{proposition:tau-affine}
For all $\varepsilon \in E_\tau$, and $s \in S$, there exists $\kappa_s : S \to \mathbb{Z}$ such that $\tau_\varepsilon(s) = \tau(s) + \varepsilon \, \kappa_s$ and $\kappa_s(S) = 0$.
\end{proposition}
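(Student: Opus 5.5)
The statement is Proposition~\ref{proposition:tau-affine}: for all $\varepsilon \in E_\tau$ and $s \in S$ there is $\kappa_s : S \to \mathbb{Z}$ with $\tau_\varepsilon(s) = \tau(s) + \varepsilon\,\kappa_s$ and $\kappa_s(S) = 0$. The plan is to write $\kappa_s$ down explicitly, read off the affine form pointwise from the definition of $\tau_\varepsilon$, and then check that the entries sum to zero. Crucially, $\kappa_s$ will depend only on $s$ and not on $\varepsilon$; this uniformity is what the later limit-policy argument relies on.

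Fix $s \in S$ and define $\kappa_s(u) = -\mathrm{index}(s)$ if $u = \mathrm{minus}_\tau(s)$, $\kappa_s(u) = \mathrm{index}(s)$ if $u = \mathrm{plus}_\tau(s)$, and $\kappa_s(u) = 0$ otherwise. This is well defined, with exactly one nonzero pair of values, because $\mathrm{minus}_\tau(s) \neq \mathrm{plus}_\tau(s)$: their labels are $\mathrm{label}_\tau(s)$ and $\overline{\mathrm{label}}_\tau(s)$, which differ by construction. Since $\mathrm{index}(s) \in \{1, \ldots, |S|\}$, the values of $\kappa_s$ are integers.

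We then compare $\tau_\varepsilon(s)(u)$ with $\tau(s)(u) + \varepsilon\,\kappa_s(u)$ in the three cases of the definition of $\tau_\varepsilon$; in each case the two expressions agree term by term. Finally, $\kappa_s(S) = -\mathrm{index}(s) + \mathrm{index}(s) = 0$.

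There is no real obstacle here. The only point needing care is the distinctness of $\mathrm{minus}_\tau(s)$ and $\mathrm{plus}_\tau(s)$, since otherwise the case split in the definition of $\tau_\varepsilon$ would be ambiguous. One could optionally note that $\tau_\varepsilon(s)$ is indeed a distribution: nonnegativity follows from $\varepsilon \in E_\tau$, and mass conservation follows from $\kappa_s(S) = 0$. This check is not part of the claim, however.
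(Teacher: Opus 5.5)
Your proposal is correct and matches the paper's proof. The paper defines $\kappa_s$ exactly as you do ($-\mathrm{index}(s)$ at $\mathrm{minus}_\tau(s)$, $\mathrm{index}(s)$ at $\mathrm{plus}_\tau(s)$, and $0$ elsewhere) and leaves the verification to the reader. Your check that $\mathrm{minus}_\tau(s) \neq \mathrm{plus}_\tau(s)$, because their labels differ, fills in a detail the paper leaves implicit.
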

\begin{proof}
Let $\varepsilon \in E_\tau$ and $s \in S$.  We define the function $\kappa_s : S \to \mathbb{Z}$ by
\[
\kappa_s(t) = \left \{
\begin{array}{ll}
-\mathrm{index}(s) & \hspace{0.5cm} \mbox{if $t = \mathrm{minus}_\tau(s)$}\\
\mathrm{index}(s) & \hspace{0.5cm} \mbox{if $t = \mathrm{plus}_\tau(s)$}\\
0 & \hspace{0.5cm} \mbox{otherwise.}
\end{array}
\right .
\]
We leave it to the reader to verify that $\kappa_s$ has the desired properties.
\end{proof}


\begin{proposition}
\label{proposition:coupling-label-sets}
For all $\mu$, $\nu \in \Dreal(S)$, $\omega \in \Creal(\mu, \nu)$, and $a \in L$, $\omega(S^2_1) \geq | \mu(S_a) - \nu(S_a) |$.
\end{proposition}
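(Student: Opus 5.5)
The claim is that for every coupling $\omega \in \Creal(\mu, \nu)$ and every label $a \in L$, we have $\omega(S^2_1) \geq |\mu(S_a) - \nu(S_a)|$. The proof is a direct computation with marginals. The idea is to split the mass of $\omega$ according to whether each coordinate lies in $S_a$ or not. Recall that $S^2_1$ consists of the pairs with different labels. Every pair in $S_a \times (S \setminus S_a)$ or in $(S \setminus S_a) \times S_a$ therefore lies in $S^2_1$. It suffices to bound $\mu(S_a) - \nu(S_a)$ and $\nu(S_a) - \mu(S_a)$ from above by $\omega(S^2_1)$.

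First, I use the marginal conditions of $\omega \in \Creal(\mu, \nu)$ to write
\[
\mu(S_a) = \omega(S_a, S) = \omega(S_a, S_a) + \omega(S_a, S \setminus S_a)
\]
and
\[
\nu(S_a) = \omega(S, S_a) = \omega(S_a, S_a) + \omega(S \setminus S_a, S_a).
\]
Subtracting, the common term $\omega(S_a, S_a)$ cancels:
\[
\mu(S_a) - \nu(S_a) = \omega(S_a, S \setminus S_a) - \omega(S \setminus S_a, S_a).
\]

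Next, both terms on the right are nonnegative. Hence $|\mu(S_a) - \nu(S_a)|$ is at most their maximum, and therefore at most their sum $\omega\bigl(S_a \times (S \setminus S_a)\bigr) + \omega\bigl((S \setminus S_a) \times S_a\bigr)$. These two sets are disjoint and both contained in $S^2_1$, so the sum is at most $\omega(S^2_1)$. Here I use that $\omega$ is nonnegative, so $\omega$ is monotone on subsets of $S \times S$.

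There is no real obstacle. The only care needed is the bookkeeping: the marginals must be identified correctly as $\omega(x, S) = \mu(x)$ and $\omega(S, x) = \nu(x)$, summed over $S_a$. One must also note that the off-diagonal blocks really consist of pairs with distinct labels, which holds because membership in $S_a$ is exactly having label $a$.
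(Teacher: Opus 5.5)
Your proof is correct, and it differs from the paper's only in which blocks of $\omega$ it tracks. The paper assumes without loss of generality that $\mu(S_a) \geq \nu(S_a)$ and writes $\omega(S^2_1) = 1 - \omega(S_a \times S_a) - \omega\bigl(\bigcup_{b \neq a} S_b \times S_b\bigr)$. It then bounds the two same-label blocks from above by $\nu(S_a)$ and $1 - \mu(S_a)$ using the marginals. So it works with the complement of $S^2_1$, and it uses that $\mu$ and $\omega$ have total mass $1$.

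You instead bound $\omega(S^2_1)$ from below directly by the two cross blocks $S_a \times (S \setminus S_a)$ and $(S \setminus S_a) \times S_a$, whose difference is exactly $\mu(S_a) - \nu(S_a)$. This buys you three things. There is no case split. Only the marginal equations are used, so the argument applies verbatim to subprobability couplings with matching marginals. It also shows that the discrepancy is already carried by pairs in which exactly one component has label $a$.

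The paper's route is equally short and makes explicit which mass, namely the same-label blocks, is being excluded. Either version suffices for Corollary~\ref{corollary:bad-pairs}.
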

\begin{proof}
Let  $\mu$, $\nu \in \Dreal(S)$, $\omega \in \Creal(\mu, \nu)$, and $a \in L$.  Without loss of generality, assume that $\mu(S_a) \geq \nu(S_a)$.  Then
\begin{align*}
\omega(S_a \times S_a)
& \leq \omega(S \times S_a)\\
& = \nu(S_a)
\proofcomment{$\omega \in \Creal(\mu, \nu)$}
\end{align*}
and
\begin{align*}
\omega  \left (\bigcup_{b \not= a} S_b \times S_b \right )
& \leq \omega  \left (\bigcup_{b \not= a} S_b \times S \right )\\
& = \mu \left (\bigcup_{b \not= a} S_b \right )
\proofcomment{$\omega \in \Creal(\mu, \nu)$}\\
& = 1 - \mu(S_a).
\end{align*}
Therefore,
\begin{align*}
\omega(S^2_1) 
& = \omega \left ((S \times S) \setminus \bigcup_{b \in L} S_b \times S_b \right )\\
& = 1 - \omega(S_a \times S_a) - \omega \left ( \bigcup_{b \not= a} S_b \times S_b \right )\\
& \geq 1 - \nu(S_a) - (1 - \mu(S_a))\\
& = \mu(S_a) - \nu(S_a)\\
& = | \mu(S_a) - \nu(S_a) |
\proofcomment{$\mu(S_a) \geq \nu(S_a)$} \qedhere
\end{align*}
\end{proof}

\begin{corollary}
\label{corollary:bad-pairs}
For all $s$, $t \in S$, $\varepsilon \in E_\tau$, and $\omega \in \Creal(\tau_\varepsilon(s), \tau_\varepsilon(t))$,  if $s \sim t$ and $s \not = t$ then $\omega(S^2_1) \geq \varepsilon$.
\end{corollary}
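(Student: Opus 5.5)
The plan is to reduce the claim to Proposition~\ref{proposition:coupling-label-sets} with a suitable label $a$. Take $a = \overline{\mathrm{label}}_\tau(s)$. Since $s \sim t$, the construction of the labelling functions gives $\mathrm{label}_\tau(s) = \mathrm{label}_\tau(t)$. I would pin down the complement label as depending only on the $\sim$-class, so that $\overline{\mathrm{label}}_\tau(s) = \overline{\mathrm{label}}_\tau(t) = a$ as well. If the existing definition does not force this, I would fix this choice of $\overline{\mathrm{label}}_\tau$ up front; the definition only requires membership in $\ell(S)$ and difference from $\mathrm{label}_\tau$, so a class-invariant choice is always possible. Then both $\mathrm{plus}_\tau(s)$ and $\mathrm{plus}_\tau(t)$ carry label $a$, and both $\mathrm{minus}_\tau(s)$ and $\mathrm{minus}_\tau(t)$ carry label $\mathrm{label}_\tau(s) \neq a$.

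Next I compute the $a$-mass of the perturbed distributions. By the definition of $\tau_\varepsilon$ (equivalently, by Proposition~\ref{proposition:tau-affine}), the mass $\mathrm{index}(s)\,\varepsilon$ is moved from a successor of label $\mathrm{label}_\tau(s)$ to a state of label $a$. This gives
\[
\tau_\varepsilon(s)(S_a) = \tau(s)(S_a) + \mathrm{index}(s)\,\varepsilon,
\]
and likewise for $t$. One should check that $\mathrm{plus}_\tau(s) \neq \mathrm{minus}_\tau(s)$, which holds because their labels differ.

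By Proposition~\ref{proposition:bisimilar-same-label}, $\tau(s)(S_a) = \tau(t)(S_a)$. Hence
\[
|\tau_\varepsilon(s)(S_a) - \tau_\varepsilon(t)(S_a)| = |\mathrm{index}(s) - \mathrm{index}(t)|\,\varepsilon .
\]
Because $\mathrm{index}$ is injective and integer-valued and $s \neq t$, this is at least $\varepsilon$. Applying Proposition~\ref{proposition:coupling-label-sets} to $\omega \in \Creal(\tau_\varepsilon(s), \tau_\varepsilon(t))$ then yields $\omega(S^2_1) \geq \varepsilon$.

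The main obstacle is the case where $\overline{\mathrm{label}}_\tau(s) \neq \overline{\mathrm{label}}_\tau(t)$, if the definition as given allows it. I would try to avoid this by making the choice class-invariant, as above. As a fallback, let $b = \mathrm{label}_\tau(s) = \mathrm{label}_\tau(t)$ and measure $S_b$ instead. Each of $s$ and $t$ loses exactly $\mathrm{index}\cdot\varepsilon$ from $S_b$, since minus has label $b$ and plus does not. So $\tau_\varepsilon(s)(S_b) - \tau_\varepsilon(t)(S_b) = (\mathrm{index}(t) - \mathrm{index}(s))\,\varepsilon$. This works regardless of the plus labels and is the cleaner route, so I would use $a = \mathrm{label}_\tau(s)$ in the final proof.
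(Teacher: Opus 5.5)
Your final route is exactly the paper's proof: take $a = \mathrm{label}_\tau(s) = \mathrm{label}_\tau(t)$, note that each of $s$ and $t$ loses exactly $\mathrm{index}\cdot\varepsilon$ of $S_a$-mass, cancel the unperturbed masses via Proposition~\ref{proposition:bisimilar-same-label}, and conclude with Proposition~\ref{proposition:coupling-label-sets}. You can drop the initial detour via $\overline{\mathrm{label}}_\tau$. As you noted, it would need a class-invariance assumption that the paper's construction does not impose, and the $\mathrm{label}_\tau$ route avoids that assumption altogether.
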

\begin{proof}
Let $s$, $t \in S$ and $\varepsilon \in E_\tau$.   Assume that $s \sim t$ and $s \not= t$.  Let $a = \mathrm{label}_\tau(s)$.  Then
\begin{align*}
| \tau_\varepsilon(s)(S_a) - \tau_\varepsilon(t)(S_a) |
& = | (\tau(s)(S_a) - \mathrm{index}(s) \varepsilon) - (\tau(t)(S_a) - \mathrm{index}(t) \varepsilon) |\\
& = | \mathrm{index}(t) \varepsilon - \mathrm{index}(s) \varepsilon |
\proofcomment{Proposition~\ref{proposition:bisimilar-same-label}}\\
& \geq \varepsilon
\proofcomment{$\mathrm{index}(s)  \not= \mathrm{index}(t)$ since $s \not= t$}
\end{align*}
Let $\omega \in \Creal(\tau_\varepsilon(s), \tau_\varepsilon(t))$.  From the above and  Proposition~\ref{proposition:coupling-label-sets} we can conclude $\omega(S^2_1) \geq \varepsilon$.
\end{proof}


\begin{proposition}
\label{proposition:affine}
Let $\mu$, $\nu \in \Sreal(X)$ with $\mu(X) = \nu(X)$.  Let $\kappa$, $\lambda : X \to \mathbb{Z}$ with $\kappa(X) = \lambda(X)$.  Let $E \subseteq (0, 1]$ be an infinite set with $0 \in \overline{E}$.  Assume that for all $\varepsilon \in E$, $\mu_\varepsilon = \mu + \varepsilon \, \kappa \in \Sreal(X)$, $\nu_\varepsilon = \nu + \varepsilon \, \lambda \in \Sreal(X)$, $\omega_\varepsilon \in V(\Creal(\mu_\varepsilon, \nu_\varepsilon))$ and all $\omega_\varepsilon$ have the same support graph.  Then there exists $\omega \in V(\Creal(\mu, \nu))$ and $\pi : X \times X \to \mathbb{Z}$ such that for all $\varepsilon \in E$, $\omega_\varepsilon = \omega + \varepsilon \, \pi$.
\end{proposition}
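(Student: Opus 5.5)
The plan rests on the combinatorial description of vertices of the transportation polytope from Proposition~\ref{proposition:forest}. Let $G = (V, F)$ be the common support graph of the couplings $\omega_\varepsilon$, for $\varepsilon \in E$. Since each $\omega_\varepsilon$ is a vertex, $G$ is a forest. For a forest, the edge weights are uniquely determined by the vertex sums: the weight on a leaf edge equals the marginal at the leaf. Removing that edge and subtracting its weight from the marginal at the other endpoint leaves a smaller forest, and we repeat. I would make this explicit. Fix a leaf-stripping order of the edges of $G$ that is independent of $\varepsilon$, which is possible because the graph is the same for all $\varepsilon$. Then show by induction along this order that the weight of each edge $\{(x,0),(y,1)\}$ has the form
\[
\omega_\varepsilon(x, y) = a_{xy} + \varepsilon\, b_{xy},
\]
where $a_{xy}$ is a signed sum of values of $\mu$ and $\nu$ and $b_{xy}$ is the corresponding signed sum of values of $\kappa$ and $\lambda$. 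In particular $b_{xy} \in \mathbb{Z}$.

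The induction step works as follows. When the current leaf is $(x,0)$, the weight is $\mu_\varepsilon(x)$ minus the weights of the already removed edges at $(x,0)$. That is affine in $\varepsilon$ with integer slope by the induction hypothesis, and similarly for leaves of the form $(y,1)$. The point is that the same signed combination of marginal entries works for every $\varepsilon$, since it depends only on $G$ and the chosen order.

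Next I define $\omega(x,y) = a_{xy}$ and $\pi(x,y) = b_{xy}$ on the edges of $G$, and set both to zero off $G$. By construction $\omega_\varepsilon = \omega + \varepsilon\,\pi$ for all $\varepsilon \in E$, and $\pi$ is integer valued. It remains to check that $\omega \in V(\Creal(\mu,\nu))$.

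For nonnegativity and the marginals, take $\varepsilon \to 0$ along $E$, which is possible since $0 \in \overline{E}$ and $E$ is infinite. Each $\omega_\varepsilon$ is nonnegative and has marginals $\mu + \varepsilon\kappa$ and $\nu + \varepsilon\lambda$. Passing to the limit, $\omega \ge 0$ with marginals $\mu$ and $\nu$.

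The notion of coupling in the paper is stated for probability distributions. Here $\mu$ and $\nu$ are subprobabilities of equal mass, so I will interpret $\Creal(\mu,\nu)$ with that total mass; the condition $\kappa(X) = \lambda(X)$ ensures $\mu_\varepsilon(X) = \nu_\varepsilon(X)$, so this is consistent. For the vertex property, the support of $\omega$ is contained in $F$, since edges can only disappear in the limit. A subgraph of a forest is a forest, so Proposition~\ref{proposition:forest} makes $\omega$ a vertex. If that proposition is only available for probability distributions, the same argument adapts to equal-mass subprobabilities: nonuniqueness of a coupling on a given support requires an alternating cycle.

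I expect the main obstacle to be this last point, namely applying the forest characterization to subprobability marginals and the possibly degenerate limit. The fallback is a direct argument. Suppose $\omega = \frac12(\omega' + \omega'')$ with $\omega', \omega'' \in \Creal(\mu,\nu)$. Then both have support inside that of $\omega$, hence inside the forest $F$. Their difference is supported on a forest and has zero row and column sums, so leaf stripping forces it to vanish and $\omega' = \omega''$.
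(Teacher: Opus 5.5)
Your proposal is correct and follows essentially the same route as the paper's proof: both strip leaf edges from the common forest support graph to show that each edge weight is affine in $\varepsilon$ with integer slope, and both conclude that $\omega$ is a vertex because its support lies in a subforest of $G$. The difference is organizational. The paper recurses on reduced marginals $\mu', \nu'$, using a limit argument to get $\mu(x) \le \nu(y)$, and checks the marginals of $\omega$ algebraically. You fix the stripping order once and get nonnegativity and the marginals of $\omega$ by letting $\varepsilon \to 0$ along $E$, which also handles edges of $G$ that vanish in the limit.
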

\begin{proof}
Let $\mu$, $\nu \in \Sreal(X)$ with $\mu(X) = \nu(X)$.  Let $\kappa$, $\lambda : X \to \mathbb{Z}$ with $\kappa(X) = \lambda(X)$.  Let $E \subseteq (0, 1]$ be an infinite set with $0 \in \overline{E}$.  Assume that for all $\varepsilon \in E$, $\mu_\varepsilon = \mu + \varepsilon \, \kappa \in \Sreal(X)$, $\nu_\varepsilon = \nu + \varepsilon \, \lambda \in \Sreal(X)$, $\omega_\varepsilon \in V(\Creal(\mu_\varepsilon, \nu_\varepsilon))$ and all $\omega_\varepsilon$ have the same support graph $G$. We prove this proposition by induction on $G$.

In the base case, $G$ is empty.  Hence, $\omega_\varepsilon(x, y) = 0$ for all $x$, $y \in X$ and $\varepsilon \in E$.  Therefore, 
\begin{align*}
0 & = \omega_\varepsilon(x, X)\\
& = \mu_\varepsilon(x)
\proofcomment{$\omega_\varepsilon \in V(\Creal(\mu_\varepsilon, \nu_\varepsilon))$}\\
&  = \mu(x) + \varepsilon \, \kappa(x)
\end{align*}
for all $x \in X$ and $\varepsilon \in E$.  Since $E$ has more than one element, we can conclude that $\mu(x) = 0$ and $\kappa(x) = 0$ for all $x \in X$.  Similarly, we can deduce that $\nu(y) = 0$ and $\lambda(y) = 0$ for all $y \in X$.  Let $\omega(x, y) = 0$ and $\pi(x, y) = 0$ for all $x$, $y \in X$.  Then $\Creal(\mu, \nu) = \{ \omega \}$ and, hence, $\omega \in V(\Creal(\mu, \nu))$, and $\omega_\varepsilon = \omega + \varepsilon \, \pi$ for all $\varepsilon \in E$.

In the inductive case, $G$ is nonempty.  Since $\omega_\varepsilon$ is assumed to be a vertex, $G$ is a forest by Proposition~\ref{proposition:forest}.  Assume, without loss of generality, that $\{ (x, 0), (y, 1) \}$ is an edge of $G$ and that $(x, 0)$ is a leaf.  

\begin{center}
\begin{tikzpicture}[xscale=1.6,yscale=0.8]
\node[ellipse, draw, fit={(0, 1.2) (2, 0) (0, -1.2) (2.2, -1.2)}, fill=ACMBlue!30] {};

\vertex (x) at (0, 1) {$x$};
\vertex (y) at (2, 0) {$y$};

\node at (0, 0.5) {$\mu(x)$};
\node at (2, -0.5) {$\nu(y)$};

\draw[-] (x) to (y);
\end{tikzpicture}
\end{center}

Then 
\begin{align*}
\omega_\varepsilon(x, y) 
& = \omega_\varepsilon(x, X)
\proofcomment{$(x, 0)$ is a leaf of $G$}\\
& = \mu_\varepsilon(x)
\proofcomment{$\omega_\varepsilon \in V(\Creal(\mu_\varepsilon, \nu_\varepsilon))$}
\end{align*}
and 
\begin{align*}
\mu_\varepsilon(x) 
& = \omega_\varepsilon(x, y)\\
&  \leq \omega_\varepsilon(X, y)\\
& = \nu_\varepsilon(y)
\proofcomment{$\omega_\varepsilon \in V(\Creal(\mu_\varepsilon, \nu_\varepsilon))$}
\end{align*}
for all $\varepsilon \in E$.  Hence, $\mu(x) + \varepsilon \, \kappa(x) \leq \nu(y) + \varepsilon \, \lambda(y)$ for all $\varepsilon \in E$.  Therefore, $\mu(x) \leq \nu(y) + \varepsilon \, (\lambda(y) - \kappa(x))$ for all $\varepsilon \in E$.  By assumption, $0 \in \overline{E}$.  Since multiplication (by $\lambda(y) - \kappa(x$)) is continuous and the continuous image of the closure of a set is a subset of the closure of the continuous image of the set (see, for example, \cite[Proposition~1.4.1(v)]{E89}), we can conclude that $0 \in \overline{\{\, \varepsilon \, (\lambda(y) - \kappa(x)) \mid \varepsilon \in E \,\}}$.  Hence, $\mu(x) \leq \nu(y)$.  Let
\begin{align*}
\mu'(z) & = \left \{
\begin{array}{ll}
0 & \hspace{0.5cm} \mbox{if $z = x$}\\
\mu(z) & \hspace{0.5cm} \mbox{otherwise}
\end{array}
\right .\\
\nu'(z) & = \left \{
\begin{array}{ll}
\nu(y) - \mu(x) & \hspace{0.5cm} \mbox{if $z = y$}\\
\nu(z) & \hspace{0.5cm} \mbox{otherwise}
\end{array}
\right .\\
\kappa'(z) & = \left \{
\begin{array}{ll}
0 & \hspace{0.5cm} \mbox{if $z = x$}\\
\kappa(z) & \hspace{0.5cm} \mbox{otherwise}
\end{array}
\right .\\
\lambda'(z) & = \left \{
\begin{array}{ll}
\lambda(y) - \kappa(x) & \hspace{0.5cm} \mbox{if $z = y$}\\
\lambda(z) & \hspace{0.5cm} \mbox{otherwise}
\end{array}
\right .
\end{align*}
Since $\mu(x) \leq \nu(y)$, we can conclude that $\nu' \in \Sreal(X)$.  Obviously, $\mu' \in \Sreal(X)$. We have that 
\begin{align*}
\mu'(X) 
& = \mu(X) - \mu(x)\\
& = \nu(X) - \mu(x)\\
& = \nu'(X).
\end{align*}
Furthermore, 
\begin{align*}
\kappa'(X) 
& = \kappa(X) - \kappa(x)\\
& = \lambda(X) - \kappa(x)\\
& = \lambda'(X).  
\end{align*}

\begin{center}
\begin{tikzpicture}[xscale=1.6,yscale=0.8]
\node[ellipse, draw, fit={(0, 1.2) (2, 0) (0, -1.2) (2.2, -1.2)}, fill=ACMBlue!30] {};

\vertex (x) at (0, 1) {$x$};
\vertex (y) at (2, 0) {$y$};

\node at (0, 0.5) {$0$};
\node at (2, -0.5) {$\nu(y) - \mu(x)$};

\draw[-] (x) to (y);
\end{tikzpicture}
\end{center}

For each $\varepsilon \in E$, let
\begin{align*}
\mu_\varepsilon'(z) & = \left \{
\begin{array}{ll}
0 & \hspace{0.5cm} \mbox{if $z = x$}\\
\mu_\varepsilon(z) & \hspace{0.5cm} \mbox{otherwise}
\end{array}
\right .\\
\nu_\varepsilon'(z) & = \left \{
\begin{array}{ll}
\nu_\varepsilon(y) - \mu_\varepsilon(x) & \hspace{0.5cm} \mbox{if $z = y$}\\
\nu_\varepsilon(z) & \hspace{0.5cm} \mbox{otherwise}
\end{array}
\right .\\
\omega_\varepsilon'(u, v) & = \left \{
\begin{array}{ll}
0 & \hspace{0.5cm} \mbox{if $(u, v) = (x, y)$}\\
\omega_\varepsilon(u, v) & \hspace{0.5cm} \mbox{otherwise}
\end{array}
\right .
\end{align*}
Since $\mu_\varepsilon(x) \leq \nu_\varepsilon(y)$, we can conclude that $\nu_\varepsilon' \in \Sreal(X)$.  Obviously, $\mu_\varepsilon' \in \Sreal(X)$.  
Since 
\begin{align*}
\mu_\varepsilon'(x) 
& = 0\\
& = \mu'(x) + \varepsilon \, \kappa'(x)
\end{align*}
and 
\begin{align*}
\mu_\varepsilon'(z) 
& = \mu_\varepsilon(z)\\
& = \mu(z) + \varepsilon \, \kappa(z)\\
& = \mu'(z) + \varepsilon \, \kappa'(z)
\end{align*}
for all $z \not= x$, we have that $\mu_\varepsilon' = \mu' + \varepsilon \, \kappa' $.  Because 
\begin{align*}
\nu_\varepsilon'(y) 
& = \nu_\varepsilon(y) - \mu_\varepsilon(x)\\
& = (\nu(y) + \varepsilon \, \lambda(y)) - (\mu(x) + \varepsilon \, \kappa(x))\\
& = (\nu(y) - \mu(x)) + \varepsilon\, (\lambda(y) - \kappa(x))\\
&  = \nu'(y) + \varepsilon \, \lambda'(y)
\end{align*}
and 
\begin{align*}
\nu_\varepsilon'(z) 
& = \nu_\varepsilon(z)\\
& = \nu(z) + \varepsilon \, \lambda(z)\\
&  = \nu'(z) + \varepsilon \, \lambda'(z)
\end{align*}
for all $z \not= y$, we have that $\nu_\varepsilon' = \nu' + \varepsilon \, \lambda'$.  

Since we have that
\begin{align*}
\omega_\varepsilon'(x, X)
& = \omega_\varepsilon(x, X) - \mu_\varepsilon(x)
\proofcomment{$\omega_\varepsilon(x, y) = \mu_\varepsilon(x)$ and $\omega_\varepsilon'(x, y) = 0$}\\
& = \mu_\varepsilon(x) - \mu_\varepsilon(x)
\proofcomment{$\omega_\varepsilon \in \Creal(\mu_\varepsilon, \nu_\varepsilon)$}\\
& = 0\\
& = \mu_\varepsilon'(x)
\end{align*}
and for all $z \not= x$
\begin{align*}
\omega_\varepsilon'(z, X)
& = \omega_\varepsilon(z, X)\\
& = \mu_\varepsilon(z)
\proofcomment{$\omega_\varepsilon \in \Creal(\mu_\varepsilon, \nu_\varepsilon)$}\\
& = \mu_\varepsilon'(z)
\end{align*}
and
\begin{align*}
\omega_\varepsilon'(X, y)
& = \omega_\varepsilon(X, y) - \mu_\varepsilon(x)
\proofcomment{$\omega_\varepsilon(x, y) = \mu_\varepsilon(x)$ and $\omega_\varepsilon'(x, y) = 0$}\\
& = \nu_\varepsilon(y) - \mu_\varepsilon(x)
\proofcomment{$\omega_\varepsilon \in \Creal(\mu_\varepsilon, \nu_\varepsilon)$}\\
& = \nu_\varepsilon'(y)
\end{align*}
and for all $z \not= y$
\begin{align*}
\omega_\varepsilon'(X, z)
& = \omega_\varepsilon(X, z)\\
& = \nu_\varepsilon(z)
\proofcomment{$\omega_\varepsilon \in \Creal(\mu_\varepsilon, \nu_\varepsilon)$}\\
& = \nu_\varepsilon'(z)
\end{align*}
we can deduce that $\omega_\varepsilon' \in \Creal(\mu_\varepsilon', \nu_\varepsilon')$.  

\begin{center}
\begin{tikzpicture}[xscale=1.6,yscale=0.8]

\node[ellipse, draw, fit={(0, 1.2) (2, 0) (0, -1.2) (2.2, -1.2)}, fill=ACMBlue!30] {};
\node[ellipse, draw, fit={(2, 0.2) (0.2, -1) (1.4, -1)}, fill=ACMOrange!40] {};

\vertex (x) at (0, 1) {$x$};
\vertex (y) at (2, 0) {$y$};

\draw[-] (x) to (y);
\end{tikzpicture}
\end{center}

Recall that all $\omega_\varepsilon$ have support graph $G$.  Let $G' = G \setminus \{ \{ (x, 0), (y, 1) \} \}$.  From the definition of $\omega_\varepsilon'$ we can conclude that all $\omega_\varepsilon'$ have support graph $G'$.  Since $G'$ is a subgraph of $G$, which is a forest, the former is also a forest.  Therefore, by Proposition~\ref{proposition:forest}, $\omega_\varepsilon'$ is a vertex.  By induction, there exist $\omega' \in V(\Creal(\mu', \nu'))$ and $\pi' : X \times X \to \mathbb{Z}$ such that for all $\varepsilon \in E$, $\omega_\varepsilon' = \omega' + \varepsilon \, \pi'$.
Let
\begin{align*}
\omega(u, v) & = \left \{
\begin{array}{ll}
\mu(x) & \hspace{0.5cm} \mbox{if $(u, v) = (x, y)$}\\
\omega'(u, v) & \hspace{0.5cm} \mbox{otherwise}
\end{array}
\right .\\
\pi(u, v) & = \left \{
\begin{array}{ll}
\kappa(x) & \hspace{0.5cm} \mbox{if $(u, v) = (x, y)$}\\
\pi'(u, v) & \hspace{0.5cm} \mbox{otherwise}
\end{array}
\right .
\end{align*}

Note that
\begin{align*}
\omega'(x, y) 
& \leq  \omega'(x, X)\\
& = \mu'(x)
\proofcomment{$\omega' \in V(\Creal(\mu', \nu'))$}\\
& = 0.
\end{align*}
Hence, $\omega'(x, y) = 0$.  Since
\begin{align*}
\omega(x, X)
& =  \omega'(x, X) - \omega'(x, y) + \mu(x)\\
& = \mu'(x) + \mu(x)
\proofcomment{$\omega' \in V(\Creal(\mu', \nu'))$ and $\omega'(x, y) = 0$}\\
& = \mu(x)
\end{align*}
and for all $z \not= x$,
\begin{align*}
\omega(z, X)
& = \omega'(z, X)\\
& = \mu'(z)
\proofcomment{$\omega' \in V(\Creal(\mu', \nu'))$}\\
& = \mu(z)
\end{align*}
and
\begin{align*}
\omega(X, y)
& = \omega'(X, y) - \omega'(x, y) + \mu(x)\\
& = \nu'(y) + \mu(x)
\proofcomment{$\omega' \in V(\Creal(\mu', \nu'))$ and $\omega'(x, y) = 0$}\\
& = \nu(y)
\end{align*}
and for all $z \not= y$,
\begin{align*}
\omega(X, z)
& = \omega'(X, z)\\
& = \nu'(z)
\proofcomment{$\omega' \in V(\Creal(\mu', \nu'))$}\\
& = \nu(z)
\end{align*}
we can conclude that $\omega \in \Creal(\mu, \nu)$.  Since the support graph of $\omega$ is $G$ and $G$ is a forest, we can conclude from Proposition~\ref{proposition:forest} that $\omega$ is a vertex.

Since
\begin{align*}
\omega_\varepsilon(x, y)
& = \mu_\varepsilon(x)\\
& = \mu(x) + \varepsilon \, \kappa(x)\\
& = \omega(x, y) + \varepsilon \, \pi(x, y)
\end{align*}
and for all $(u, v) \not= (x, y)$,
\begin{align*}
\omega_\varepsilon(u, v)
& = \omega_\varepsilon'(u, v)\\
& = \omega'(u, v) + \varepsilon \, \pi'(u, v)\\
& = \omega(u, v) + \varepsilon \, \pi(u, v)
\end{align*}
we can conclude that $\omega_\varepsilon = \omega + \varepsilon \, \pi$.
\end{proof}

%
%


\begin{proposition}
\label{proposition:optimal-vertex}
There exists $P \in \mathcal{P}$ such that $P$ is an optimal vertex policy.
\end{proposition}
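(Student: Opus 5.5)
Here a vertex policy is one with $P(s,t) \in V(\Creal(\tau(s),\tau(t)))$ for every $(s,t) \in S^2_\Delta \cup S^2_{0?}$; the pairs in $S^2_1$ are fixed by the definition of $\mathcal{P}$. The plan is to start from an arbitrary optimal policy, which exists by Theorem~\ref{proposition:minimal-coupling}, and replace each coupling by a vertex of the polytope without losing optimality. Proposition~\ref{proposition:optimal-characterization} makes this local: a policy $P \in \mathcal{P}$ is optimal if and only if $\Gamma_P(\delta_\tau) = \delta_\tau$. Hence it suffices to choose, for each $(s,t) \in S^2_\Delta \cup S^2_{0?}$, a vertex $\omega_{st} \in V(\Creal(\tau(s),\tau(t)))$ with $\omega_{st}\cdot\delta_\tau = \delta_\tau(s,t)$. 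For $(s,t) \in S^2_1$ we set $P(s,t)$ to be the Dirac distribution on $(s,t)$.

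For each such $(s,t)$, I would argue as follows. The set $\Creal(\tau(s),\tau(t))$ is nonempty by Proposition~\ref{propositions:couplings-nonempty}. It is a bounded polytope, cut out of $[0,1]^{S\times S}$ by finitely many linear equalities. The map $\omega \mapsto \omega\cdot\delta_\tau$ is linear, so its minimum over this compact polytope is attained. By the fundamental theorem of linear programming, the minimum is attained at a vertex.

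It remains to see that this minimum equals $\delta_\tau(s,t)$. Because $\delta_\tau$ is a fixed point of $\Delta_\tau$ and $\ell(s)=\ell(t)$ for these pairs, we have $\delta_\tau(s,t)=\inf_{\omega}\omega\cdot\delta_\tau$. This infimum is exactly the LP minimum just described. Therefore the vertex minimiser $\omega_{st}$ satisfies $\omega_{st}\cdot\delta_\tau=\delta_\tau(s,t)$, and so it lies in $\Coptimal(\tau(s),\tau(t))$.

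Now define $P(s,t)=\omega_{st}$ for every $(s,t) \in S^2_\Delta \cup S^2_{0?}$, with the Dirac choice on $S^2_1$. Then $P \in \mathcal{P}$. Moreover, $\Gamma_P(\delta_\tau)(s,t)=\delta_\tau(s,t)$ holds on $S^2_1$ because both sides equal $1$, and it holds elsewhere by the choice of $\omega_{st}$. So $\Gamma_P(\delta_\tau)=\delta_\tau$, and Proposition~\ref{proposition:optimal-characterization} gives that $P$ is optimal.

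There is no real obstacle. The only points that need a short justification are that a linear function on a nonempty polytope attains its minimum at a vertex (standard LP theory), and that the policy chosen on $S^2_1$ contributes value $1$ under $\Gamma_P$.
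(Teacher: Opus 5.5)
Your proof is correct and matches the paper's argument. For each pair with equal labels you use $\delta_\tau(s,t)=\inf_{\omega}\omega\cdot\delta_\tau$, pick a vertex of $\Creal(\tau(s),\tau(t))$ at which the linear (the paper says concave) objective attains its minimum, take the Dirac coupling on $S^2_1$, and conclude optimality via Proposition~\ref{proposition:optimal-characterization}. Your opening remark about starting from an arbitrary optimal policy is never used and can be dropped.
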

\begin{proof}
We first define a vertex policy $P \in \mathcal{P}$ and then prove that it is optimal.  Let $(s, t) \in S^2_1$.  By definition, $\support(P(s, t)) = \{ (s, t) \}$ and, hence, $P(s, t)$ is defined by
\[
P(s, t)(u, v) = \left \{
\begin{array}{ll}
1 & \hspace{0.5cm} \mbox{if $(u, v) = (s, t)$}\\
0 & \hspace{0.5cm} \mbox{otherwise.}
\end{array}
\right .
\]
Let $(s, t) \in S^2_\Delta \cup S^2_{0?}$.  We have that
\[
\delta_\tau(s, t) 
= \Delta_\tau(\delta_\tau)(s, t)
= \inf_{\omega \in \Creal(\tau(s), \tau(t))} \omega \cdot \delta_\tau.
\]
The function mapping $\omega$ to $\omega \cdot \delta_\tau$ is concave.  Since $\Omega(\tau(s), \tau(t))$ is a convex polytope and a concave function on a convex polytope attains its minimum at a vertex of the polytope (see, for example, \cite[page~260]{KW67}), we can conclude that there exists $\omega_{st} \in V(\Omega(\tau(s), \tau(t)))$ such that 
\[
\delta_\tau(s, t) = \omega_{st} \cdot \delta_\tau,
\]
In this case, $P(s, t) = \omega_{st}$.

It remains to show that the above defined vertex policy is optimal.  According to Proposition~\ref{proposition:optimal-characterization}, it suffices to show that for all $s$, $t \in S$, $\Gamma_P(\delta_\tau)(s, t) \leq \delta_\tau(s, t)$.  We distinguish the following cases.
\begin{itemize}
\item 
If $(s, t) \in S^2_1$  then
\[
\Gamma_P(\delta_\tau)(s, t)
= 1
= \Delta_\tau(\delta_\tau)(s, t)
= \delta_\tau(s, t).
\]
\item 
If $(s, t) \in S^2_\Delta \cup S^2_{0?}$ then
\[
\Gamma_P(\delta_\tau)(s, t)
= P(s, t) \cdot \delta_\tau
= \omega_{st} \cdot \delta_\tau
= \delta_\tau(s, t). \qedhere
\] 
\end{itemize}
\end{proof}

\begin{proposition}
\label{proposition:limit-vertex-policy}
There exist a sequence $(\varepsilon_n)_n$ in $E_\tau$ converging to 0 and a sequence $(P_n)_n$, with $P_n \in \mathcal{P}_{\tau_{\varepsilon_n}}$, of optimal vertex policies,
a vertex policy $P \in \mathcal{P}_\tau$, and $D : (S \times S) \to (S \times S) \to \mathbb{Z}$ such that $P_n = P + \varepsilon_n \, D$.
\end{proposition}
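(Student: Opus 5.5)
We take any sequence $(\varepsilon'_k)_k$ in $E_\tau$ converging to $0$, thin it out by a pigeonhole argument, and then apply Proposition~\ref{proposition:affine} pair by pair.

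First, for each $k$ we apply Proposition~\ref{proposition:optimal-vertex} to $\tau_{\varepsilon'_k}$ in place of $\tau$. This gives an optimal vertex policy $Q_k \in \mathcal{P}_{\tau_{\varepsilon'_k}}$. Here "vertex policy" means that for every $(s,t) \in S^2_\Delta \cup S^2_{0?}$ the coupling $Q_k(s,t)$ is a vertex of $\Creal(\tau_{\varepsilon'_k}(s), \tau_{\varepsilon'_k}(t))$, and for $(s,t)\in S^2_1$ it is the Dirac distribution on $(s,t)$.

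Next comes the pigeonhole step. For each pair $(s,t)$, the support graph of $Q_k(s,t)$ is a subgraph of the finite bipartite graph on $S\times\{0,1\}$. Hence the tuple of support graphs $(\mathrm{supp\,graph}(Q_k(s,t)))_{(s,t)}$ ranges over a finite set. Some value of this tuple occurs for infinitely many $k$. We take the corresponding subsequence and call it $\varepsilon_n$, $P_n$. Then $\varepsilon_n \to 0$, and for each fixed $(s,t)$ all the $P_n(s,t)$ share the same support graph.

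Now fix $(s,t) \in S^2_\Delta \cup S^2_{0?}$ and apply Proposition~\ref{proposition:affine} with $X = S$, $\mu=\tau(s)$, $\nu=\tau(t)$, $\kappa=\kappa_s$ and $\lambda=\kappa_t$ from Proposition~\ref{proposition:tau-affine}, and $E=\{\varepsilon_n \mid n\in\nat\}$.
\begin{itemize}
\item The hypotheses $\mu(X)=\nu(X)=1$ and $\kappa(X)=\lambda(X)=0$ hold by Proposition~\ref{proposition:tau-affine}.
\item $E\subseteq(0,1]$ is infinite: the $\varepsilon_n$ are positive, converge to $0$, and we may take them strictly decreasing. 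Consequently $0\in\overline{E}$.
\item For each $\varepsilon_n\in E$ we set $\omega_{\varepsilon_n}=P_n(s,t)$. Since we may take the $\varepsilon_n$ distinct, $\omega_\varepsilon$ is well defined as a function of $\varepsilon$.
\end{itemize}
The proposition yields a vertex $\omega^{st}\in V(\Creal(\tau(s),\tau(t)))$ and an integer matrix $\pi^{st}$ with $P_n(s,t)=\omega^{st}+\varepsilon_n\pi^{st}$ for all $n$. We set $P(s,t)=\omega^{st}$ and $D(s,t)=\pi^{st}$.

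For $(s,t)\in S^2_1$, every $P_n(s,t)$ is the Dirac distribution on $(s,t)$. We set $P(s,t)$ to be that same Dirac distribution and $D(s,t)=0$.

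The resulting $P$ lies in $\mathcal{P}_\tau$ and is a vertex policy by construction, and $P_n = P + \varepsilon_n D$ holds componentwise.

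The only delicate point is checking the hypotheses of Proposition~\ref{proposition:affine}. The membership conditions, namely $\mu_\varepsilon=\tau_\varepsilon(s)\in\Sreal(S)$ and $\omega_\varepsilon\in V(\Creal(\mu_\varepsilon,\nu_\varepsilon))$, hold for all $\varepsilon\in E$ precisely because $E\subseteq E_\tau$, since $\tau_\varepsilon$ is a genuine transition function on $E_\tau$. The other requirements are that $E$ be infinite and accumulate at $0$, which the subsequence choice guarantees. We do not claim optimality of the limit $P$; the statement does not require it.
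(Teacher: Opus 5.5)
Your proof is correct and follows essentially the same route as the paper. You obtain optimal vertex policies for $\tau_{\varepsilon}$ via Proposition~\ref{proposition:optimal-vertex}, pass to a subsequence of distinct $\varepsilon_n \to 0$ on which each $P_n(s,t)$ has a fixed support graph (there are only finitely many), apply Proposition~\ref{proposition:affine} pair by pair with $\kappa_s$, $\kappa_t$ from Proposition~\ref{proposition:tau-affine}, and set $P$ to the Dirac distribution and $D$ to $0$ on $S^2_1$.
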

\begin{proof}
Let $\varepsilon \in E_\tau$ and $(s, t) \in S^2_\Delta \cup S^2_{0?}$.  According to Proposition~\ref{proposition:tau-affine}, there exist $\kappa_s : S \to \mathbb{Z}$ and $\lambda_t : S \to \mathbb{Z}$ such that $\tau_\varepsilon(s) = \tau(s) + \varepsilon \, \kappa_s$ and $\tau_\varepsilon(t) = \tau(t) + \varepsilon \, \lambda_t$ and $\kappa_s(S) = 0 = \lambda_t(S)$.  By Proposition~\ref{proposition:optimal-vertex}, there exists an optimal vertex policy $P_{\varepsilon} \in \mathcal{P}_{\tau_\varepsilon}$.  Therefore, $P_{\varepsilon}(s, t) \in V(\Creal(\tau_\varepsilon(s), \tau_\varepsilon(t)))$.

Since the set $E_\tau$ is infinite and $0 \in \overline{E_\tau}$, there exists a sequence $(\varepsilon_n)_n$ in $E_\tau$ with all elements distinct that converges to 0.  Because the set $S$ is finite, there are only finitely many support graphs for $\omega \in \Dreal(S \times S)$.   Hence, there exists a subsequence $(\varepsilon_{f(n)})_n$ of $(\varepsilon_n)_n$, which obviously also converges to 0, such that for all $(s, t) \in S^2_\Delta \cup S^2_{0?}$, $\{\, P_{\varepsilon_{f(n)}}(s, t) \mid n \in \nat \,\}$ all have the same support graph.  Hence, there exists an infinite subset $E = \{\, \varepsilon_{f(n)} \mid n \in \nat \,\}$ of $E_\tau$ such that for all $(s, t) \in S^2_\Delta \cup S^2_{0?}$, $\{\, P_{\varepsilon}(s, t) \mid \varepsilon \in E \,\}$ have the same support graph and $0 \in \overline{E}$.

Let $(s, t) \in S^2_\Delta \cup S^2_{0?}$.  Now we are in a position to apply Proposition~\ref{proposition:affine}, where $\mu = \tau(s)$, $\nu = \tau(t)$, $\kappa = \kappa_s$, $\lambda = \lambda_t$, $\mu_\varepsilon = \tau_\varepsilon(s)$, $\nu_\varepsilon = \tau_\varepsilon(t)$, and $\omega_\varepsilon = P_\varepsilon(s, t)$.  From Proposition~\ref{proposition:affine} we can conclude that there exist $\omega_{st} \in V(\Creal(\tau(s), \tau(t)))$ and $\pi_{st} : S \times S \to \mathbb{Z}$ such that 
for all $\varepsilon \in E$,
\begin{equation}
\label{equation:affine}
P_\varepsilon(s, t) = \omega_{st} + \varepsilon \, \pi_{st}.
\end{equation}

We define the policy $P \in \mathcal{P}_\tau$ by
\[
P(s, t)(u, v) = \left \{
\begin{array}{ll}
\omega_{st}(u, v)
& \hspace{0.5cm} \mbox{if $(s, t) \in S^2_\Delta \cup S^2_{0?}$}\\
1
& \hspace{0.5cm} \mbox{if $(s, t) \in S^2_1$ and $(u, v) = (s, t)$.}
\end{array}
\right .
\]
and the function $D : (S \times S) \to (S \times S) \to \mathbb{Z}$ by
\[
D(s, t)(u, v) = \left \{
\begin{array}{ll}
\pi_{st}(u, v)
& \hspace{0.5cm} \mbox{if $(s, t) \in S^2_\Delta \cup S^2_{0?}$}\\
0
& \hspace{0.5cm} \mbox{otherwise.}
\end{array}
\right .
\]
Since for all $(s, t) \in S^2_\Delta \cup S^2_{0?}$, $\omega_{st} \in V(\Creal(\tau(s), \tau(t)))$, $P$ is a vertex policy.  From (\ref{equation:affine}) and the definitions of $P$ and $D$ it immediately follows that
$P_{\varepsilon_{f(n)}} = P + \varepsilon_{f(n)}\, D$.
In conclusion, the sequence $(\varepsilon_{f(n)})_n$ in $E_\tau$ converges to 0, $P_{\varepsilon_{f(n)}}$ is an optimal vertex policy for each $n \in \nat$, and $P$ is a vertex policy such that $P_{\varepsilon_{f(n)}} = P + \varepsilon_{f(n)}\, D$ for all $n \in \nat$.
\end{proof}

As the following example shows, the limit of a converging sequence of optimal vertex policies need not be optimal.

\begin{example}
Consider the following family of labelled Markov chains for $\varepsilon \in [0, \frac{1}{2}]$.
\begin{center}
\begin{tikzpicture}[xscale=0.96,yscale=1.4]
\node[regular polygon, regular polygon sides=6, fill=ACMYellow!40, draw=black, inner sep=0pt, minimum size=18pt] (a1) at (2.5,2) {$a_1$};
\node[regular polygon, regular polygon sides=6, fill=ACMYellow!40, draw=black, inner sep=0pt, minimum size=18pt] (a2) at (7,2) {$a_2$};

\node[regular polygon, regular polygon sides=6, fill=ACMYellow!40, draw=black, inner sep=0pt, minimum size=18pt] (b1) at (1,1) {$b_1$};
\node[regular polygon, regular polygon sides=6, fill=ACMYellow!40, draw=black, inner sep=0pt, minimum size=18pt] (b2) at (4,1) {$b_2$};
\node[regular polygon, regular polygon sides=6, fill=ACMYellow!40, draw=black, inner sep=0pt, minimum size=18pt] (b3) at (6,1) {$b_3$};
\node[rectangle, fill=ACMPurple!25, draw=black, inner sep=0pt, minimum size=15pt] (c) at (8,1) {$c$};

\node[regular polygon, regular polygon sides=6, fill=ACMYellow!40, draw=black, inner sep=0pt, minimum size=18pt] (d1) at (0,0) {$d_1$};
\node[rectangle, fill=ACMPurple!25, draw=black, inner sep=0pt, minimum size=15pt] (e1) at (2, 0) {$e_1$};
\node[rectangle, fill=ACMPurple!25, draw=black, inner sep=0pt, minimum size=15pt] (f) at (3,0) {$f$};
\node[regular polygon, regular polygon sides=6, fill=ACMYellow!40, draw=black, inner sep=0pt, minimum size=18pt] (d2) at (5,0) {$d_2$};]
\node[rectangle, fill=ACMPurple!25, draw=black, inner sep=0pt, minimum size=15pt] (e2) at (7, 0) {$e_2$};

\draw[->] (a1) to node[midway,above left] {$\frac{1}{2}$} (b1);
\draw[->] (a1) to node[midway,above right] {$\frac{1}{2}$} (b2);
\draw[->] (a2) to node[midway,above left] {$\frac{1}{2}$} (b3);
\draw[->] (a2) to node[midway,above right] {$\frac{1}{2}$} (c);
\draw[->] (b1) to node[midway,above left] {$\frac{1}{2}$} (d1);
\draw[->] (b1) to node[midway,above right] {$\frac{1}{2}$} (e1);
\draw[->] (b2) to node[midway,above left] {$\frac{1}{2}$} (f);
\draw[->] (b2) to node[midway,above right] {$\frac{1}{2}$} (d2);
\draw[->] (b3) to node[midway,above left] {$\frac{1}{2}$} (d2);
\draw[->] (b3) to node[midway,above right] {$\frac{1}{2}$} (e2);

\draw[->, loop below] (c) to node {$1$} (c);
\draw[->, loop below] (f) to node {$1$} (f);
\draw[->, loop below] (d1) to node {$\frac{1}{2} + \varepsilon$} (d1);
\draw[->, loop below] (d2) to node {$\frac{1}{2}$} (d2);

\draw[->, bend left=20] (d1) to node[midway,above] {$\frac{1}{2} - \varepsilon$} (e1);
\draw[->, bend left=20] (d2) to node[midway,above] {$\frac{1}{2}$} (e2);
\draw[->, bend left=20] (e1) to node[midway,below] {$1$} (d1);
\draw[->, bend left=20] (e2) to node[midway,below] {$1$} (d2);
\end{tikzpicture}
\end{center}
We denote the transition probability function for $\varepsilon = \frac{1}{n}$ by $\tau_n$ for each $n \in \nat$ and for $\varepsilon = 0$ by $\tau$.  Note that $(\tau_n)_n$ converges to $\tau$.

The equivalence classes under bisimilarity for $\tau$ are shown below.
\begin{center}
\begin{tikzpicture}[scale=0.9]
\node[class, minimum width=1cm] (1) at (1.3,0) {$a_1$};
\node[class] (2) at (3,0) {$a_2$, $b_2$};
\node[class, minimum width=3cm] (1) at (5.8,0) {$b_1$, $b_3$, $d_1$, $d_2$};
\node[class] (3) at (8.6,0) {$c$, $f$};
\node[class] (3) at (10.7,0) {$e_1$, $e_2$};
\end{tikzpicture}
\end{center}
Hence, $\delta_{\tau}(b_1, b_3) = 0$, $\delta_{\tau}(d_1, d_2) = 0$, and  $\delta_{\tau}(e_1, e_2) = 0$.  Since, $\delta_{\tau}(f, e_2) = 1$, it follows that $\delta_{\tau}(b_2, b_3) = \frac{1}{2}$.  Note that, for all $n \in \nat$, $\delta_{\tau_n}(e_1, e_2) = \delta_{\tau_n}(d_1, d_2)$.  Then $\delta_{\tau_n}(d_1, d_2) = (1 - \varepsilon) \delta_{\tau_n}(d_1, d_2) + \varepsilon = 1$.  As a consequence, for all $n \in \nat$, $\delta_{\tau_n}(e_1, e_2) = 1$, $\delta_{\tau_n}(b_1, b_3) = 1$, $\delta_{\tau_n}(b_2, b_3) = \frac{1}{2}$.

Let $P_n$ be optimal vertex policies for $\tau_n$ for all $n \in \nat$.  Since $\delta_{\tau_n}(b_1, b_3) = 1$ and $\delta_{\tau}(b_2, b_3) = \frac{1}{2}$, $P_n(a_1, a_2)$ can be depicted as follows.
\begin{center}
\begin{tikzpicture}
\node[smallstate, fill=ACMYellow!40] (a1) at (0,1) {$a_1$}; 
\node[smallstate, fill=ACMYellow!40] (b1) at (1,2) {$b_1$}; 
\node[smallstate, fill=ACMYellow!40] (b2) at (1,0) {$b_2$}; 

\node[smallstate, fill=ACMYellow!40] (a2) at (4,1) {$a_2$}; 
\node[smallstate, fill=ACMYellow!40] (b3) at (3,2) {$b_3$}; 
\node[smallstate, fill=ACMPurple!25] (c) at (3,0) {$c$}; 

\draw[->] (a1) to node[midway,above left] {$\frac{1}{2}$} (b1);
\draw[->] (a1) to node[midway,below left] {$\frac{1}{2}$} (b2);
\draw[->] (a2) to node[midway,above right] {$\frac{1}{2}$} (b3);
\draw[->] (a2) to node[midway,below right] {$\frac{1}{2}$} (c);

\draw[-] (b2) to node[pos=0.3,below] {$\frac{1}{2}$} (b3);
\draw[-] (b1) to node[pos=0.3,above] {$\frac{1}{2}$} (c);
\end{tikzpicture}
\end{center}
Let $P$ be any optimal vertex policy for $\tau$.  Since $\delta_{\tau}(b_1, b_3) = 0$ and  $\delta_{\tau}(b_2, b_3) = \frac{1}{2}$, $P(a_1, a_2)$ can be depicted as follows.
\begin{center}
\begin{tikzpicture}
\node[smallstate, fill=ACMYellow!40] (a1) at (0,1) {$a_1$}; 
\node[smallstate, fill=ACMYellow!40] (b1) at (1,2) {$b_1$}; 
\node[smallstate, fill=ACMYellow!40] (b2) at (1,0) {$b_2$}; 

\node[smallstate, fill=ACMYellow!40] (a2) at (4,1) {$a_2$}; 
\node[smallstate, fill=ACMYellow!40] (b3) at (3,2) {$b_3$}; 
\node[smallstate, fill=ACMPurple!25] (c) at (3,0) {$c$}; 

\draw[->] (a1) to node[midway,above left] {$\frac{1}{2}$} (b1);
\draw[->] (a1) to node[midway,below left] {$\frac{1}{2}$} (b2);
\draw[->] (a2) to node[midway,above right] {$\frac{1}{2}$} (b3);
\draw[->] (a2) to node[midway,below right] {$\frac{1}{2}$} (c);

\draw[-] (b2) to node[midway,below] {$\frac{1}{2}$} (c);
\draw[-] (b1) to node[midway,above] {$\frac{1}{2}$} (b3);
\end{tikzpicture}
\end{center}
Note that $(P_n(a_1, a_2))_n$ does not converge to $P(a_1, a_2)$ and in both cases there is a unique optimal policy.\lipicsEnd
\end{example}

\begin{proposition}
\label{proposition:ccc}
For all $P \in \mathcal{P}$, if $C$ is a closed communication class of $\chain{P}$ then
\begin{enumerate}
\item 
$C = \{ (s, t) \}$ for some $(s, t) \in S^2_1$, or
\item 
$C \subseteq S^2_\Delta \cup S^2_{0,\tau}$.
\end{enumerate}
\end{proposition}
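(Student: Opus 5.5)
The plan is to take a closed communication class $C$ of $\chain{P}$ and show that either it is a singleton in $S^2_1$ or every pair in it is bisimilar. Two cases arise.

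\textbf{Case 1: $C$ meets $S^2_1$.} Suppose $(s,t)\in C\cap S^2_1$. By the definition of $\mathcal{P}$, $\support(P(s,t))=\{(s,t)\}$, so the only state reachable from $(s,t)$ is $(s,t)$ itself. Since $C$ is a communication class, every element of $C$ is reachable from $(s,t)$. Hence $C=\{(s,t)\}$, which is alternative~1.

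\textbf{Case 2: $C\cap S^2_1=\varnothing$.} I claim that $C\subseteq {\sim}$, which equals $S^2_\Delta\cup S^2_{0,\tau}$. Define $R = C\cup C^{-1}\cup S^2_\Delta$, and let $R^*$ be its equivalence closure. I will show that $R^*$ is contained in ${\sim}$ by producing a relation that Definition~\ref{definition:probabilistic-bisimilary} can use.

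\begin{itemize}
\item For $(u,v)\in C$: since $C$ is closed, $\support(P(u,v))\subseteq C$, and $P(u,v)\in\Creal(\tau(u),\tau(v))$ because $C\cap S^2_1=\varnothing$. No pair in $C$ lies in $S^2_1$, so $\ell(u)=\ell(v)$.
\item For pairs in $C^{-1}$: use the transposed coupling.
\item For diagonal pairs: use the diagonal coupling.
\end{itemize}

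So $R$ satisfies the bisimulation transfer condition with couplings supported in $R$, but it need not be an equivalence relation.

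This is the main obstacle. I would resolve it by the standard fact that a relation which is reflexive, symmetric and has the coupling-transfer property generates, via transitive closure, a bisimulation. The argument composes couplings through the gluing lemma: given $\omega_1\in\Creal(\mu_1,\mu_2)$ supported in $R^{k}$ and $\omega_2\in\Creal(\mu_2,\mu_3)$ supported in $R$, glue them along $\mu_2$ to obtain a coupling of $\mu_1,\mu_3$ supported in $R^{k+1}$. Labels are preserved along chains. Hence $R^*=\bigcup_k R^k$ is an equivalence relation and a bisimulation.

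Alternatively, and more in the spirit of the paper, I could avoid gluing by using distances. The least fixed point $\delta_\tau$ satisfies $\delta_\tau\sqsubseteq\reachone{P}$ by Theorem~\ref{proposition:minimal-coupling}. Every pair in the closed class $C$, which avoids $S^2_1$, reaches $S^2_1$ with probability $0$, so $\reachone{P}=0$ on $C$ by Theorem~\ref{theorem:reach-s1}. Therefore $\delta_\tau(u,v)=0$ for every $(u,v)\in C$, and Theorem~\ref{theorem:distance-zero} gives $u\sim v$. I would actually present this second route, since it is short and relies only on stated results, keeping the gluing argument as a remark.

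In either case every $(u,v)\in C$ lies in $S^2_\Delta\cup S^2_{0,\tau}$. This gives alternative~2 and completes the dichotomy.
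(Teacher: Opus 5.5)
Your proof is correct, and the route you say you would present is essentially the paper's own proof. If $C$ meets $S^2_1$ it is a singleton; if $C$ avoids $S^2_1$, then $\reachone{P}=0$ on $C$, so $\delta_\tau=0$ on $C$ via $\delta_\tau \sqsubseteq \reachone{P}$, and Theorem~\ref{theorem:distance-zero} gives bisimilarity (your gluing-based bisimulation argument is also sound, but unnecessary here).
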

\begin{proof}
Let $P \in \mathcal{P}$ and $C$ be a closed communication class of $\chain{P}$.  Let $s$, $t \in S$ and $(s, t) \in C$.  We distinguish the following cases.
\begin{enumerate}[a.]
    \item Suppose $(s, t) \in S^2_1$.  Then, it follows immediately from the definition of $\mathcal{P}$ that $C = \{ (s, t) \}$.
    \item Suppose $(s, t) \in S \times S \setminus S^2_1$.  Let $(u, v) \in C$.  By a., $(u, v) \in S \times S \setminus S^2_1$.  Hence $C \cap S^2_1 = \varnothing$.  By Proposition~\ref{proposition:reach-s1}, we have $\delta_\tau(s, t) = 0$.  Therefore, $(s, t) \in S^2_\Delta \cup S^2_{0,\tau}$. Thus, $C \subseteq S^2_\Delta \cup S^2_{0,\tau}$. \qedhere
\end{enumerate}
\end{proof}

\begin{proposition}
\label{proposition:discontinuity-gamma}
Let $(\varepsilon_n)_n$ be a sequence in $E_\tau$ converging to 0, $(P_n)_n$, with $P_n \in \mathcal{P}_{\tau_{\varepsilon_n}}$, a sequence of optimal vertex policies, a vertex policy $P \in \mathcal{P}_\tau$, and $D : (S \times S) \to (S \times S) \to \mathbb{Z}$ such that $P_n = P + \varepsilon_n D$ for all $n \in \nat$.  Let $C$ be a closed communication class of $\chain{P}$ with $C \subseteq S^2_{0,\tau}$.  For all $(s, t) \in C$, $\lim \inf_n \reachone{P_n}(s, t) > 0$.
\end{proposition}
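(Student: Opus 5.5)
The idea is to compare, for each pair in $C$, the probability mass that $P_n$ sends out of $C$ with the mass it sends directly into $S^2_1$. Both are of order $\varepsilon_n$, and their ratio is bounded below by a constant independent of $n$. That constant is then a lower bound on $\reachone{P_n}$ on $C$.

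\emph{Step 1 (escaping mass is linear in $\varepsilon_n$ with integer slope).} Since $C$ is a closed communication class of $\chain{P}$, for every $(u,v) \in C$ we have $P(u,v)(w,x) = 0$ for all $(w,x) \notin C$. Since $P_n = P + \varepsilon_n D$, it follows that $P_n(u,v)(w,x) = \varepsilon_n D(u,v)(w,x)$ for $(w,x) \notin C$. Hence these entries of $D$ are nonnegative integers, and
\[
P_n(u,v)(\overline{C}) = \varepsilon_n K_{uv}, \qquad K_{uv} = \sum_{(w,x) \notin C} D(u,v)(w,x) \in \nat \cup \{0\}.
\]
Let $K = \max_{(u,v) \in C} K_{uv}$.

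\emph{Step 2 (a fixed fraction of the escaping mass goes directly to $S^2_1$).} Each $(u,v) \in C \subseteq S^2_{0,\tau}$ satisfies $u \sim v$ and $u \neq v$ for $\tau$. By Corollary~\ref{corollary:bad-pairs}, since $P_n(u,v) \in \Creal(\tau_{\varepsilon_n}(u), \tau_{\varepsilon_n}(v))$, we get $P_n(u,v)(S^2_1) \geq \varepsilon_n$. Because $S^2_1 \cap C = \varnothing$, this mass is part of $P_n(u,v)(\overline{C}) = \varepsilon_n K_{uv}$. In particular $K_{uv} \geq 1$, so $K \geq 1$, and the one-step escape satisfies
\[
\frac{P_n(u,v)(S^2_1)}{P_n(u,v)(\overline{C})} \geq \frac{1}{K}
\]
for every $(u,v) \in C$ and every $n$.

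\emph{Step 3 (first-exit argument).} Fix $n$ and start $\chain{P_n}$ in $(s,t) \in C$. From every state of $C$ the chain leaves $C$ in one step with probability at least $\varepsilon_n > 0$, and $C$ is finite, so the chain leaves $C$ almost surely. Let $T$ be the first time it is outside $C$. Decompose according to the state $(u,v) \in C$ occupied at time $T-1$. Conditioned on being at $(u,v)$ and leaving at the next step, the probability of landing in $S^2_1$ is at least $1/K$ by Step 2. Summing over $T$ and $(u,v)$ therefore gives $\Pr[X_T \in S^2_1] \geq 1/K$.

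Pairs in $S^2_1$ are absorbing, so reaching them means reaching $S^2_1$. Hence $\reachone{P_n}(s,t) \geq 1/K$ for all $n$, and $\liminf_n \reachone{P_n}(s,t) \geq 1/K > 0$.

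The main obstacle is making the first-exit decomposition in Step 3 rigorous. An alternative is to show directly that $\rho = \reachone{P_n}$ restricted to $C$ has $\min_C \rho \geq 1/K$. Take $(u,v)$ attaining the minimum $m$. Using $\rho = \Gamma_{P_n}(\rho)$ and $\rho \geq 0$ outside $C$, we get
\[
m \geq (1 - \varepsilon_n K_{uv})\, m + \varepsilon_n,
\]
which gives $m \geq 1/K_{uv} \geq 1/K$. This minimum-principle version is cleaner, and I would present it as the actual proof.
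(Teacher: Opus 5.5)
Your proposal is correct and is essentially the paper's proof. The paper likewise writes $P_n(s,t)(C) = 1 - \varepsilon_n D(s,t)((S\times S)\setminus C) \geq 1 - \varepsilon_n M$, where $M$ is the maximum of $D(s,t)((S\times S)\setminus C)$ over $C$ (your $K$), uses Corollary~\ref{corollary:bad-pairs} to obtain $M \geq 1$ and $P_n(s_n,t_n)(S^2_1) \geq \varepsilon_n$, and then applies exactly your minimum principle at a minimizer $(s_n,t_n)$ of $\reachone{P_n}$ on $C$ to get $m_n \geq \varepsilon_n + (1-\varepsilon_n M)\, m_n$, hence $m_n \geq 1/M$ (your first-exit variant is a valid probabilistic restatement of the same estimate).
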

\begin{proof}
Let $(\varepsilon_n)_n$ be a sequence in $E_\tau$ converging to 0, $(P_n)_n$, with $P_n \in \mathcal{P}_{\tau_{\varepsilon_n}}$, a sequence of optimal vertex policies, a vertex policy $P \in \mathcal{P}_\tau$, and $D : (S \times S) \to (S \times S) \to \mathbb{Z}$ such that $P_n = P + \varepsilon_n D$ for all $n \in \nat$.  Let $C$ be a closed communication class of $\chain{P}$ with $C \subseteq S^2_{0,\tau}$.  Let $(s, t) \in C$.  Let $M = \max_{(s, t) \in C} D(s, t)((S \times S) \setminus C)$.  For all $n \in \nat$, 
\begin{align}
P_n(s, t)(C)
& = 1 - P_n(s, t)((S \times S) \setminus C) \nonumber\\
& = 1 - (P(s, t)((S \times S) \setminus C) + \varepsilon_n D(s, t)((S \times S) \setminus C)) \proofcomment{$P_n = P + \varepsilon_n D$} \nonumber\\
& = 1 - \varepsilon_n D(s, t)((S \times S) \setminus C) \proofcomment{$(s, t) \in C$ and $C$ is a ccc of $\chain{P}$} \nonumber\\
& \geq 1 - \varepsilon_n M
\label{equation:stay-in-C}
\end{align}

\begin{center}
\begin{tikzpicture}[xscale=1.8,yscale=0.9]
\node[ellipse, draw, fit={(0.5, 0.5) (1, 1) (1, 0) (1.5,0.5)}, fill=ACMBlue!30] (C) {};
\node[state] (st) at (1, 0.5) {$s, t$};
\node at (0, 0.5) {$C$};

\node[ellipse, draw, fit={(3,0) (3.5,-1)}] (S21) {};
\node at (3.25,-0.5) {$S^2_1$};

\draw[->,bend right] (st) to node[midway, below] {$\geq \varepsilon_n$} (S21) ;
\draw[->] (st) to[out=10, in=-30] (1.5,2) to[out=150, in=80] (C);
\node at (2.2,2.2) {$\geq 1 - \varepsilon_n M$};
\end{tikzpicture}
\end{center}
Furthermore,
\begin{align*}
& D(s, t)((S \times S) \setminus C)\\
& = \frac{P_0(s, t)((S \times S) \setminus C) - P(s, t)((S \times S) \setminus C)}{\varepsilon_0} \proofcomment{$P_0 = P + \varepsilon_0 D$}\\
& = \frac{P_0(s, t)((S \times S) \setminus C)}{\varepsilon_0} \proofcomment{$(s, t) \in C$ and $C \subseteq S^2_{0,\tau}$ is a ccc of $\chain{P}$}\\
& \geq \frac{P_0(s, t)(S^2_1)}{\varepsilon_0}
\proofcomment{$S^2_1 \subseteq (S \times S) \setminus C$}\\
& \geq 1
\proofcomment{$P_0(s, t)(S^2_1) \geq \varepsilon_0$ by Corollary~\ref{corollary:bad-pairs}}
\end{align*}
Therefore, we can deduce that $M \geq 1$.   For each $n \in \nat$, let
\begin{equation}
\label{equation:minimal-value}
(s_n, t_n) \in \argmin_{(s, t) \in C} \reachone{P_n}(s, t).
\end{equation}
Let $m_n = \reachone{P_n}(s_n, t_n)$.  We have that
\begin{align*}
m_n
& = \reachone{P_n}(s_n, t_n)\\
& = \Gamma_{P_{n}}(\reachone{P_n})(s_n, t_n)\\
& = P_{n}(s_n, t_n) \cdot \reachone{P_n}\\
& = \sum_{(u, v) \in S^2_1} P_{n}(s_n, t_n)(u, v) \, \reachone{P_n}(u, v) + \sum_{(u, v) \in C} P_{n}(s_n, t_n)(u, v) \, \reachone{P_n}(u, v)\\
& \qquad + \sum_{(u, v) \in (S \times S) \setminus (C \cup S^2_1)} P_{n}(s_n, t_n)(u, v) \, \reachone{P_n}(u, v)\\
& \geq \sum_{(u, v) \in S^2_1} P_{n}(s_n, t_n)(u, v) \, \reachone{P_n}(u, v) + \sum_{(u, v) \in C} P_{n}(s_n, t_n)(u, v) \, \reachone{P_n}(u, v)\\
& = P_{n}(s_n, t_n)(S^2_1)
+ \sum_{(u, v) \in C} P_{n}(s_n, t_n)(u, v) \, \reachone{P_n}(u, v)\\
& \quad \proofcomment{$\reachone{P_n}(u, v) =  \Gamma_{P_{n}}(\reachone{P_n})(u, v) = 1$ for all $(u, v) \in S^2_1$}\\
& \geq \varepsilon_{n} + \sum_{(u, v) \in C} P_{n}(s_n, t_n)(u, v) \, \reachone{P_n}(u, v) \proofcomment{Corollary~\ref{corollary:bad-pairs}}\\
& \geq \varepsilon_{n} +  \sum_{(u, v) \in C} P_{n}(s_n, t_n)(u, v) \, m_n \proofcomment{$\reachone{P_n}(u, v) \geq m_n$ for all $(u, v) \in C$}\\
& = \varepsilon_{n} +  P_{n}(s_n, t_n)(C) \, m_n\\
& \geq  \varepsilon_{n} + (1 - \varepsilon_{n} M) \, m_n
\proofcomment{(\ref{equation:stay-in-C})}
\end{align*} 
Since $m_n \geq \varepsilon_{n} + (1 - \varepsilon_{n}  M) \, m_n$ and $0 < \varepsilon_{n}$, we can conclude that $m_n \geq \frac{1}{M}$.  Then
\begin{align*}
\liminf_n \reachone{P_n}(s, t)
& \geq \liminf_n \reachone{P_n}(s_n, t_n)
\proofcomment{(\ref{equation:minimal-value})}\\
& = \liminf_n m_n
\proofcomment{definition of $m_n$}\\
& \geq \frac{1}{M}
\proofcomment{for all $n$, $m_n \geq \frac{1}{M}$} \qedhere
\end{align*}
\end{proof}

\begin{lemma}
\label{lemma:discontinuous}
Let $s$, $t \in S$.  If for all vertex policies $P \in \mathcal{P}_\tau$ that are optimal,
\begin{equation}
\label{equation:reach-delta-discontinuous}
(s, t) \mbox{ reaches } S^2_\Delta \mbox{ with probability } < 1 - \delta_\tau(s, t) \mbox{ in } \chain{P}
\end{equation}
then the function $\delta_{\_}(s, t) : (S \to \Dreal(S)) \to [0, 1]$ is discontinuous at $\tau$. 
\end{lemma}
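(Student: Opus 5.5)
We follow the strategy of the proof sketch of \cref{theorem:main-continuity}b. By Proposition~\ref{proposition:limit-vertex-policy}, we obtain a sequence $(\varepsilon_n)_n$ in $E_\tau$ converging to $0$, optimal vertex policies $P_n \in \mathcal{P}_{\tau_{\varepsilon_n}}$, a vertex policy $P \in \mathcal{P}_\tau$ and an integer-valued $D$ with $P_n = P + \varepsilon_n D$. Since $(\tau_{\varepsilon_n})_n$ converges to $\tau$, it suffices to show $\delta_\tau(s,t) < \limsup_n \delta_{\tau_{\varepsilon_n}}(s,t)$. By compactness of $S \times S \to [0,1]$ (Propositions~\ref{proposition:unit-interval-compact} and~\ref{proposition:finite-product-compact}) we pass to a subsequence along which $\reachone{P_n}$ converges to some $r$. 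Since each $P_n$ is optimal, $r(s,t) = \lim_n \delta_{\tau_{\varepsilon_n}}(s,t)$. It therefore suffices to prove $\reachone{P}(s,t) < r(s,t)$, because $\delta_\tau(s,t) \leq \reachone{P}(s,t)$ by Theorem~\ref{proposition:minimal-coupling}.

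\textbf{Step 1: $r \geq \reachone{P}$ everywhere.} By Corollary~\ref{corollary:Gamma-continuous}, passing to the limit in $\reachone{P_n} = \Gamma_{P_n}(\reachone{P_n})$ gives $\Gamma_P(r) = r$. Since $\reachone{P}$ is the least fixed point of $\Gamma_P$, we get $\reachone{P} \sqsubseteq r$.

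\textbf{Step 2: a bad closed class is reachable.} The limit $P$ need not be optimal (see the example above), so there are two cases. If $P$ is not optimal for $(s,t)$, then $\delta_\tau(s,t) < \reachone{P}(s,t) \leq r(s,t)$ and we are done. Otherwise $P$ is an optimal vertex policy, so hypothesis~(\ref{equation:reach-delta-discontinuous}) gives $\reachdelta{P}(s,t) < 1 - \delta_\tau(s,t) = 1 - \reachone{P}(s,t)$. In a finite Markov chain a closed communication class is reached almost surely. By Proposition~\ref{proposition:ccc}, every closed class is either a singleton in $S^2_1$ or contained in $S^2_\Delta \cup S^2_{0,\tau}$. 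A closed class lies within a single communication class, and it can contain a diagonal pair only if it meets $S^2_\Delta$. Hence, with positive probability, $(s,t)$ reaches a closed class $C$ that is disjoint from $S^2_1$ and not contained in $S^2_\Delta$. We expect the main subtlety here: we must argue that such a $C$ lies in $S^2_{0,\tau}$, i.e.\ does not mix diagonal and off-diagonal pairs. If it did, we would try to exploit Corollary~\ref{corollary:bad-pairs} on its off-diagonal members, or first modify $P$ on diagonal pairs to be $S^2_\Delta$-closed while keeping the vertex, limit and optimality properties. The cleanest route is probably to note that mass from a diagonal pair $(u,u)$ goes only to pairs $(x,y)$ with $x,y \in \support(\tau(u))$. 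In the worst case we would instead adapt Proposition~\ref{proposition:discontinuity-gamma} to classes containing off-diagonal bisimilar pairs.

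\textbf{Step 3: strict increase propagates backward.} Proposition~\ref{proposition:discontinuity-gamma} gives $r(u,v) \geq \liminf_n \reachone{P_n}(u,v) > 0 = \reachone{P}(u,v)$ for all $(u,v) \in C$, where $\reachone{P}(u,v) = 0$ because $C$ is closed and avoids $S^2_1$. Now take a path $(s,t) = (x_0,y_0), \ldots, (x_k,y_k) \in C$ in $\chain{P}$ and induct backward along it. Assume $r(x_{i+1},y_{i+1}) > \reachone{P}(x_{i+1},y_{i+1})$. Since $P(x_i,y_i)(x_{i+1},y_{i+1}) > 0$, the pair $(x_i,y_i)$ is not in $S^2_1$ (pairs in $S^2_1$ only self-loop), and $r \geq \reachone{P}$ everywhere by Step 1. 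Then
\[
r(x_i,y_i) = P(x_i,y_i)\cdot r > P(x_i,y_i)\cdot \reachone{P} = \reachone{P}(x_i,y_i).
\]
This yields $\reachone{P}(s,t) < r(s,t)$, which completes the proof.
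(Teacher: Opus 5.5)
\textbf{Overall.} Your architecture is the paper's. Steps~1 and~3 match the paper's proof: the limit vertex policy from Proposition~\ref{proposition:limit-vertex-policy}, compactness, $\Gamma_P(r)=r$ and hence $\reachone{P}\sqsubseteq r$, the case split on optimality of $P$, Proposition~\ref{proposition:discontinuity-gamma} on a closed class, and backward induction along a path.

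\textbf{The gap is in Step~2.} This is exactly where hypothesis~(\ref{equation:reach-delta-discontinuous}) must become the input of Proposition~\ref{proposition:discontinuity-gamma}, and that proposition requires a closed communication class $C\subseteq S^2_{0,\tau}$. You only obtain a reachable closed class that is disjoint from $S^2_1$ and \emph{not contained in} $S^2_\Delta$. You then leave open whether it can mix diagonal and off-diagonal pairs, and list possible repairs without carrying any of them out. The one you call cleanest does not suffice by itself. That $P(u,u)$ is supported in $\support(\tau(u))\times\support(\tau(u))$ does not prevent mixing: a coupling of $\tau(u)$ with itself may send mass from $(u,u)$ to $(u,w)$ with $u\sim w$ and $u\neq w$.

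\textbf{The missing idea.} In a finite Markov chain, a run that enters a closed communication class almost surely visits every element of it. So a run absorbed in a closed class that meets $S^2_\Delta$ reaches $S^2_\Delta$, and is already counted in $\reachdelta{P}(s,t)$. The probability of reaching $S^2_1\cup S^2_\Delta$ is at most $\reachone{P}(s,t)+\reachdelta{P}(s,t)=\delta_\tau(s,t)+\reachdelta{P}(s,t)<1$. Hence with positive probability the run is absorbed in a closed class disjoint from both $S^2_1$ and $S^2_\Delta$. Proposition~\ref{proposition:ccc} then gives $C\subseteq S^2_{0,\tau}$. This is what the paper's one-line appeal to~(\ref{equation:reach-delta-discontinuous}) and Proposition~\ref{proposition:ccc} amounts to.

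\textbf{Your $S^2_\Delta$-closedness route also works, with no modification of $P$.} For large $n$, every pair of distinct states has positive $\delta_{\tau_{\varepsilon_n}}$-distance: use Corollary~\ref{corollary:bad-pairs} for pairs in $S^2_{0,\tau}$ and Proposition~\ref{proposition:lower-semi-continuous} for pairs in $S^2_{?,\tau}$. So each optimal $P_n(u,u)$ is supported on $S^2_\Delta$, hence so is $P(u,u)=\lim_n P_n(u,u)$, and Proposition~\ref{proposition:closed-communication-class} applies directly.

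\textbf{A smaller point, which the paper's proof also passes over.} In your second case, $P$ is only known to be optimal for $(s,t)$, whereas the hypothesis quantifies over (globally) optimal vertex policies. Replace $P$ outside the pairs reachable from $(s,t)$ by an optimal vertex policy (Proposition~\ref{proposition:optimal-vertex}). This gives a globally optimal vertex policy with the same behaviour from $(s,t)$, to which the hypothesis applies.
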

\begin{proof}
By Proposition~\ref{proposition:limit-vertex-policy}, there exist a sequence $(\varepsilon_n)_n$ in $E_\tau$ converging to 0 and a sequence $(P_n)_n$, with $P_n \in \mathcal{P}_{\tau_{\varepsilon_n}}$, of optimal vertex policies, a vertex policy $P \in \mathcal{P}_\tau$, and $D : (S \times S) \to (S \times S) \to \mathbb{Z}$ such that $P_n = P + \varepsilon_n \, D$.

Since the set $S$ is finite, we can conclude from Proposition~\ref{proposition:unit-interval-compact} and \ref{proposition:finite-product-compact} that $S \times S \to [0, 1]$ is compact.  As a consequence, the sequence $(\reachone{P_n})_n$ has a converging subsequence $(\reachone{P_{f(n)}})_n$.  We have that
\begin{align}
\Gamma_P(\lim_n \reachone{P_{f(n)}})
& = \Gamma_{\lim_n P_{f(n)}}(\lim_n \reachone{P_{f(n)}}) \nonumber\\
& = \lim_n \Gamma_{P_{f(n)}}(\reachone{P_{f(n)}})
\proofcomment{Corollary~\ref{corollary:Gamma-continuous}} \nonumber\\
& = \lim_n \reachone{P_{f(n)}}
\label{equation:lim-gamma-fixed-point}
\end{align}
that is, $ \lim_n \reachone{P_{f(n)}}$ is a fixed point of $\Gamma_P$.  Since $\reachone{P}$ is the least fixed point of $\Gamma_P$, we can conclude that $\reachone{P} \sqsubseteq  \lim_n \reachone{P_{f(n)}}$.  Hence,
\begin{align}
\reachone{P} 
& \sqsubseteq \lim_n \reachone{P_{f(n)}}
\label{equation:P-smaller-limit}\\
& = \lim_n \delta_{\tau_{\varepsilon_{f(n)}}}
\proofcomment{$\reachone{P_{f(n)}} = \delta_{\tau_{\varepsilon_{f(n)}}}$ since $P_{f(n)}$ is optimal}
\label{equation:gamma-leq-lim-delta}
\end{align}

Since $(\varepsilon_{f(n)})_n$ converges to 0, we can conclude from Proposition~\ref{proposition:tau-affine} that $(\tau_{\varepsilon_{f(n)}})_n$ converges to $\tau$.   Let $s$, $t \in S$.  To conclude that $\delta_{\_}(s, t)$ is discontinuous at $\tau$ it suffices to show that $\delta_\tau(s, t) < \lim_n \delta_{\tau_{\varepsilon_{f(n)}}}(s, t)$.  We distinguish the following two cases.
\begin{itemize}
\item 
Assume that the policy $P$ is not optimal for $(s, t)$.  Then, we have that $\delta_\tau(s, t) < \reachone{P}(s, t) \leq \lim_n \delta_{\tau_{\varepsilon_{f(n)}}}(s, t)$ by Theorem~\ref{proposition:minimal-coupling} and (\ref{equation:gamma-leq-lim-delta}).
\item 
Assume that the vertex policy $P$ is optimal for $(s, t)$.
Then, by Theorem~\ref{theorem:reach-s1}, we have that $(s, t)$ reaches $S^2_1$ with probability $\delta_\tau (s, t)$ in $\chain{P}$. It follows from (\ref{equation:reach-delta-discontinuous}) and Proposition~\ref{proposition:ccc} that $(s, t)$ can reach a closed communication class $C$ of $\chain{P}$ with $C \subseteq S^2_{0,\tau}$.
We consider a path from $(s, t)$ to $C$.  For each $(u, v)$ on that path, we prove that 
\begin{equation}
\label{equation:delta-ls-lim-gamma}
\reachone{P}(u, v) < \lim_n \reachone{P_{f(n)}}(u, v)
\end{equation}
by induction on the length of the path from $(u, v)$ to $C$.
\begin{itemize}
\item 
In the base case, $(u, v) \in C$.   Since the subsequence $(\reachone{P_{f(n)}}(u, v))_n$ converges, we can conclude from Proposition~\ref{proposition:discontinuity-gamma} that $\lim_n \reachone{P_{f(n)}}(u, v) > 0$.   Since $(u, v) \in C \subseteq S^2_{0,\tau}$, we can conclude from Theorem~\ref{theorem:distance-zero} that $\delta_\tau(u, v) = 0$.   Since $P$ is optimal for $(s, t)$ and $(u, v)$ is reachable from $(s, t)$, $P$ is also optimal for $(u, v)$ by Proposition~\ref{proposition:reachable-optimal}.  Therefore, $\reachone{P}(u, v) = \delta_\tau(u, v) = 0 < \lim_n \reachone{P_{f(n)}}(u, v)$.
\item 
In the inductive case, assume that $(u, v) (w, x) \ldots$ is a suffix of the path from $(s, t)$ to $C$.  Then $P(u, v)(w, x) > 0$.  By the induction hypothesis, $\reachone{P}(w, x) < \lim_n \reachone{P_{f(n)}}(w, x)$.  Therefore,
\begin{align*}
\reachone{P}(u, v)
& = \Gamma_P(\reachone{P})(u, v)\\
& = P(u, v) \cdot \reachone{P}\\
& = P(u, v)(w, x) \, \reachone{P}(w, x) + \sum_{(y, z) \not= (w, x)} P(u, v)(y, z) \, \reachone{P}(y, z)\\
& < P(u, v)(w, x) \, \lim_n \reachone{P_{f(n)}}(w, x) + \sum_{(y, z) \not= (w, x)} P(u, v)(y, z) \, \reachone{P}(y, z)\\
& \quad \proofcomment{$P(u, v)(w, x) > 0$ and $\reachone{P}(w, x) < \lim_n \reachone{P_{f(n)}}(w, x)$}\\
& \leq P(u, v)(w, x) \, \lim_n \reachone{P_{f(n)}}(w, x) + \sum_{(y, z) \not= (w, x)} P(u, v)(y, z) \, \lim_n \reachone{P_{f(n)}}(y, z)\\
& \quad \proofcomment{(\ref{equation:P-smaller-limit})}\\
& = \Gamma_P(\lim_n \reachone{P_{f(n)}})(u, v)\\
& = \lim_n \reachone{P_{f(n)}}(u, v)
\proofcomment{(\ref{equation:lim-gamma-fixed-point})}
\end{align*}
\end{itemize}
From the above we can conclude that
\begin{align*}
\delta_\tau(s, t)
& \leq \reachone{P}(s, t)
\proofcomment{Theorem~\ref{proposition:minimal-coupling}}\\
& < \lim_n \reachone{P_{f(n)}}(s, t)
\proofcomment{(\ref{equation:delta-ls-lim-gamma})}\\
& = \lim_n \delta_{\tau_{\varepsilon_{f(n)}}}(s, t) \proofcomment{$\reachone{P_{f(n)}}= \delta_{\tau_{\varepsilon_{f(n)}}}$ since $P_{f(n)}$ is optimal} \qedhere
\end{align*}
\end{itemize}
\end{proof}

\begin{proof}[Proof of \cref{theorem:main-continuity}]
Let $s$, $t \in S$.
\begin{enumerate}
    \item Follows from Proposition~\ref{proposition:phi-impossible}.
    \item Follows directly from Proposition~\ref{proposition:continuous} and Lemma~\ref{lemma:discontinuous}. \qedhere
\end{enumerate}
\end{proof}

\begin{theorem}
\label{theorem:robust-bisimilarity}
For all $s$, $t \in S$, $s \simeq t$ if and only if $s \sim t$ and $\delta_{\_}(s, t) : (S \to \Dreal(S)) \to [0, 1]$ is continuous at $\tau$.
\end{theorem}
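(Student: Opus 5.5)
We split the biconditional into its two directions and derive each from results already established, chiefly \cref{theorem:cav-theorem}, \cref{theorem:distance-zero} and \cref{theorem:main-continuity}b.

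For the forward direction, assume $s \simeq t$. Then \cref{theorem:cav-theorem} gives both $s \sim t$ and the continuity of $\delta_{\_}(s, t)$ at $\tau$. Alternatively, we can argue internally. By Definition~\ref{definition:robust-probabilistic-bisimilarity} there is a policy $P$ with $\reachdelta{P}(s, t) = 1$. By \cref{theorem:main-continuity}a we have $1 = \reachdelta{P}(s,t) \leq 1 - \delta_\tau(s,t)$, so $\delta_\tau(s,t) = 0$. Hence $s \sim t$ by \cref{theorem:distance-zero}. Moreover $\reachdelta{P}(s,t) = 1 = 1 - \delta_\tau(s,t)$, so \cref{theorem:main-continuity}b yields continuity at $\tau$.

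For the converse, assume $s \sim t$ and that $\delta_{\_}(s, t)$ is continuous at $\tau$. By \cref{theorem:distance-zero}, $\delta_\tau(s,t) = 0$. By the ``only if'' part of \cref{theorem:main-continuity}b, which was established via Proposition~\ref{proposition:continuous} and Lemma~\ref{lemma:discontinuous}, there is a policy $P \in \mathcal{P}$ with $\reachdelta{P}(s,t) = 1 - \delta_\tau(s,t) = 1$. This is exactly $s \simeq t$ by Definition~\ref{definition:robust-probabilistic-bisimilarity}.

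There is a subtlety in the converse. Lemma~\ref{lemma:discontinuous} is phrased only for optimal vertex policies, and its contrapositive yields an optimal vertex policy reaching $S^2_\Delta$ with probability at least $1 - \delta_\tau(s,t)$. Combined with Proposition~\ref{proposition:phi-impossible}, this probability is exactly $1 - \delta_\tau(s,t)$, so any policy of this kind witnesses robust bisimilarity. With this observation every step is a direct invocation of earlier results.
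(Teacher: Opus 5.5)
Your proposal is correct and is essentially the paper's proof, which just says the result follows from Definition~\ref{definition:robust-probabilistic-bisimilarity} and \cref{theorem:main-continuity}; you spell out the details, using \cref{theorem:distance-zero} to turn $s \sim t$ into $\delta_\tau(s,t) = 0$, and part a to get $\delta_\tau(s,t)=0$ from $\reachdelta{P}(s,t)=1$. Your closing remark is accurate: the contrapositive of Lemma~\ref{lemma:discontinuous} only gives $\geq$, and Proposition~\ref{proposition:phi-impossible} upgrades this to $=$. It is not needed here, though, since \cref{theorem:main-continuity}b is already stated as a biconditional with equality.
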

\begin{proof}
Follows directly from Definition~\ref{definition:robust-probabilistic-bisimilarity} and \cref{theorem:main-continuity}. 
\end{proof}

For $\mu$, $\nu \in \Sreal(S)$ and an equivalence relation $R \subseteq S \times S$ such that for all $R$-equivalence classes $A$, $\mu(A) = \nu(A)$, we say that $\omega \in \Creal(\mu, \nu)$ is a \emph{maximal $R$-support coupling} if $\support(\omega) = (\support(\mu) \times \support(\nu)) \cap R$.  Moreover, we say that $P \in \mathcal{P}$ is a \emph{maximal $R$-support policy} if for all $(s, t) \in R \cap S^2_{0?}$, the coupling $P(s, t)$ is a maximal $R$-support coupling, that is, $\support(P(s, t)) = \mathrm{Post}((s, t)) \cap R$.

\begin{proposition}[{\cite[Proposition~19]{FKPB25}}]
\label{proposition:maximal-support-coupling}
For all $\mu$, $\nu \in \Sreal(X)$, given an equivalence relation $R \subseteq X \times X$ such that for all $R$-equivalence classes $A$, $\mu(A) = \nu(A)$, there exists a maximal $R$-support coupling $\omega \in \Creal(\mu, \nu)$.
\end{proposition}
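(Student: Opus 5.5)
The plan is to build $\omega$ explicitly, one $R$-equivalence class at a time, using a normalised product coupling inside each class. For each $R$-equivalence class $A$, let $m_A = \mu(A) = \nu(A)$, which is equal for $\mu$ and $\nu$ by hypothesis. Define
\[
\omega(x, y) = \left \{
\begin{array}{ll}
\dfrac{\mu(x)\,\nu(y)}{m_A} & \hspace{0.5cm} \mbox{if $x$, $y \in A$ for some $R$-class $A$ with $m_A > 0$}\\
0 & \hspace{0.5cm} \mbox{otherwise.}
\end{array}
\right .
\]

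First I check the marginals. Fix $x \in X$ and let $A$ be its class.
\begin{itemize}
\item If $m_A > 0$, then $\omega(x, X) = \sum_{y \in A} \mu(x)\nu(y)/m_A = \mu(x)\,\nu(A)/m_A = \mu(x)$.
\item If $m_A = 0$, then $\mu(x) \le \mu(A) = 0$, so $\omega(x, X) = 0 = \mu(x)$.
\end{itemize}
The right marginal follows symmetrically, now using $\mu(A) = m_A$. Hence $\omega(X \times X) = \sum_A m_A = \mu(X) = \nu(X)$. So $\omega$ is a subprobability distribution on $X \times X$ with marginals $\mu$ and $\nu$, i.e.\ $\omega \in \Creal(\mu, \nu)$, with the coupling notion read for subprobability distributions of equal total mass. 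When $\mu$ and $\nu$ are full distributions this is literally the definition.

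Next I verify that $\support(\omega) = (\support(\mu) \times \support(\nu)) \cap R$.
\begin{itemize}
\item For $\subseteq$: if $\omega(x, y) > 0$, then $x$ and $y$ lie in a common class $A$, so $(x, y) \in R$. Moreover $\mu(x)\nu(y) > 0$, so $x \in \support(\mu)$ and $y \in \support(\nu)$.
\item For $\supseteq$: if $(x, y) \in R$ with $\mu(x) > 0$ and $\nu(y) > 0$, let $A$ be their common class. Then $m_A \ge \mu(x) > 0$, so $\omega(x, y) = \mu(x)\nu(y)/m_A > 0$.
\end{itemize}

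There is no real obstacle; the only point requiring care is the classes with $m_A = 0$. The normalisation by $m_A$ is undefined there, but such classes contribute nothing to either the marginals or the target support, which is why they are assigned zero mass.
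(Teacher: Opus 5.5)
Your proof is correct. The classwise normalised product $\omega(x,y) = \mu(x)\nu(y)/m_A$ has marginals $\mu$ and $\nu$ exactly because $\mu(A) = \nu(A) = m_A$, and its support is precisely $(\support(\mu) \times \support(\nu)) \cap R$. You also handle the null classes ($m_A = 0$) properly, and your reading of $\Creal(\mu,\nu)$ for subprobability distributions of equal total mass is the right one.

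The paper does not prove this statement itself; it imports it from \cite[Proposition~19]{FKPB25}. There is therefore no in-text argument to compare against, and your direct construction stands as a complete, self-contained proof.
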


\begin{proposition}[{\cite[Proposition~20]{FKPB25}}]
\label{proposition:maximal-support-policy}
For any bisimulation $R \subseteq S \times S$, a maximal $R$-support policy $P \in \mathcal{P}$ exists.
\end{proposition}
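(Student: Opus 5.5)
We need to prove: for any bisimulation $R$, there exists a maximal $R$-support policy $P \in \mathcal{P}$. By definition this means a policy such that, for every $(s,t) \in R \cap S^2_{0?}$, the coupling $P(s,t)$ has support exactly $\mathrm{Post}((s,t)) \cap R$, where $\mathrm{Post}((s,t)) = \support(\tau(s)) \times \support(\tau(t))$.

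The plan is to define $P$ pointwise. First fix the pairs outside $R \cap S^2_{0?}$ arbitrarily but legally:
\begin{itemize}
\item for $(s,t) \in S^2_1$, let $P(s,t)$ be the Dirac distribution on $(s,t)$;
\item for $(s,t) \in S^2_\Delta$, and for $(s,t) \in S^2_{0?} \setminus R$, let $P(s,t)$ be any coupling in $\Creal(\tau(s), \tau(t))$, which exists by Proposition~\ref{propositions:couplings-nonempty}.
\end{itemize}
These choices lie in $\mathcal{P}$ by its definition.

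The key step is to handle each $(s,t) \in R \cap S^2_{0?}$ using Proposition~\ref{proposition:maximal-support-coupling}, with $X = S$, $\mu = \tau(s)$, $\nu = \tau(t)$ and the equivalence relation $R$. Since $R$ is a bisimulation by Definition~\ref{definition:probabilistic-bisimilary}, it is an equivalence relation, so that requirement is met.

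It remains to check the hypothesis that $\tau(s)(A) = \tau(t)(A)$ for every $R$-equivalence class $A$. Definition~\ref{definition:probabilistic-bisimilary} gives some $\omega \in \Creal(\tau(s),\tau(t))$ with $\support(\omega) \subseteq R$. Fix a class $A$. Every pair in the support of $\omega$ with first component in $A$ is in $R$, so its second component is also in $A$, and conversely. Hence
\[
\tau(s)(A) = \omega(A, S) = \omega(A, A) = \omega(S, A) = \tau(t)(A).
\]
This mirrors the argument in Proposition~\ref{proposition:bisimilar-same-label}, with label classes replaced by $R$-classes; alternatively one can cite \cite[Theorem~4.6]{JL91}.

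Proposition~\ref{proposition:maximal-support-coupling} then yields $\omega_{st} \in \Creal(\tau(s),\tau(t))$ with support equal to $(\support(\tau(s)) \times \support(\tau(t))) \cap R$. Setting $P(s,t) = \omega_{st}$ gives exactly the required support $\mathrm{Post}((s,t)) \cap R$.

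Since each $P(s,t)$ is a coupling of the right marginals, or the Dirac distribution on $S^2_1$, we have $P \in \mathcal{P}$, and $P$ is a maximal $R$-support policy. The only substantive point is verifying the class-equality hypothesis, and that is the routine computation above. Everything else is assembly, so I expect no real obstacle.
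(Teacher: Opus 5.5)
Your proof is correct and follows essentially the intended argument. You define $P$ pointwise, derive $\tau(s)(A)=\tau(t)(A)$ for every $R$-class $A$ from the coupling-based definition of bisimulation (equivalently \cite[Theorem~4.6]{JL91}), apply Proposition~\ref{proposition:maximal-support-coupling} on each $(s,t)\in R\cap S^2_{0?}$, and fill the remaining pairs with arbitrary couplings or Dirac distributions. As a minor aside, if you choose $P(s,s)(u,u)=\tau(s)(u)$ on $S^2_\Delta$, the policy is also $S^2_\Delta$-closed, which is the form the paper uses in the proof of Corollary~\ref{corollary:continuity-trivial}.
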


\begin{proof}[Proof of \cref{corollary:continuity-trivial}]
We prove the two implications.

Suppose that $\mathord{\sim} = \robust$. Let $s$, $t \in S$ and $P \in \mathcal{P}$ be an optimal policy for $(s, t)$.
Let $Q \in \mathcal{P}$ be an $S^2_\Delta$-closed maximal $\sim$-support policy.  We construct $P' \in \mathcal{P}$ as follows.
\[
P'(x, y) = \left \{
\begin{array}{ll}
Q(x, y) & \hspace{0.5cm} \mbox{if $x \sim y$}\\
P(x, y) & \hspace{0.5cm} \mbox{otherwise.}
\end{array}
\right .
\]
Note that, since $P'$ is also optimal for $(s, t)$, $(s, t)$ reaches $S^2_1$ with probability $\delta_\tau(s, t)$ in $\chain{P'}$.
It follows from the construction of $P'$ and \cref{proposition:closed-communication-class} that $(s, t)$ reaches $S^2_\Delta$ with probability $1 - \delta_\tau(s, t)$ in $\chain{P'}$.
Therefore, according to \cref{theorem:main-continuity}b, $\delta_{\_}(s, t) : (S \to \Dreal(S)) \to [0, 1]$ is continuous at $\tau$.

Suppose that $\mathord{\sim} \neq \robust$.  Since $\robust \subseteq \mathord{\sim}$, there exists $(s, t) \in \mathord{\sim} \setminus \robust$. By Theorem~\ref{theorem:robust-bisimilarity}, $\delta_{\_}(s, t) : (S \to \Dreal(S)) \to [0, 1]$ is discontinuous at $\tau$.
\end{proof}